\newtheorem{theorem}{Theorem}[section]
\newtheorem{lemma}[theorem]{Lemma}
\newtheorem{corollary}[theorem]{Corollary}
\newtheorem{proposition}[theorem]{Proposition} 
\newtheorem{remark}{Remark}[section]
\newcommand\bb {\mathbf b}
\newcommand\bx {\mathbf x}
\newcommand\indica {\mathbb{I}}
\newcommand\bbG {\mathbb{G}}
\newcommand\wb {\widehat{{b}}}
\newcommand\wbb {\widehat{{\bb}}}
\newcommand\wpe {\widehat{p}}
\newcommand\wF {\widehat{F}}
\newcommand\wtb {\widetilde{b}}
\newcommand\wty {\widetilde{y}}
\newcommand\wtX {\widetilde{X}}
\newcommand\itA {{\mathcal{A}}}
\newcommand\itB {{\mathcal{B}}}
\newcommand\itC {{\mathcal{C}}}
\newcommand\itE {{\mathcal{E}}}
\newcommand\itF {{\mathcal{F}}}
\newcommand\itG {{\mathcal{G}}}
\newcommand\itH {{\mathcal{H}}}
\newcommand\itI {{\mathcal{I}}}
\newcommand\itM {{\mathcal{M}}}
\newcommand\itN {{\mathcal{N}}}
\newcommand\itT {{\mathcal{T}}}
\newcommand\itU {{\mathcal{U}}}
\newcommand\itV {{\mathcal{V}}}
\newcommand\itW {{\mathcal{W}}}
\newcommand\itX {{\mathcal{X}}}
\newcommand\bbe {\mbox{\boldmath $\beta$}}
\newcommand\bmu {\mbox{\boldmath $\mu$}}
\newcommand\bSi {\mbox{\boldmath $\Sigma$}}
\newcommand\walfa {\widehat{\alpha}}
\newcommand\wbeta {\widehat{\beta}}
\newcommand\wbbe {\widehat{\bbe}}
\newcommand\wbmu {\widehat{\bmu}}
\newcommand\wtheta {\widehat{\theta}}
\newcommand\wbSi {\widehat{\bSi}}
\newcommand\wtbeta {\widetilde{\beta}}
\newcommand\wtpi {\widetilde{\pi}}
\newcommand\wttheta {\widetilde{\theta}}
\newcommand\wtTheta {\widetilde{\Theta}}
\def\real{\mathbb{R}}
\def\natu{\mathbb{N}}
\def\qu{\mathbb{Q}}
\newcommand{\esp}{\mathbb{E}}
\newcommand{\prob}{\mathbb{P}}
\newcommand{\var}{\mbox{\sc Var}}
\newcommand{\convpp}{ \buildrel{a.s.}\over\longrightarrow}
\newcommand{\convprob  }{ \buildrel{p}\over\longrightarrow}
\newcommand{\trasp}{^{\mbox{\footnotesize \sc t}}}
\def\dst{\displaystyle}
\def\argmin{\mathop{\mbox{argmin}}}
\newcommand\noi{\noindent}
\def\dst{\displaystyle}
\def\square{\ifmmode\sqr\else{$\sqr$}\fi}
\def\sqr{\vcenter{
         \hrule height.1mm
         \hbox{\vrule width.1mm height2.2mm\kern2.18mm
\vrule width.1mm}
         \hrule height.1mm}}
\newcommand{\eme}{\mbox{\footnotesize \sc m}}
\newcommand{\clas}{\mbox{\scriptsize \sc cl}}
\newcommand{\wclBOX}{\mbox{\scriptsize \textsc{wcl-fbb}}}
\newcommand{\wclHR}{\mbox{\scriptsize \textsc{wcl-hr}}}
\newcommand{\wemeBOX}{\mbox{\scriptsize  \textsc{wm-fbb}}} 
\newcommand{\wemeHR}{\mbox{\scriptsize  \textsc{wm-hr}}}
\newcommand{\clasnorm}{\mbox{\sc cl}}
\newcommand{\wclBOXnorm}{\mbox{\textsc{wcl-fbb}}}
\newcommand{\wclHRnorm}{\mbox{\textsc{wcl-hr}}}
\newcommand{\wclnorm}{\mbox{\textsc{wcl}}}
\newcommand{\emenorm}{\mbox{\sc m}} 
\newcommand{\wemeBOXnorm}{\mbox{\textsc{wm-fbb}}} 
\newcommand{\wemeHRnorm}{\mbox{\textsc{wm-hr}}} 
\newcommand{\wemenorm}{\mbox{\textsc{wm}}}
\newcommand\trim {\mbox{\footnotesize  \sc tr}}
\newcommand{\out}{\mbox{\scriptsize \sc out}}
\begin{document}


\title{Robust estimation for functional logistic regression models}
\author{Graciela Boente$^a$, Marina Valdora$^a$\\ 
  $^a$ Universidad de Buenos Aires and CONICET, Argentina  
}
\date{}

\maketitle

\begin{abstract}
 This paper addresses the problem of providing robust estimators under a functional logistic regression model.  Logistic regression  is a popular tool in classification problems with two populations. As in functional linear regression, regularization tools are needed to compute estimators for the functional slope. The traditional methods are based on dimension reduction or penalization combined with maximum likelihood or quasi--likelihood techniques and for that reason, they may be affected   by  misclassified points especially if they are associated to functional covariates with atypical behaviour. The proposal given in this paper  adapts some of the best practices used when the covariates are finite--dimensional to provide reliable estimations. Under regularity conditions, consistency of the resulting estimators and rates of convergence for the predictions are derived. A numerical study illustrates the finite sample performance of the proposed method and reveals its stability under different contamination scenarios.  A real data example is also presented.
\end{abstract}

\noi \textbf{Keywords:} $B$-splines; Functional Data Analysis; Logistic Regression Models;  Robust Estimation 

\noi\textbf{AMS Subject Classification:}  62F35; 62G25

\normalsize
\newpage
\section{Introduction}\label{sec:intro}
In many applications,  such as chemometrics,  image recognition and spectroscopy,  the observed data contain functional covariates, that is, variables originated by phenomena that are continuous in time or space and can be assumed to be smooth functions, rather than finite dimensional vectors.
 Functional data analysis aims to provide tools for analysing  such data  and has received considerable attention in recent years due to its high versatility and numerous applications. Different approaches either parametric, nonparametric or even semiparametric ones, were given to model data with functional predictors. Some well-known references in the treatment of functional data are the books of   \citet{ferraty:vieu:2006} and \citet{ferraty:romain:2010}, who carefully discuss non--parametric models, and also \citet{ramsay2005functional, ramsay2002applied}, \citet{horvath:kokoska:2012}  and \citet{hsing:eubank:2015} who place emphasis on parametric models such as the functional linear one. We also refer to  \citet{aneiros:etal:2017} and the reviews in \citet{cuevas:2014} and \citet{goia:vieu:2016} for some other  results in the area.  
 
Among regression models relating a scalar response with a functional covariate, the functional linear regression                                                                                                                                                                                                                                                                                                                                                                                                                model is one of the more popular ones. Several estimation aspects under this  model have been considered among others in \citet{cardot:etal:2003},
  \citet{cardot:sarda:2005}, \citet{cai:hall:2006}, \citet{hall:horowitz:2007}, see also  \citet{febrero:etal:2017} and \citet{reiss:etal:2017} for a review. Robust proposals for functional linear regression models using either $P-$, $B-$splines or functional principal components
were given in \citet{maronna:yohai:2013}, \citet{boente:salibian:vena:2020}, \citet{kalogridis:vanaelst:2023} and \citet{kalogridis:vanaelst:2019}, respectively.

Most of the papers mentioned above  consider the case where the response is a continuous variable. However, discrete responses arise in some practical problems such as in classification. In particular, when the interest relies on  the presence or absence of a condition of interest,   the response corresponds to a binary outcome.  \citet{reiss:etal:2017} includes a review on relevant papers treating the case of responses  whose conditional distribution belong to an exponential--family distributions, that is, estimation under a generalized functional linear regression model. As in functional linear models, the  naive approach to estimate the functional regression parameter considering as multivariate covariates  the values of the functional covariates observed  on a grid of points and ignoring their functional nature,  is not appropriate, since this approach leads to an ill-conditioned problem. The causes for this issue are, on the one hand,   the high correlation existing, within each trajectory, among observations  corresponding to close grid values and on the other hand,   the fact that the number of grid measurements may exceed the number of observations, see  \citet{marx:eilers:1999} and \citet{ramsay2004functional} for a discussion. 
As mentioned in \citet{wang:etal:2016}, one of the challenges in functional regression is the inverse nature of the problem, which causes estimation problems mainly generated by   the compactness of the covariance operator of $X$. For the reasons mentioned above, the extension from the  situation with   finite-dimensional predictors to the case of an infinite-dimensional one is not direct.    The usual practice  to solve this drawback is regularization which can be achieved in several ways, either  reducing the set of candidates  to a finite--dimensional one  or by adding a penalty term as when considering $P-$splines.
In particular, \citet{cardot:sarda:2005} proposed  estimators based on penalized likelihood and
spline approximations and derived consistency properties for them. A different approach was followed in \citet{muller:stadtmuller:2005}   where the estimators are obtained via  a truncated Karhunen-Lo\`eve expansion for the covariates. These authors also  provided a  theoretical study of the  properties of their estimators as well as an illustration of their proposal on a classification problem that analyzes the interplay of longevity and reproduction,   in short  and long--lived Mediterranean fruit flies.

Among generalized regression models, logistic regression is one of the best known and useful models in statistics. It has been extensively studied when euclidean covariates arise and several robust proposals were given in this setting, some of which will be mentioned below, in Section \ref{eq:logfinito}. The functional logistic regression model is a generalization of the finite--dimensional logistic regression model, which assumes that the observed covariates are functional data rather than vectors in $\real^p$. It is particularly relevant in discrimination problems for curve data. 
This model was already considered in  \citet{james:2002}   as a particular case of generalized linear models with functional predictors.   Beyond the procedures studied in the framework of generalized functional regression models which can be used in functional logistic regression models, some authors have considered specific estimation methods for this particular model. Among others, we can mention  \citet{escabias:etal:2004}, \citet{aguilera:etal:2008}, \citet{aguilera-morillo:etal:2013} and  \citet{Mousavi:2018} who provides a  revision and  comparison of different estimation methods for functional logistic regression.
  Among the numerous interesting applications of functional logistic regression that have been reported in the literature we can mention
 \citet{escabias:etal:2005}, who use it to model environmental data,
 \citet{ratcliffe:etal:2002}, who apply it to foetal heart rate data and \citet{reiss:etal:2005} who analyze pet imaging data. Besides, \citet{sorensen:etal:2013} present several medical applications of functional data, while \citet{wang:etal:2017} used  penalized Haar wavelet approach  for the classification of   brain images to assist in early diagnosis of Alzheimer's disease.

The framework in which this paper will focus corresponds to the functional logistic regression model with scalar response, also labelled as   scalar--on--function logistic regression in the literature. Under this model, the i.i.d. observations  $(y_i, X_i)$, $1\le i \le n$, are such that  the response $y_i\in \{0,1\}$ and the predictor $X_i\in L^2(\itI)$, with $\itI$ a compact interval, while the link function is the logistic one. More precisely, if  $Bi(1,p)$ stands for the Bernoulli distribution with success probability $p$ and  $F(t)=\exp(t)\left[ 1+\exp(t)\right]^{-1}$ denotes the logistic function, the functional logistic regression model assumes that 
\begin{equation}
y_i|X_i \sim Bi\left(1, p_i\right) \qquad \mbox{where} \qquad p_i=\esp(y_i|X_i)= F\left(\alpha_0+\langle X_i, \beta_0\rangle\right)\,,
\label{eq:FLOGIT}
\end{equation}  
with $\alpha_0\in\real$, $\beta_0\in L^2(\itI)$ and $\langle u, v\rangle =\int_{\itI} u(t) v(t) dt$ stands for the usual inner product in $L^2(\itI)$ and $\|\cdot\|$ for the corresponding norm.

 The estimators for functional logistic regression mentioned above are based on the method of maximum likelihood combined with some regularization tool. As it happens with finite dimensional covariates, these estimators are highly affected by the presence of  misclassified points specially when combined with high leverage covariates. 
 Robust methods, on the other hand, have the advantage of giving reliable results, even when a proportion of the data correspond to atypical data. As mentioned above, some robust methods for  functional linear regression models have been recently proposed and this area has shown great development in the last ten years. However, the literature on robust procedures for generalized functional linear models and specifically for   functional logistic regression ones is scarce. 
Up to our knowledge,  only few procedures have been considered and most of them lack  a careful study of the asymptotic properties of the proposal considered. The first attempt  to provide a robust method for functional logistic regression was given in  \citet{denhere:billor:2016}. This method is based on reducing the dimension of the covariates by using a robust principal components method proposed in \citet{Hubert:etal:2005}. The robustness of this method ensures that the functional principal component analysis is not influenced by outlying covariates. However, it  but does not take into account the problem of large deviance residuals  originated by incorrectly classified observations. To solve this problem, \citet{mutis:etal:2022} propose to combine a basis approximation with   weights computed using the Pearson residuals, in an approach related to that given by \citet{alin:agostinelli:2017} for finite--dimensional covariates. Recently, \citet{kalogridis:2023} introduced an approach based on divergence measures combined with penalizations and provided a careful study of its asymptotic properties,  for bounded covariates. 
 
 In this paper, we follow a different perspective, taking into account the sensitivity of these estimators to atypical observations and  based on the ideas given for euclidean covariates by  \citet{Bianco:yohai:1996} and \citet{Croux:H:2003}, we define robust estimators of the intercept $\alpha_0$ and the slope $\beta_0$ following a sieve approach combined with weighted $M-$estimators. More precisely, as done for instance in functional linear regression, we first reduce the set of candidates for estimating $\beta_0$  to those belonging to a finite--dimensional space spanned by a fixed basis  selected by the practitioner, such as the $B-$splines, Fourier, or wavelet bases. This enables to use the robust tools developed for finite--dimensional covariates in this infinite--dimensional framework. Clearly this regularization process involves the selection of the basis dimension which should increase with the sample size at a given rate and which must be chosen in a robust way. For that reason, in Section \ref{sec:BIC}, we   describe a resistant procedure to select the dimension of the approximating space.

The rest of the paper is organized as follows. The model and our proposed
estimators are described in Section \ref{sec:estimadores}. Theoretical
assurances regarding   consistency and convergence rates of our proposal are
provided in Section \ref{sec:consis}, while in  Section \ref{sec:simu} we
report the results of a simulation study to explore their finite-sample
properties. Section \ref{sec:ejemplos} contains the analysis of a real-data set, while 
final comments are given in Section \ref{sec:concl}. All
proofs are relegated to the Appendix.

\section{The estimators} \label{sec:estimadores}
As mentioned in the Introduction, our proposal for estimators under the functional logistic regression model \eqref{eq:FLOGIT}  is based on basis reduction. For that reason and for the sake of completeness,   in Section \ref{eq:logfinito}, we recall  some of the robust proposals given when the covariates belong to a finite--dimensional space.

\subsection{Some robust proposals for euclidean covariates}{\label{eq:logfinito}}
When the covariates are finite--dimensional, the practitioner  deals with   i.i.d. observations   $\left(y_i, \bx_i \right)$, $1\le i \le n$, where $\bx_i \in \real^p$, $y_i\in \{0,1\}$. In this case,   the well--known logistic regression model states   that $y_i|\bx_i \sim Bi(1, F(\alpha_0+\bx_i \trasp \bbe_0))$, where 
 $\bbe_0 \in \real^p$.
 As mentioned in the Introduction,    the maximum likelihood estimator   of the regression coefficients is very sensitive to outliers, meaning that we cannot accurately classify a new observation based on these estimators, neither identify those covariates with important information for assignation. To solve this drawback, different robust procedures have been considered. 
 
In particular,  consistent $M-$estimators bounding the deviance  were  defined in  \citet{Bianco:yohai:1996}, while in order to obtain bounded influence estimators a weighted  version was introduced in \citet{Croux:H:2003}.  For the family of $M-$estimators defined in \citet{Bianco:yohai:1996},  \citet{Croux:H:2003}  introduced a loss function that guarantees the existence of the resulting robust estimator  when the maximum likelihood estimators do  exist.    \citet{Basu:etal:1998}   considered a proposal based on minimum divergence. However, their approach can also be seen as a particular case of the \citet{Bianco:yohai:1996} estimator with a properly defined loss function. Other approaches were given in   \citet{Cantoni:ronchetti:2001} who consider a robust quasi--likelihood estimator, \citet{Bondell05, Bondell08} whose procedures incorporate a minimum distance perspective and \citet{Hobza:etal:2008} who defines a a median estimator by using an $L^1-$estimator of the smoothed responses. 
 
As pointed out in  \citet{maronna:etal:2019libro}, the use of redescending weighted $M-$estimators ensures estimators with good robustness properties. For that reason and taking into account that our proposal will combine dimension reduction with   weighted $M-$estimators,  we briefly describe the proposal introduced in \citet{Croux:H:2003}. From now on, denote  $d(y,t)$   the squared deviance function, that is, $d(y,t) =  - \log(F(t))  y - \log(1 - F(t)) (1-y)$ and let $\rho:\real_{\ge 0}\to \real$ be  a bounded, differentiable and  nondecreasing function with derivative $\psi = \rho^{\prime}$. 
Furthermore, define
\begin{eqnarray}
\phi(y, t)&=& \rho(d(y, t)) + G(F(t)) + G(1 - F(t)) \,,
\label{eq:phiBY}
\end{eqnarray}
where   $G(t) = \int_0^t \psi(-\log u) \, du$. The correction term $G(F(t)) + G(1 - F(t))$ was introduced in \citet{Bianco:yohai:1996}  to guarantee Fisher--consistency of the resulting procedure. It is worth mentioning that the function $\phi(y, t)$ can be written as 
\begin{equation}\label{eq:phiBY*}
\phi(y, t)=y\rho\left(\,-\,\log\left[F(t)\right]\right)+ G(F(t)) +(1-y)\rho\left(\,-\,\log\left[1-F(t)\right]\right)  + G(1 - F(t))\,.
\end{equation}
The weighted $M-$estimators are  the minimizers of $ 
 L_n^{\star}(a,\bb) =  \sum_{i = 1}^n \phi(y_i, a + \bx_i\trasp \bb) w(\bx_i)/n$, that is,
\begin{equation}
\label{eq:WBY}
(\walfa,\wbbe )=  \argmin_{a\in \real, \bb\in \real^p} L_n^{\star}(a, \bb)\,.
\end{equation} 
 The weights  $ w(\bx_i)$ are usually  based  on a robust Mahalanobis distance of the explanatory variables, that is, they depend on the distance between $\bx_i$    and a robust center of the data.  
With this notation, the minimum divergence estimators considered in \citet{Basu:etal:1998} correspond to the choice $\rho(t)=(1+1/c)(1-\exp(-\,c\,t)) $ and $w\equiv 1$.  

The asymptotic properties of the  $M-$estimators with $w\equiv 1$  were obtained in  \citet{Bianco:yohai:1996}, while the situation of a general weight function was studied in  \citet{Bianco:Martinez:2009}. It should be mentioned that these estimators are implemented in the package \texttt{RobStatTM}, through the functions \texttt{logregBY}, when the weights equal 1, and \texttt{logregWBY} when considering hard rejection weights   derived from the MCD estimator of the continuous explanatory variables. In both cases, the loss function is taken as the one introduced in \citet{Croux:H:2003}.

\subsection{The case of functional covariates}
In this section, we consider the situation where  $(y_i, X_i)$, $1\le i \le n$ are independent observations such that $y_i\in  \{0,1\}$ and $X_i\in L^2(\itT)$  with $\itT$ a compact interval, that, without loss of generality, we assume to be $\itT=[0,1]$. The model relating the responses to the covariates is  the functional   logistic regression model, that is, we assume  that \eqref{eq:FLOGIT} holds. 
 
As mentioned in the Introduction,  estimation under  functional linear regression or  functional logistic regression models is an ill--posed problem. To avoid this issue, one possibility is dimension reduction that   can be achieved by considering as possible candidates  for estimating $\beta_0$ the elements of   a finite--dimensional space spanned by a fixed basis. This is the approach we follow in this paper, that is, to define robust estimators of the intercept $\alpha_0$ and the slope $\beta_0$, we will use a sieve approach combined with weighted $M-$estimators. We do not restrict our attention to a particular basis as the $B-$spline basis considered, for instance, in  \citet{boente:salibian:vena:2020} for the functional semi--linear model or \citet{mutis:etal:2022} for the functional logistic regression one. Instead, we provide a general framework which allows the practitioner to choose the basis according to the smoothness knowledge or assumptions to be considered on $\beta_0$. 

Henceforth,  let $k=k_n$ stand for the dimension of the finite dimensional space spanned by the basis $\{B_j: 1\leq j\leq k_n\}$. The space of possible candidates correspond to $\real \times \itM_k$, where
$\itM_k=\left\{\sum_{j=1}^{k} b_{j} B_{j}, \bb \in \real^k\right\}$. 

From now on, for any $\bb \in \real^k$,    $\beta_{\bb}$ will stand for  $\beta_{\bb}=\sum_{j=1}^{k} b_{j} B_{j}$.
Then, for any possible candidate $\beta_{\bb} \in \itM_k$, the inner product $ \langle X_i,\beta_{\bb}  \rangle$ equals $ \sum_{j=1}^{k} b_{j} x_{ij}$ where  $x_{ij}=\langle X_i, B_j\rangle$ which suggests to use the robust estimators defined in Section \ref{eq:logfinito} taking as covariates $\bx_i=(x_{i1}, \dots, x_{i k})\trasp $.

More precisely,  the weighted estimators defined in \eqref{eq:WBY} over the finite--dimensional approximating spaces $\real \times \itM_k$ are the key tool for obtaining consistent estimators of $\beta_0$. For   any $\bb \in \real^k$  define
$$L_n(\alpha, \beta_{\bb})= \frac{1}{n} \sum_{i = 1}^n \phi(y_i, \alpha +\langle X_i, \beta_{\bb}\rangle) w(X_i)=\frac{1}{n} \sum_{i = 1}^n \phi(y_i, \alpha + \bx_i\trasp \bb) w(X_i)$$
 and 
 \begin{equation}
\label{eq:FWBY}
(\walfa,\wbb )=     \argmin_{\alpha\in \real, \bb\in \real^{k}} L_n(\alpha, \beta_{\bb}) \,.
\end{equation} 
Hence, the estimator of $\beta_0$ is given by 
$$\wbeta(t)=\beta_{\wbb}(t)=\sum_{j=1}^{k} \wb_j B_j(t)\,,$$ 
where $\wbb=\left(\wb_1, \ldots, \wb_{k}\right)\trasp$, meaning that 
$(\walfa,\wbeta )=     \argmin_{\alpha\in \real, \beta\in \itM_k} L_n(\alpha, \beta)$.   

The weights $w(X_i)$ in \eqref{eq:FWBY} may be computed as in \eqref{eq:WBY} using a weight function of the  robust Mahalanobis distance of the  projected variables $\bx_i$, in which case, $L_n(\alpha, \beta_{\bb})= L_n^{\star}(a,\bb)$. Another possibility is to compute the weights from the functional covariates, for instance, discarding observations which are declared as outliers by the functional boxplot or any other functional measure of atipicity.  In the simulation study reported in Section \ref{sec:simu}, we explore both possible choices for the weights. As in the finite--dimensional setting, for the sake of simplicity, when deriving consistency results, we will assume that the weight  function $w$ is not  data dependent.

\subsection{On the basis choice}{\label{sec:BASE}}
The basis choice depends on the knowledge or assumptions to be made on the slope parameter.
Some well known basis are $B-$splines, Bernstein and Legendre polynomials, Fourier basis or Wavelet ones. They vary in the way they approximate a function  as discussed, for instance, in \citet{boente:martinez:2023} and \citet{kalogridis:vanaelst:2023}.

When considering $B$-spline approximations, consistency results will require   the slope parameter to be   $r$-times continuously differentiable, that	is, $ \beta_0  \in C^r([0,1])$, where $r\le \ell -2$ and $\ell$ is the spline order. In particular, when cubic splines are considered, the results in 
Section  \ref{sec:consis} hold for twice continuously differentiable regression functions. Recall that a spline of order $\ell$ is    a polynomial of degree $\ell - 1$ within each  subinterval defined by the knots. As stated in Corollary 6.21 in \citet{schumaker1981spline}, if $\beta_0 \in C^r([0,1])$ with $r-$th derivative Lipschitz and  $r\le \ell -2$, under proper assumptions on the knots,  there exist a spline of order $\ell$, let's say $\wtbeta=\sum_{j=1}^k b_j B_j$,   such that $\|\wtbeta- \beta_0\|_\infty=O(k^{-{(r+1)}})$. It is worth mentioning that the approximation order has an impact on the rates of convergence derived in Theorems  \ref{teo:RATES1} and   \ref{teo:RATES} through assumption \ref{ass:approxorder}.

Bernstein polynomials are a possible alternative to $B-$splines. They are defined as 
$$B_j(t)= \binom{k}{j} t^{j} (1-t)^{k-j}\quad \mbox{for } j=0,\dots, k\,.$$ 
Weierstrass  Theorem ensures that if $\beta_0\in C([0,1])$, there exists $\wtbeta=\sum_{j=1}^k b_j B_j$, where $b_j=\beta_0(j/k)$ such that $\|\wtbeta- \beta_0\|_\infty \to 0$. Furthermore,   Theorem 3.2 in \citet{Powell:1981} guarantees  that when considering Bernstein polynomials  of order $k$  we also get that $\|\wtbeta- \beta_0\|_\infty=O(k^{-r})$, whenever $\beta_0 \in C^r([0,1])$.  

Legendre polynomials define an  orthogonal basis in $L^2([0,1])$. As mentioned in \citet{boente:martinez:2023},  the  convergence rates derived in Theorem 2.5 from \citet{Wang:Xiang:2012} allow to  show that, if $\beta_0 \in C^r([0,1])$ and $\wtbeta$ stands for the truncated Legendre series
expansion of $\beta_0$ of order $k$, then   $\|\wtbeta- \beta_0\|_\infty=O\left(k^{- \, r+1/2}\right)$. Note that in this case, the approximation rate is lower than for the other two basis mentioned above and this will affect the  rates provided in Theorem \ref{teo:RATES}. More generally, when polynomials basis of order $k$ are considered,   Jackson's Theorem (see Theorem 3.12 in \citealp*{schumaker1981spline}) ensures that if $\beta_0 \in W^{r,2}([0,1])$, the $L^2-$Sobolev space of order $r$ as defined below, there exists a polynomial $\wtbeta$ of order $k\ge r$ such that  $\|\wtbeta- \beta_0\|=O(k^{-r})$  and this improved $L^2$ approximation order is enough to guarantee a better rate of convergence for the predictions, when polynomial bases are considered.

Finally, the Fourier basis  is the natural basis in $L^2([0,1])$ and it is usually considered when approximating periodic functions. Clearly, the finite expansion  $\wtbeta_0(t)=b_0+ \sum_{j=1}^k b_{j,1} \sin(2\pi\,j\,t) + b_{j,2} \cos(2\pi\,j\,t)$, where $b_0= \int_0^1 \beta_0(t)  \, dt$, $b_{j,1}=\int_0^1 \beta_0(t) \sin(2\pi\,j\,t)\, dt$ and $b_{j,2}=\int_0^1 \beta_0(t) \cos(2\pi\,j\,t)\, dt$ converges to $\beta_0$ in $L^2([0,1])$. When   $\beta_0 \in W^{r,2}([0,1])$, Corollary 2.4 in Chapter 7 from \citet{devore:lorentz:1993} ensures that $\|\wtbeta- \beta_0\|=O(k^{-r})$ 

Wavelet basis may be useful when seeking for sparse slope function estimates. \citet{zhao:etal:2012} provide conditions ensuring $L^2$ approximations of    order $O(k^{-r})$,  when $\beta_0 \in W^{r,2}([0,1])$ and one--dimensional Wavelets are used, see also \citet{mallat:2009}.

\subsection{Selecting the size of the  basis}{\label{sec:BIC}}
 The number of elements of the  basis plays the role of   regularization parameter  in our estimation procedure. 
 The importance of considering a robust criterion to select the regularization parameter has been discussed by several authors who report how  standard  model selection methods can be highly affected by a small proportion of  outliers. The sensitivity to atypical data of classical basis selectors may be inherited by the final regression estimators even when robust procedure is considered.   

To deal with these problems, when the covariates belong to $\real^p$, \citet{ronchetti:1985}  and \citet{thar:claes:2013}  provide some robust approaches when considering  linear regression models.  Besides,    under a sparse logistic regression model, \citet{Bianco:Boente:Chebi:2022} report in their supplement a numerical study that reveals   the importance of considering a robust criterion  in order to achieve reliable predictions. 
  Finally, for functional covariates and under a semi--linear and a  functional linear model, \citet{boente:salibian:vena:2020}, \citet{kalogridis:vanaelst:2019,kalogridis:vanaelst:2023}, respectively, discuss robust criteria for selecting the regularization parameters.
 
In our framework,  the basis dimension $k=k_n$   may be determined by a
  model selection criterion  such as a robust version of the Akaike criterion used in \citet{lu:2015}  or the robust \citet{schwartz:1978}  criterion considered in \citet{he:shi:1996} and \citet{he:zhu:fung:2002} for semi--parametric regression models. Suppose that $  (\walfa^{(k)},\wbb^{(k)})$ is the  solution of \eqref{eq:FWBY} when we use a $k-$dimensional linear space and denote as $\wbeta^{(k)}=\beta_{\wbb^{(k)}}$.  We define  a robust $BIC$ criterion, whose large values indicate a poor fit,  as
\begin{equation}
RBIC(k) =  L_n(\walfa^{(k)}, \wbeta^{(k)})  
+ k\,  \frac{\log n}{n}\,.
  \label{eq:bic1}
  \end{equation}
For instance, when considering  $B-$spline procedures, in order to obtain an  optimal rate of convergence, we let 
the number of knots increase slowly with the sample size. 
 Theorem \ref{teo:RATES} below  shows  that when $\beta_0$ is twice continuously differentiable and is
approximated with cubic splines ($\ell=4$),  the  size $k_n$ of the bases can be taken of order $n^{1/5}$.  Hence, a possible way to select $k_n$ is 
to search for the first local
minimum of $RBIC(k_n)$ in the range $\max(n^{1/5}/2, 4) \le k\le 8 + 2\, n^{1/5}$.
Note that for cubic splines the smallest possible number of knots is 4. 

 \section{Consistency results}{\label{sec:consis}}
To provide a unified approach in which the basis gives approximations either in $L^2(0,1)$ or in $C([0,1])$, equipped by their respective norms $\|\cdot\|$ and $\|\cdot\|_{\infty}$, we will denote $\itH$ the space $L^2([0,1])$ or  $C([0,1])$ and $\|\cdot\|_{\itH}$ the corresponding norm. Hence, we have that $\|f\|\le \|f\|_{\itH}$. Furthermore, $\itW^{r,\itH}$ will stand for the H\"older space
$$
\itW^{r,\infty}([0,1]) \, = \, \biggl\{ f\in C^r \left([0, 1] \right):
  \big \|f^{(j)} \big \|_\infty<\infty, \;0\leq j\leq r,  \mbox{ and } 
  \sup_{z_1\ne z_2} \frac{\big |f^{(r)}(z_1) - f^{(r)}(z_2) \big|}{|z_1- z_2|}
  <\infty \biggr\} \, ,
$$
when $\itH=C([0,1])$, while when  $\itH=L^2([0,1])$, we label  $\itW^{r,\itH}$  the Sobolev space
\begin{align*}
\itW^{r,2}([0,1])=& \{f\in L^2([0,1]):    \mbox{the weak derivatives of $f$ up to order $r$ exist  and } \\
&  \|f^{(j)}\|^2= \int_0^1 \left\{f^{(j)}(t)\right\}^2\, dt<\infty \mbox{ for } j=1\dots, r\}\,.
  \end{align*} 
We denote  $\|\cdot\|_{\itW^{r,\itH}}$ the corresponding norm, that is,  
$$\|f\|_{\itW^{r,\itH}} = \max_{1\le j\le r} \big \|f^{(j)} \big \|_{\infty}+ \sup_{x \neq y,
  x,y \in (0,1)} \frac{\big |f^{(r)}(x)-f^{(r)}(y) \big|}{|x-y|} \,$$ 
  in the former case and $\|f\|_{\itW^{r,\itH}}^2=\|f\|^2+ \sum_{j=1}^r \|f^{(j)}\|^2$ in the latter one.

To derive consistency results, we will need the following assumptions. 

 \begin{enumerate}[label=\textbf{A\arabic*}]
\setcounter{enumi}{0}
 \item \label{ass:rho_bounded_derivable} $\rho: \real_{\geq 0} \to \real$ is a bounded, continuously differentiable function with   bounded derivative $\psi$ and $\rho(0) = 0$.
 
\item \label{ass:rho_derivative_positive}$\psi(t) \geq 0$ and there exists some $a \geq \log 2$ such that $\psi(t) > 0$ for all $0 < t < a$.
 
\item \label{ass:psi_strictly_positive} $\psi(t) \geq 0$ and there exist values $a \geq \log 2$ and $A_0 > 0$ such that $\psi(t) > A_0$ for every $0 < t < a$.
 \item\label{ass:psidif}  $\psi$ is continuously differentiable function with   bounded derivative $\psi^{\prime}$.

\item \label{ass:funcionw} $w$ is a non--negative bounded function with support $\itC_w$ such that $\prob(X\in \itC_w)>0$.  Without loss of generality, we assume that $\|w\|_{\infty}=1$.

\item 
\begin{enumerate}
\item \label{ass:wx}  $\esp w(X)\,\|X\|<\infty$.
\item \label{ass:wx2}  $\esp w(X)\,\|X\|^2<\infty$.
 \end{enumerate}
  
\item\label{ass:beta0}   The basis functions are such that   $B_j \in \itW^{1,\itH}$.

 \item\label{ass:dimbasis}  The basis dimension $k_n$ is such that $k_n\to \infty$, $k_n/n\to 0$.  

 \item\label{ass:approx}  There  exists an element $\wtbeta_{k}\in \itM_k$,  $\wtbeta_{k}=\sum_{j=1}^{k} \wtb_j  B_j(x)$ such that  $\|\wtbeta_{k}-\beta_{0}\|_{\itH}\to 0$ as $k\to \infty$. 
 
  \item\label{ass:approxorder}  There  exists an element $\wtbeta_{k}\in \itM_k$,  $\wtbeta_{k}=\sum_{j=1}^{k} \wtb_j  B_j(x)$ such that  $\|\wtbeta_{k}-\beta_{0}\|_{\itH}=O(k^{-r})$, for some $r>0$. Furthermore, the basis dimension $k_n$ is of order $O(n^{\varsigma})$ where $ \varsigma< 1/r$.  

 \item \label{ass:probaX*}   The following condition holds:
 \begin{equation}{\label{eq:probaX*}}
 \prob(\langle X,\beta\rangle+ \alpha =0)=0 \,, \mbox{ for any $\beta\in    \itH^{\star}$, $\alpha \in \real$,  such that $(\beta,\alpha)\ne 0$}\,,
 \end{equation}
where $\itH^{\star}={\itH}$ or $\itH^{\star}=\itW^{1,\itH}$,  depending on whether $\beta_0$ belongs to $ {\itH}$ or $\itW^{1,\itH}$, respectively.

\end{enumerate}
 
Our first result states that, under mild assumptions, the estimators obtained minimizing $L_n(\alpha, \beta)$ over $(\alpha, \beta)\in \real\times \itM_k$ produce consistent estimators of the conditional success probability with respect to the  weighted mean square error of the   differences between predicted probabilities defined as
$$\pi_\prob^2(\theta_1 ,\theta_2 )= \esp\left(w(X) \left[F(\alpha_1+\langle X,  \beta_1 \rangle)- F(\alpha_2+\langle X,  \beta_2 \rangle )\right]^2\right)\,,$$ 
where for $j=1,2$, $\theta_j=(\alpha_j, \beta_j)\in \Theta=\real\times \itH$.  Henceforth, to simplify the notation, we denote $\wtheta=(\walfa,  \wbeta)$ and $\theta_0=(\alpha_0,\beta_0)$.

\begin{theorem}\label{teo:RATES1} 
Let $\rho$ be a function satisfying  \ref{ass:rho_bounded_derivable} and  \ref{ass:psi_strictly_positive} and $w$ a  weight function satisfying \ref{ass:funcionw}.    
\begin{enumerate}
\item[(a)] Under \ref{ass:dimbasis} and \ref{ass:approx}, $\pi_\prob^2(\wtheta ,\theta_0 ) \convpp 0$. 
\item[(b)] If $w$  satisfies \ref{ass:wx}, we have that $ \pi_\prob^2(\wtheta ,\theta_0 )=O_\prob(\sqrt{k_n/n}+\|\wtbeta_{k}-\beta_{0}\|_{\itH}  )$. Moreover, if     \ref{ass:approxorder} also holds, $ \pi_\prob (\wtheta ,\theta_0 )=O_\prob(n^{- {\lambda}} )$, where   $\lambda = \min( \varsigma\, r/2, (1-\varsigma)/4)$.  
\item[(c)] If $w$  satisfies \ref{ass:wx2} and $\psi$ satisfies \ref{ass:psidif}, we have that $ \pi_\prob^2(\wtheta ,\theta_0 )=O_\prob(\sqrt{k_n/n}+\|\wtbeta_{k}-\beta_{0}\|_{\itH}^2  )$. Moreover, if     \ref{ass:approxorder} also holds, $ \pi_\prob (\wtheta ,\theta_0 )=O_\prob(n^{ -{\omega} } \;)$, where   $\omega= \min(  \,\varsigma\, r, (1-\varsigma)/4)$.  
\end{enumerate}
 \end{theorem}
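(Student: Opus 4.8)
The plan is to view $\wtheta=(\walfa,\wbeta)$ as an $M$-estimator over the growing sieve $\real\times\itM_k$ and to run the classical three-step argument for such estimators: identify the population objective and its minimizer, establish a curvature inequality linking the excess objective to the target metric $\pi_\prob$, and control the fluctuation of $L_n$ around its mean uniformly over the sieve. Accordingly, I would first introduce the population objective $L(\theta)=\esp[w(X)\,\phi(y,\alpha+\langle X,\beta\rangle)]$ and record the Fisher-consistency property built into the loss: because of the correction term $G(F(t))+G(1-F(t))$ in \eqref{eq:phiBY}, for every true linear predictor $s$ the map $t\mapsto F(s)\,\phi(1,t)+(1-F(s))\,\phi(0,t)$ is minimized at $t=s$. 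Taking $s=\alpha_0+\langle X,\beta_0\rangle$ and integrating against $w(X)$ shows that $\theta_0=(\alpha_0,\beta_0)$ minimizes $L$ over $\real\times\itH$ and that the first-order derivative $DL(\theta_0)$ vanishes.

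Next I would prove a curvature bound of the form $L(\theta)-L(\theta_0)\ge c_0\,\pi_\prob^2(\theta,\theta_0)$ for some $c_0>0$. Writing the conditional excess loss as $R(s,t)=F(s)\,\phi(1,t)+(1-F(s))\,\phi(0,t)-\{\text{same at }t=s\}$, a second-order expansion gives $\partial_t^2 R(s,s)\ge \kappa\,[F'(s)]^2$ for a positive constant $\kappa$, the crucial point being that \ref{ass:psi_strictly_positive} forces $\psi>A_0>0$ on $(0,a)$ with $a\ge\log 2$; here $\log 2=d(y,0)$ is exactly the deviance at the neutral predictor, so the region where $\psi$ must be bounded away from zero is the one governing the curvature. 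A compactness argument in the bounded coordinate $F(t)\in[0,1]$, uniform in $s$ by virtue of the uniform lower bound $A_0$, upgrades the local estimate to $R(s,t)\ge c_0\,[F(t)-F(s)]^2$ for all $s,t$; integrating against $w(X)$ produces the stated inequality with $\pi_\prob^2(\theta,\theta_0)$ on the right.

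With these two ingredients the conclusions follow from the defining inequality $L_n(\wtheta)\le L_n(\alpha_0,\wtbeta_{k})$, which yields the sandwich
$$L(\wtheta)-L(\theta_0)\le 2\sup_{\theta\in\real\times\itM_k}|L_n(\theta)-L(\theta)|+|L(\alpha_0,\wtbeta_{k})-L(\theta_0)|\,.$$
For part (a), the supremum tends to $0$ almost surely by a uniform law of large numbers over the sieve --- legitimate because $\phi$ is bounded by \ref{ass:rho_bounded_derivable}, so after reparametrizing by the prediction $F(\alpha+\langle\cdot,\beta\rangle)\in[0,1]$ the effective index class is totally bounded with VC-subgraph dimension $O(k_n)$ and $k_n/n\to0$ --- while the bias term tends to $0$ by \ref{ass:approx} and the Lipschitz continuity of $\phi$ in $t$; the curvature bound then gives $\pi_\prob^2(\wtheta,\theta_0)\convpp 0$. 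For the rates, the stochastic term is quantified by a maximal inequality for this bounded VC-subgraph class, giving $\sup|L_n-L|=O_\prob(\sqrt{k_n/n})$, and the bias term is expanded around $\theta_0$: in case (b) a first-order Lipschitz bound gives $|L(\alpha_0,\wtbeta_{k})-L(\theta_0)|\le C\,\esp[w(X)\|X\|]\,\|\wtbeta_{k}-\beta_0\|_{\itH}$ using \ref{ass:wx}, whereas in case (c), since $DL(\theta_0)=0$ and $\psi\in C^1$ by \ref{ass:psidif}, a second-order Taylor expansion gives $|L(\alpha_0,\wtbeta_{k})-L(\theta_0)|\le C\,\esp[w(X)\|X\|^2]\,\|\wtbeta_{k}-\beta_0\|_{\itH}^2$ using \ref{ass:wx2}. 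Dividing by $c_0$ gives the displayed $O_\prob$ bounds for $\pi_\prob^2$, and substituting \ref{ass:approxorder} ($\|\wtbeta_{k}-\beta_0\|_{\itH}=O(k^{-r})$, $k_n=O(n^{\varsigma})$) balances the stochastic exponent $(1-\varsigma)/4$ against the approximation exponents $\varsigma r/2$ and $\varsigma r$, producing $\lambda$ and $\omega$.

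I expect the empirical-process step over the diverging sieve to be the main obstacle: one must show $\sup_{\real\times\itM_k}|L_n-L|=O_\prob(\sqrt{k_n/n})$ with the sharp dependence on $k_n$, together with its almost-sure analogue for part (a). This is precisely where the boundedness of $\rho$ and the linearity of the predictor in $\bb$ are indispensable, since together they bound the uniform entropy of the class by $O(k_n\log(1/\eps))$ and feed a maximal inequality; the almost-sure statement additionally needs a Hoeffding-type tail bound combined with Borel--Cantelli. A secondary delicate point is making the curvature inequality uniform over the possibly unbounded true predictor $s=\alpha_0+\langle X,\beta_0\rangle$, which is exactly what the uniform lower bound $A_0$ in \ref{ass:psi_strictly_positive} is designed to deliver.
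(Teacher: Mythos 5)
Your proposal is correct and follows essentially the same route as the paper: Fisher consistency of the corrected loss, the curvature inequality $L(\theta)-L(\theta_0)\ge c_0\,\pi_\prob^2(\theta,\theta_0)$ driven by the uniform lower bound $A_0$ in \ref{ass:psi_strictly_positive} (with $a\ge\log 2$ so that the two halves $F(t)\le 1/2$ and $F(t)\ge 1/2$ are covered), the sandwich $L(\wtheta)-L(\theta_0)\le 2\sup_{f\in\itF_n}|(P_n-P)f|+\{L(\alpha_0,\wtbeta_k)-L(\theta_0)\}$, a VC/entropy maximal inequality plus Borel--Cantelli for the supremum over the growing sieve, and a first- versus second-order expansion of the approximation bias (using that the conditional score vanishes at the true predictor) to distinguish (b) from (c). The only point to tighten is part (a), where the bias term $L(\alpha_0,\wtbeta_k)-L(\theta_0)$ should be sent to zero via pointwise convergence of the predictor and the bounded convergence theorem rather than a Lipschitz bound in expectation, since $\esp\, w(X)\|X\|<\infty$ (\ref{ass:wx}) is not assumed in that part --- which is exactly how the paper handles it.
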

 
\begin{remark}{\label{remark:consistencia}}
Denote  $\wpe (X)=F(\walfa+\langle X,  \wbeta \rangle )$ and $p_0(X)=F(\alpha_0+\langle X,  \beta_0 \rangle )$. When $w(X)\equiv 1$,  Theorem \ref{teo:RATES1}(a) implies that   for any $\epsilon>0$,  
$$\prob\left(\left|\wpe(X)-p_0(X)\right|>\epsilon\right)=\esp\left(\prob\left(\left|\wpe(X)-p_0(X)\right|>\epsilon\Big|_{(y_1,X_1),\dots, (y_n,X_n)}\right)\right)\le \esp\left(\frac{\pi_\prob^2(\wtheta ,\theta_0 )}{\epsilon^2}\right)\,,$$
where the right hand side of the inequality converges to 0  as $ n\to \infty$.Thus, $\wpe (X)\convprob p_0(X)$ allowing to consistently classify a new observation. Moreover, using that $F^{-1}$ is continuous, we also conclude that $\walfa+\langle X,  \wbeta \rangle  $   converges in probability to $  \alpha_0+\langle X,  \beta_0 \rangle$.
  However, the infinite--dimensional structure of the covariates does not allow  to derive the consistency of $\wbeta$, which is instead obtained in Theorem \ref{teo:CONSIST}.  The rates obtained in Theorem \ref{teo:RATES1} (b) provide a preliminary rate  that will be improved in Theorem \ref{teo:RATES}. 
\end{remark}

Theorem \ref{teo:CONSIST}  establishes strong consistency of the intercept and slope parameter, which clearly implies that of the predicted probability, that is, $F(\walfa+\langle X,  \wbeta \rangle )\convpp F(\alpha_0+\langle X,  \beta_0 \rangle )$. This result provides an improvement over the one  obtained in Theorem \ref{teo:RATES1}, but requires additional assumptions on the covariates, namely, assumption \ref{ass:probaX*} which is discussed in Remark \ref{remark:comentarios}.

\begin{theorem}\label{teo:CONSIST} 
Let $\rho$ be a function satisfying  \ref{ass:rho_bounded_derivable} and  \ref{ass:rho_derivative_positive}, and $w$ a weight function satisfying \ref{ass:funcionw}. Assume that \ref{ass:beta0} to \ref{ass:approx} hold. 
 \begin{enumerate}
 \item[(a)]If in addition \ref{ass:probaX*}  holds, we have that $|\walfa- \alpha_0|+ \| \wbeta - \beta_0\|_{\itH}\convpp 0$.
 \item[(b)] Assume that $\itH=L^2([0,1])$, that  the basis functions are such that  $B_j \in \itW^{1,2}$, $1\le j\le k_n$,  providing approximations in \ref{ass:approx} in $L^2([0,1])$ and that \ref{ass:probaX*} holds with $\itH^{\star}=    \itW^{1,2}$, i.e., $\beta_0\in \itW^{1,2}$.
 \item[] If, in addition,   $\beta_0\in C([0,1])$ and the basis elements   are also continuous, i.e., $B_j \in C([0,1])$, $1\le j\le k_n$,    then $|\walfa- \alpha_0|+ \| \wbeta - \beta_0\|_{\infty}\convpp 0$.
 \end{enumerate}
\end{theorem}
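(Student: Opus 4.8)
The plan is to treat $\wtheta$ as a sieve $M$-estimator and to run a Wald-type consistency argument, keeping in mind that the usual ``well-separated minimum'' step is \emph{unavailable} here: as the perturbation $\beta_0+\eps e_n$ with $\{e_n\}$ orthonormal shows, $L$ is not coercive and its near-minimizing sets are not norm-bounded (this is exactly the obstruction flagged in Remark \ref{remark:consistencia}). Consequently assumption \ref{ass:probaX*} and the smoothness of the basis must be brought in to replace the missing separation. First I would introduce the population target $L(\theta)=\esp\left[w(X)\,\phi\!\left(y,\alpha+\langle X,\beta\rangle\right)\right]$ for $\theta=(\alpha,\beta)$ and record the Fisher-consistency computation behind the Bianco--Yohai correction: conditioning on $X$ and using $\esp(y\mid X)=p_0(X)=F(\alpha_0+\langle X,\beta_0\rangle)$, the inner expectation $\esp[\phi(y,t)\mid X]$ is, as a function of $t$, minimized exactly at $t=\alpha_0+\langle X,\beta_0\rangle$; assumption \ref{ass:rho_derivative_positive} (with $a\ge\log 2$) guarantees the relevant derivative is strictly signed away from this point, so the minimizer is unique and $\Delta(p_0,p):=\esp[\phi(y,\cdot)\mid X]-\min$, viewed as a function of $p=F(t)$, satisfies $\Delta(p_0,\cdot)\ge 0$ with a strict minimum only at $p=p_0$. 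Hence $L(\theta)\ge L(\theta_0)$ with equality iff $\langle X,\beta-\beta_0\rangle+(\alpha-\alpha_0)=0$ a.s.\ on $\itC_w$; \ref{ass:probaX*} together with $\prob(X\in\itC_w)>0$ from \ref{ass:funcionw} then forces $(\alpha,\beta)=(\alpha_0,\beta_0)$, so $\theta_0$ is the unique minimizer of $L$.

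Next I would show $L(\wtheta)\to L(\theta_0)$ almost surely. Since $\rho$ and $\psi$ are bounded under \ref{ass:rho_bounded_derivable}, the map $t\mapsto\phi(y,t)$ is bounded and Lipschitz uniformly in $y$, and $w$ is bounded by \ref{ass:funcionw}; this makes the class $\{(y,X)\mapsto w(X)\phi(y,\alpha+\langle X,\beta\rangle)\}$ amenable to a uniform law of large numbers, so $L_n\to L$ uniformly over the relevant index set. Combining the defining inequality $L_n(\wtheta)\le L_n(\alpha_0,\wtbeta_{k_n})$ with the approximation \ref{ass:approx}, the Lipschitz bound on $\phi$, and the sieve growth \ref{ass:dimbasis} (which keeps $\wtbeta_{k_n}$ feasible) gives $\limsup_n L_n(\wtheta)\le L(\theta_0)$ a.s., while uniform convergence yields $L(\wtheta)\le\limsup_n L_n(\wtheta)+o(1)$. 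Together with $L(\wtheta)\ge L(\theta_0)$ this pins down $L(\wtheta)\to L(\theta_0)$, equivalently $\esp\!\left[w(X)\,\Delta\!\left(p_0(X),F(\walfa+\langle X,\wbeta\rangle)\right)\right]\to 0$.

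The hard part, and the step I expect to demand the most care, is upgrading this to $\|\wbeta-\beta_0\|_{\itH}\to 0$: the display above controls only a weighted discrepancy of the \emph{predicted probabilities}, which does not separate $\beta_0$ in the $\itH$-norm, and, as noted, near-minimizers of $L$ need not even be norm-bounded, so both boundedness of the estimator and the weak-to-strong passage must be extracted from extra structure. I would argue along subsequences. Heuristically, divergence of the linear predictor is penalized because it drives $F(\walfa+\langle X,\wbeta\rangle)$ toward $0$ or $1$ on a set of positive probability inside $\itC_w$ (nondegenerate by \ref{ass:probaX*} and \ref{ass:funcionw}), where $\Delta$ is bounded below by the strictly positive values $\Delta(p_0,0),\Delta(p_0,1)$, contradicting $\esp[w\Delta]\to 0$; the decisive and delicate point is then to replace \emph{weak} $\itH$-convergence by \emph{strong} convergence. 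Here I would exploit that every candidate, being a finite combination of basis functions, lies in $\itW^{1,\itH}$ by \ref{ass:beta0}, and that the identified limit is $\beta_0\in\itH^{\star}$, invoking the compact embedding $\itW^{1,\itH}\hookrightarrow\itH$ so that each subsequence admits a further $\itH$-convergent subsequence whose limit, by the Fisher-consistency/\ref{ass:probaX*} identification above, must equal $\beta_0$; this forces convergence of the whole sequence and of the intercept component $\walfa\to\alpha_0$. Turning this compactness into a rigorous statement for the growing sieve, rather than for a fixed finite-dimensional model, is the technical heart of the argument.

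Finally, for part (b) the ambient space is $\itH=L^2([0,1])$, so part (a) already delivers $\|\wbeta-\beta_0\|\to 0$ and $\walfa\to\alpha_0$. To obtain the uniform conclusion under the additional continuity of $\beta_0$ and of the $B_j$, I would use the one-dimensional Sobolev embedding $\itW^{1,2}([0,1])\hookrightarrow C([0,1])$, which is compact: the $\itW^{1,2}$-control underlying \ref{ass:probaX*} with $\itH^{\star}=\itW^{1,2}$ yields equicontinuity of the relevant sequence, whence Arzel\`a--Ascoli identifies every uniform-limit point with the already-known $L^2$-limit $\beta_0$ and upgrades the $L^2$ convergence to $\|\wbeta-\beta_0\|_\infty\to 0$.
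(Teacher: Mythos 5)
Your plan follows the same route as the paper's proof: Fisher consistency and uniqueness of the minimizer of $L$ over $\real\times\itH^{\star}$ under \ref{ass:probaX*} (the paper's Lemma \ref{lema:FC}), a uniform law of large numbers over the sieve combined with the feasibility of $\wtbeta_{k_n}$ to obtain $L(\wtheta)\convpp L(\theta_0)$ (Lemma \ref{lema:M_betahat}), and then a Rellich--Kondrachov compactness argument in $\itW^{1,\itH}$ to identify the limit, with the Sobolev embedding into $C([0,1])$ handling part (b) exactly as you describe. The identification step is implemented in the paper as a deterministic separation statement, $\inf_{\theta\in\itA_\epsilon}L(\theta)>L(\theta_0)$ over $\itA_\epsilon=\{\theta:\,|\alpha|+\|\beta\|_{\itW^{1,\itH}}\le M,\ |\alpha-\alpha_0|+\|\beta-\beta_0\|_{\itH}>\epsilon\}$, proved by extracting a convergent minimizing subsequence from the compact ball and invoking Lemma \ref{lema:FC}(b); this is cleaner than applying subsequence extraction directly to the random sequence $\wbeta$, which is what your wording suggests.

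The one step you explicitly leave unexecuted is the load-bearing one: an almost-sure eventual bound $|\walfa|+\|\wbeta\|_{\itW^{1,\itH}}\le M$, without which the compact-embedding argument cannot start. Your heuristic (a divergent linear predictor pushes $F$ toward $0$ or $1$ on a positive-probability subset of $\itC_w$, where the corrected loss exceeds its minimum by a fixed amount) is the correct mechanism, but as stated it controls the predictor, not the $\itW^{1,\itH}$-norm of the estimator. The paper's Lemma \ref{lema:betaultimate} closes the gap by rescaling: for $d_\theta=|\alpha|+\|\beta\|_{\itW^{1,\itH}}$ large, write $\theta=d_\theta\,\tilde\theta$ with $\tilde\theta$ on the unit sphere $\itB$ of $\real\times\itW^{1,\itH}$; assumption \ref{ass:probaX*} gives $\ell(X,\tilde\theta)\neq 0$ a.s., so $\ell(X,\theta)=d_\theta\,\ell(X,\tilde\theta)$ diverges and Fatou's lemma yields $\liminf_{a\to\infty}\inf_{\theta^*\text{ near }\tilde\theta}L(a\,\theta^*)>L(\theta_0)$; the compactness of $\itB$ in $\real\times\itH$ then allows a finite cover that makes this uniform over directions, giving $\inf_{d_\theta>M}L(\theta)>L(\theta_0)+\tau$, which combined with $L(\wtheta)\convpp L(\theta_0)$ forces the bound. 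Until an argument of this kind is supplied, the proof is incomplete at precisely the point you flag as its technical heart.
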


Theorem \ref{teo:CONSIST}(b) is useful for situation where the slope parameter is continuous but we use a smooth basis that provides an approximation in $L^2([0,1])$, such as the Fourier one.

\begin{remark}[\textbf{Comments on assumptions}]{\label{remark:comentarios}}
Assumptions \ref{ass:rho_bounded_derivable}, \ref{ass:rho_derivative_positive}, \ref{ass:funcionw} and \ref{ass:probaX*} are needed to ensure Fisher--consistency of the proposal, see Lemma \ref{lema:FC} in the Appendix. In the finite--dimensional case, \ref{ass:rho_bounded_derivable} and \ref{ass:rho_derivative_positive} were also required in \citet{Bianco:yohai:1996} who considered $M-$estimators, while assumption \ref{ass:funcionw}  corresponds to assumption A2 in \citet{Bianco:Martinez:2009} who studied the asymptotic behaviour of weighted $M-$estimators. Assumption \ref{ass:probaX*} is the infinite--dimensional counterpart of assumptions C1 in  \citet{Bianco:yohai:1996} and A3 in \citet{Bianco:Martinez:2009}. For the functional logistic regression model considered this assumption is stronger than the one required for functional linear regression models in \citet{boente:salibian:vena:2020} and \citet{kalogridis:vanaelst:2023} which states that $\prob(\langle X,\beta\rangle+ \alpha =0)<c<1$. However, it is weaker than assumption C2 in  \citet{kalogridis:vanaelst:2019}  who defined robust estimators based on principal components under a functional linear model and assumed that the process $X$ has a finite--dimensional Karhunen-Lo\`eve expansion with scores having a joint density function. It is worth mentioning that condition \ref{ass:probaX*} is related to the fact that the slope parameter is not identifiable if the kernel of the covariance operator of $X$ does not reduce to $\{0\}$. Instead of requiring the condition over all possible elements $\beta\in L^2([0,1])$, depending on the smoothness of $\beta_0$, the set of values $\beta\ne 0$ over which the probability $\prob(\langle X,\beta\rangle+ \alpha =0)$ equals 0 may be reduced.

Furthermore, assumptions   \ref{ass:rho_bounded_derivable} and \ref{ass:rho_derivative_positive} hold for the loss function introduced in \citet{Croux:H:2003}  and for $\rho(t)=(1+1/c)(1-\exp(-\,c\,t)) $ which is related to the minimum divergence estimators defined in \citet{Basu:etal:1998} and in both cases, $a$ can be taken as $+\infty$. 
Note that if $\psi(0)\ne 0$ and assumptions \ref{ass:rho_bounded_derivable} and  \ref{ass:rho_derivative_positive} hold for some constant $ a>\log(2)$, then  condition \ref{ass:psi_strictly_positive} is fulfilled. This situation arises, for example, for the two loss functions mentioned above. It is worth mentioning that  \ref{ass:psi_strictly_positive} is a key point to derive that   $L(\theta)-L(\theta_0)\ge C_0\,\pi_\prob^2(\theta,\theta_0)$, for any $\theta$ and  some constant $C_0>0$, where  $L(\theta)= L(\alpha, \beta)=\esp\left(\phi\left(y, \alpha + \langle X, \beta\rangle \right) w(X) \right)$. This inequality  allows to  derive  convergence rates for the weighted mean square error of the
prediction differences  from those obtained for the empirical process $\sup_{\theta \in \real \times \itM_k} |L_n(\theta)-L(\theta)|$.

Assumption \ref{ass:dimbasis} gives a rate at which the dimension of the finite--dimensional space $\itM_k$ should increase. It is a standard condition when a sieve approach is considered. Furthermore, in assumption \ref{ass:approxorder} a stronger convergence rate is required to the basis dimension in order  obtain rates of convergence.  

Assumption \ref{ass:approx} states that the true slope may be approximated by an
element of $\itM_k$. Conditions under which this assumption holds for some basis choices were discussed in Section \ref{sec:BASE}, where conditions ensuring a given rate for this approximation were also given. The approximation rate required in \ref{ass:approxorder} plays a role  when deriving rates of convergence for the   predicted probabilities. Note also that under \ref{ass:beta0}, the approximating element   $\wtbeta_{k}\in \itM_k$, given in  \ref{ass:approx} and \ref{ass:approxorder},  also belongs to $\itW^{1,\itH}$.  A first attempt to obtain these rates is given in   Theorem  \ref{teo:RATES1}, but better ones will be obtained in Theorem  \ref{teo:RATES} below. 
 \end{remark}

\subsection{Rates of Consistency}  {\label{sec:tasas}}              
  To derive rates of convergence for the estimators, we define  the pseudo-distance given by
$$\wtpi_\prob^2(\theta_1 ,\theta_2 )= \esp\left(w(X) \left[\alpha_1 -\alpha_2+\langle X, (\beta_1-\beta_2)\rangle\right]^2\right)\,,$$ 
where for $j=1,2$, $\theta_j=(\alpha_j, \beta_j)\in \Theta=\real\times \itH$.
The following additional assumption  will be required
 \begin{enumerate}[resume,label=\textbf{A\arabic*}]
  \item\label{ass:cotainf}: There exists $\epsilon_0>0$ and a positive constant $C_0^{\star}$, such that for any $\theta=(\alpha,\beta) \in \real\times  \itH$ with $|\alpha-\alpha_0|+\|\beta-\beta_0\|<\epsilon_0$ we have  $L(\theta)-L(\theta_0)\ge C_0^{\star}\,\wtpi_\prob^2(\theta,\theta_0)$.
 \end{enumerate}
  
  Note that since $F^{\prime}(t)=F(t)\left(1-F(t)\right)$ is bounded by 1, $\pi_\prob^2(\theta_1 ,\theta_2 )\le \wtpi_\prob^2(\theta_1 ,\theta_2 )$, so the 
weighted mean square error of the differences between predicted probabilities inherits the rates of converges obtained in Theorem \ref{teo:RATES} for the distance $\wtpi_\prob$.

\begin{theorem}\label{teo:RATES} 
Let $\rho$ be a function satisfying  \ref{ass:rho_bounded_derivable}, \ref{ass:psi_strictly_positive} and  \ref{ass:psidif}, and $w$ a weight function satisfying \ref{ass:funcionw} and \ref{ass:wx2}. Assume that \ref{ass:beta0} and \ref{ass:approxorder} to \ref{ass:cotainf} hold.  Then,  $\gamma_n\wtpi_\prob(\wtheta ,\theta_0 )=O_\prob(1)$, whenever $\gamma_n=O(n^{r\,\varsigma})$ and $\gamma_n \sqrt{\log(\gamma_n)} = O(n^{(1-\varsigma)/2})$.
 \end{theorem}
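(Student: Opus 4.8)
The plan is to run the standard ``basic inequality plus empirical process'' argument for sieve M-estimators, with the pseudo-distance $\wtpi_\prob$ serving as the common intrinsic metric on both the deterministic and the stochastic sides. Write $g_\theta(y,X)=\phi(y,\alpha+\langle X,\beta\rangle)\,w(X)$, so that $L_n(\theta)=n^{-1}\sum_i g_\theta(y_i,X_i)$, $L(\theta)=\esp g_\theta$, and let $\nu_n=L_n-L$ be the centered empirical process. Let $\theta_{0,k}=(\alpha_0,\wtbeta_{k})\in\real\times\itM_k$ be the sieve approximation to $\theta_0$ supplied by \ref{ass:approxorder}. First I would observe that \ref{ass:psi_strictly_positive} implies \ref{ass:rho_derivative_positive}, so all hypotheses of Theorem \ref{teo:CONSIST}(a) (including \ref{ass:probaX*}) hold and $|\walfa-\alpha_0|+\|\wbeta-\beta_0\|_{\itH}\convpp 0$; hence, with probability tending to one, $\wtheta$ lies in the $\epsilon_0$-ball where the curvature bound \ref{ass:cotainf} applies. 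On that event, combining \ref{ass:cotainf} with the defining inequality $L_n(\wtheta)\le L_n(\theta_{0,k})$ yields
$$C_0^\star\,\wtpi_\prob^2(\wtheta,\theta_0)\;\le\;L(\wtheta)-L(\theta_0)\;\le\;\left[\nu_n(\theta_{0,k})-\nu_n(\wtheta)\right]+\left[L(\theta_{0,k})-L(\theta_0)\right],$$
so matters reduce to controlling the bias $L(\theta_{0,k})-L(\theta_0)$ and the localized increment $\nu_n(\theta_{0,k})-\nu_n(\wtheta)$.

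For the bias I would use Fisher-consistency (Lemma \ref{lema:FC}), which makes $\theta_0$ an interior minimizer of $L$, so its first-order variation vanishes. A second-order Taylor expansion, legitimate because $\psi$ and $\psi'$ are bounded (\ref{ass:rho_bounded_derivable}, \ref{ass:psidif}) and $\esp w(X)\|X\|^2<\infty$ (\ref{ass:wx2}), gives $L(\theta_{0,k})-L(\theta_0)\le C\,\wtpi_\prob^2(\theta_{0,k},\theta_0)$. Cauchy--Schwarz together with $w\le 1$ (from \ref{ass:funcionw}) then bounds this by $C'\,\esp\{w(X)\|X\|^2\}\,\|\wtbeta_{k}-\beta_0\|^2=O(k^{-2r})=O(n^{-2r\varsigma})$, using $\|\cdot\|\le\|\cdot\|_{\itH}$ and \ref{ass:approxorder}. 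Since we take $\gamma_n=O(n^{r\varsigma})$, this is $O(\gamma_n^{-2})$, of exactly the target order, which is where the first growth condition originates.

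The main obstacle is the empirical-process increment. The key structural point is that $\wtpi_\prob$ also governs its intrinsic geometry: since $\phi(y,\cdot)$ is Lipschitz with a bounded constant $L$ (again by the bound on $\psi$ in \ref{ass:rho_bounded_derivable}) and $w^2\le w$, one has $\|g_\theta-g_{\theta'}\|_{L^2(\prob)}^2\le L^2\,\wtpi_\prob^2(\theta,\theta')$, so the $L^2(\prob)$-metric of the class matches the pseudo-distance appearing in the curvature bound. Moreover $g_\theta$ depends on the $(k+1)$-dimensional parameter $(\alpha,\bb)$ only through the scalar $\alpha+\bx\trasp\bb$ fed into the fixed, uniformly bounded Lipschitz function $\phi$ (bounded because $\rho$ and $G$ are, by \ref{ass:rho_bounded_derivable}), while \ref{ass:wx2} makes the increment-envelope square-integrable; hence the class has bracketing numbers $N_{[\,]}(\epsilon,\{g_\theta\},L^2(\prob))\le(C/\epsilon)^{k+1}$. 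A localized maximal inequality over the shell $\{\theta\in\real\times\itM_k:\wtpi_\prob(\theta,\theta_0)\le\delta\}$ then produces the modulus
$$\esp^{*}\sup_{\wtpi_\prob(\theta,\theta_0)\le\delta}|\nu_n(\theta)-\nu_n(\theta_{0,k})|\;\lesssim\;\frac{1}{\sqrt n}\int_0^{\delta}\sqrt{k\,\log(C/\epsilon)}\,d\epsilon\;\asymp\;\frac{\delta}{\sqrt n}\sqrt{k\,\log(1/\delta)},$$
the $\sqrt{\log(1/\delta)}$ factor coming precisely from integrating the $(k+1)$-dimensional entropy against the fixed envelope scale; this is the term that forces the second growth condition.

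Finally I would assemble the rate by peeling. Setting $\delta_n=\gamma_n^{-1}$, I split $\{\wtpi_\prob(\wtheta,\theta_0)>2^M\delta_n\}$ into dyadic shells $\itS_j=\{2^{j-1}\delta_n<\wtpi_\prob(\theta,\theta_0)\le 2^j\delta_n\}$ intersected with the $\epsilon_0$-ball; on $\itS_j$ the curvature bound forces the increment to exceed $c\,(2^j\delta_n)^2$, while the modulus above, combined with a concentration inequality for the uniformly bounded class, makes this probability summable in $j$ once the rate equation $\delta_n\sqrt{k\log(1/\delta_n)}\lesssim\sqrt n\,\delta_n^2$ holds, i.e.\ once $\sqrt{k\log(1/\delta_n)}\lesssim\sqrt n\,\delta_n$. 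With $k\asymp n^{\varsigma}$ and $\delta_n=\gamma_n^{-1}$ this is exactly $\gamma_n\sqrt{\log\gamma_n}=O(n^{(1-\varsigma)/2})$. Adding the two shell estimates and letting $M\to\infty$ gives $\wtpi_\prob(\wtheta,\theta_0)=O_\prob(\delta_n)$, that is $\gamma_n\,\wtpi_\prob(\wtheta,\theta_0)=O_\prob(1)$. The delicate steps are the uniform second-order Taylor control for the bias and the justification of the localized maximal inequality with the correct entropy for the growing-dimensional sieve; once the modulus is in hand, the peeling bookkeeping is routine.
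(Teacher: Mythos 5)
Your proposal is correct and follows essentially the same route as the paper: consistency to localize into the ball where \ref{ass:cotainf} applies, the second-order Taylor bound $L(\wttheta_{0,n})-L(\theta_0)=O(\|\wtbeta_{k}-\beta_0\|_{\itH}^2)$ for the bias, the bracketing bound $N_{[\;]}(\epsilon,\cdot,L_2(P))\le (Ac/\epsilon+1)^{k+1}$ for the sieve class, and the resulting modulus $\delta\sqrt{k\log(1/\delta)}/\sqrt{n}$ yielding exactly the same rate equation $\gamma_n\sqrt{\log\gamma_n}=O(n^{(1-\varsigma)/2})$. The only difference is presentational: the paper packages the peeling step by invoking Theorem 3.4.1 of van der Vaart and Wellner (1996), with shells centered at the sieve approximation $\wttheta_{0,n}$ and a final triangle inequality to return to $\theta_0$, whereas you carry out the dyadic peeling by hand centered at $\theta_0$.
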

 
 \vskip0.1in

The lower bound given in assumption  \ref{ass:cotainf} is a requirement  that is fulfilled when the covariates are bounded as shown in Proposition \ref{prop:cotainf} below. Moreover, one consequence of Proposition \ref{prop:cotainf}  is that for bounded functional covariates, the pseudo-distances $\wtpi_\prob$ and $\pi_\prob$ are equivalent.

\vskip0.1in

 \begin{proposition}\label{prop:cotainf}
 Assume that  assumptions  \ref{ass:rho_bounded_derivable}   and \ref{ass:psi_strictly_positive} hold and that for some positive constant $ C>0 $, $\prob(\|X\|\le C)=1$. Then, there exists a constant $C_1>0$  such that  $\pi_\prob^2(\theta,\theta_0)\ge C_1\,\wtpi_\prob^2(\theta,\theta_0)$, for any $\theta=(\alpha,\beta) \in \real\times  \itH$ with $|\alpha-\alpha_0|+\|\beta-\beta_0\|<1$. Moreover, \ref{ass:cotainf} holds. 
 \end{proposition}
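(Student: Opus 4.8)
The plan is to treat the two assertions of the Proposition in turn: first the pointwise-in-distribution domination $\pi_\prob^2(\theta,\theta_0)\ge C_1\,\wtpi_\prob^2(\theta,\theta_0)$, which is essentially a statement about the logistic link $F$ restricted to a compact set, and then assumption \ref{ass:cotainf}, which I would obtain by chaining this domination with the curvature bound $L(\theta)-L(\theta_0)\ge C_0\,\pi_\prob^2(\theta,\theta_0)$ that assumptions \ref{ass:rho_bounded_derivable} and \ref{ass:psi_strictly_positive} already guarantee (the consequence of \ref{ass:psi_strictly_positive} noted in Remark \ref{remark:comentarios} and established through the Fisher--consistency analysis of Lemma \ref{lema:FC}).

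For the first inequality, I would write $t=\alpha+\langle X,\beta\rangle$ and $t_0=\alpha_0+\langle X,\beta_0\rangle$ and first confine both to a fixed compact interval. On the event $\{\|X\|\le C\}$, which has probability one, the Cauchy--Schwarz inequality gives $|t_0|\le |\alpha_0|+C\|\beta_0\|$, while the localization $|\alpha-\alpha_0|+\|\beta-\beta_0\|<1$ gives $|t-t_0|\le |\alpha-\alpha_0|+C\|\beta-\beta_0\|\le \max(1,C)$; hence $t,t_0\in[-M,M]$ with $M=|\alpha_0|+C\|\beta_0\|+\max(1,C)$, a constant depending only on $\alpha_0,\beta_0,C$ and not on the particular $\theta$. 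Since $F^{\prime}(s)=F(s)(1-F(s))$ is continuous and strictly positive, it attains a positive minimum $m_F=\min_{|s|\le M}F^{\prime}(s)>0$ on $[-M,M]$. The mean value theorem then yields $F(t)-F(t_0)=F^{\prime}(\xi)(t-t_0)$ with $\xi$ between $t$ and $t_0$, so $\xi\in[-M,M]$ and $[F(t)-F(t_0)]^2\ge m_F^2\,(t-t_0)^2$ almost surely. Multiplying by $w(X)\ge 0$ and taking expectations gives $\pi_\prob^2(\theta,\theta_0)\ge m_F^2\,\wtpi_\prob^2(\theta,\theta_0)$, so the claim holds with $C_1=m_F^2$. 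Combined with the reverse bound $\pi_\prob^2\le\wtpi_\prob^2$ recorded before Theorem \ref{teo:RATES}, this also delivers the asserted equivalence of the two pseudo-distances under bounded covariates.

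For assumption \ref{ass:cotainf}, I would simply restrict to the neighborhood $|\alpha-\alpha_0|+\|\beta-\beta_0\|<1$ and combine the two available inequalities: $L(\theta)-L(\theta_0)\ge C_0\,\pi_\prob^2(\theta,\theta_0)\ge C_0C_1\,\wtpi_\prob^2(\theta,\theta_0)$, so \ref{ass:cotainf} holds with $\epsilon_0=1$ and $C_0^{\star}=C_0C_1$. The one genuinely delicate point is the positivity of $m_F$: this is exactly where boundedness of $X$ is indispensable, since $F^{\prime}$ is not bounded away from zero on all of $\real$, and only the combination of $\|X\|\le C$ a.s. with the localization confines the linear predictors to a compact set on which $F^{\prime}$ is uniformly positive. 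The remaining care is bookkeeping to ensure that $M$, and hence $m_F$ and $C_1$, are uniform over the whole neighborhood rather than depending on the individual $\theta$, which the bound on $M$ above already secures.
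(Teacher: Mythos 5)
Your proof is correct and follows essentially the same route as the paper: both arguments use $\prob(\|X\|\le C)=1$ together with the localization $|\alpha-\alpha_0|+\|\beta-\beta_0\|<1$ to confine the linear predictors to a fixed compact interval, apply the mean value theorem to $F$, bound $F^{\prime}=F(1-F)$ away from zero there, and then obtain \ref{ass:cotainf} by chaining with the curvature bound $L(\theta)-L(\theta_0)\ge C_0\,\pi_\prob^2(\theta,\theta_0)$ of Lemma \ref{lemma:L_function}, taking $\epsilon_0=1$ and $C_0^{\star}=C_0C_1$. The only cosmetic difference is that the paper parametrizes the intermediate point as a convex combination $(1-\omega_X)\theta+\omega_X\theta_0$ and gets $C_1=A_1^2(1-A_2)^2$ with $A_1=F(-C^{\star})$, $A_2=F(C^{\star})$, whereas you apply the MVT directly on the real line and get $C_1=m_F^2$; the constants are interchangeable.
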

 
 \vskip0.1in
 
 As a consequence of Proposition \ref{prop:cotainf} and Theorem \ref{teo:RATES}, we get the following result that improves the rates given in Theorem \ref{teo:RATES1}.
 
 \vskip0.1in
 
\begin{corollary}\label{teo:RATESpi} 
Let $\rho$ be a function satisfying  \ref{ass:rho_bounded_derivable},  \ref{ass:psi_strictly_positive} and  \ref{ass:psidif}, and $w$ a weight function satisfying \ref{ass:funcionw}. Assume that \ref{ass:beta0}, \ref{ass:approxorder} and \ref{ass:probaX*} hold and that  for some positive constant $ C>0 $, $\prob(\|X\|\le C)=1$.   Then,  $\gamma_n\pi_\prob(\wtheta ,\theta_0 )=O_\prob(1)$, whenever $\gamma_n=O(n^{r\,\varsigma})$ and $\gamma_n \sqrt{\log(\gamma_n)} = O(n^{(1-\varsigma)/2})$. In particular,   $$\frac{n^{\eta}}{\sqrt{\log(n)}}\pi_\prob(\wtheta ,\theta_0 )=O_\prob(1)\,,$$
where  $\eta=\min(r\,\varsigma, (1-\varsigma)/2)$.
 \end{corollary}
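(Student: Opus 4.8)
The plan is to derive this corollary as a direct combination of Proposition \ref{prop:cotainf} and Theorem \ref{teo:RATES}, transferring the rate obtained for the pseudo-distance $\wtpi_\prob$ to the prediction pseudo-distance $\pi_\prob$ by means of the elementary comparison $\pi_\prob\le\wtpi_\prob$. The only real work is to check that the boundedness hypothesis $\prob(\|X\|\le C)=1$, together with the conditions explicitly assumed in the corollary, supplies every hypothesis needed to invoke Theorem \ref{teo:RATES}.

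First I would verify these hypotheses. The requirements \ref{ass:rho_bounded_derivable}, \ref{ass:psi_strictly_positive} and \ref{ass:psidif} on $\rho$, \ref{ass:funcionw} on $w$, and the structural assumptions \ref{ass:beta0}, \ref{ass:approxorder} and \ref{ass:probaX*} are assumed directly. The two ingredients of Theorem \ref{teo:RATES} not stated in the corollary are \ref{ass:wx2} and \ref{ass:cotainf}. For \ref{ass:wx2}, since $\|w\|_{\infty}=1$ by \ref{ass:funcionw} and $\|X\|\le C$ almost surely, one has $\esp\{w(X)\,\|X\|^2\}\le C^2<\infty$, so \ref{ass:wx2} holds. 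For \ref{ass:cotainf}, I would invoke Proposition \ref{prop:cotainf}, whose hypotheses are exactly \ref{ass:rho_bounded_derivable}, \ref{ass:psi_strictly_positive} and the boundedness of $X$ already at hand; its conclusion delivers \ref{ass:cotainf} directly. With all hypotheses in place, Theorem \ref{teo:RATES} yields $\gamma_n\,\wtpi_\prob(\wtheta,\theta_0)=O_\prob(1)$ under the two rate conditions $\gamma_n=O(n^{r\varsigma})$ and $\gamma_n\sqrt{\log(\gamma_n)}=O(n^{(1-\varsigma)/2})$.

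Next I would pass from $\wtpi_\prob$ to $\pi_\prob$. Since $F^{\prime}=F(1-F)\le 1/4\le 1$, the logistic link is $1$-Lipschitz, whence $\pi_\prob^2(\theta_1,\theta_2)\le\wtpi_\prob^2(\theta_1,\theta_2)$ for all $\theta_1,\theta_2$ (this is the inequality already recorded just after \ref{ass:cotainf}). Therefore $\gamma_n\,\pi_\prob(\wtheta,\theta_0)\le\gamma_n\,\wtpi_\prob(\wtheta,\theta_0)=O_\prob(1)$, which is the first assertion. For the explicit rate I would specialise $\gamma_n=n^{\eta}/\sqrt{\log n}$ with $\eta=\min(r\varsigma,(1-\varsigma)/2)$ and check both conditions: since $\eta\le r\varsigma$ we have $\gamma_n\le n^{\eta}\le n^{r\varsigma}$, while $\log\gamma_n\le\eta\log n$ gives $\gamma_n\sqrt{\log\gamma_n}\le\sqrt{\eta}\,n^{\eta}\le\sqrt{\eta}\,n^{(1-\varsigma)/2}$ because $\eta\le(1-\varsigma)/2$; hence $(n^{\eta}/\sqrt{\log n})\,\pi_\prob(\wtheta,\theta_0)=O_\prob(1)$.

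I do not anticipate a genuine obstacle: the corollary is essentially bookkeeping on top of the two preceding results. The only points requiring mild care are confirming that a single boundedness assumption simultaneously yields \ref{ass:wx2} and (via Proposition \ref{prop:cotainf}) \ref{ass:cotainf}, and the elementary verification that the choice $\gamma_n=n^{\eta}/\sqrt{\log n}$ respects the two rate restrictions of Theorem \ref{teo:RATES}.
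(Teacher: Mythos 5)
Your proposal is correct and follows exactly the route the paper intends: the corollary is presented as an immediate consequence of Proposition \ref{prop:cotainf} (which supplies \ref{ass:cotainf} from boundedness of $X$) and Theorem \ref{teo:RATES}, combined with the inequality $\pi_\prob\le\wtpi_\prob$ recorded after \ref{ass:cotainf}. Your verification that boundedness also yields \ref{ass:wx2}, and your check that $\gamma_n=n^{\eta}/\sqrt{\log n}$ satisfies both rate restrictions, are exactly the bookkeeping steps the paper leaves implicit.
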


\begin{remark}
 As mentioned in  \citet{boente:salibian:vena:2020}, who obtained rates under a semi-linear functional regression model, if $\varsigma = 1/(1 + 2r)$ in \ref{ass:approxorder}, one can choose $\gamma_n=O(n^{\frac{r}{1+2r}-\delta})$, for some $\delta > 0$ arbitrarily small, which yields   a
convergence rate arbitrarily close to the optimal one. Then, as mentioned above, when considering cubic splines, if $\beta_0$ is twice continuously  differentiable one has that $r=2$ in  \ref{ass:approxorder}. Hence, taking  $\varsigma=1/5$, i.e., if the basis dimension $k_n$ has order $n^{1/5}$, we ensure that the   convergence rate for $\wtpi_\prob(\wtheta ,\theta_0 )$ and $\pi_\prob(\wtheta ,\theta_0 )$ is arbitrarily close to $n^{2/5}$.

Clearly, one may select in Theorem \ref{teo:RATES}, $\gamma_n=n^{\eta}/\sqrt{\log(n)}$ where $\eta=\min(r\,\varsigma, (1-\varsigma)/2)$. Hence, if $\varsigma = 1/(1 + 2r)$ in  \ref{ass:approxorder}, we have that  $\gamma_n=n^{ {r}/({1+2r})}/\sqrt{\log(n)}$. Hence, for bounded covariates, both the weighted mean square error of the predictions and the weighted mean square error between the predicted probabilities are such that $\wtpi_\prob(\wtheta ,\theta_0 )=O_{\prob}\left(n^{-\, {r}/({1+2r})}\; \sqrt{\log(n)}\right)$  and $\pi_\prob(\wtheta ,\theta_0 )=O_{\prob}\left(n^{-\, {r}/({1+2r})}\; \sqrt{\log(n)}\right)$, leading to a convergence rate is suboptimal with respect to the one obtained for instance in nonparametric regression models, see \citet{stone:1982, stone:1985}. Furthermore, this rate  equals the one obtained for penalized estimators in \citet{kalogridis:2023}, when considering  $B-$splines with $k_n=O(n^{1/(2r+1)})$ and the slope function is $r$ times continuously differentiable, i.e., $\beta_0\in \itW^{r-1,\infty}$. As mentioned therein, the term $\log(n)$ is related to the fact that we are considering  infinite--dimensional covariates. Recall that when $X\in L^2(0,1)$ and to ensure identifiability, the eigenvalues of its covariance operator are non--null but converge to $0$, enabling us to provide a lower bound for $\wtpi_\prob(\theta_1 ,\theta_2 )$ in terms of  $\|\beta_1-\beta_2\|$.

It also is worth mentioning that the rate derived in   Theorem \ref{teo:RATES1}(c), allow to conclude that,  when $k_n=O(n^{1/(4r+1)}$, that is, when $\varsigma=1/(4r+1)$, we have $\pi_\prob (\wtheta ,\theta_0 )=O_\prob(n^{-\,r/(4r+1)})$. Hence, if the functional covariates are bounded,  from Proposition \ref{prop:cotainf} we get that $\wtpi_\prob (\wtheta ,\theta_0 )=O_\prob(n^{- \, r/(4r+1)})$. This   rate of convergence corresponds to the one obtained in  \citet{cardot:sarda:2005} for their penalized estimators  and is slower than the rate $\wtpi_\prob(\wtheta ,\theta_0 )= O\left(n^{-\, {r}/({1+2r})}\; \sqrt{\log(n)}\right)$  derived from  Theorem \ref{teo:RATES}.

\end{remark}

\section{Simulation study}{\label{sec:simu}}
We performed a Monte Carlo study to investigate the finite-sample properties 
of our proposed estimators for the functional logistic regression model. For that purpose, we generated a training sample $\itM$ of observations $(y_i, X_i)$ i.i.d. such that
$y_i\sim Bi(1, F(\alpha_0+\langle \beta_0, X_i\rangle)$ where $\alpha_0=0$. The true regression parameter  was set equal to   $\beta_0(t) = \sum_{j=1}^{50} b_{j,0} \phi_j$, where $\{\phi_j\}_{j=1}^{50}$ correspond to elements of the Fourier basis, more precisely,  $\phi_1(t) \equiv 1$, $\phi_j(t) = \sqrt{2} \cos ((j-1)\pi t)$, $j\geq 2$, and  
the coefficients $b_{1,0} = 0.3$ and $b_{j,0} = 4(-1)^{j+1}j^{-2}$, $j \geq 2$.
The process that generates the functional covariates $X_i(t)$ was Gaussian with mean 0 and 
covariance operator with eigenfunctions  $\phi_j(t)$. 
For uncontaminated samples, the scores  $\xi_{ij} $  were generated as  independent Gaussian 
random variables $\xi_{ij}\sim N(0,j^{-2})$. We denote the distribution of this Gaussian process $\itG(0, \Gamma)$.  Taking into account that $\var(\xi_{ij})\le 1/2500$ when $j > 50$, the process was approximated numerically using the first 50 terms of its  Karhunen-Lo\`{e}ve representation.

We chose as basis the $B-$spline basis for all the procedures considered, denoted $\{B_j\}_{j=1}^{k_n}$. We compared four estimators: the  procedure based on using the deviance after dimension  reduction, that is using $\rho(t)=t$ in \eqref{eq:phiBY}, labelled the classical estimators and denoted \textsc{cl}, the one that uses $M-$estimators   denoted \textsc{m}, and their weighted versions. The $M-$estimators and  weighted $M-$estimators were computed using the loss function introduced in \citet{Croux:H:2003} and defined as 
$$
\rho(t) =\left\{
\begin{array}{ll}
t\; e^{-\sqrt{c}} & \hbox{if }  \,\, t\leq c\\
-2e^{-\sqrt{t}}\left(  1+\sqrt{t}\right)  +e^{-\sqrt{c}}\left(  2\left(
1+\sqrt{c}\right)  + c \right)  & \hbox{if }  \,\, t>c \, ,
\end{array}
\right. $$
with tuning constant $c=0.5$. For the former the weights equal  1 for all observations, while for the latter, as for the weighted deviance estimators, two different type of weight   functions were considered.
\begin{itemize}
	\item[a)] For the first one, after dimension reduction, that is, after computing $x_{ij}=\langle X_i, B_j\rangle$, we evaluated the Donoho--Stahel location and scatter estimators, denoted $\wbmu$ and $\wbSi$, respectively, of the sample $\bx_1, \dots, \bx_n$ with $\bx_i=(x_{i1}, \dots, x_{i k_n})\trasp $. The weights are then  defined as $w(X_i)=1$ when the squared Mahalanobis distance $ d_i^2= ( \bx_i - \wbmu )\trasp \wbSi^{-1}( \bx_i - \wbmu )$ is less than or equal to $\chi_{0.975,k_n}$ and $0$ otherwise, where   $\chi_{_{\alpha,p}}$ stands  for the $\alpha-$quantile of a chi-square distribution with $p$ degrees of freedom. Hence,  for this family we used hard rejection weights and for that reason, the weighted estimators based on the deviance and the weighted $M-$estimators   are denoted ${\wclHRnorm}$  and  ${\wemeHRnorm}$, respectively.  Note that  ${\wclHRnorm}$ are related to the Mallows--type estimators introduced in \citet{carroll:pederson:1993}.
	
	\item[b)] The second family of weight functions  is based on the functional boxplots as defined by \citet{sun:genton:2011}.  Again, we choose   hard rejection weights but   on the functional space by taking $w(X) = 0$ if $X$ was declared an outlier by the functional boxplot  and $1$ otherwise.  The functional boxplot was computed using the function \texttt{fbplot} of the library \texttt{fda} taking as method "Both" which orders the observations according to the band depth and then breaks ties with the modified band depth, as defined  in \citet{lopez-pintado:romo:2009}. 
In this case the weighting is done before the  $B-$spline approximation.  The resulting weighted classical and $M-$estimators are denoted  ${\wclBOXnorm}$  and  ${\wemeBOXnorm}$, respectively.
\end{itemize}

For each setting we generated $n_R = 1000$ samples of size $n = 300$ and 
used cubic splines with equally spaced knots. For the robust  estimators we  
selected the size of the spline basis, $ k_{n }$,
by minimizing $RBIC(k)$ in equation \eqref{eq:bic1} over the  grid $4 \leq k \leq 14$.  
For  the classical estimator, we used the standard $BIC$ criterion, that is, we  chose  $\rho(t)=t$  in equation \eqref{eq:bic1}.

\begin{figure}[ht!] 
  \begin{center} 
    \begin{tabular}{ccc}
     $C_0$ & $C_{1,0.05}$ & $C_{2,0.05}$\\
      \includegraphics[width=0.35\textwidth]{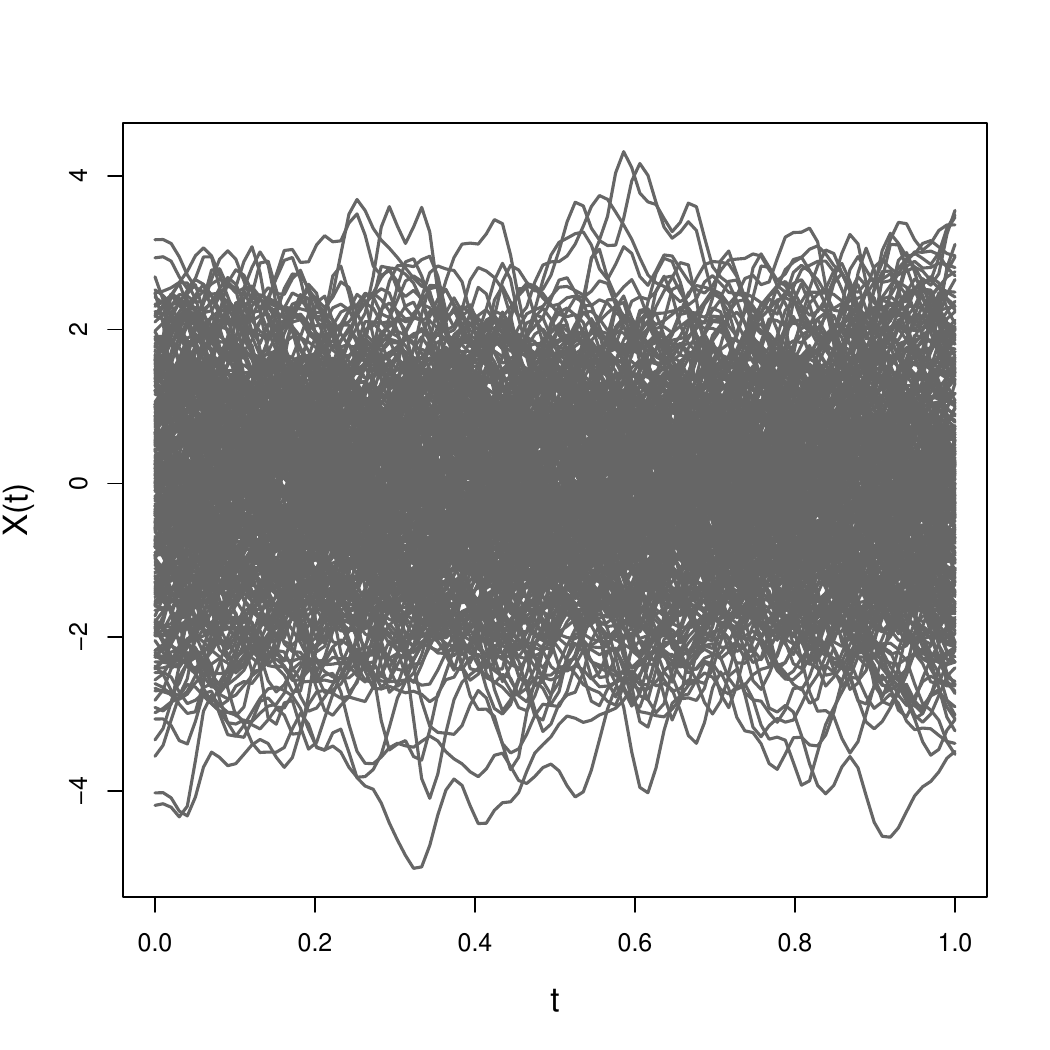} &
      \includegraphics[width=0.35\textwidth]{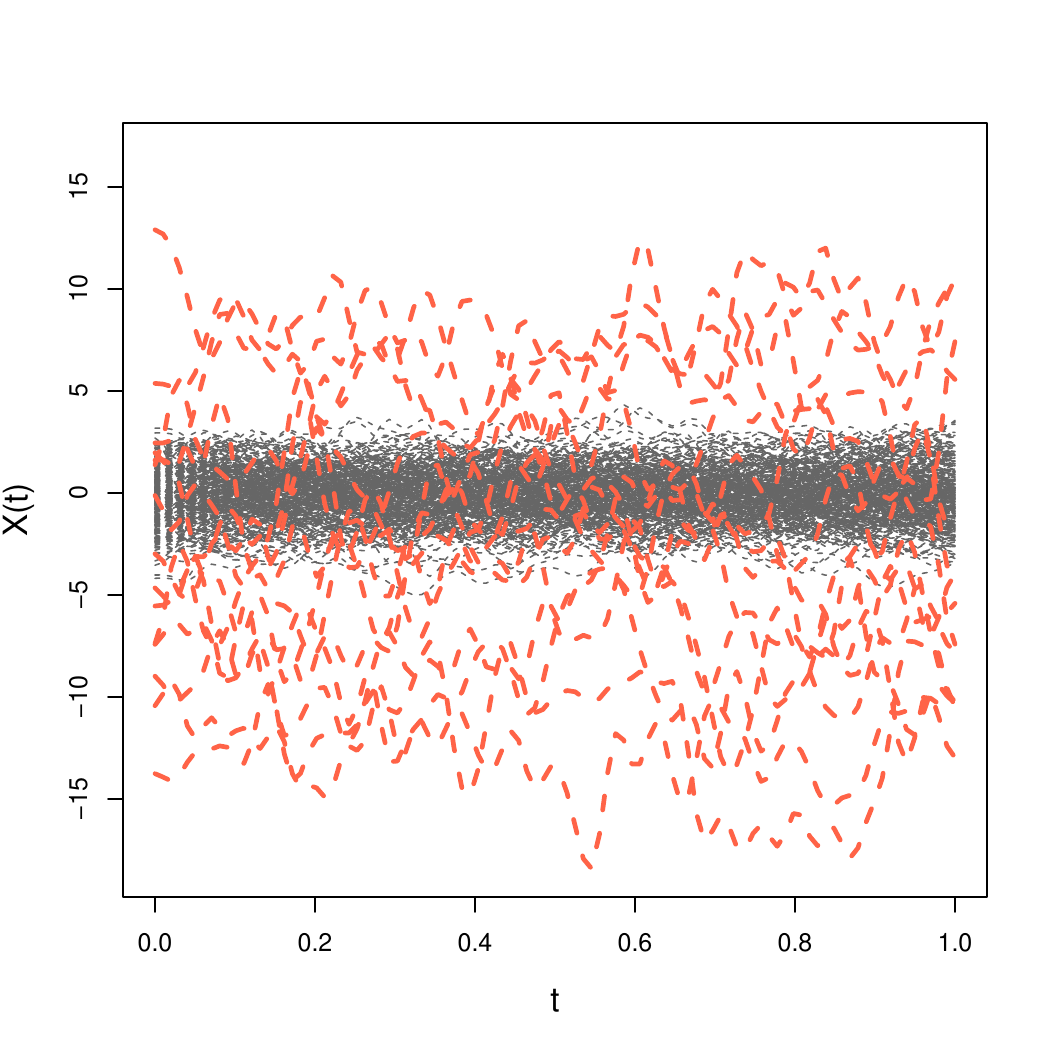} &
      \includegraphics[width=0.35\textwidth]{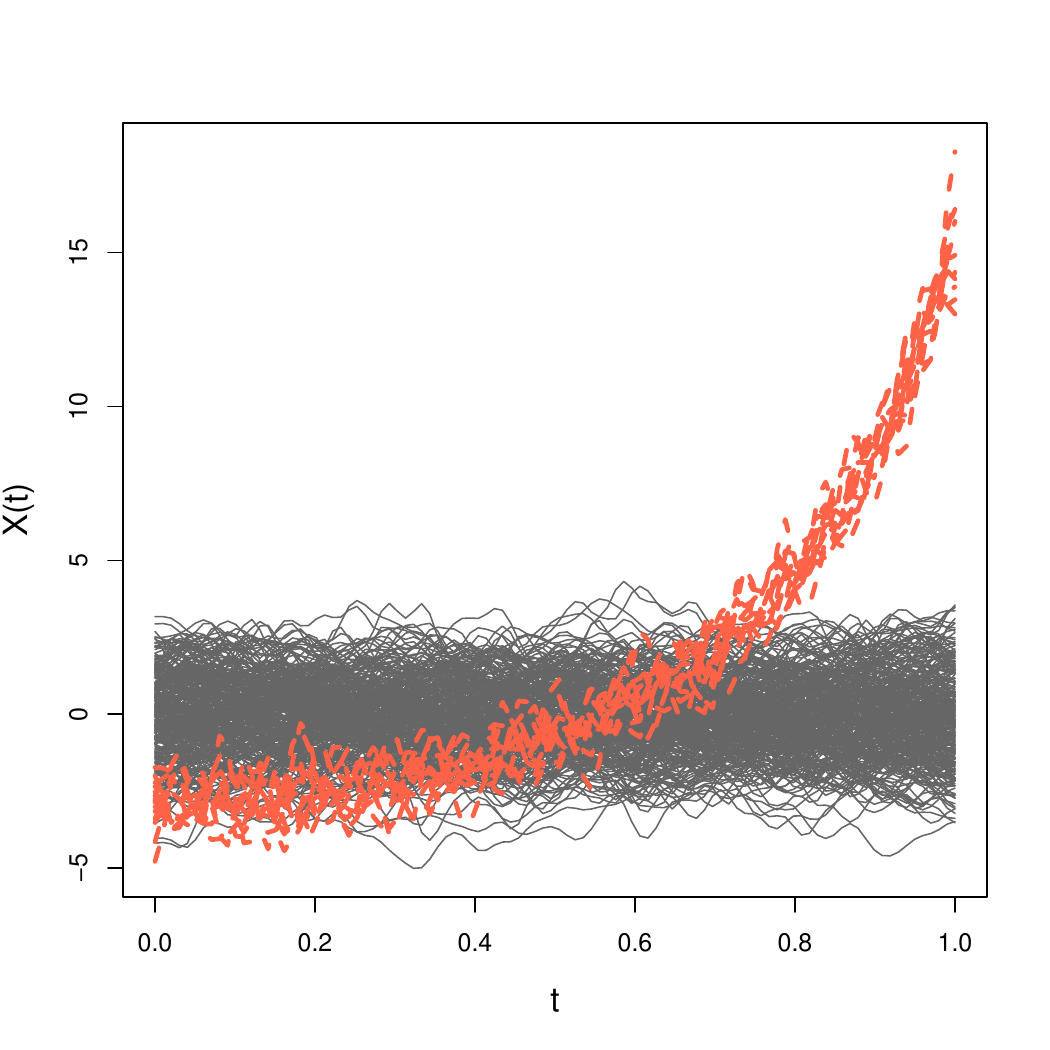} \\
      $C_{3,0.05}$ & $C_{4,0.05}$ & $C_{5,0.05}$\\
      \includegraphics[width=0.35\textwidth]{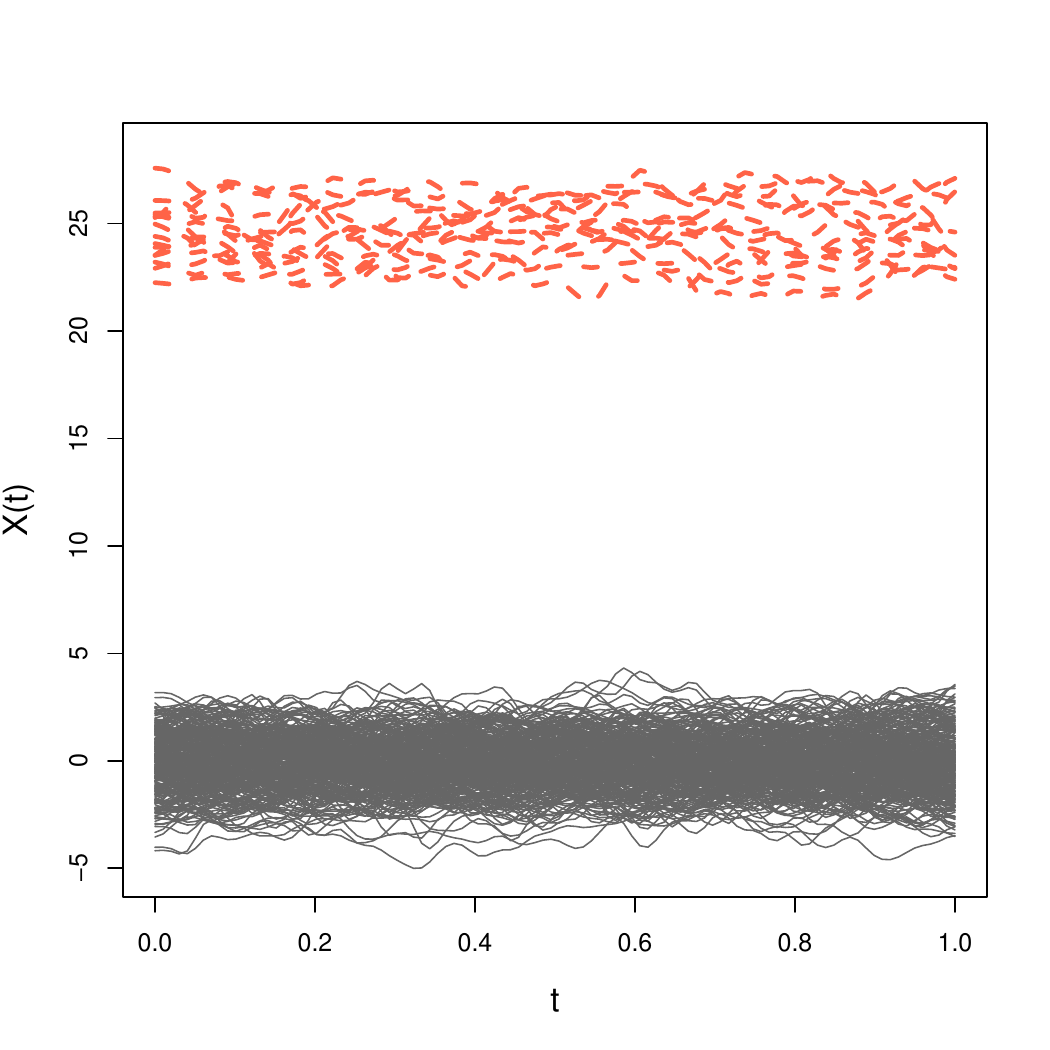} &
      \includegraphics[width=0.35\textwidth]{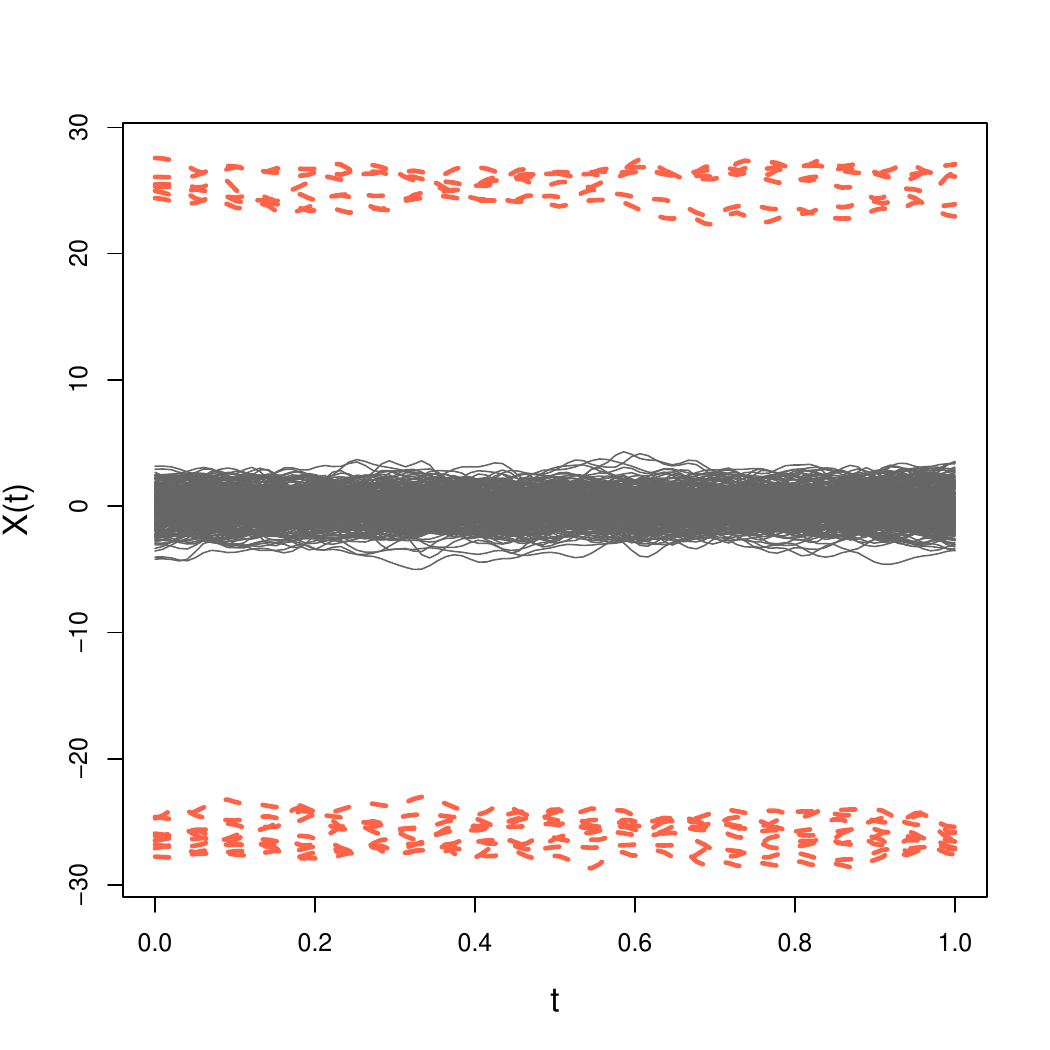} &
      \includegraphics[width=0.35\textwidth]{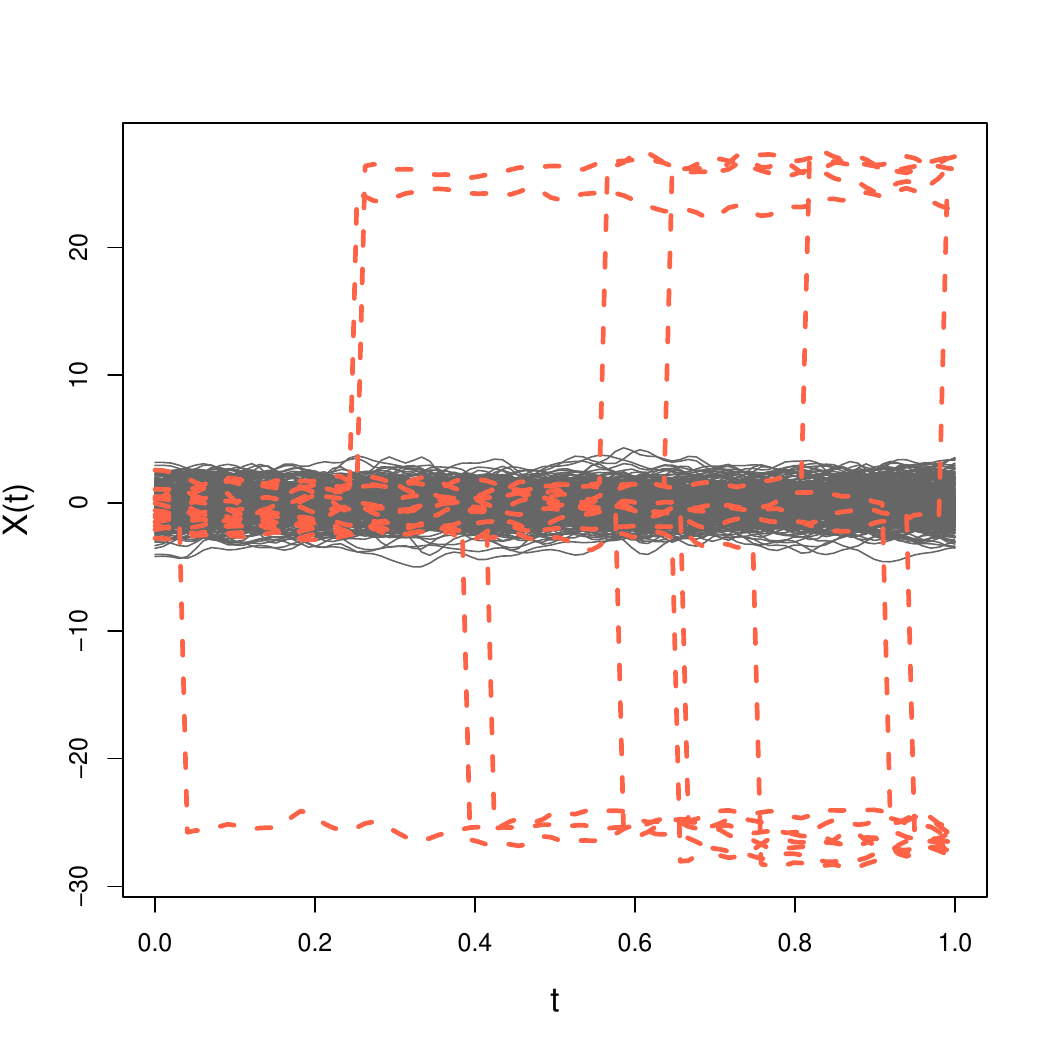} 
   \end{tabular}
   \end{center}
 \vskip-0.2in \caption{\small \label{fig:fplm-trayectoria} Trajectories $X_i(t)$ with and without contamination. The red dotted lines correspond to the added  contaminated covariates.}
\end{figure}

We considered different contamination scenarios  by adding a proportion $\epsilon$ of atypical points. We denote these scenarios $C_{j,\epsilon}$, for $1\le j\le 5$ and we chose   $\epsilon=0.05$ and $0.10$.
\begin{itemize}
	\item In the first scenario, denoted $C_{1,\epsilon}$, we  generated $n_{\out}=\epsilon \; n$ misclassified points  $(\wty, \wtX)$, where $\wtX \sim   \itG(0, 25 \Gamma)$ and $\wty = 1 $ when $\alpha_0+\langle\wtX,  \beta_0\rangle < 0$ and $\wty=0$, otherwise.
	\item Under $C_{2,\epsilon}$,   we  have tried to adapt to the functional framework  the damaging effect of the high leverage points  considered  by  \citet{Croux:H:2003}. For that purpose, given $m>0$, we generated   $\wtX \sim \itG( m\; \beta_0, 0.01 \Gamma)$. The response  $\wty$, related to  $\wtX$,  was  always taken equal to $0$. It is worth noticing that $\langle \wtX, \beta_0\rangle$ is very close to  $ m  \|\beta_0\|^2$, thus the leverage of the added points increases with  $m$. We chose $m=4$.
	\item Contamination $C_{3,\epsilon}$ generates extreme outliers as $\wtX \sim   \itG(\mu,  \Gamma)$ with $\mu(t)=25$, for all $t$. As in $C_{1,\epsilon}$, we chose $\wty = 1 $ when $\alpha_0+\langle\wtX,  \beta_0\rangle < 0$ and $\wty=0$, otherwise.
	\item Setting $C_{4,\epsilon}$ aims  to construct trajectories with extreme symmetric outliers. For that purpose, we define $\wtX\sim 0.5   \itG(\mu,  \Gamma)+ 0.5 \itG(- \mu,  \Gamma)$ with $\mu(t)=25$, for all $t$. As in $C_{1,\epsilon}$, we chose $\wty = 1 $ when $\alpha_0+\langle\wtX,  \beta_0\rangle < 0$ and $\wty=0$, otherwise.
	\item The purpose of $C_{5,\epsilon}$ is to add trajectories with a partial contamination. We generated   $\wtX(t) =Z(t) + 25\, B   \indica_{T<t}$ where $Z \sim   \itG(0,  \Gamma)$, $\prob(B=1)=\prob(B=-1)=0.5$ and $T\sim \itU(0,1)$ and $Z,B$ and $T$ are independent. As in $C_{1,\epsilon}$, we chose $\wty = 1 $ when $\alpha_0+\langle\wtX,  \beta_0\rangle < 0$ and $\wty=0$, otherwise.
\end{itemize}
The way contaminated trajectories are constructed under settings $C_{3,\epsilon}$  to $C_{5,\epsilon}$ corresponds  to the contaminations considered in \citet{denhere:billor:2016}. However, we force the atypical trajectories to correspond to bad leverage points. 

To illustrate the  type of outliers generated, 
Figure \ref{fig:fplm-trayectoria} shows the obtained functional covariates $X_i(t)$, 
for one sample generated under each scheme.

To compare the estimators of $\alpha_0$, we computed their biases and  standard deviations, $s_{\walfa}$ which are reported in  Table  \ref{tab:tabla2-alfa-C0-C5}. We also present in Figures \ref{fig:boxplotsalpha} and \ref{fig:boxplotsalpha-C3} their boxplots   for the considered  contaminations.

 \begin{table}[ht!]
  \centering
  \small
   \renewcommand{\arraystretch}{1.2}
\begin{tabular}{  c |  rc  | rc  | rc  |  rc|  rc|}
\hline  
  & $\walfa-\alpha_0$ & $s_{\walfa}$  & $\walfa-\alpha_0$ & $s_{\walfa}$ & $\walfa-\alpha_0$ & $s_{\walfa}$& $\walfa-\alpha_0$ & $s_{\walfa}$ & $\walfa-\alpha_0$ & $s_{\walfa}$ \\ 
  \hline
  &\multicolumn{2}{c|}{ } & \multicolumn{2}{c|}{ } & \multicolumn{2}{c|}{$C_0$} &\multicolumn{2}{c|}{ } &\multicolumn{2}{c|}{} \\
\hline
${\clasnorm}$ &  & &  & & -0.0040 & 0.1250 & &  & &\\ 
${\emenorm}$ & & &  & & -0.0072 & 0.1528 & &  & &  \\ 
${\wclHRnorm}$ & & &  & & -0.0046 & 0.1250 & &  & &  \\ 
${\wemeHRnorm}$ & & &  & & -0.0045 & 0.1278 & &  & &   \\  
${\wclBOXnorm}$  & & &  & &-0.0040 & 0.1250 & &  & & \\  
${\wemeBOXnorm}$ & & &  & & -0.0083 & 0.1259 & &  & &    \\ 
  \hline 
  &\multicolumn{2}{c|}{$C_{1,0.05}$} &\multicolumn{2}{c|}{$C_{2,0.05}$}&\multicolumn{2}{c|}{$C_{3,0.05}$}&\multicolumn{2}{c|}{$C_{4,0.05}$}&\multicolumn{2}{c|}{$C_{5,0.05}$}\\ 
  \hline 
${\clasnorm}$ &  -0.0024 & 0.1160  & -0.0540 & 0.1200 &  -0.0200 & 0.1230  & -0.0064 & 0.1216  & -0.0034 & 0.1166   \\ 
${\emenorm}$ &   -0.0038 & 0.1589 &  -0.1613 & 0.1139 &  -0.0580 & 0.1086 &  -0.0041 & 0.1170 &  -0.0050 & 0.1138 \\ 
${\wclHRnorm}$ &     -0.0029 & 0.1265 & -0.0028 & 0.1263 & -0.0029 & 0.1263 & -0.0065 & 0.1247  & -0.0029 & 0.1254\\ 
${\wemeHRnorm}$ &   0.0007 & 0.1331 & 0.0008 & 0.1331  & 0.0009 & 0.1333 & -0.0053 & 0.1304 &0.0042 & 0.1282  \\  
${\wclBOXnorm}$  &   -0.0030 & 0.1208 & -0.0080 & 0.1247 & -0.0177 & 0.1228 & -0.0061 & 0.1234 &  -0.0028 & 0.1234 \\  
${\wemeBOXnorm}$ &   -0.0002 & 0.1352 &  -0.0068 & 0.1326  & -0.0168 & 0.1313  & -0.0084 & 0.1283 & 0.0034 & 0.1259  \\   
\hline
&\multicolumn{2}{c|}{$C_{1,0.10}$} &\multicolumn{2}{c|}{$C_{2,0.10}$}&\multicolumn{2}{c|}{$C_{3,0.10}$}&\multicolumn{2}{c|}{$C_{4,0.10}$}&\multicolumn{2}{c|}{$C_{5,0.10}$}\\ 
\hline
 ${\clasnorm}$ &  0.0011 & 0.1117 &  -0.0550 & 0.1220 & -0.0154 & 0.1219  & -0.0017 & 0.1219 & 0.0035 & 0.1174\\ 
${\emenorm}$ &    -0.0092 & 0.1564 & -0.1680 & 0.0952 &-0.0397 & 0.1110 & -0.0024 & 0.1128 & -0.0006 & 0.1064\\ 
${\wclHRnorm}$ & 0.0022 & 0.1234 &   0.0018 & 0.1240 &  0.0021 & 0.1241 & -0.0006 & 0.1250 & 0.0053 & 0.1247 \\ 
${\wemeHRnorm}$ & -0.0067 & 0.1292 &   -0.0069 & 0.1302 &  -0.0065 & 0.1299 &  -0.0032 & 0.1292 &  0.0078 & 0.1279 \\  
${\wclBOXnorm}$  & 0.0030 & 0.1156 &  -0.0057 & 0.1235 & -0.0154 & 0.1219 &  -0.0014 & 0.1220  &  0.0051 & 0.1229 \\  
${\wemeBOXnorm}$  & -0.0037 & 0.1262 &   -0.0145 & 0.1295 &  -0.0240 & 0.1297  & -0.0017 & 0.1245 &  0.0031 & 0.1296\\ 
\hline
   \end{tabular}
\caption{ \small \label{tab:tabla2-alfa-C0-C5} 
   Bias and standard deviations  for the estimators of $\alpha_0$, over $n_R = 1000$ for clean and contaminated samples of size $n = 300$.}
\end{table}

\begin{figure}[ht!]
 \begin{center}
 \footnotesize
 \renewcommand{\arraystretch}{0.2}
 \newcolumntype{M}{>{\centering\arraybackslash}m{\dimexpr.01\linewidth-1\tabcolsep}}
   \newcolumntype{G}{>{\centering\arraybackslash}m{\dimexpr.35\linewidth-1\tabcolsep}}
\begin{tabular}{GG}
\multicolumn{2}{c}{$C_{0}$}   \\
\multicolumn{2}{c}{ \includegraphics[scale=0.35]{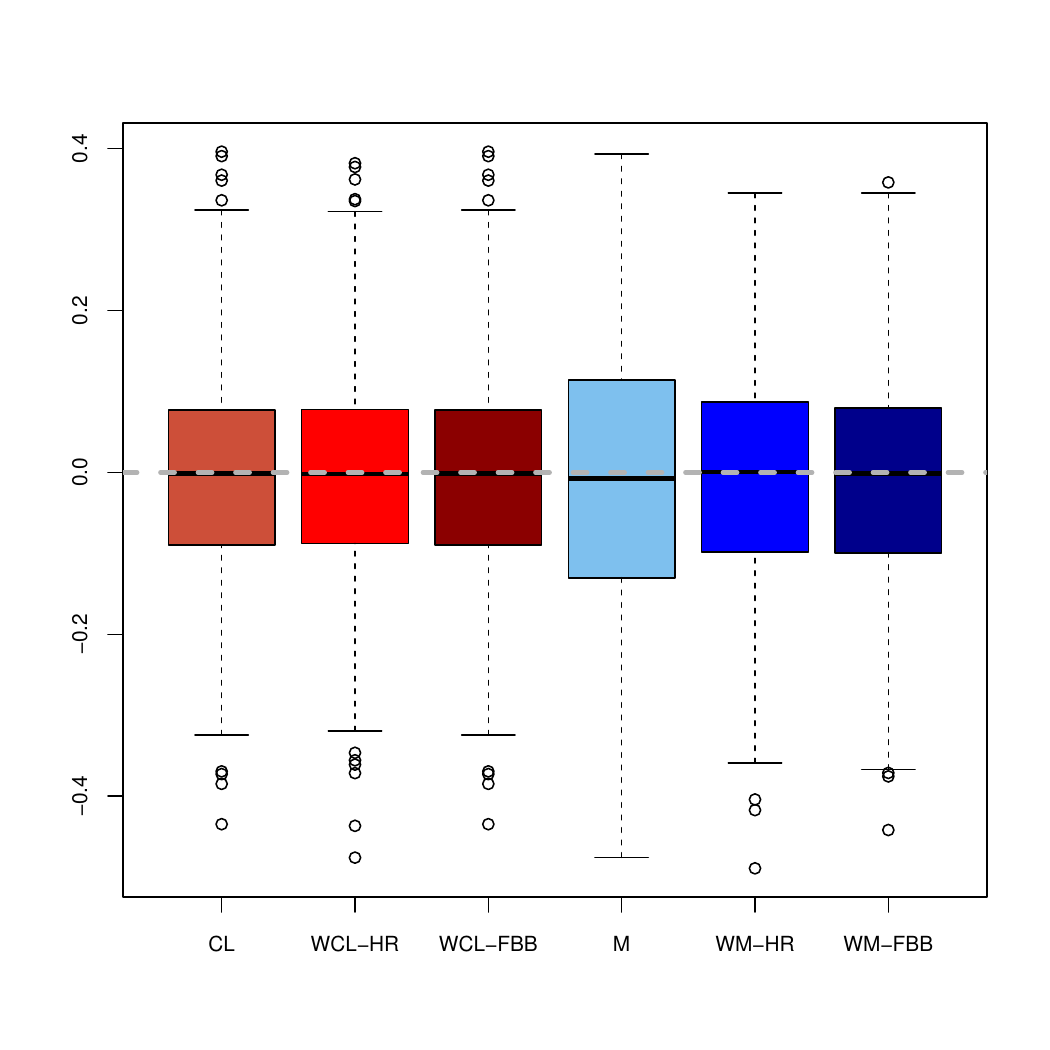} }   \\
 $C_{1,0.05}$ & $C_{1,0.10}$   \\
     \includegraphics[scale=0.35]{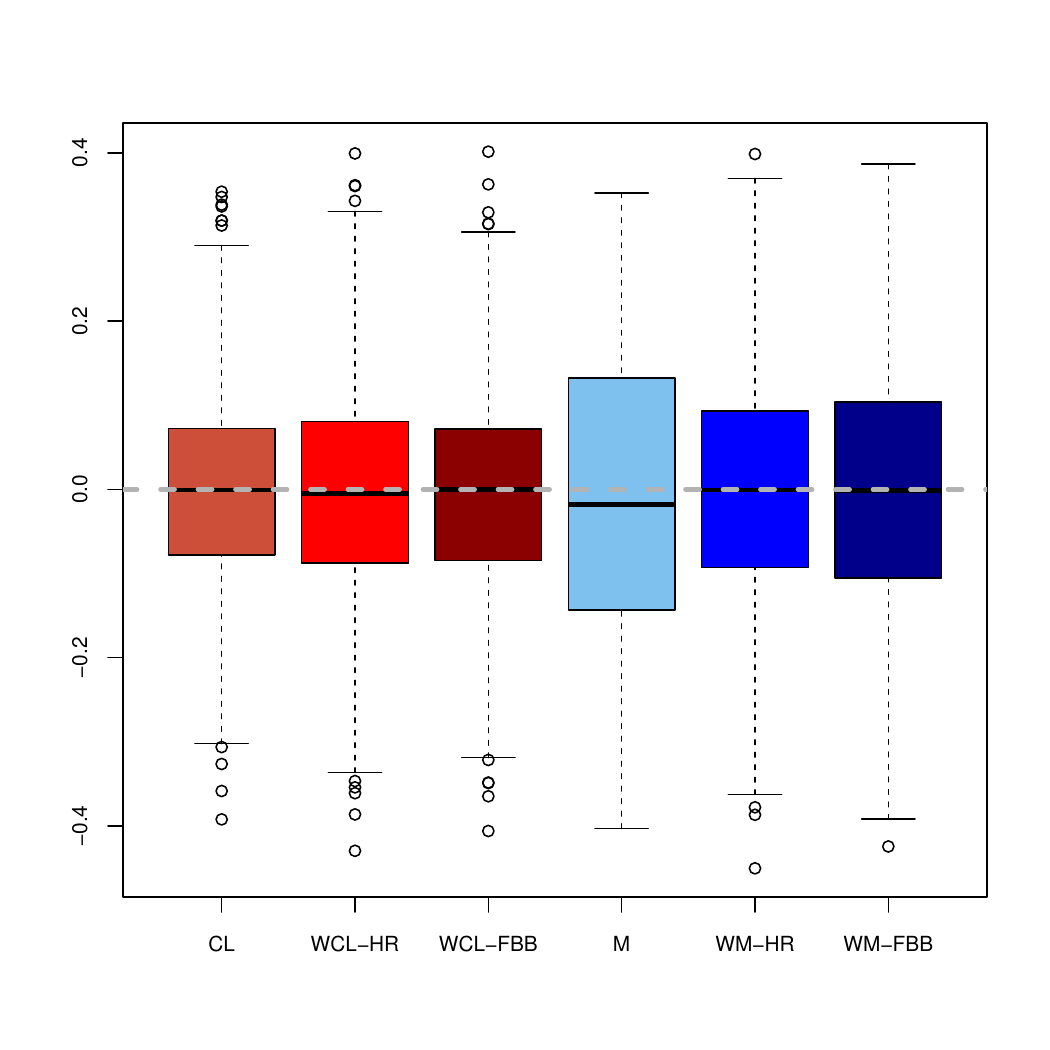} 
&  \includegraphics[scale=0.35]{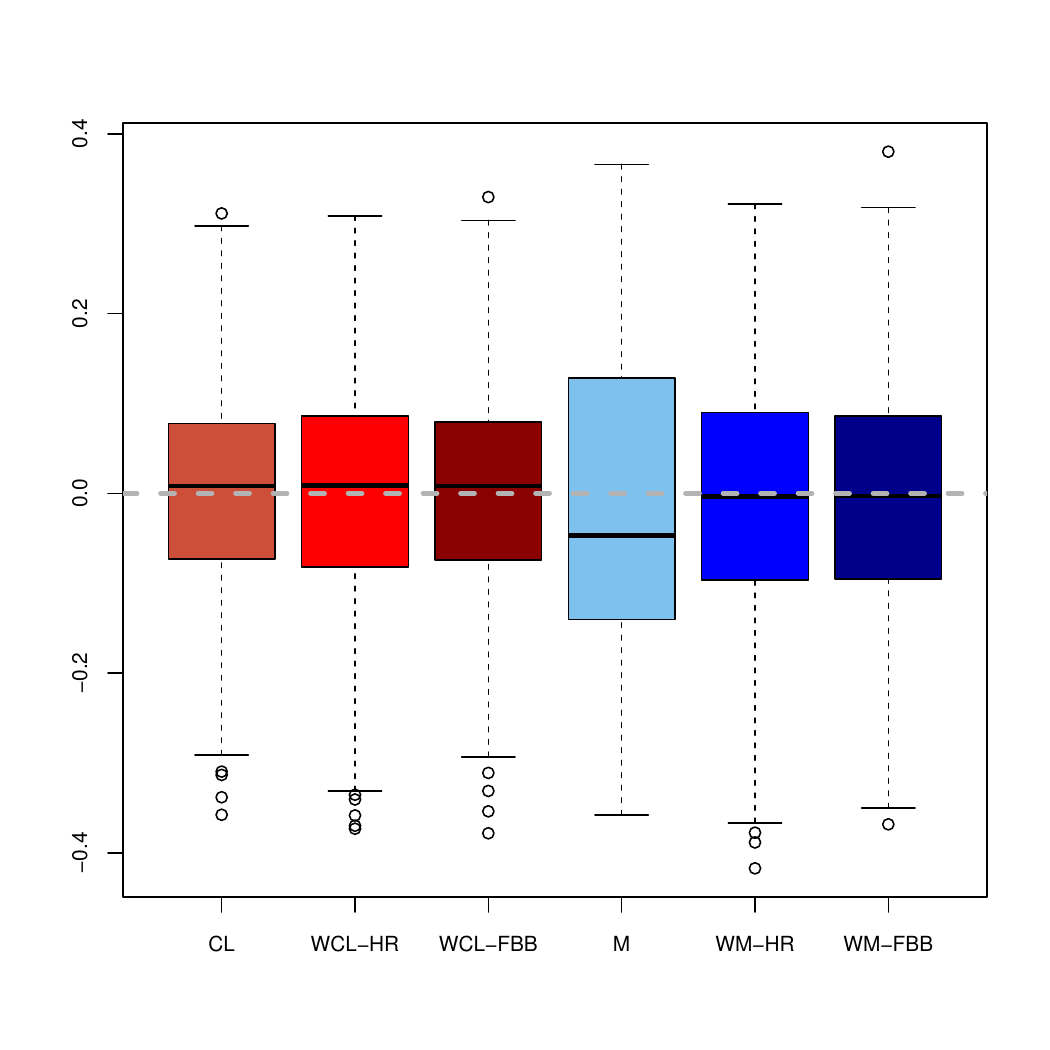} \\
 $C_{2,0.05}$ & $C_{2,0.10}$   \\
   \includegraphics[scale=0.35]{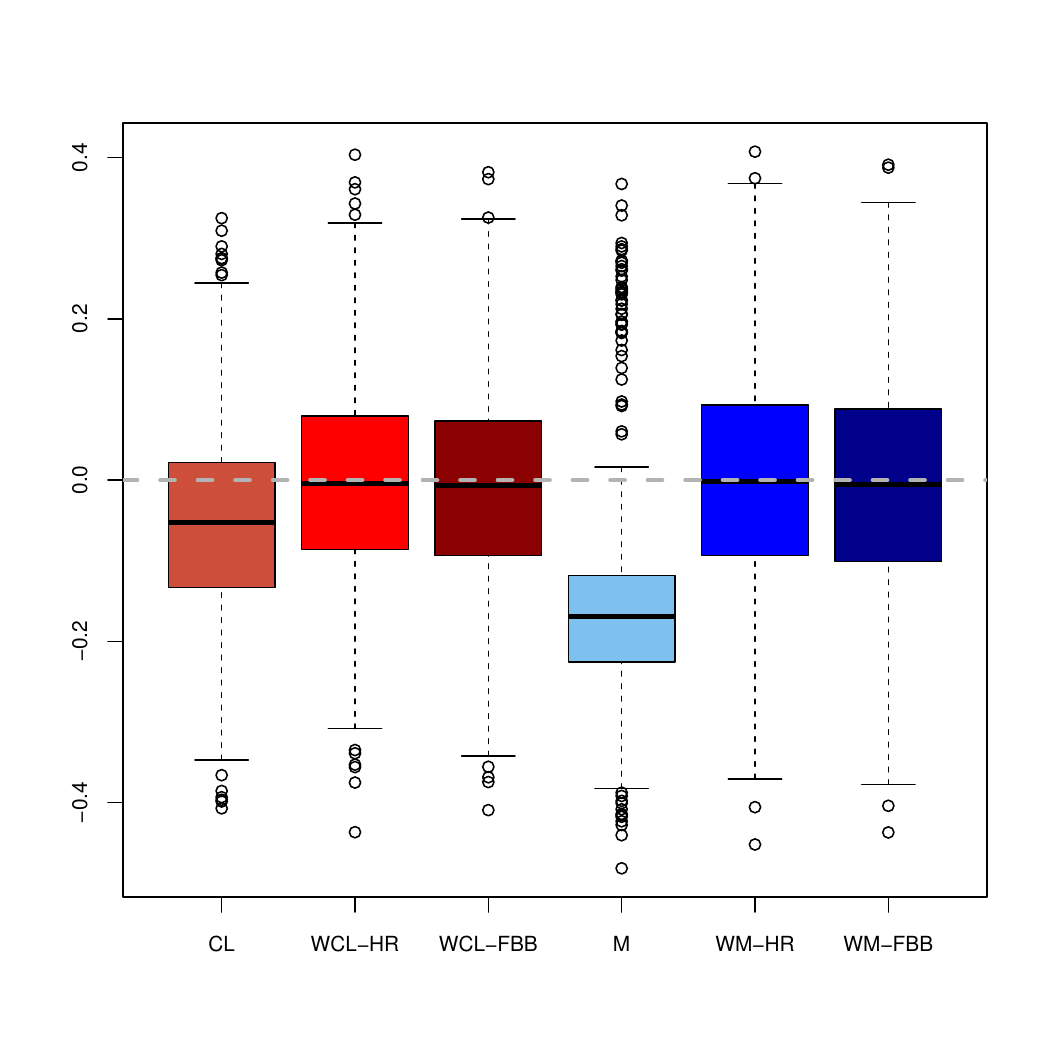}
&  \includegraphics[scale=0.35]{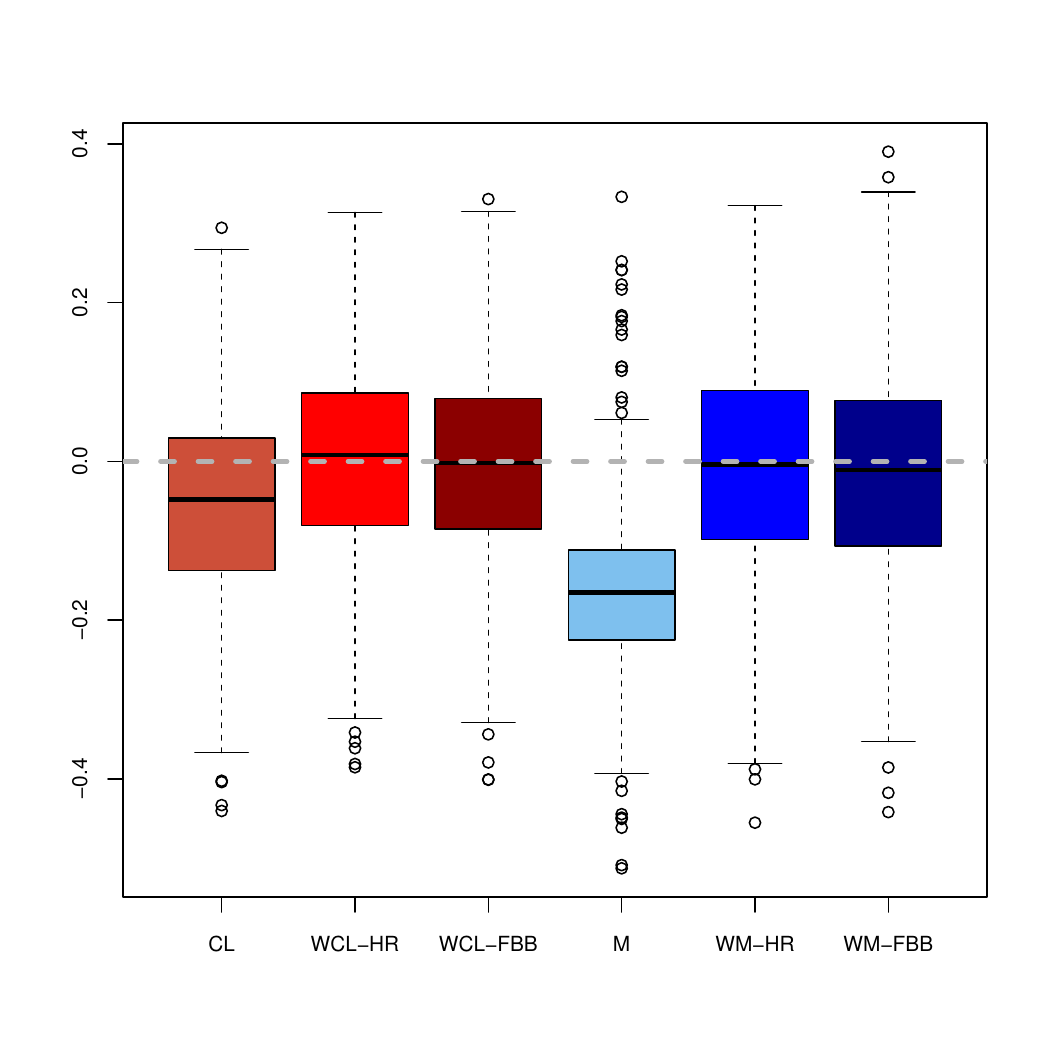} 
 \end{tabular}
\caption{\small \label{fig:boxplotsalpha}  Boxplot of the estimators for $\alpha_0$ under $C_0$ to $C_{2,0.10}$.}
\end{center} 
\end{figure}

\begin{figure}[ht!]
 \begin{center}
 \footnotesize
 \renewcommand{\arraystretch}{0.2}
 \newcolumntype{M}{>{\centering\arraybackslash}m{\dimexpr.01\linewidth-1\tabcolsep}}
   \newcolumntype{G}{>{\centering\arraybackslash}m{\dimexpr.35\linewidth-1\tabcolsep}}
\begin{tabular}{GG} $C_{3,0.05}$ & $C_{3,0.10}$   \\[-4ex]
   \includegraphics[scale=0.35]{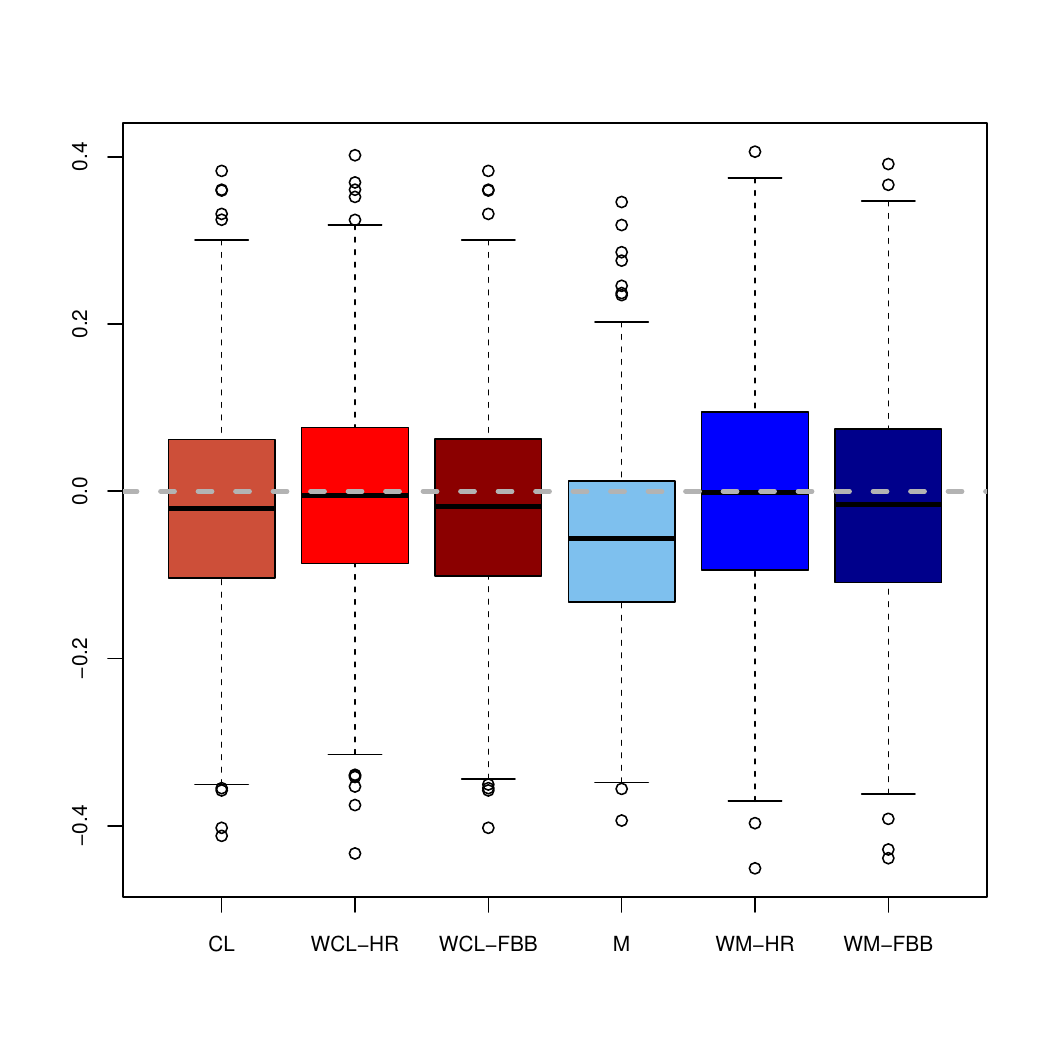}
&  \includegraphics[scale=0.35]{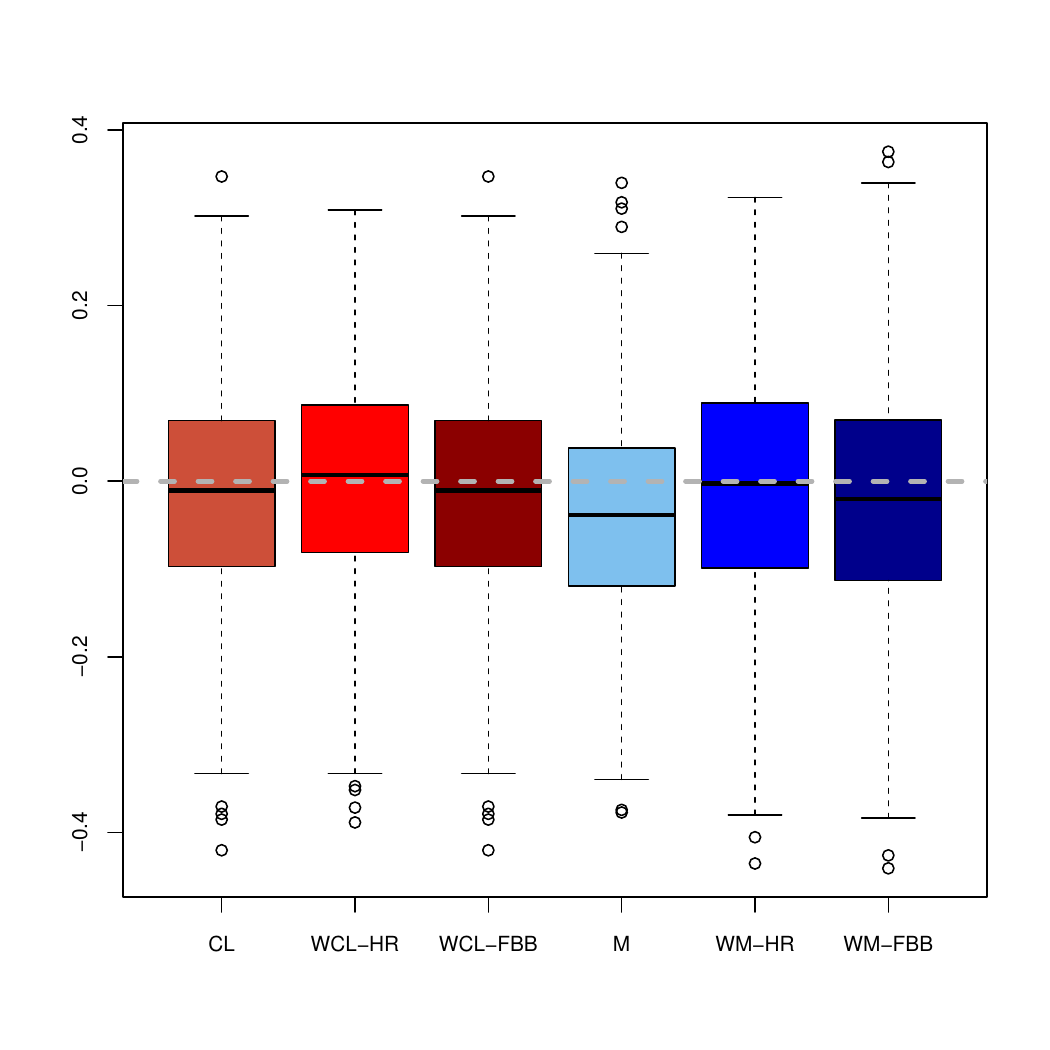} \\[-3ex]
 $C_{4,0.05}$ & $C_{4,0.10}$   \\[-4ex]
   \includegraphics[scale=0.35]{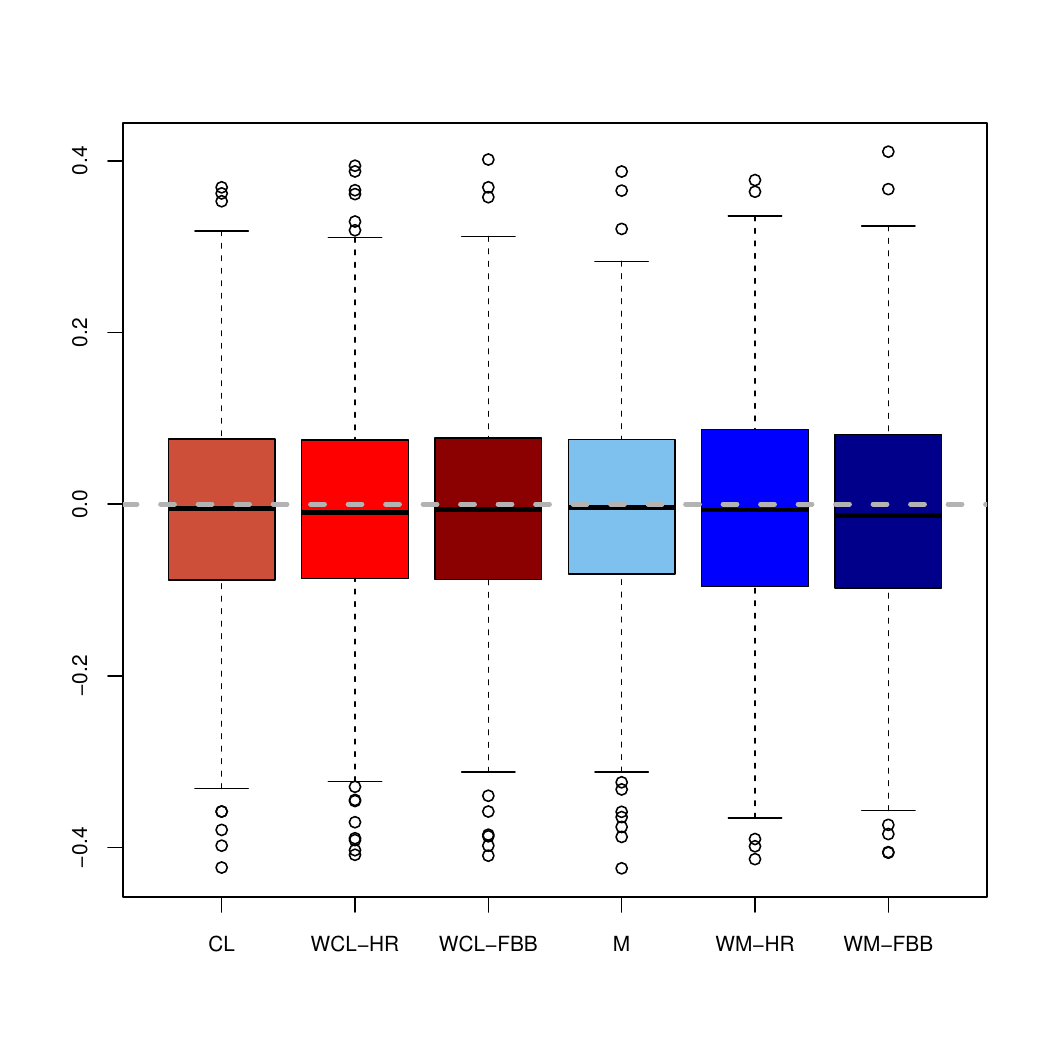}
&  \includegraphics[scale=0.35]{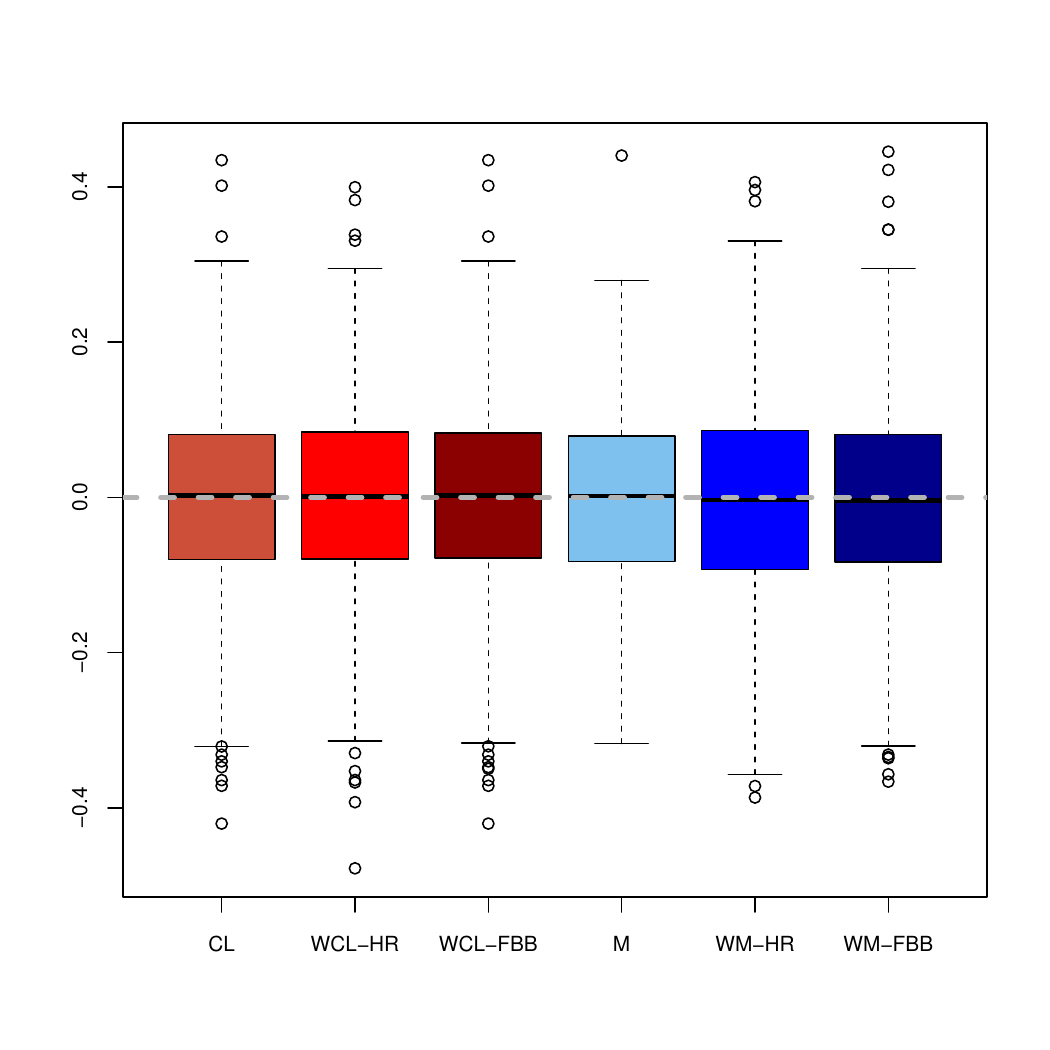} \\[-3ex]
 $C_{5,0.05}$ & $C_{5,0.10}$   \\[-4ex]
   \includegraphics[scale=0.35]{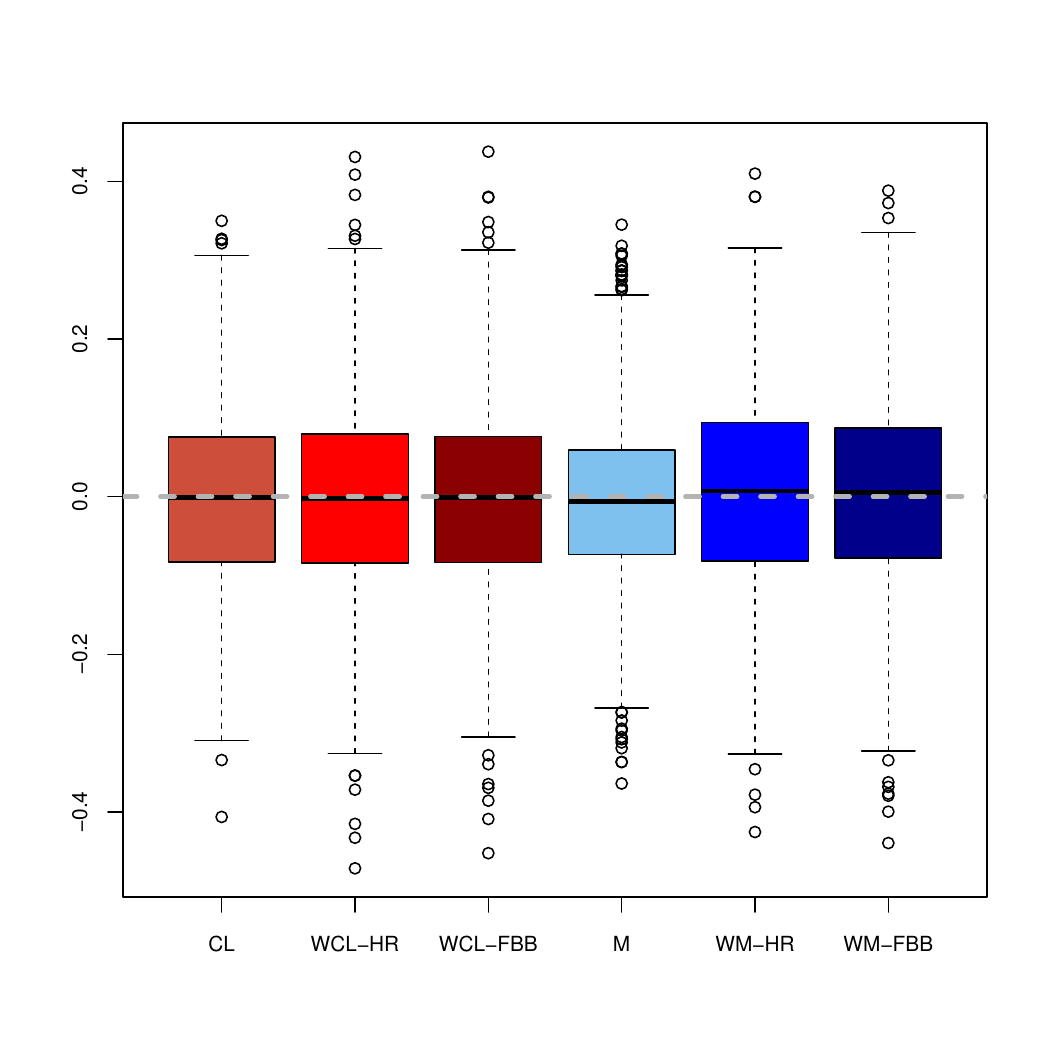}
&  \includegraphics[scale=0.35]{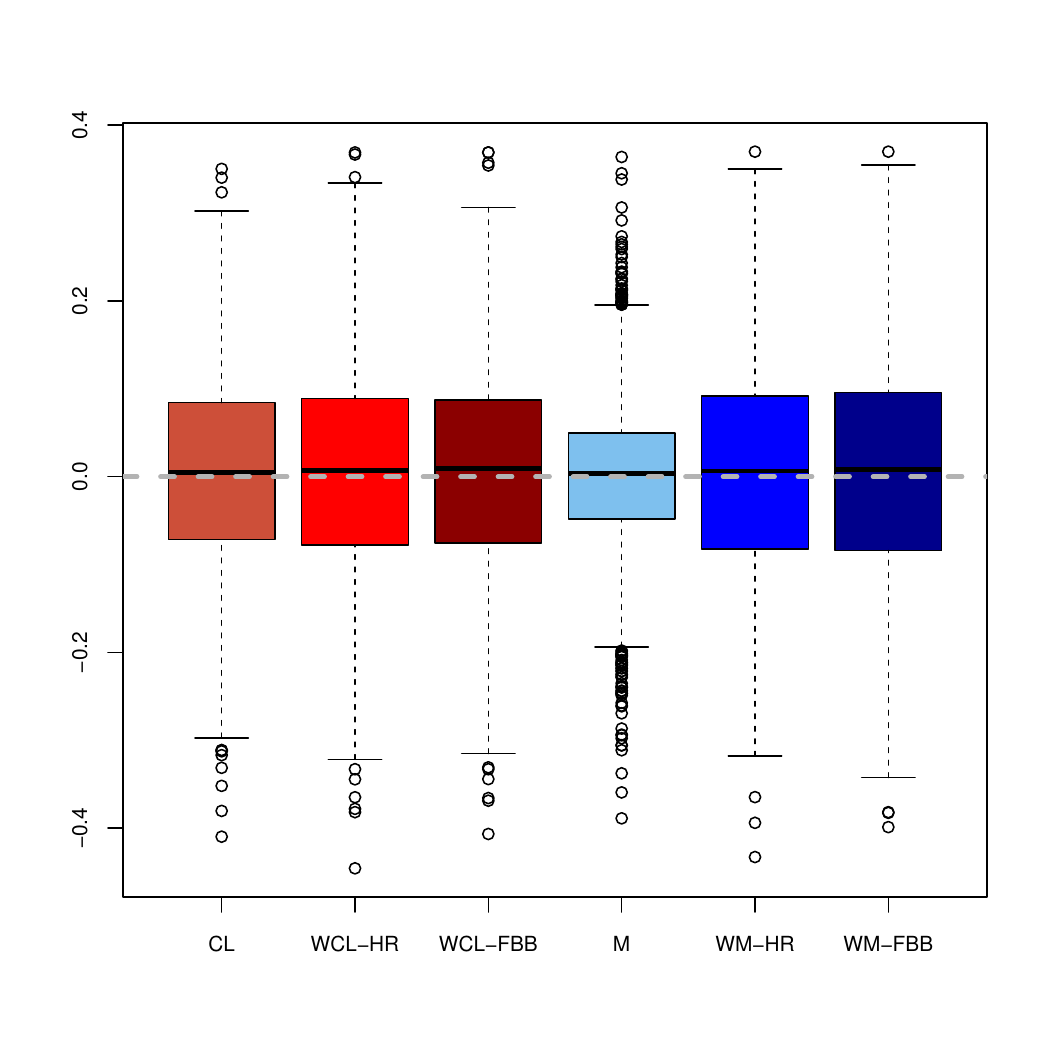} 

 \end{tabular}
\vskip-0.2in
\caption{\small \label{fig:boxplotsalpha-C3}  Boxplot of the estimators for $\alpha_0$ under $C_{3,0.05}$ to $C_{5,0.10}$.}
\end{center} 
\end{figure}

As expected, for clean samples, all the methods perform similarly. It should be noted that the $M$ and weighted $M-$estimators show slightly lower biases than the classical procedure based on the deviance, but their standard deviations are larger due to the loss of efficiency.  
For all the considered contamination schemes, the obtained results reflect the stability of the weighted estimators. In contrast, the classical procedure  and also the $M-$estimator are affected by some of the these schemes.  When considering the   performance across contaminations, we observe that $C_{2,\epsilon}$ is the one with the larger effect on the bias of the classical estimator. This contamination, as well as $C_{3,\epsilon}$, is also damaging for the $M-$estimator due to the presence of extreme high leverage outliers.

Regarding the performance of the weighted estimators under $C_{1,\epsilon}$ to  $C_{5,\epsilon}$ the weighted proposal which downweights observations with large robust Mahalanobis distance provide the best results especially when looking at the bias. Note that under some contaminating schemes, the obtained biases for     ${\wemeBOXnorm}$  are considerably larger than those of  ${\wemeHRnorm}$, for instance under $C_{3,0.10}$ it is more than 10 times larger, while their standard deviations are comparable. The same behaviour arises when comparing the biases of ${\wclBOXnorm}$  and ${\wclHRnorm}$.

To evaluate the performance of the different estimators of $\beta_0$, as in \citet{qingguo:2015} and \citet{boente:salibian:vena:2020}, one possibility is to consider numerical approximations of their 
integrated squared bias and mean integrated squared error,
computed on a grid of equally spaced points on $[0, 1]$. 
However, as mentioned in \citet{he:shi:1998}, these measures   may be influenced by numerical errors at or near the boundaries of the grid. For that reason, we only report here trimmed versions of the above summaries computed without the $q$ first and last points of the grid. More specifically,  if $\wbeta_j$ is the estimate of the function $\beta_0$ obtained with the $j$-th sample ($1 \le j \le n_R$) and  $t_1\le \dots \le t_M$ are equispaced points on $[0,1]$,  we evaluated
 \begin{eqnarray*}
 \mbox{Bias}_{\trim}^2(\wbeta)  & = &  \frac 1{M-2q} \sum_{s=q+1}^{M-q}
 \left( \frac1{n_R} \sum_{j=1}^{n_R} \wbeta_j(t_s) - \beta_0 (t_s) \right )^2 \, ,
 \\
 \mbox{MISE}_{\trim}(\wbeta)  & = &  \frac 1{M-2q} \sum_{s=q+1}^{M-q}  
 \frac1{n_R} \sum_{j=1}^{n_R} \left( \wbeta_j(t_s) - \beta_0 (t_s) \right )^2 \, .
 \end{eqnarray*}
We chose $M = 100$ and $q=[M\times 0.05]$, which uses the central 90\% interior points in the grid. 
 
We also considered another summary measure which aims to evaluate the estimator predictive capability. With that purpose, denote  $(\walfa_j,\wbeta_j)$ the estimator obtained at replication $j$, $1\le j\le n_R$. At the $j-$th replication, we also generated,   independently from the   sample used to compute the estimator,    a new sample   $\itT_j =\{ (y_{i,\itT_j}, X_{i,\itT_j})\}_{i = 1}^{n}$  distributed as  \textbf{C0}. We then computed the probability mean squared errors   defined  as  
$$
\text{PMSE}  (\walfa_,\wbeta)   = \frac{1}{n_R}\sum_{j=1}^{n_R} \frac1{n_{\itT_j}} \sum_{i = 1}^{n} \left\{F\left(\langle X_{i,\itT_j} ,  \beta_0\rangle +\alpha_0\right) - F\left(\langle X_{i,\itT_j} ,  \wbeta_j\rangle +\walfa_j\right)\right\}^2\,.
$$
 Table \ref{tab:tabla-PMSE-C0-C5}   presents the results obtained for the  $\text{PMSE}$, while Table \ref{tab:tabla2-poda5-C0-C5} reports  the obtained summary measures $\mbox{Bias}^2_{\trim}(\wbeta)$ and $\mbox{MISE}_{\trim}(\wbeta)$.

 \begin{table}[ht!]
  \centering
  \small
   \renewcommand{\arraystretch}{1.2}
    \setlength{\tabcolsep}{4pt}
\begin{tabular}{  c |c|  c    | c   |c    |c   |c    |c    | c   |c    |c   |c    |}
\hline  
 & $C_0$ & {$C_{1,0.05}$} & {$C_{1,0.10}$} & {$C_{2,0.05}$} & {$C_{2,0.10}$} & {$C_{3,0.05}$}& {$C_{3,0.10}$} & {$C_{4,0.05}$}& {$C_{4,0.10}$} & {$C_{5,0.05}$}& {$C_{5,0.10}$}\\ 
  \hline  
${\clasnorm}$ & 0.0039 &   0.0155 & 0.0251  &   0.0223 & 0.0265 &   0.0118 &  0.0130 &  0.0118 & 0.0129 & 0.0263 & 0.0296 \\ 
${\emenorm}$ & 0.0043 &   0.0151 & 0.0244 &   0.0222 & 0.0256 & 0.0105 & 0.0113 &  0.0106 &  0.0113 & 0.0242 &  0.0266 \\ 
${\wclHRnorm}$ & 0.0042 &     0.0042 &  0.0040 & 0.0042 & 0.0041 &  0.0042 &  0.0040 &  0.0042  & 0.0040 &  0.0041 &  0.0041 \\ 
${\wemeHRnorm}$ & 0.0042 &    0.0042 &  0.0041 &   0.0041  & 0.0042 &  0.0042 & 0.0041 & 0.0042 & 0.0041 &  0.0042 & 0.0040\\  
${\wclBOXnorm}$  & 0.0039 &    0.0055 &  0.0115  &  0.0051 & 0.0060 & 0.0108 &  0.0130 &  0.0056 &  0.0108  & 0.0039 &  0.0039\\  
${\wemeBOXnorm}$ & 0.0039 &    0.0053 &  0.0109 &  0.0053 & 0.0060  & 0.0104 & 0.0120 & 0.0056 &  0.0102 & 0.0038 & 0.0039\\ 
  \hline 
\end{tabular}
\caption{ \small \label{tab:tabla-PMSE-C0-C5} 
  Probability mean squared errors, PMSE,   over $n_R = 1000$ for clean and contaminated samples of size $n = 300$. }
\end{table}

 \begin{table}[ht!]
  \centering
  \small
   \renewcommand{\arraystretch}{1.2}
   \setlength{\tabcolsep}{4pt}
\begin{tabular}{  c |  cc   | cc  |cc   |cc   |cc   |}
\hline  
  & $\mbox{Bias}_{\trim}^2 $ & $\mbox{MISE}_{\trim}  $   & $\mbox{Bias}_{\trim}^2 $ & $\mbox{MISE}_{\trim}  $ & $\mbox{Bias}_{\trim}^2 $ & $\mbox{MISE}_{\trim}  $ & $\mbox{Bias}_{\trim}^2 $ & $\mbox{MISE}_{\trim}  $ & $\mbox{Bias}_{\trim}^2 $ & $\mbox{MISE}_{\trim}  $  \\ 
  \hline
  &\multicolumn{2}{c|}{ } & \multicolumn{2}{c|}{ } & \multicolumn{2}{c|}{$C_0$} &\multicolumn{2}{c|}{ } &\multicolumn{2}{c|}{} \\
\hline
${\clasnorm}$ &  & &  & &0.0029 & 0.3305 &  &  &  & \\ 
${\emenorm}$ &  & &  & & 0.0024 & 0.3259 & & & &\\ 
${\wclHRnorm}$ &  & &  & & 0.0019 & 0.3795 &   & & & \\ 
${\wemeHRnorm}$ &  & &  & &   0.0021 & 0.3510 &  & & &\\  
${\wclBOXnorm}$  &  & &  & & 0.0029 & 0.3305 &    & & &\\  
${\wemeBOXnorm}$ &  & &  & &  0.0030 & 0.3257 &    & & &  \\ 
  \hline 
&\multicolumn{2}{c|}{$C_{1,0.05}$} &\multicolumn{2}{c|}{$C_{2,0.05}$}&\multicolumn{2}{c|}{$C_{3,0.05}$}&\multicolumn{2}{c|}{$C_{4,0.05}$}&\multicolumn{2}{c|}{$C_{5,0.05}$}\\ 
  \hline  
${\clasnorm}$ & 0.5726 & 0.7706 &    1.2394 & 1.5669  &  0.1416 & 0.4412  & 0.1435 & 0.4246  & 0.9499 & 1.0056 \\ 
${\emenorm}$ & 0.5222 & 0.7338 &  1.0635 & 1.3688 &  0.1290 & 0.3230  &   0.1300 & 0.3323 & 0.8710 & 0.8908  \\ 
${\wclHRnorm}$ &   0.0019 & 0.3566 &  0.0020 & 0.3523 &  0.0020 & 0.3535  &   0.0020 & 0.3444 &  0.0025 & 0.3534  \\ 
${\wemeHRnorm}$ &  0.0016 & 0.3350 & 0.0014 & 0.3353  &   0.0015 & 0.3379 &  0.0016 & 0.3382 & 0.0018 & 0.3605 \\  
${\wclBOXnorm}$  &   0.0749 & 0.4007 & 0.0119 & 0.4016 &   0.1069 & 0.4258  & 0.0076 & 0.3386 &  0.0028 & 0.3190  \\  
${\wemeBOXnorm}$ &  0.0517 & 0.3878 &  0.0105 & 0.4046 &   0.1028 & 0.4173 &   0.0086 & 0.3452 &   0.0019 & 0.3179 \\ 
  \hline  
&\multicolumn{2}{c|}{$C_{1,0.10}$} &\multicolumn{2}{c|}{$C_{2,0.10}$}&\multicolumn{2}{c|}{$C_{3,0.10}$}&\multicolumn{2}{c|}{$C_{4,0.10}$}&\multicolumn{2}{c|}{$C_{5,0.10}$}\\ 
\hline
${\clasnorm}$ &    1.0181 & 1.1032   &   1.5130 & 1.8423 & 0.1620 & 0.4570  & 0.1659 & 0.4436  & 1.0601 & 1.0953  \\ 
${\emenorm}$ &   0.9764 & 1.0436  &   1.2632 & 1.5716 &  0.1440 & 0.3264  &  0.1414 & 0.3118 &  0.9227 & 0.9385 \\ 
${\wclHRnorm}$ &     0.0018 & 0.3484   & 0.0017 & 0.3497 &  0.0018 & 0.3475  & 0.0025 & 0.3344 & 0.0016 & 0.3381  \\ 
${\wemeHRnorm}$ &   0.0033 & 0.3354 &  0.0023 & 0.3440  &  0.0023 & 0.3385  & 0.0025 & 0.3272 & 0.0017 & 0.3274 \\  
${\wclBOXnorm}$  &      0.4184 & 0.6609 &  0.0270 & 0.4626 & 0.1617 & 0.4568 &   0.0997 & 0.4133  &  0.0016 & 0.3188\\  
${\wemeBOXnorm}$ &     0.3615 & 0.6406  &  0.0252 & 0.4493   & 0.1441 & 0.4293 &  0.0942 & 0.4012  & 0.0016 & 0.3099\\ 
  \hline    
\end{tabular}
\caption{ \small \label{tab:tabla2-poda5-C0-C5} 
   Trimmed version of the integrated squared bias  and mean integrated squared errors  for the estimators of $\beta_0$, over $n_R = 1000$ for clean and contaminated samples of size $n = 300$. }
\end{table}

In order to  visually  explore the performance of these estimators, 
Figures \ref{fig:wbeta-C0-poda0} to  \ref{fig:wbeta-C510-poda0}  contain functional 
 boxplots, see  \citet{sun:genton:2011}, for the $n_R = 1000$ realizations of the different estimators
 for $\beta_0$  under the
 contamination settings. As in standard boxplots, the magenta central box
 of these functional boxplots represents the 50\% inner band of curves, the solid black line indicates the 
 central (deepest) function and the dotted red lines indicate outlying 
 curves, that is, outlying estimates $\wbeta_j$ for some $1 \le j \le n_R$. We also indicate in blue lines the whiskers delimiting the non--outlying curves and  the  true  function  $\beta_0$  with 
 a dark green   line. 
To avoid boundary effects, we show in Figures \ref{fig:wbeta-C0-poda5}  to  \ref{fig:wbeta-C510-poda5}   the different estimates  
evaluated on the interior points of a grid of 100 equispaced points. 
In addition, to facilitate comparisons between contamination cases and estimation methods, 
the scales of the vertical axes are the same for all  Figures.

	In clean samples, all the estimators give similar results with slightly smaller values of the $\mbox{Bias}_{\trim}^2$ when considering the weighted estimators. It is worth mentioning   that the weighted $M-$estimators are remarkably efficient.  Furthermore, when considering the PMSE, the weighted estimators giving weight 0 to the observations with covariates detected as atypical by the functional boxplot, give results similar to those of the classical estimator. This behaviour may be explained by the fact than in most generated samples, under $C_0$, no atypical curves are detected among the covariates.

As expected, when misclassified observations are present, the procedure based on the deviance breaks--down, in particular when these responses are combined with extreme high leverage covariates as is the case for $C_{2,\epsilon}$.	Contamination schemes $C_{3,\epsilon}$ and  $C_{4,\epsilon}$ have less effect than $C_{1,\epsilon}$ on the integrated squared bias and mean integrated squared error of the classical estimator of the slope. This fact is also illustrated in Figures \ref{fig:wbeta-C35-poda0} to \ref{fig:wbeta-C410-poda0} and \ref{fig:wbeta-C35-poda5} to \ref{fig:wbeta-C410-poda5}, where the true curve is still in the band containing the 50\% deepest estimates. In contrast, under $C_{1,\epsilon}$, $C_{2,\epsilon}$ and particularly under $C_{5,\epsilon}$, the plot of the true function is beyond the limits of the functional boxplot, meaning that the obtained estimates become completely uninformative and do not reflect the shape of $\beta_0$, see Figures  \ref{fig:wbeta-C15-poda0} to  \ref{fig:wbeta-C210-poda0},  \ref{fig:wbeta-C55-poda0} and \ref{fig:wbeta-C510-poda0} for instance.

It is interesting to note that the   the unweighted $M$-estimator shows a  performance similar to that of the classical one, for the considered contamination schemes.	In contrast, weighted estimators give very good results in all the studied contamination settings, especially when considering ${\wclHRnorm}$ and ${\wemeHRnorm}$, which clearly outperform  ${\wclBOXnorm}$  and ${\wemeBOXnorm}$ in most cases. It is worth mentioning that the weighted estimators based on the deviance   are quite stable across contaminations, the only exception being $C_{1,0.10}$ where  ${\wclBOXnorm}$ is more sensitive. In this last case, as when considering the  ${\wemeBOXnorm}$ estimates, the true curve lies above the magenta central region for values of $t$ larger than 0.8 (see Figures \ref{fig:wbeta-C110-poda0}  and \ref{fig:wbeta-C110-poda5}), but is still included in the band limited by the blue whiskers.

In most cases, the weighted $M-$estimators improve on the weighted estimators obtained when $\rho(t)=t$ and the procedures with weights based on the Mahalanobis distance of the projected covariates outperform those whose weights are based on the functional boxplot.
In particular, contamination schemes $C_{1,\epsilon}$ and   $C_{3,\epsilon}$, affect more the PMSE of the the latter procedure than that of the former one. Note that, under  $C_{1,0.10}$, $C_{3,0.10}$ and also under $C_{4,0.10}$ the PMSE of ${\wclBOXnorm}$ are twice those obtained with ${\wclHRnorm}$ and similarly when comparing ${\wemeBOXnorm}$ with ${\wemeHRnorm}$ (see Table \ref{tab:tabla-PMSE-C0-C5}).

In summary, the ${\wclHRnorm}$ and specially the   ${\wemeHRnorm}$ estimators display a remarkably stable behaviour across the selected contaminations. Both estimators are comparable, but ${\wemeHRnorm}$ attains in general lower values of the  $\mbox{MISE}_{\trim}$. The good performance of   ${\wclHRnorm}$ may be explained by the fact that the hard--rejection weights mainly discard from the sample the observations with high leverage covariates, which in most cases correspond to missclassified ones. This behaviour was already noticed by \citet{Croux:H:2003} in the finite--dimensional setting.

 \begin{figure}[tp]
 \begin{center}
 \footnotesize
 \renewcommand{\arraystretch}{0.2}
 \newcolumntype{M}{>{\centering\arraybackslash}m{\dimexpr.01\linewidth-1\tabcolsep}}
   \newcolumntype{G}{>{\centering\arraybackslash}m{\dimexpr.45\linewidth-1\tabcolsep}}
\begin{tabular}{GG}
  $\wbeta_{\clas}$ & $\wbeta_{\eme}$   \\[-3ex]
     \includegraphics[scale=0.40]{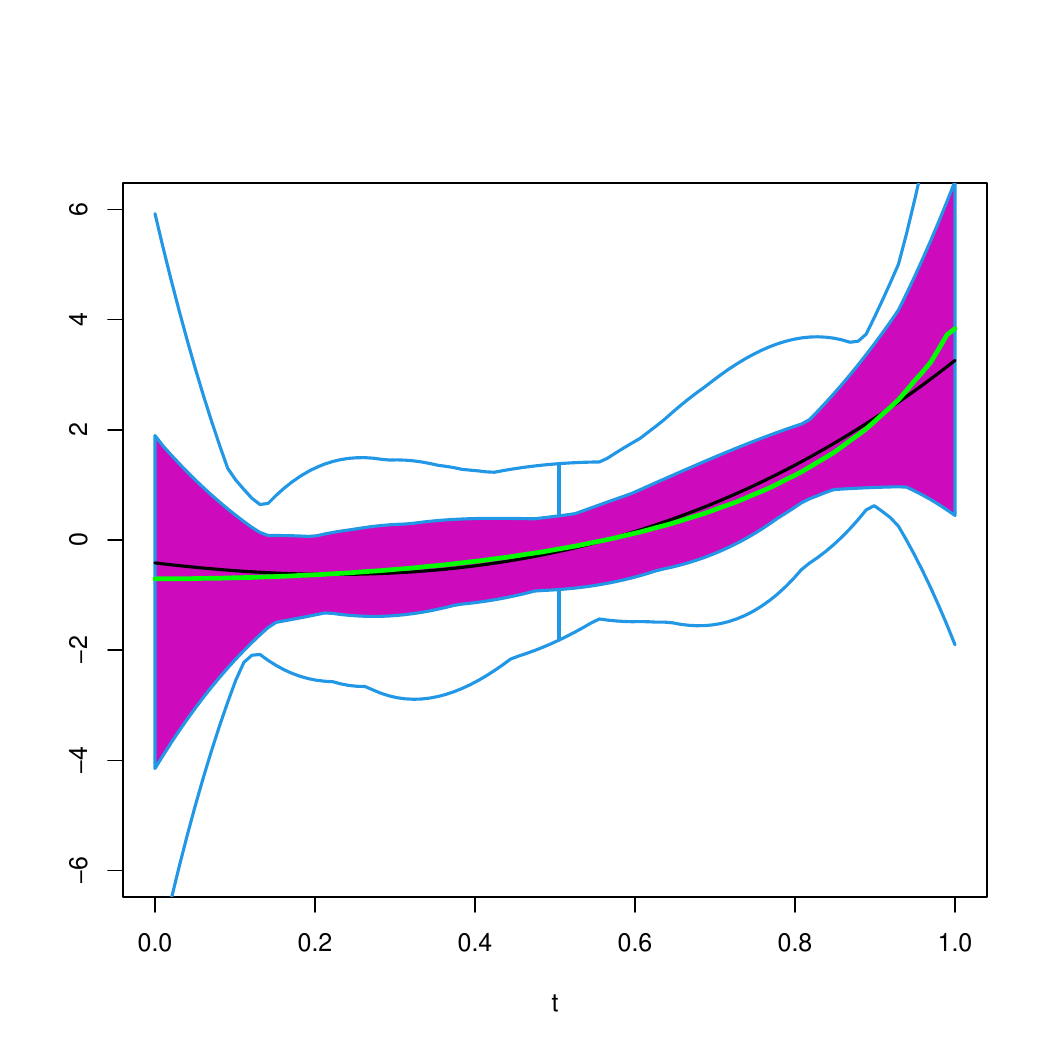} 
&  \includegraphics[scale=0.40]{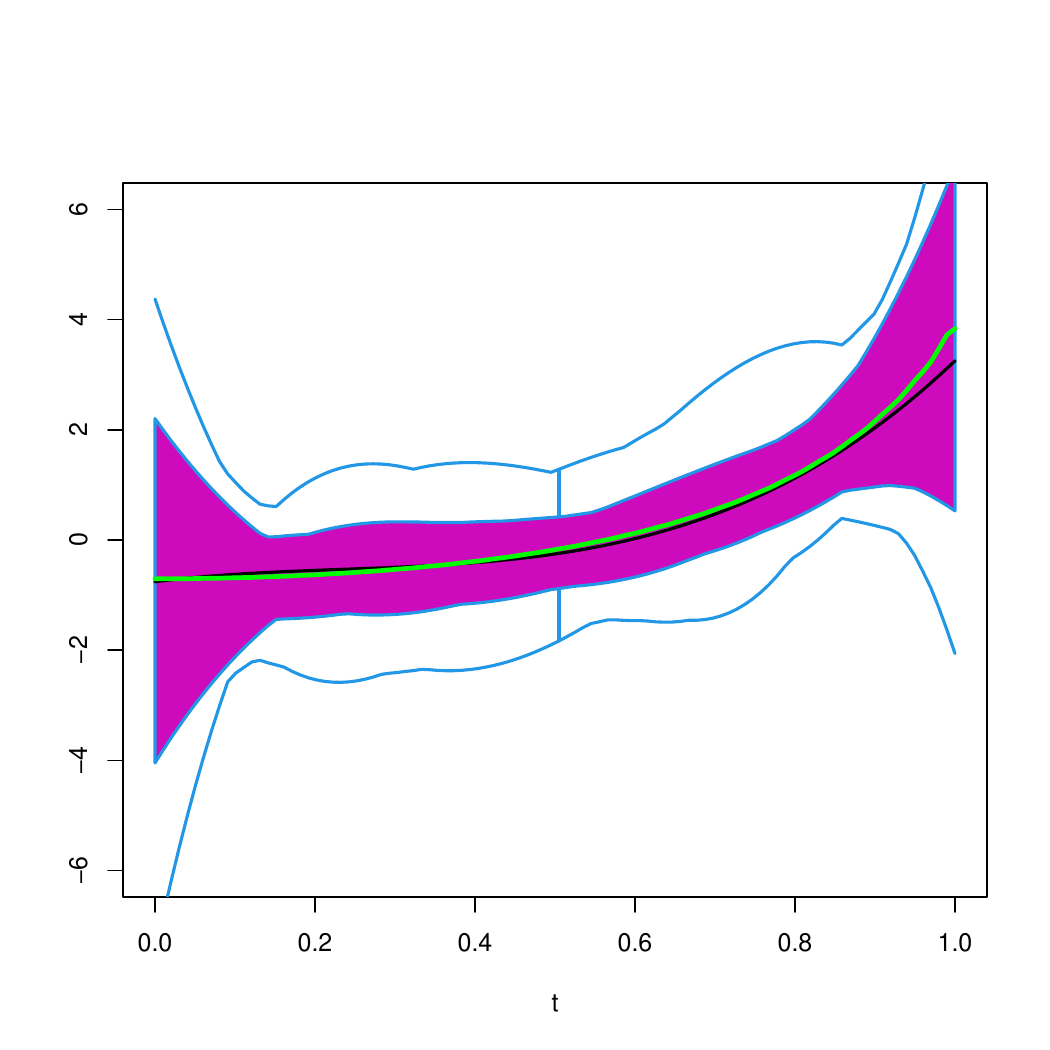} \\
   $\wbeta_{\wclHR}$ & $\wbeta_{\wemeHR}$ \\[-3ex] 
   \includegraphics[scale=0.40]{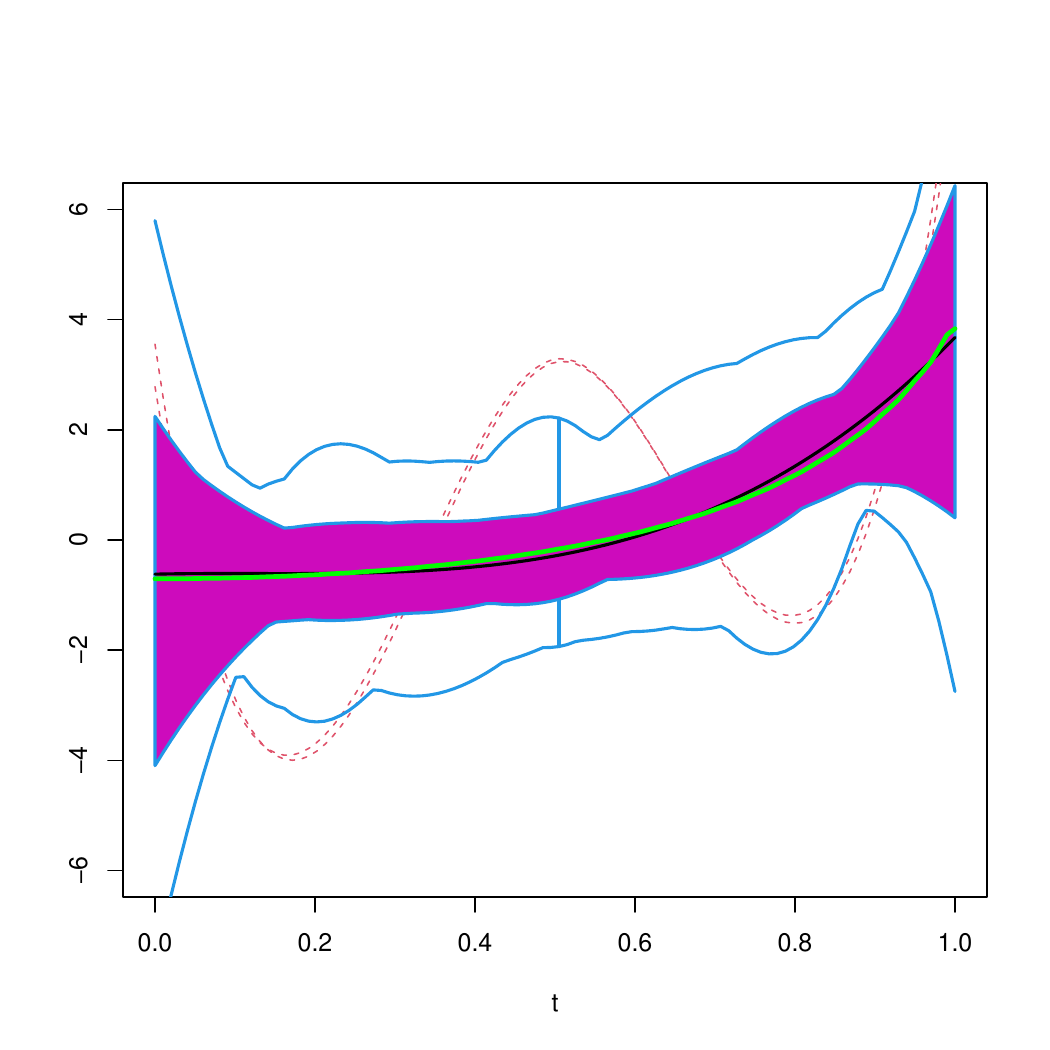}
&  \includegraphics[scale=0.40]{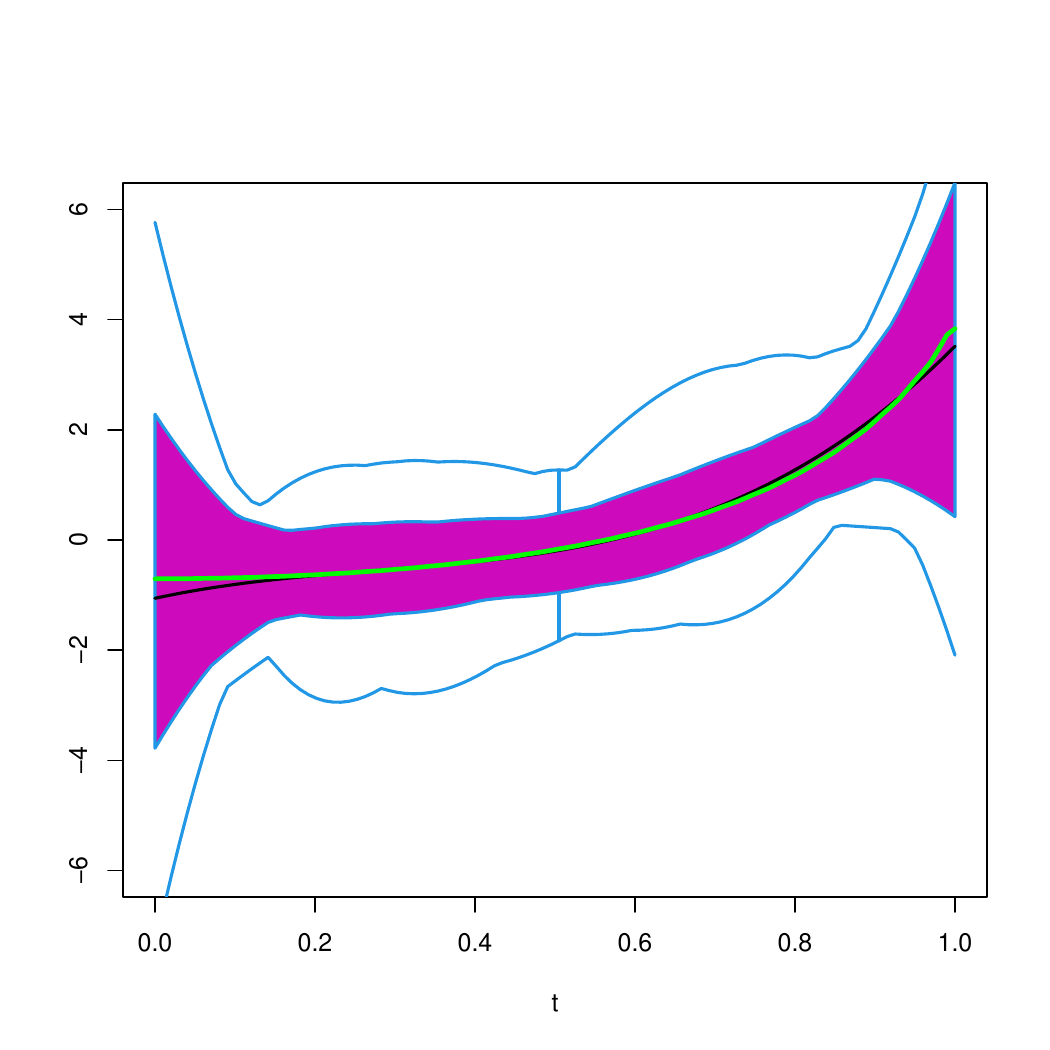} 
  \\
   $\wbeta_{\wclBOX}$ & $\wbeta_{\wemeBOX}$ \\[-3ex]
\includegraphics[scale=0.40]{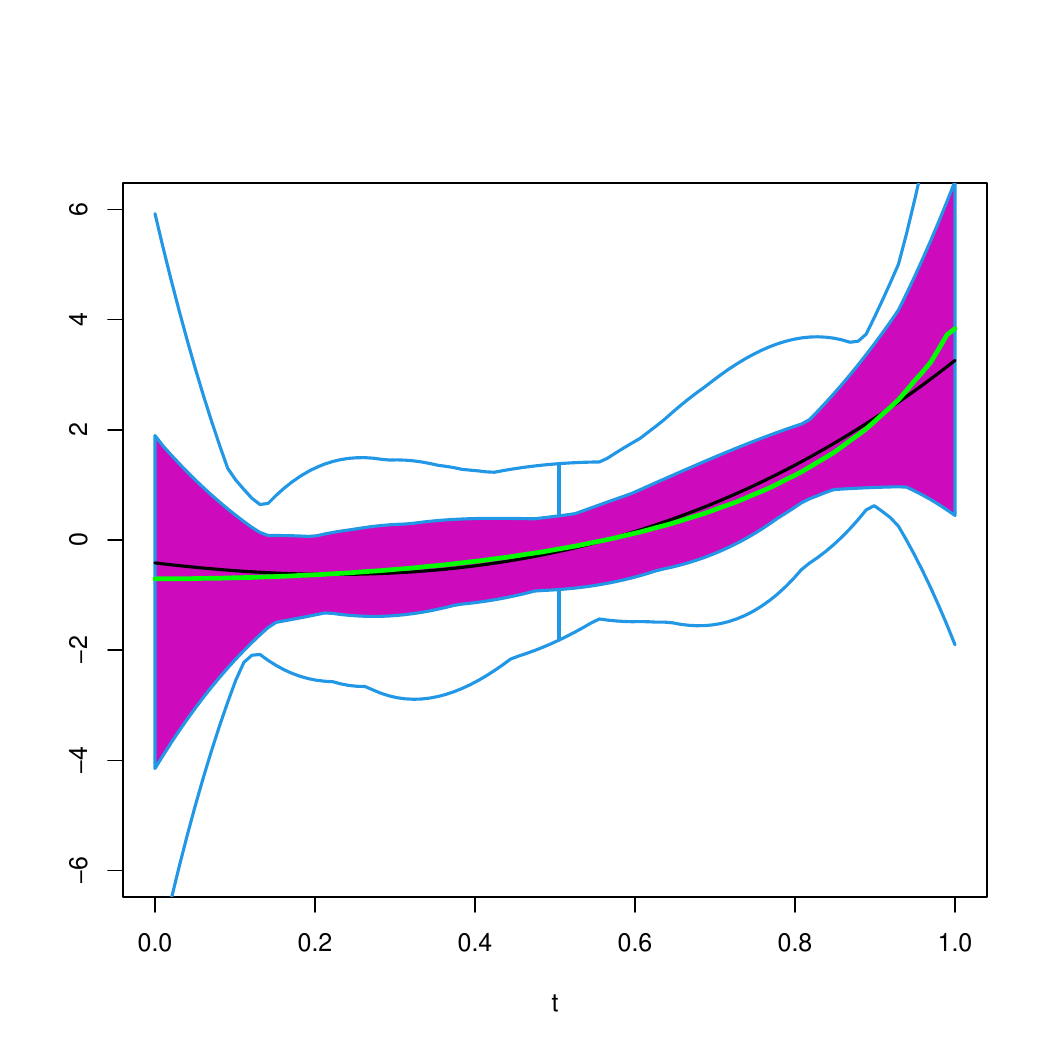}
&   \includegraphics[scale=0.40]{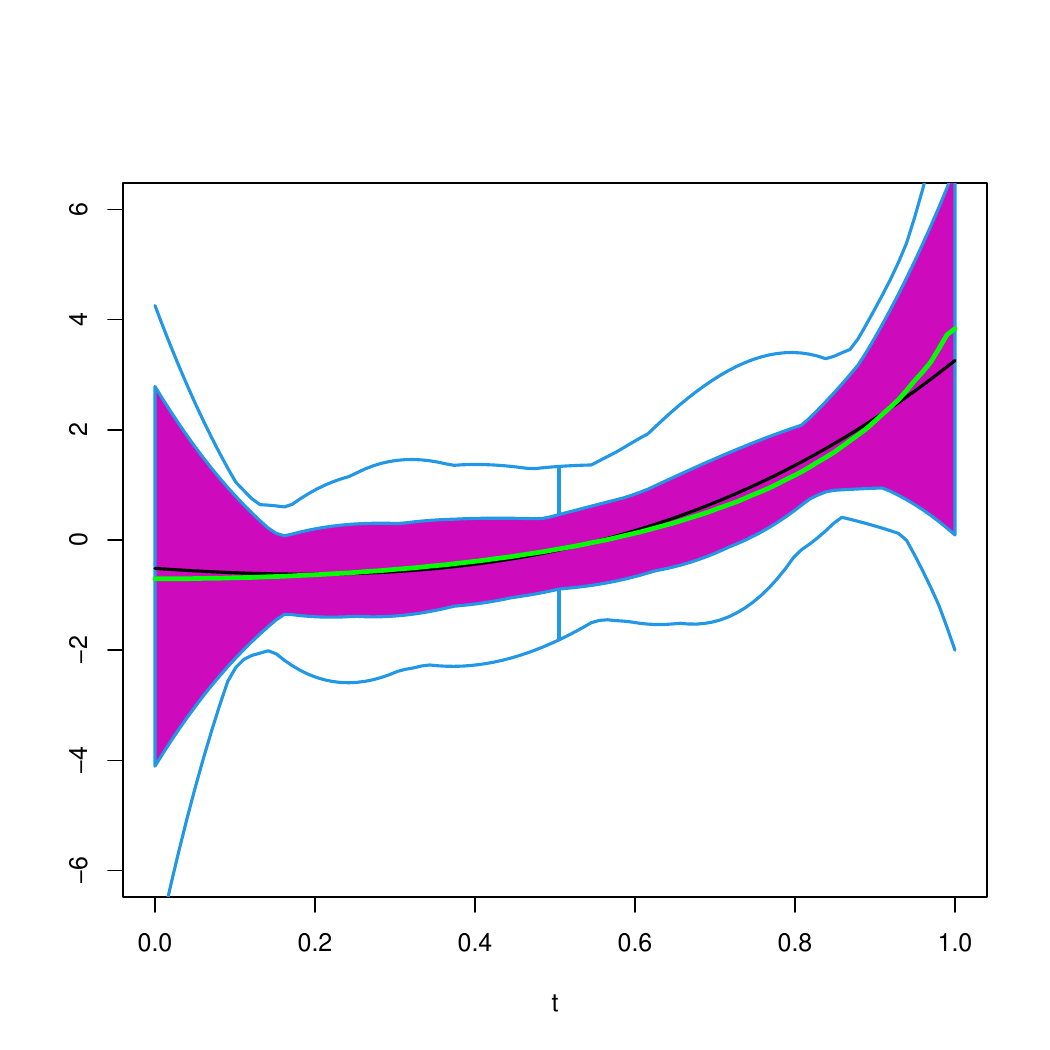} 
 \end{tabular}
\caption{\small \label{fig:wbeta-C0-poda0}  Functional boxplot of the estimators for $\beta_0$ under $C_{0}$ within the interval $[0,1]$. 
The true function is shown with a green dashed line, while the black solid one is the central 
curve of the $n_R = 1000$ estimates $\wbeta$. }
\end{center} 
\end{figure} 

\begin{figure}[tp]
 \begin{center}
 \footnotesize
 \renewcommand{\arraystretch}{0.2}
 \newcolumntype{M}{>{\centering\arraybackslash}m{\dimexpr.01\linewidth-1\tabcolsep}}
   \newcolumntype{G}{>{\centering\arraybackslash}m{\dimexpr.45\linewidth-1\tabcolsep}}
\begin{tabular}{GG}
  $\wbeta_{\clas}$ & $\wbeta_{\eme}$   \\[-3ex]   
 
\includegraphics[scale=0.40]{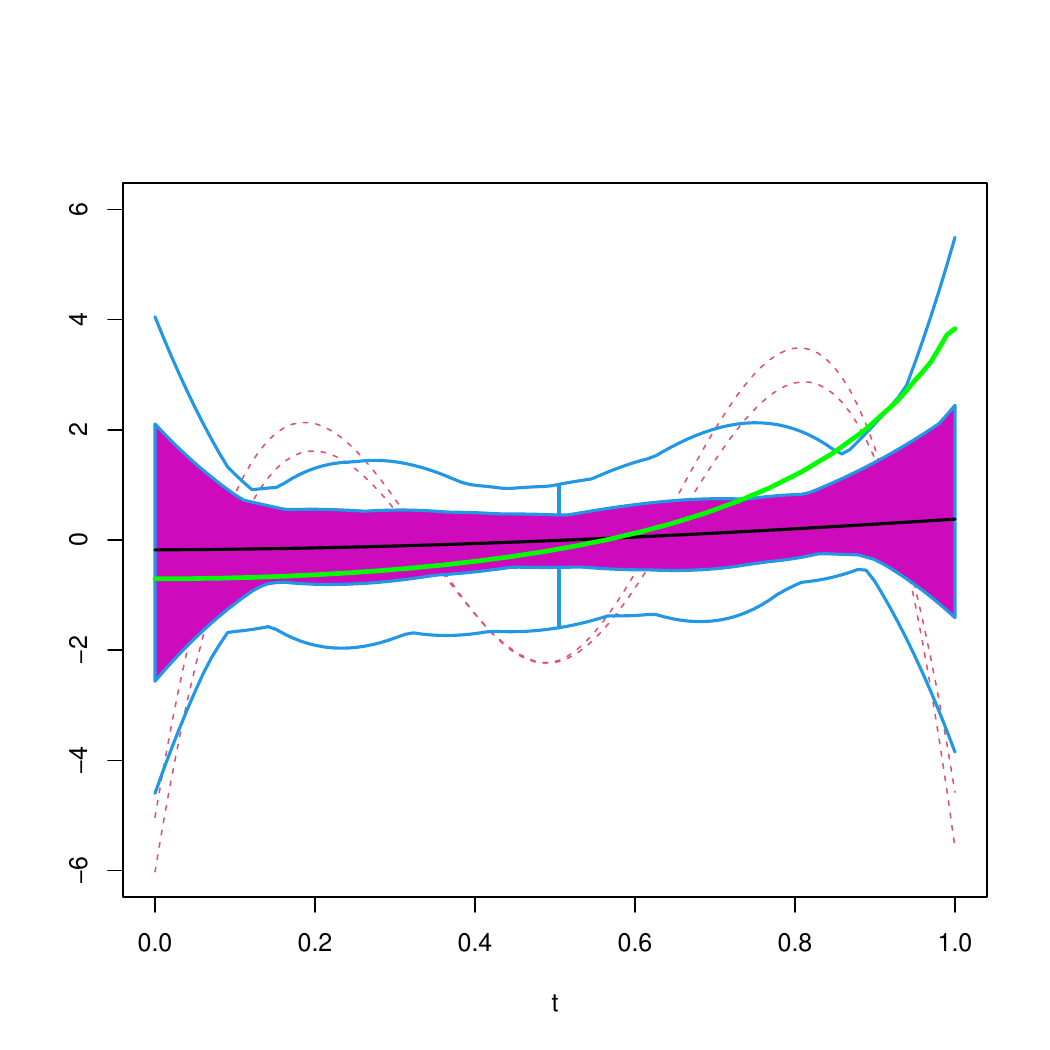}
 &  \includegraphics[scale=0.40]{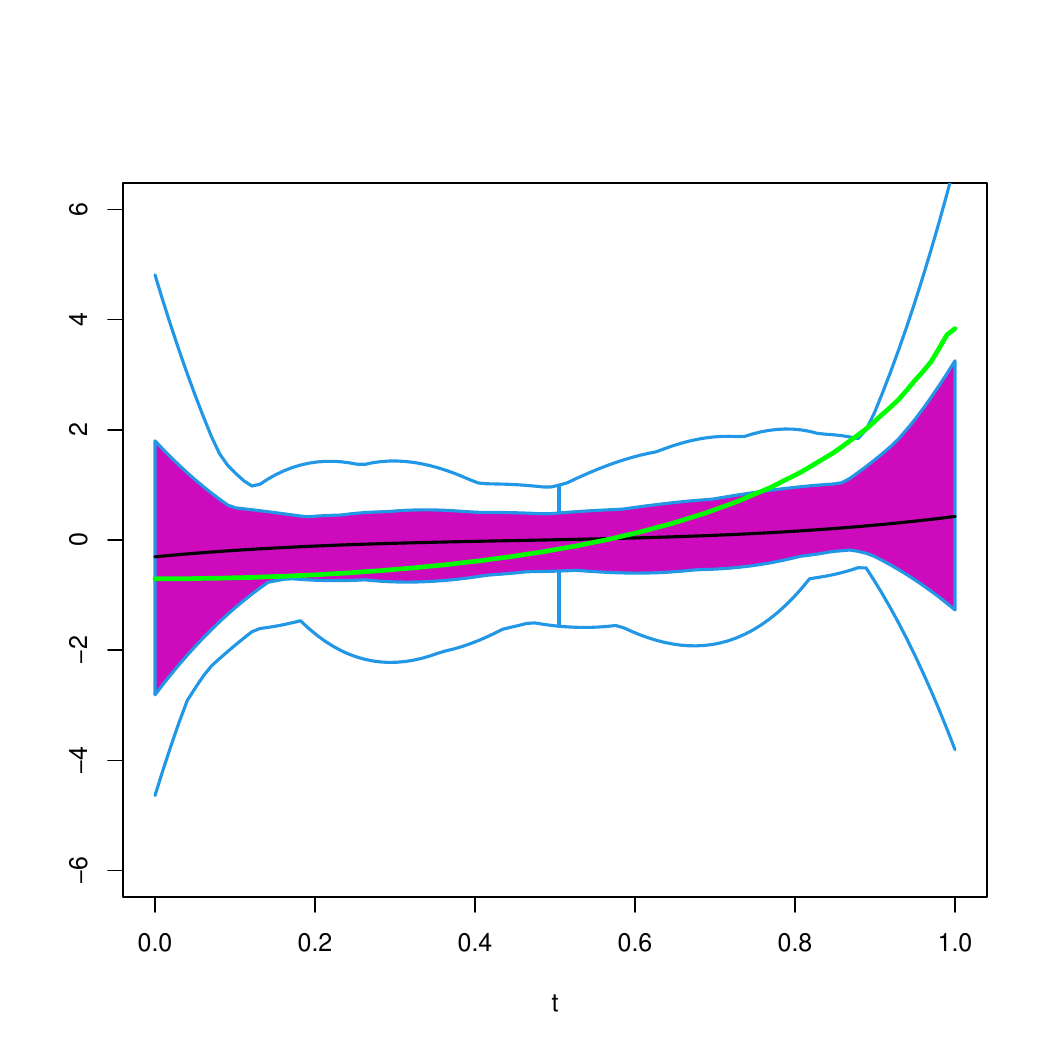}\\
   $\wbeta_{\wclHR}$ & $\wbeta_{\wemeHR}$ \\[-3ex] 
    \includegraphics[scale=0.40]{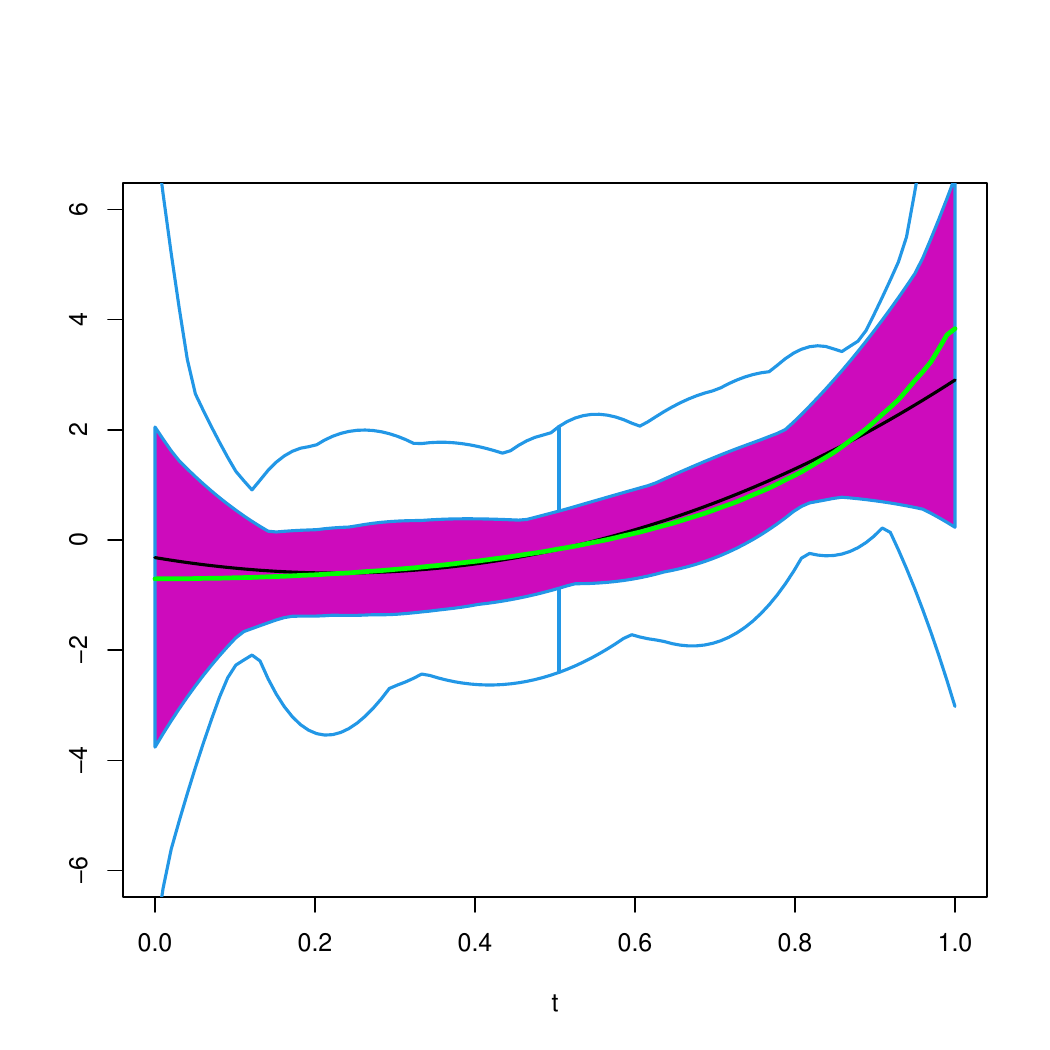}
  &  \includegraphics[scale=0.40]{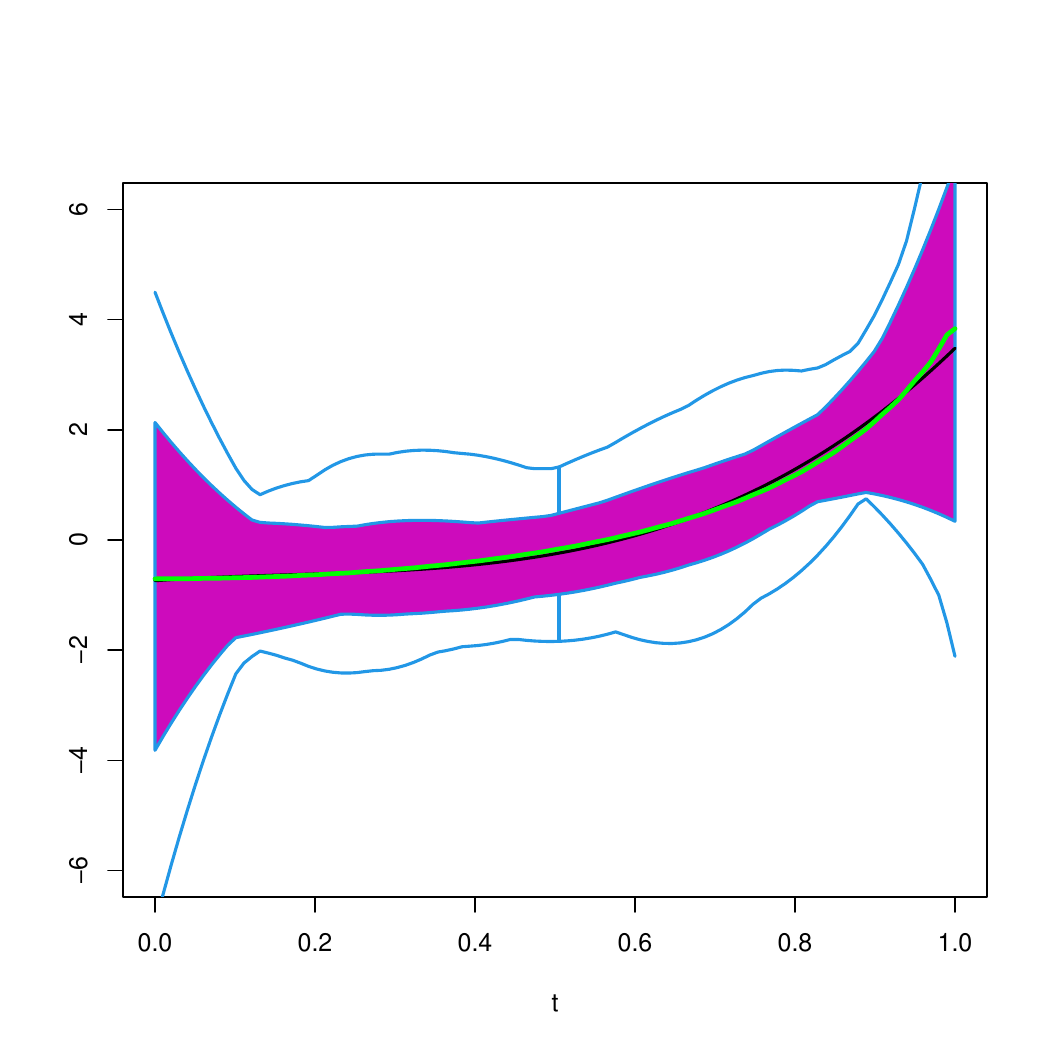}
   \\
   $\wbeta_{\wclBOX}$ & $\wbeta_{\wemeBOX}$ \\[-3ex]
  \includegraphics[scale=0.40]{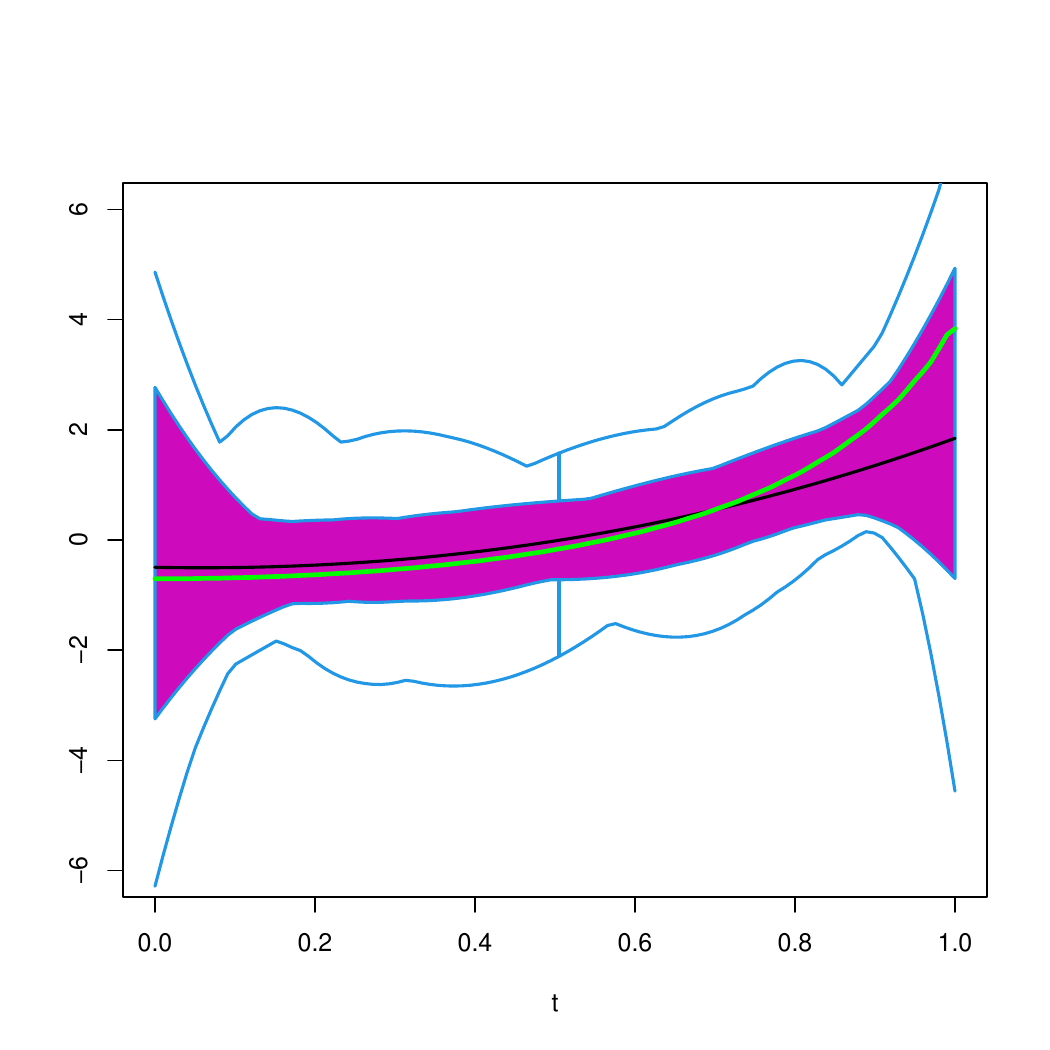}
  &  \includegraphics[scale=0.40]{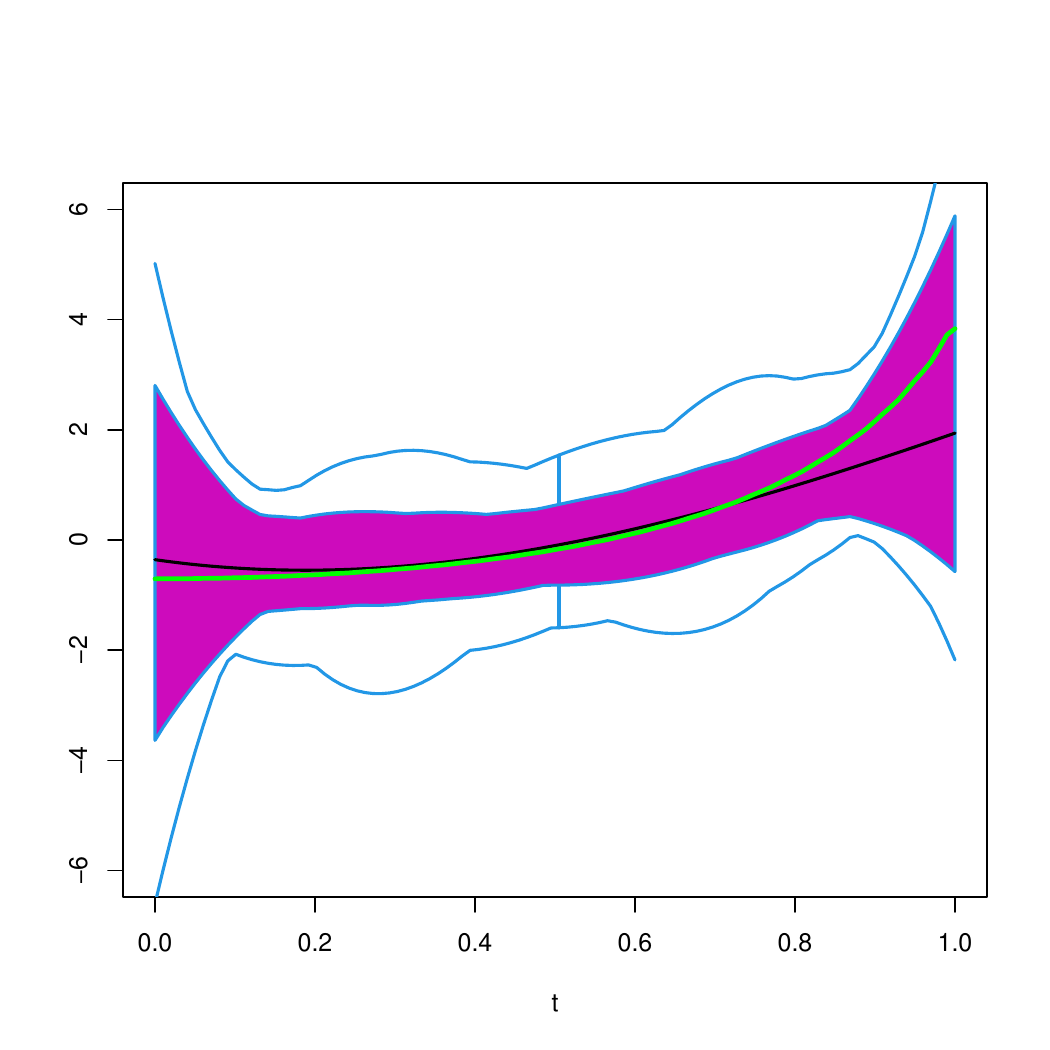}

\end{tabular}
\caption{\small \label{fig:wbeta-C15-poda0}  Functional boxplot of the estimators for $\beta_0$ under $C_{1,0.05}$  within the interval $[0,1]$. 
The true function is shown with a green dashed line, while the black solid one is the central 
curve of the $n_R = 1000$ estimates $\wbeta$.  }
\end{center} 
\end{figure}

\begin{figure}[tp]
 \begin{center}
 \footnotesize
 \renewcommand{\arraystretch}{0.2}
 \newcolumntype{M}{>{\centering\arraybackslash}m{\dimexpr.01\linewidth-1\tabcolsep}}
   \newcolumntype{G}{>{\centering\arraybackslash}m{\dimexpr.45\linewidth-1\tabcolsep}}
\begin{tabular}{GG}
  $\wbeta_{\clas}$ & $\wbeta_{\eme}$   \\[-3ex]   
\includegraphics[scale=0.40]{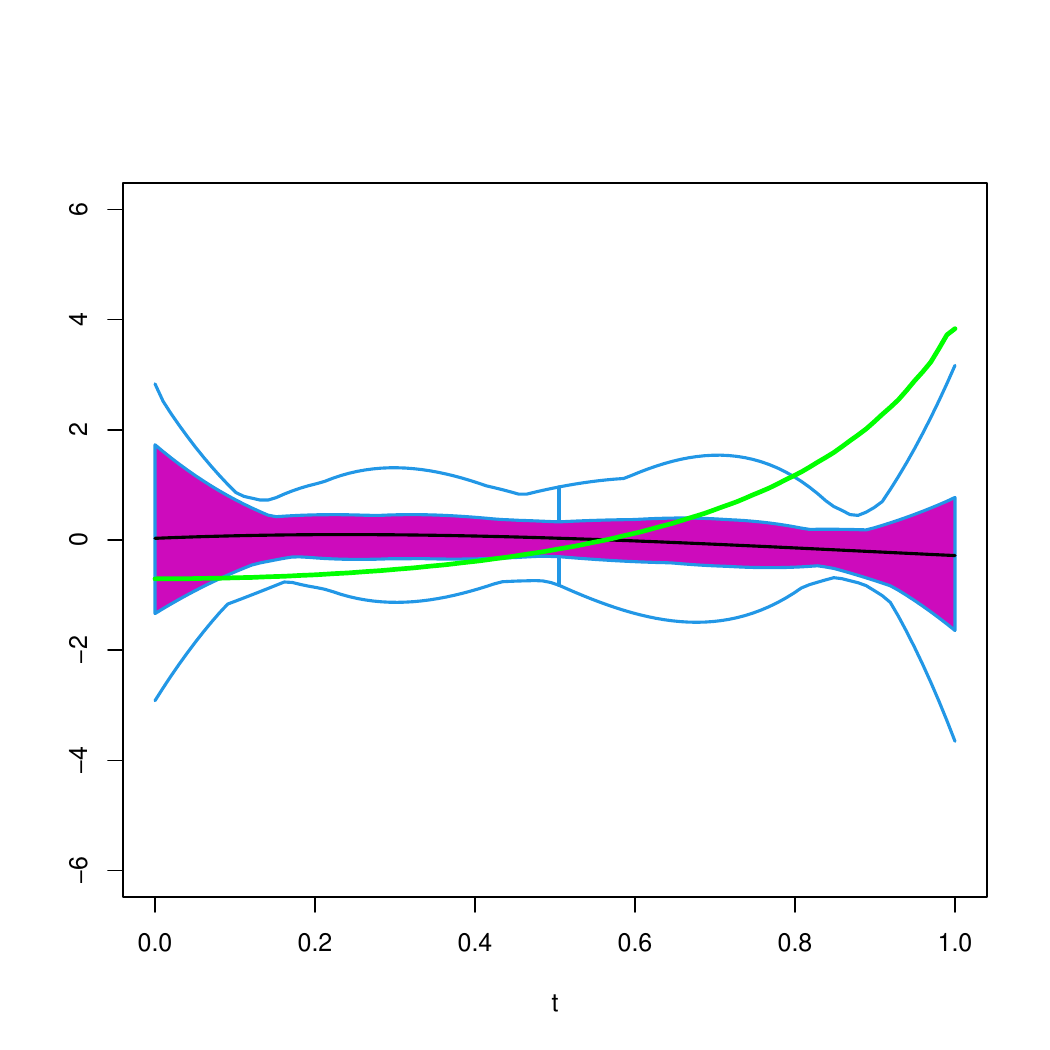}
&   \includegraphics[scale=0.40]{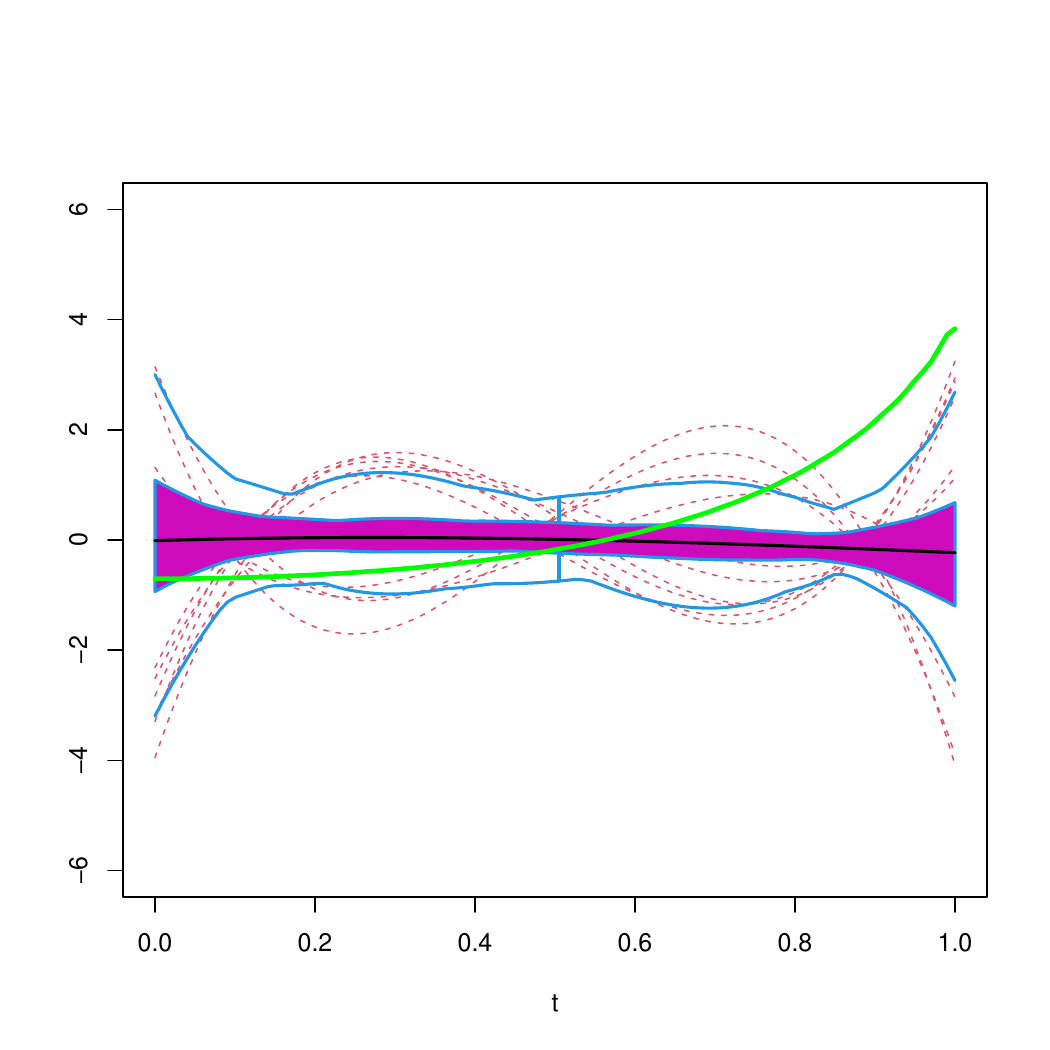}\\
   $\wbeta_{\wclHR}$ & $\wbeta_{\wemeHR}$ \\[-3ex] 
    \includegraphics[scale=0.40]{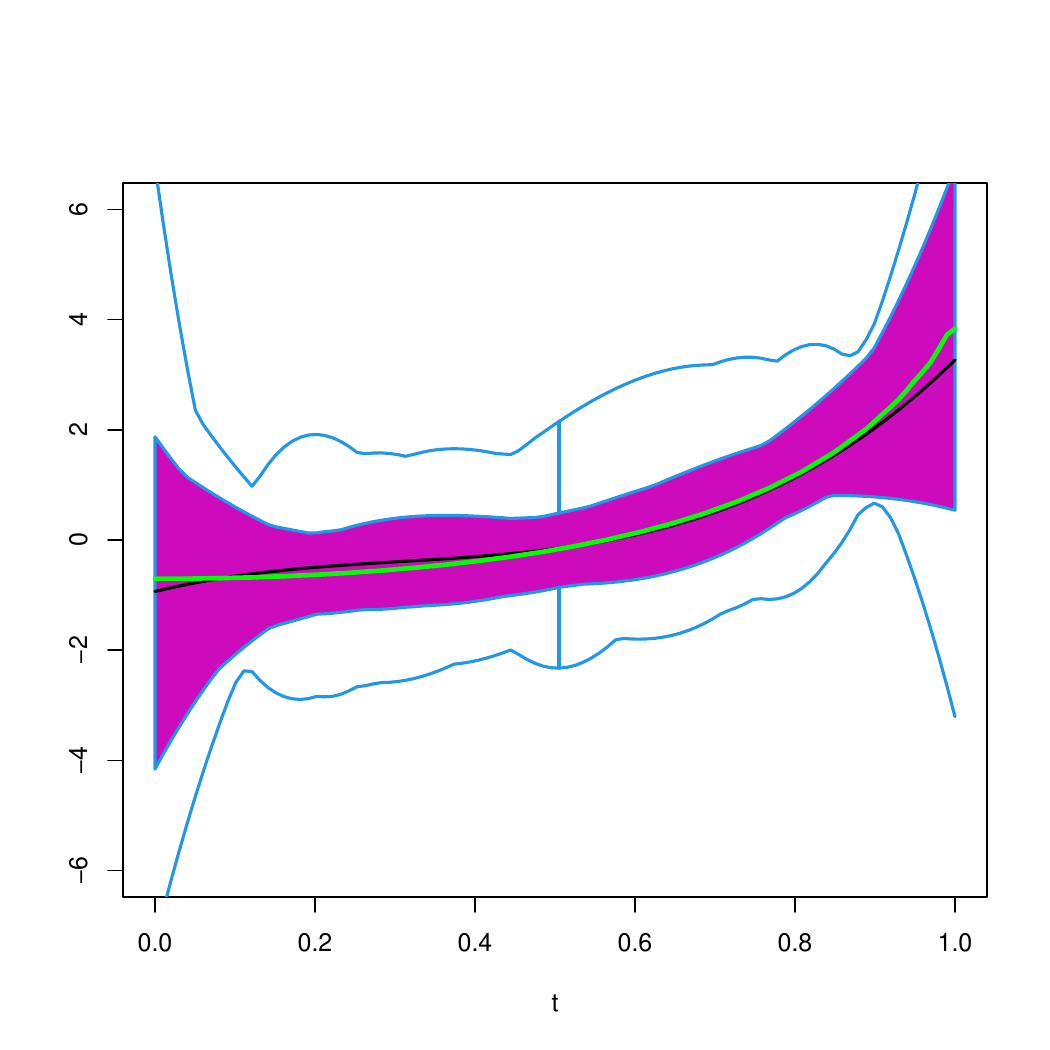}
  &  \includegraphics[scale=0.40]{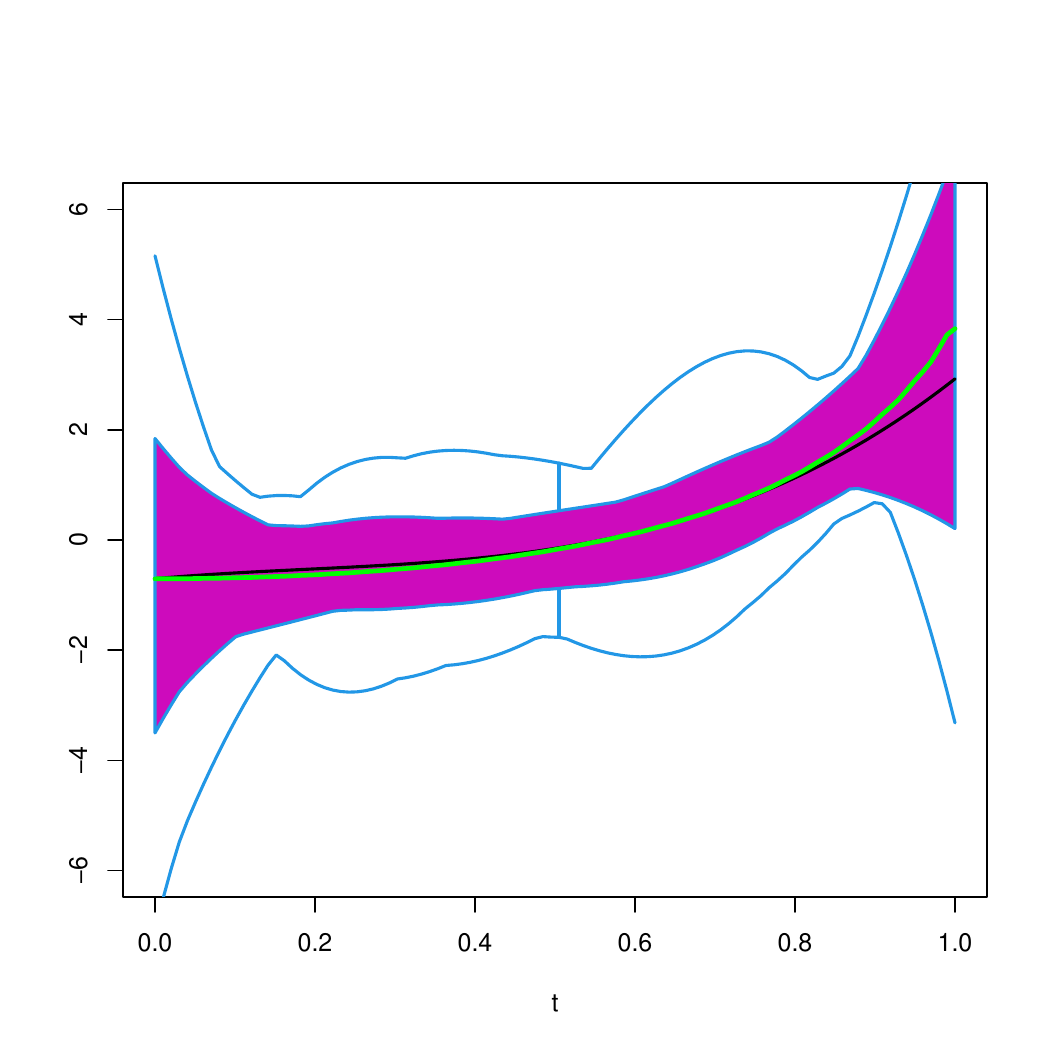}
   \\
   $\wbeta_{\wclBOX}$ & $\wbeta_{\wemeBOX}$ \\[-3ex]
    \includegraphics[scale=0.40]{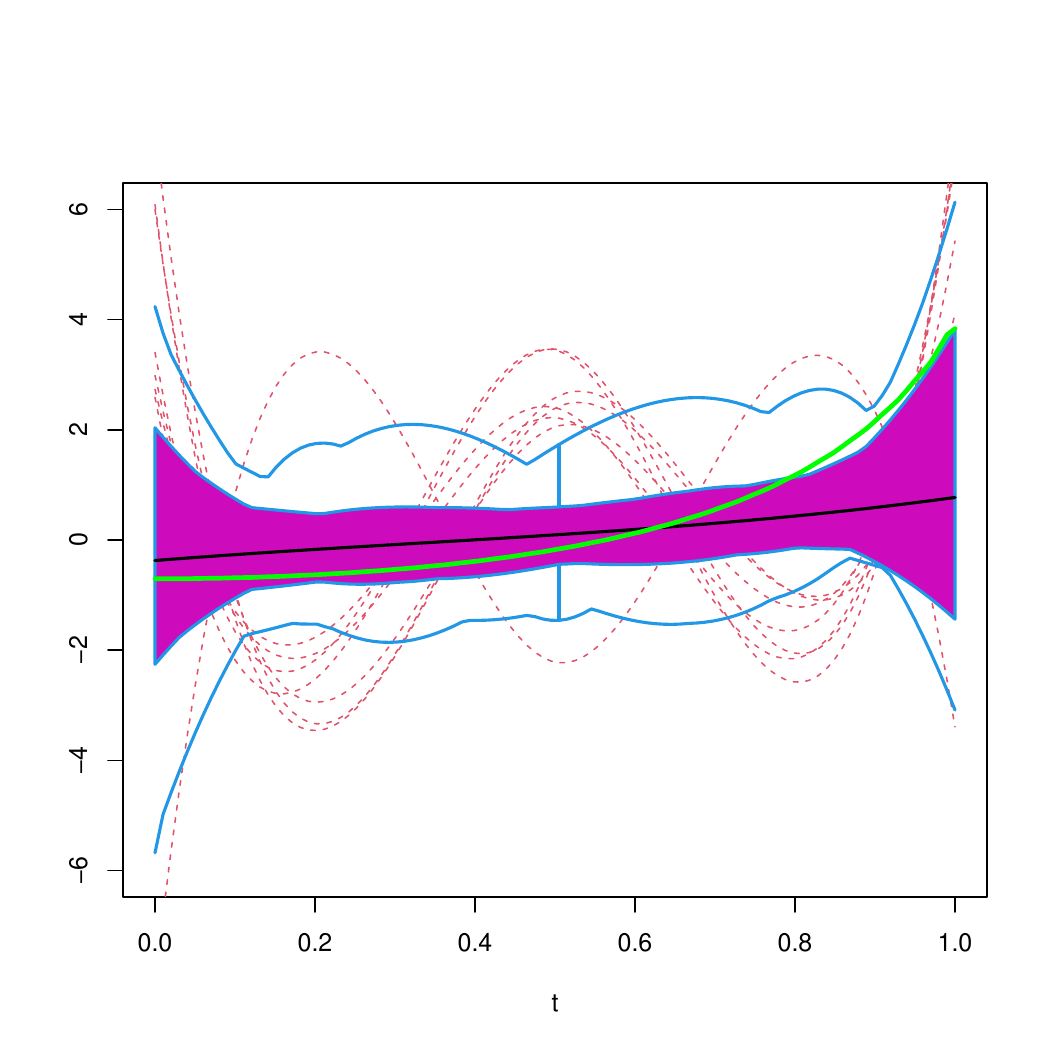}
  &  \includegraphics[scale=0.40]{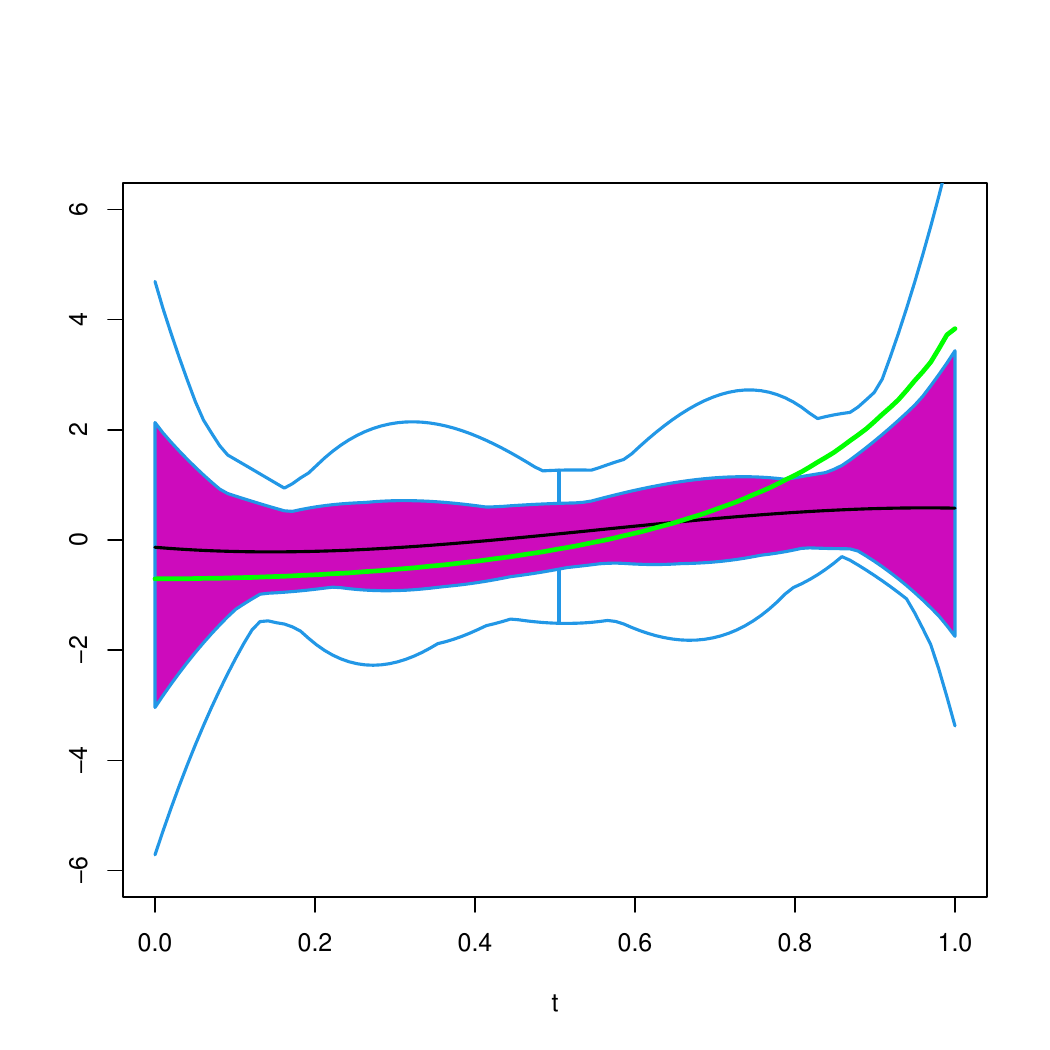}

\end{tabular}
\caption{\small \label{fig:wbeta-C110-poda0}  Functional boxplot of the estimators for $\beta_0$ under $C_{1,0.10}$  within the interval $[0,1]$. 
The true function is shown with a green dashed line, while the black solid one is the central 
curve of the $n_R = 1000$ estimates $\wbeta$. }
\end{center} 
\end{figure}  

\begin{figure}[tp]
 \begin{center}
 \footnotesize
 \renewcommand{\arraystretch}{0.2}
 \newcolumntype{M}{>{\centering\arraybackslash}m{\dimexpr.01\linewidth-1\tabcolsep}}
   \newcolumntype{G}{>{\centering\arraybackslash}m{\dimexpr.45\linewidth-1\tabcolsep}}
\begin{tabular}{GG}
  $\wbeta_{\clas}$ & $\wbeta_{\eme}$   \\[-3ex]      
 \includegraphics[scale=0.40]{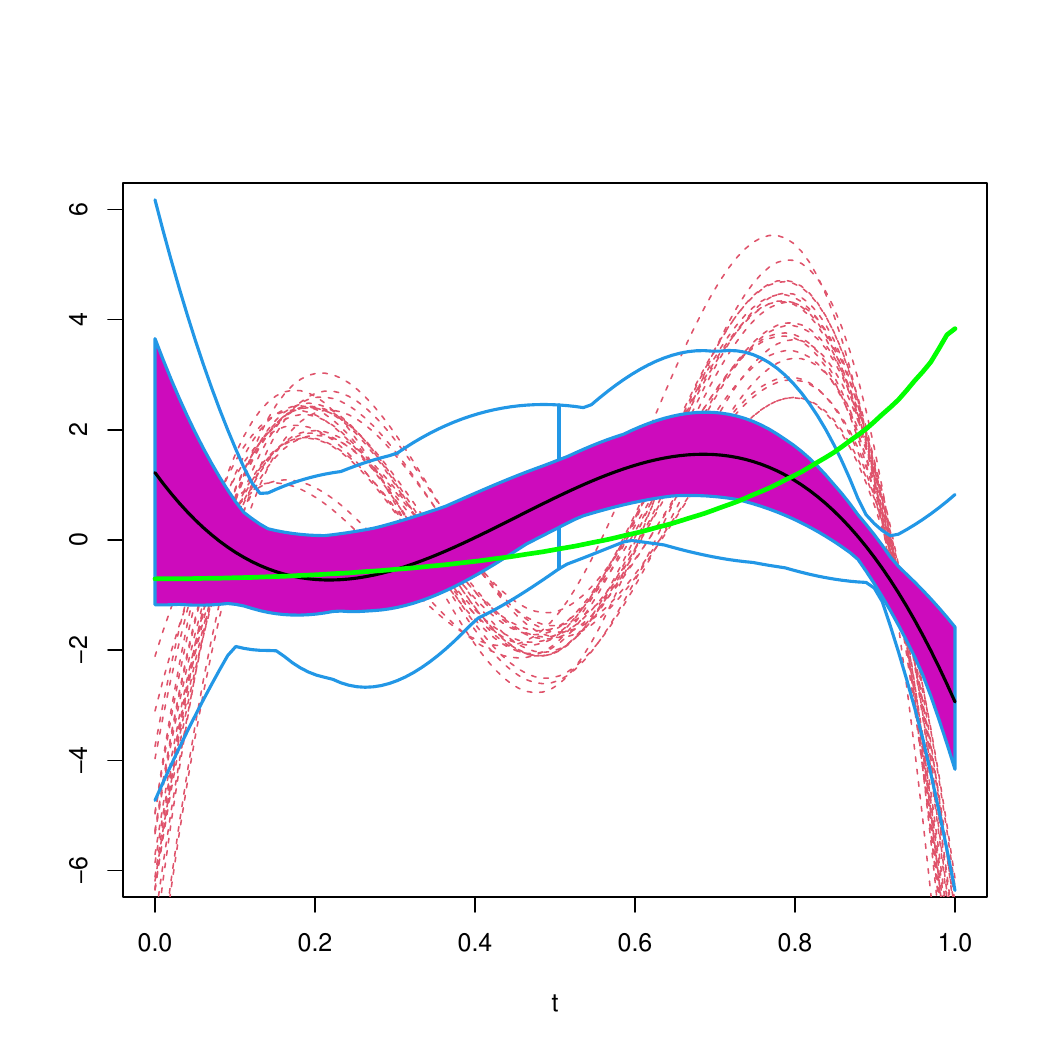}
 &  \includegraphics[scale=0.40]{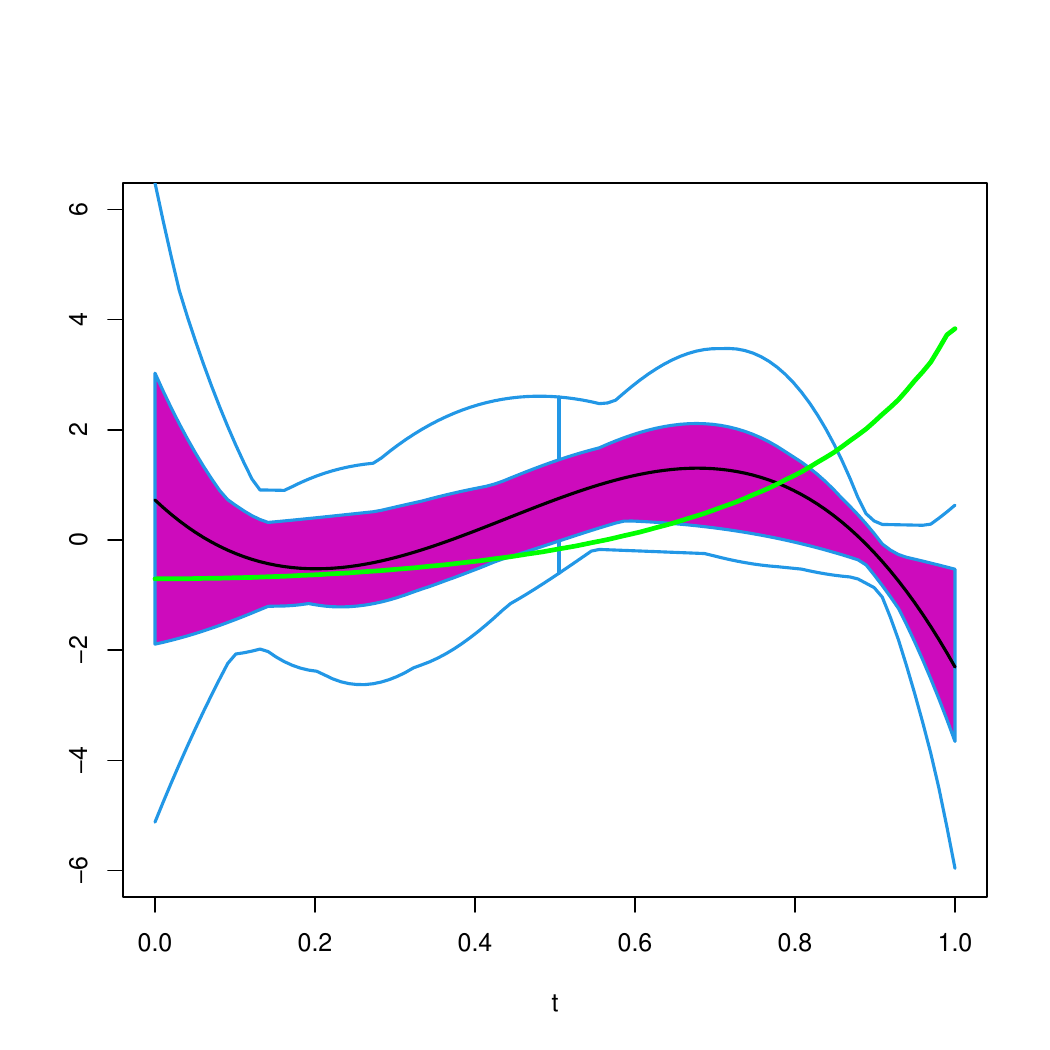}\\
   $\wbeta_{\wclHR}$ & $\wbeta_{\wemeHR}$ \\[-3ex] 
    \includegraphics[scale=0.40]{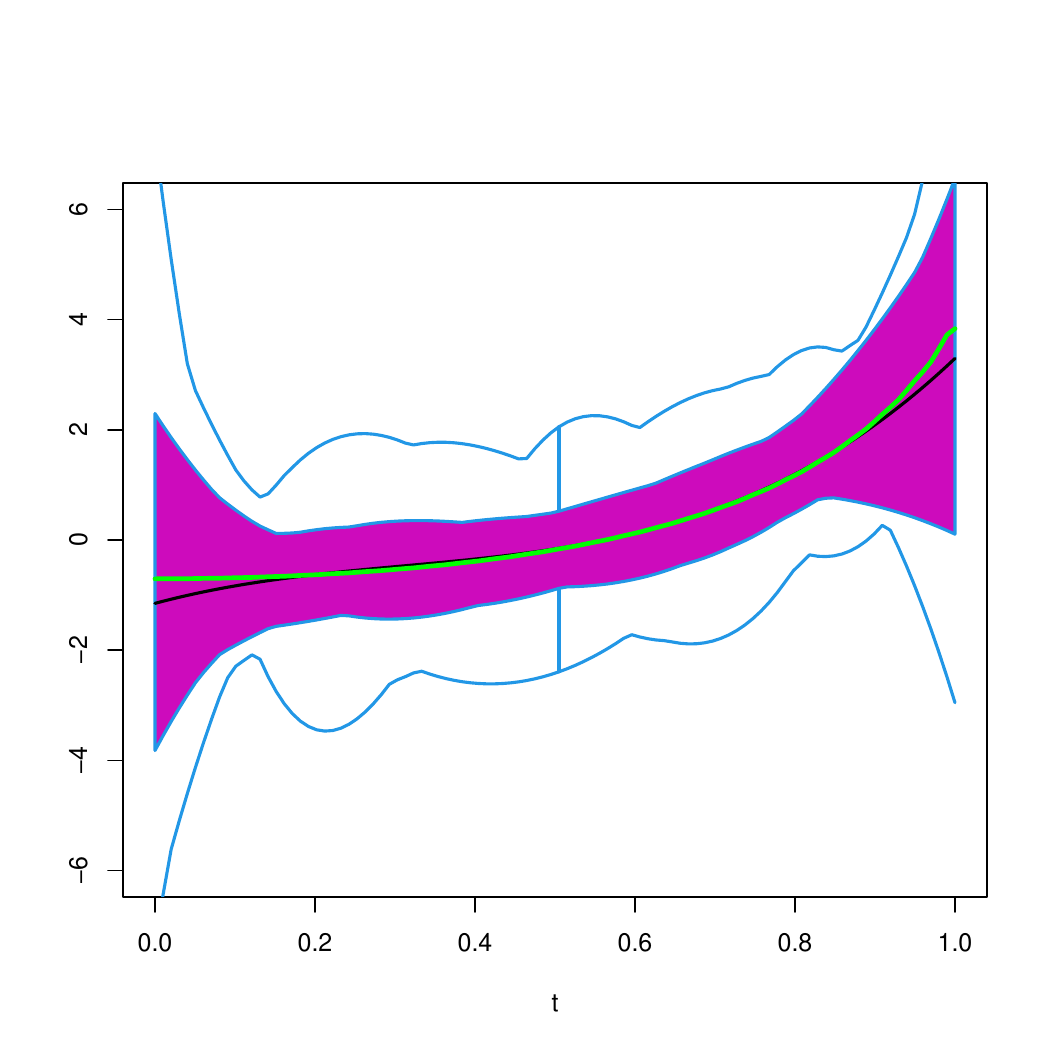}
  &  \includegraphics[scale=0.40]{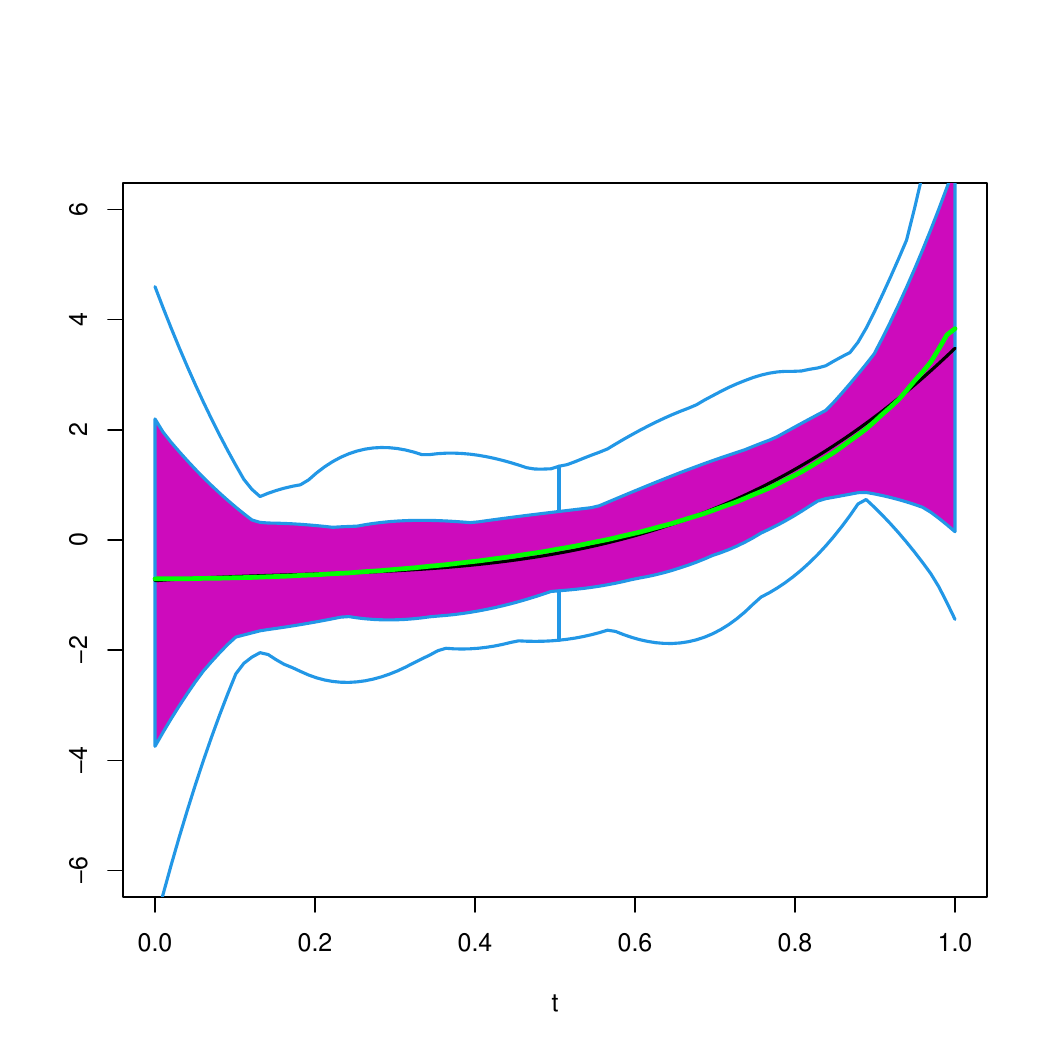}
   \\
   $\wbeta_{\wclBOX}$ & $\wbeta_{\wemeBOX}$ \\[-3ex]
 \includegraphics[scale=0.40]{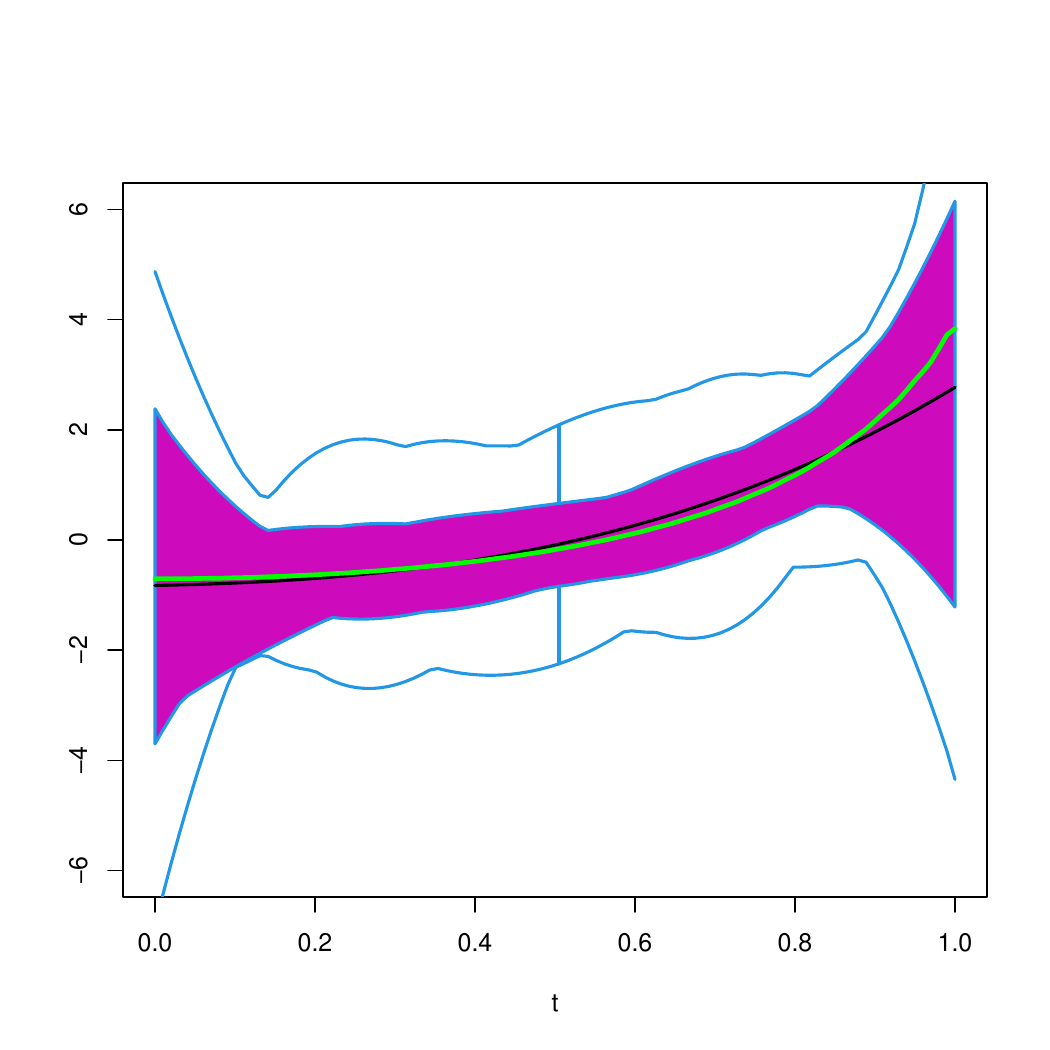}
  &  \includegraphics[scale=0.40]{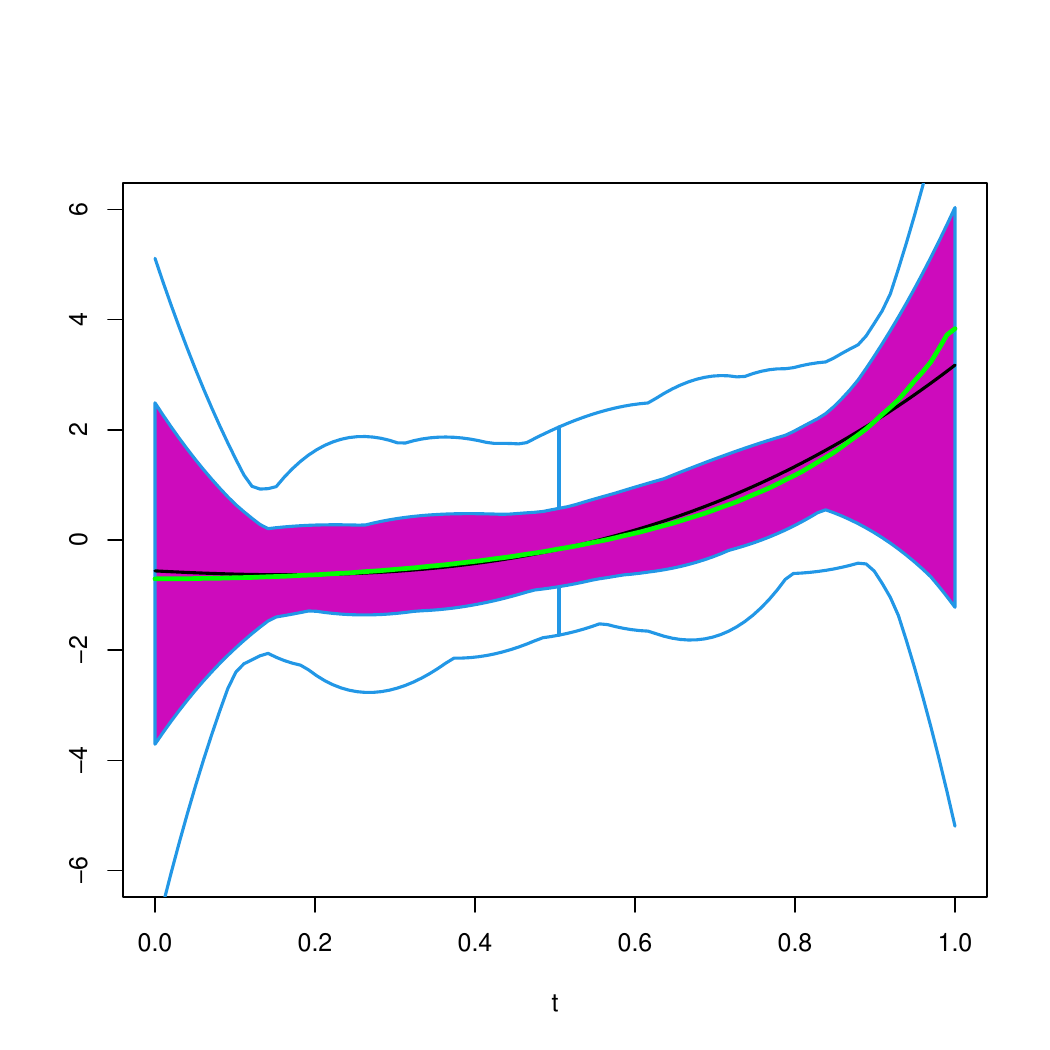}
  
\end{tabular}
\caption{\small \label{fig:wbeta-C25-poda0}  Functional boxplot of the estimators for $\beta_0$ under $C_{2,0.05}$  within the interval $[0,1]$. 
The true function is shown with a green dashed line, while the black solid one is the central 
curve of the $n_R = 1000$ estimates $\wbeta$.  }
\end{center} 
\end{figure}

\begin{figure}[tp]
 \begin{center}
 \footnotesize
 \renewcommand{\arraystretch}{0.2}
 \newcolumntype{M}{>{\centering\arraybackslash}m{\dimexpr.01\linewidth-1\tabcolsep}}
   \newcolumntype{G}{>{\centering\arraybackslash}m{\dimexpr.45\linewidth-1\tabcolsep}}
\begin{tabular}{GG}
  $\wbeta_{\clas}$ & $\wbeta_{\eme}$   \\[-3ex]      
\includegraphics[scale=0.40]{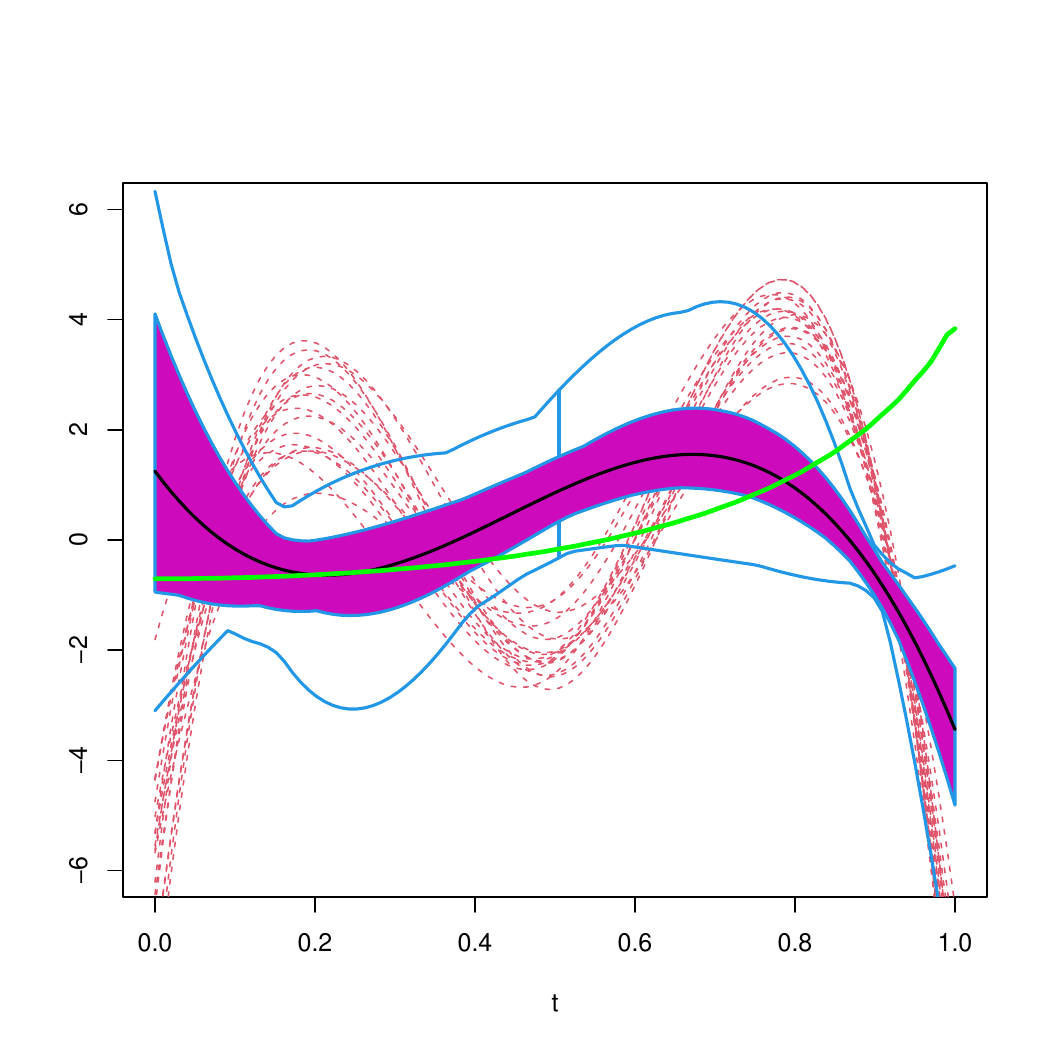}
 &  \includegraphics[scale=0.40]{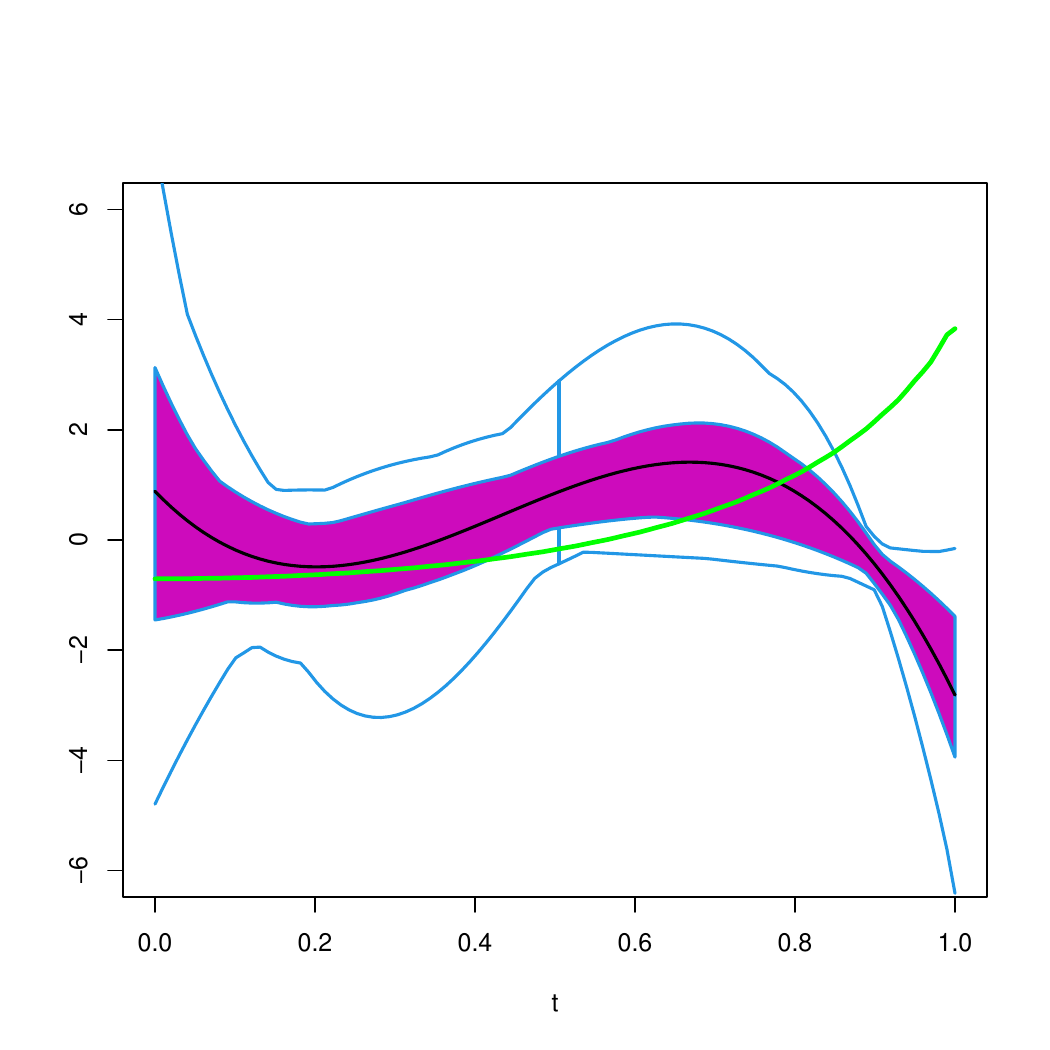}\\
   $\wbeta_{\wclHR}$ & $\wbeta_{\wemeHR}$ \\[-3ex] 
     \includegraphics[scale=0.40]{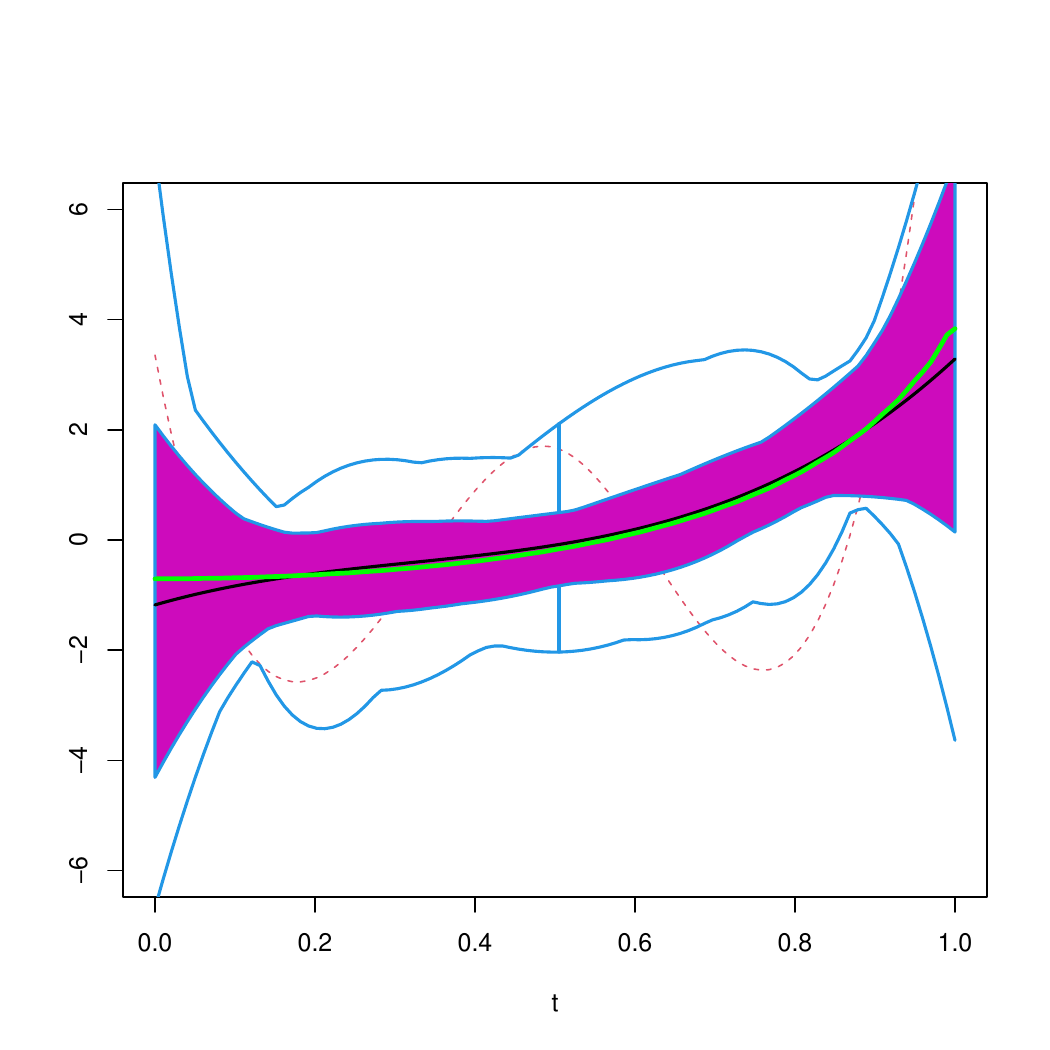}
  &  \includegraphics[scale=0.40]{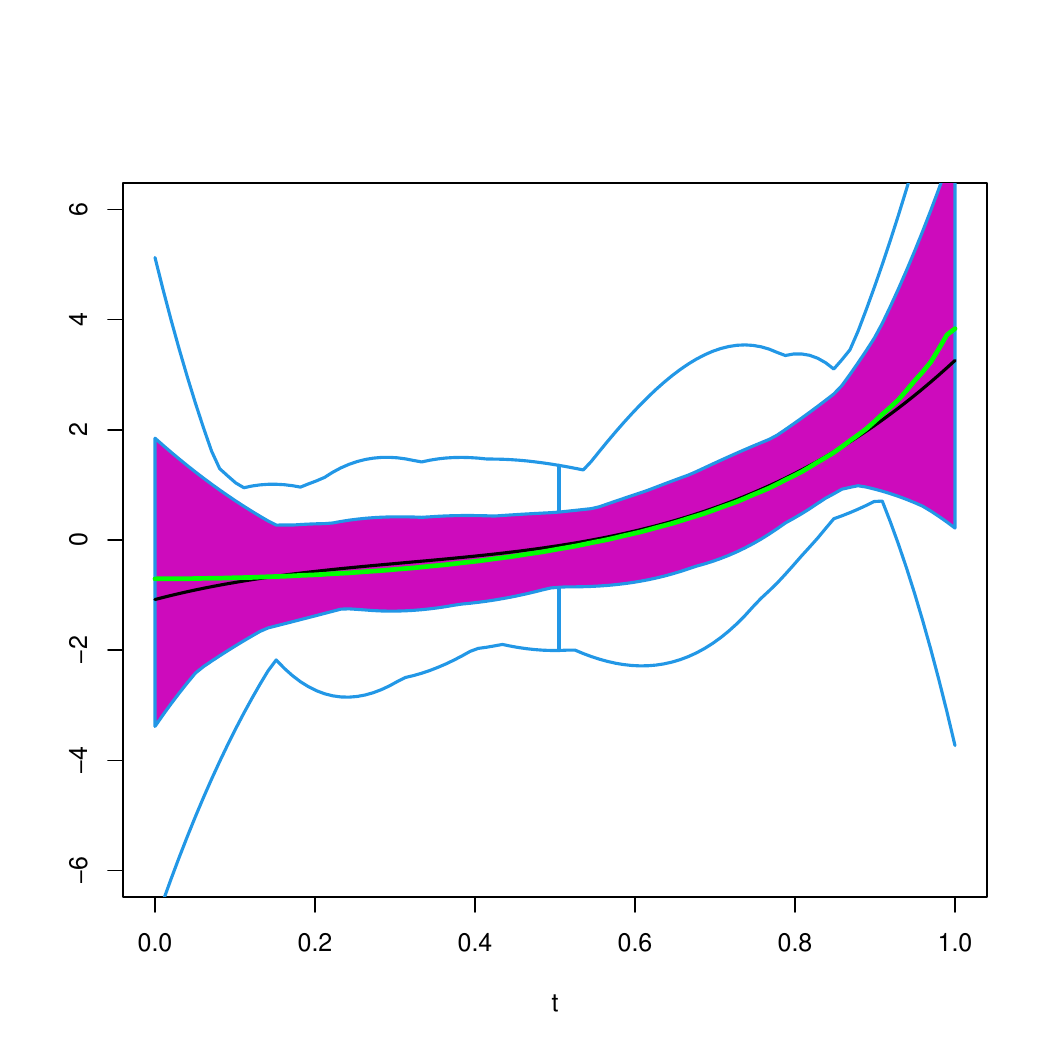}
     \\
   $\wbeta_{\wclBOX}$ & $\wbeta_{\wemeBOX}$ \\[-3ex]
    \includegraphics[scale=0.40]{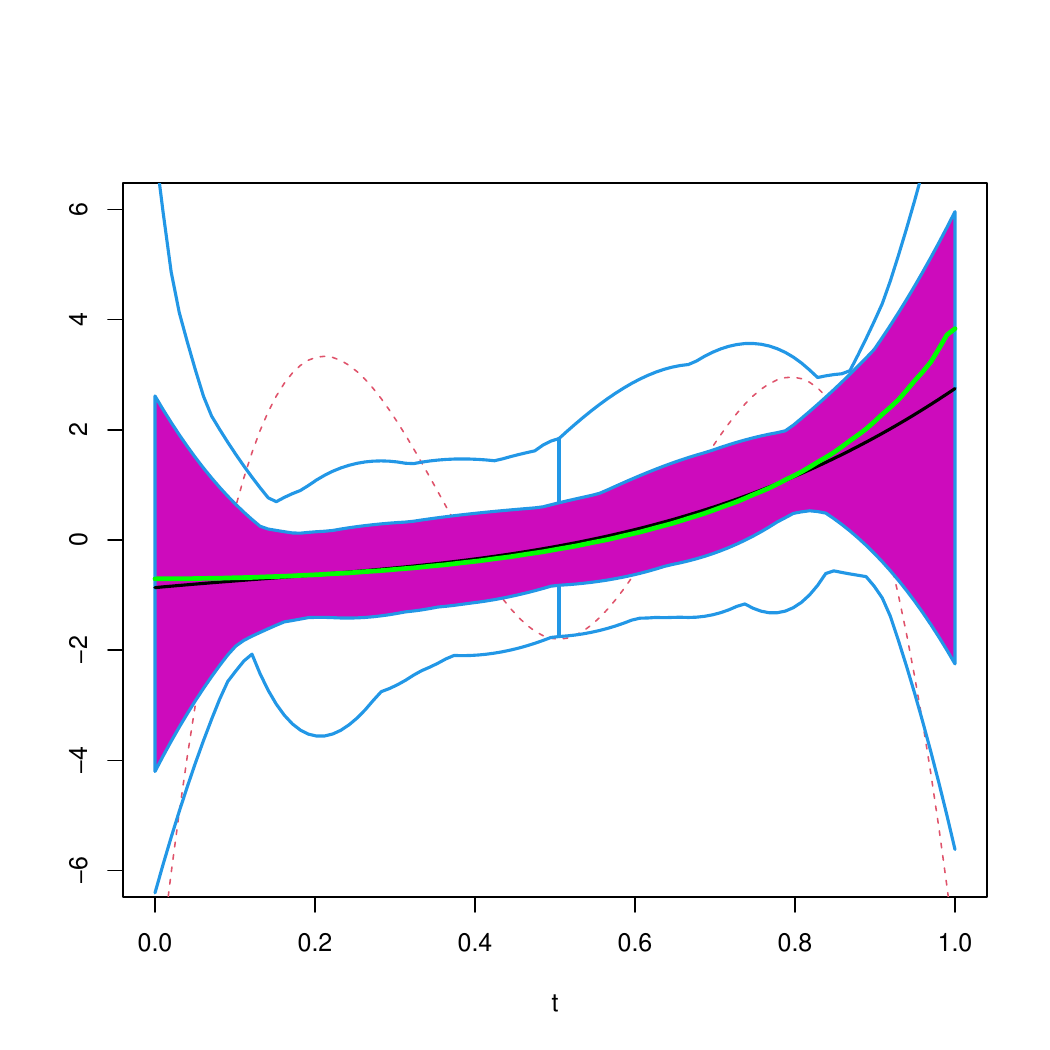}
  &  \includegraphics[scale=0.40]{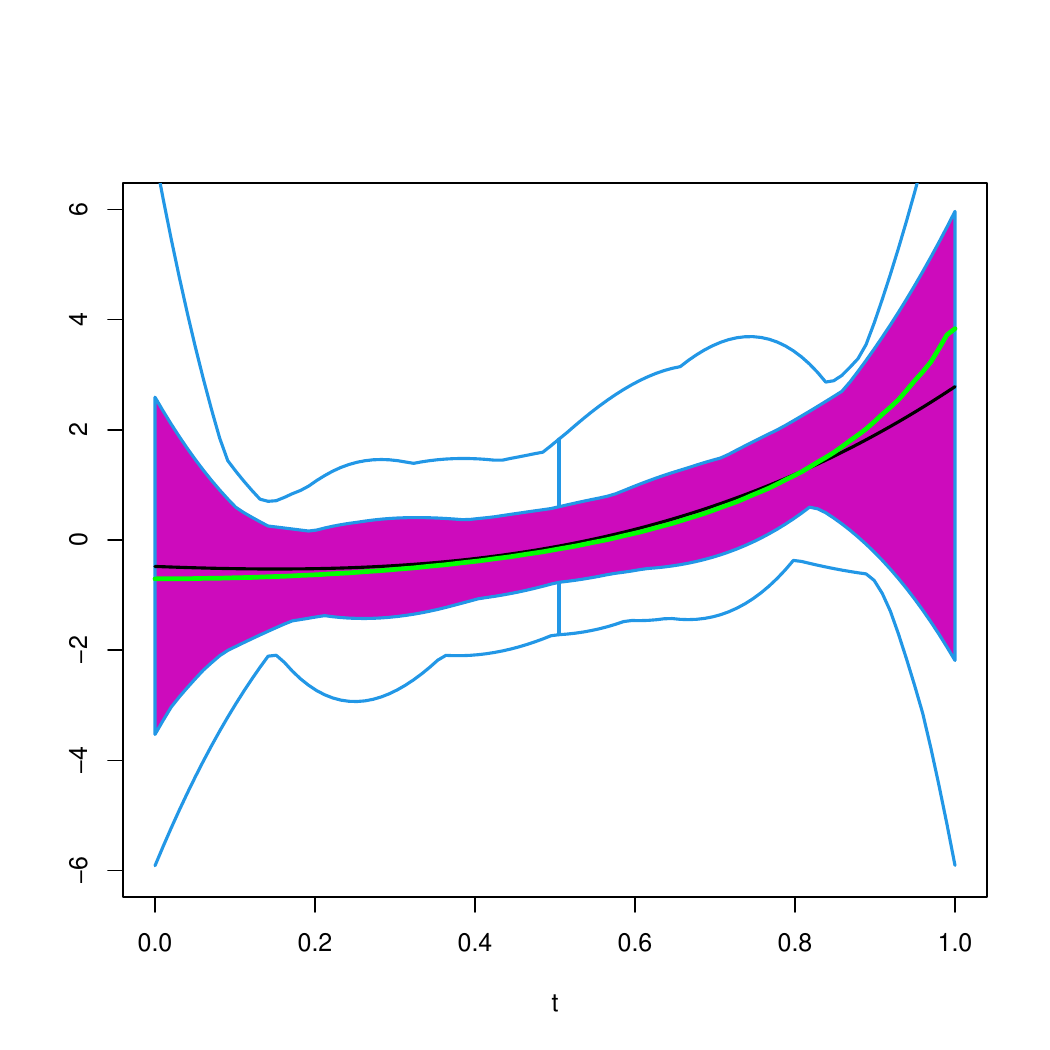}
 
\end{tabular}
\caption{\small \label{fig:wbeta-C210-poda0}  Functional boxplot of the estimators for $\beta_0$ under $C_{2,0.10}$  within the interval $[0,1]$. 
The true function is shown with a green dashed line, while the black solid one is the central 
curve of the $n_R = 1000$ estimates $\wbeta$.  }
\end{center} 
\end{figure}

\begin{figure}[tp]
 \begin{center}
 \footnotesize
 \renewcommand{\arraystretch}{0.2}
 \newcolumntype{M}{>{\centering\arraybackslash}m{\dimexpr.01\linewidth-1\tabcolsep}}
   \newcolumntype{G}{>{\centering\arraybackslash}m{\dimexpr.45\linewidth-1\tabcolsep}}
\begin{tabular}{GG}
  $\wbeta_{\clas}$ & $\wbeta_{\eme}$   \\[-3ex]    
 
\includegraphics[scale=0.40]{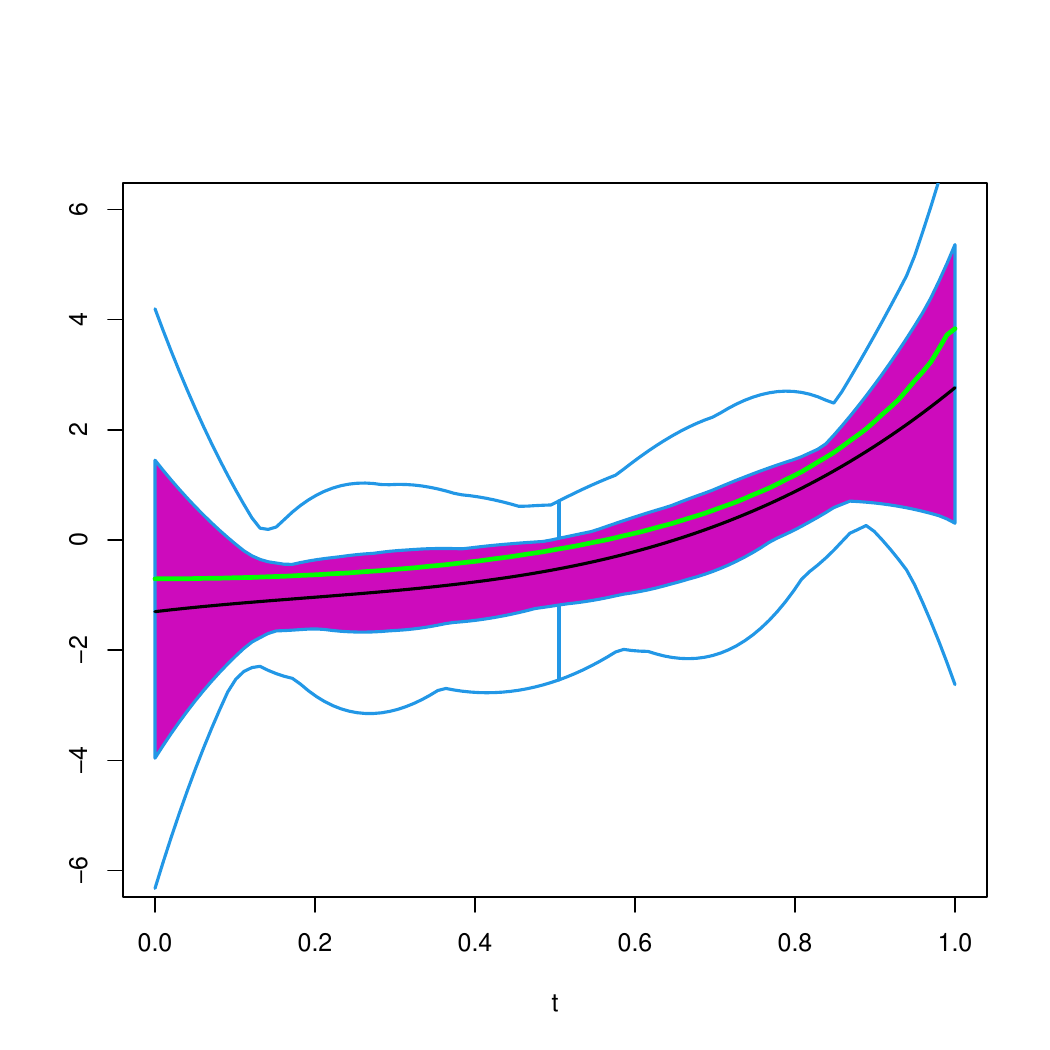}
 &  \includegraphics[scale=0.40]{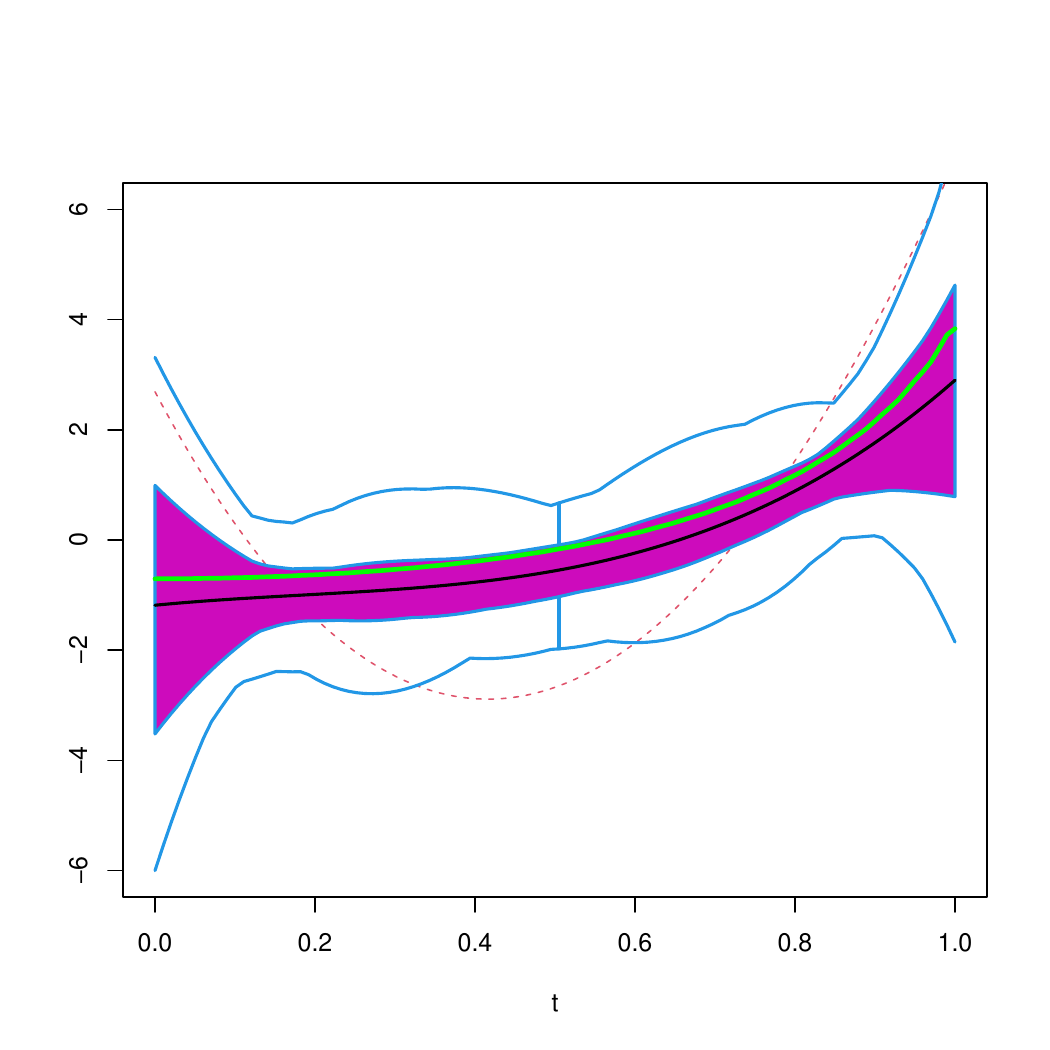}\\
   $\wbeta_{\wclHR}$ & $\wbeta_{\wemeHR}$ \\[-3ex] 
    \includegraphics[scale=0.40]{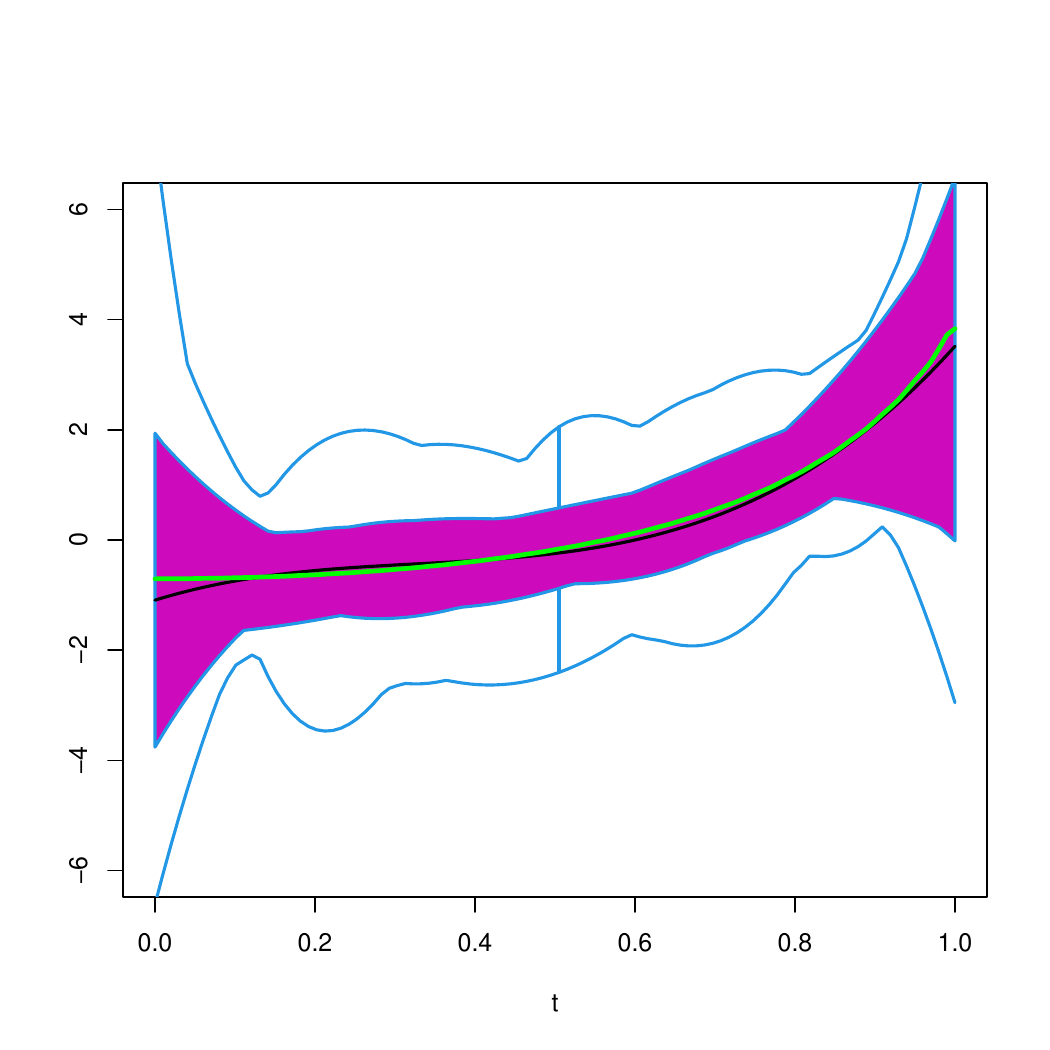}
  &  \includegraphics[scale=0.40]{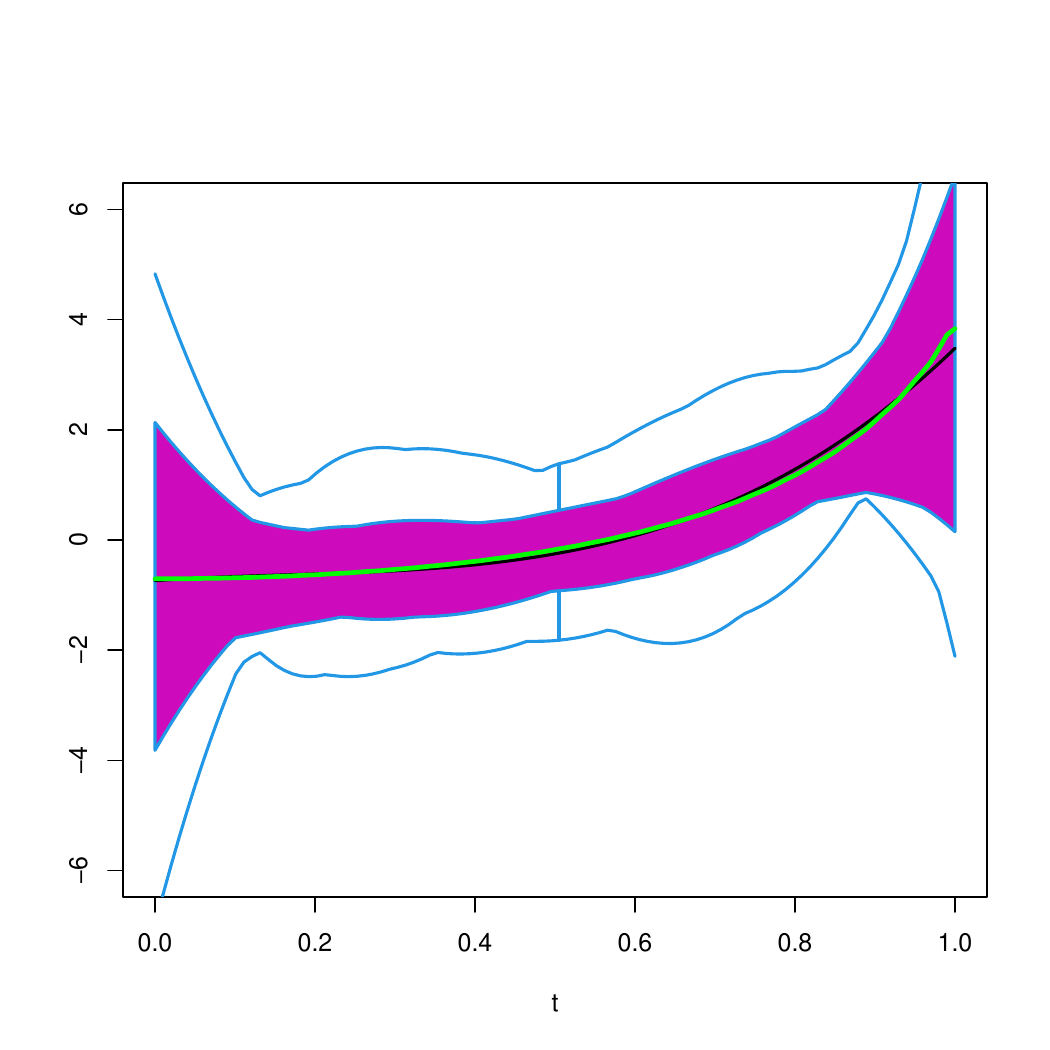}
   \\
   $\wbeta_{\wclBOX}$ & $\wbeta_{\wemeBOX}$ \\[-3ex]
  \includegraphics[scale=0.40]{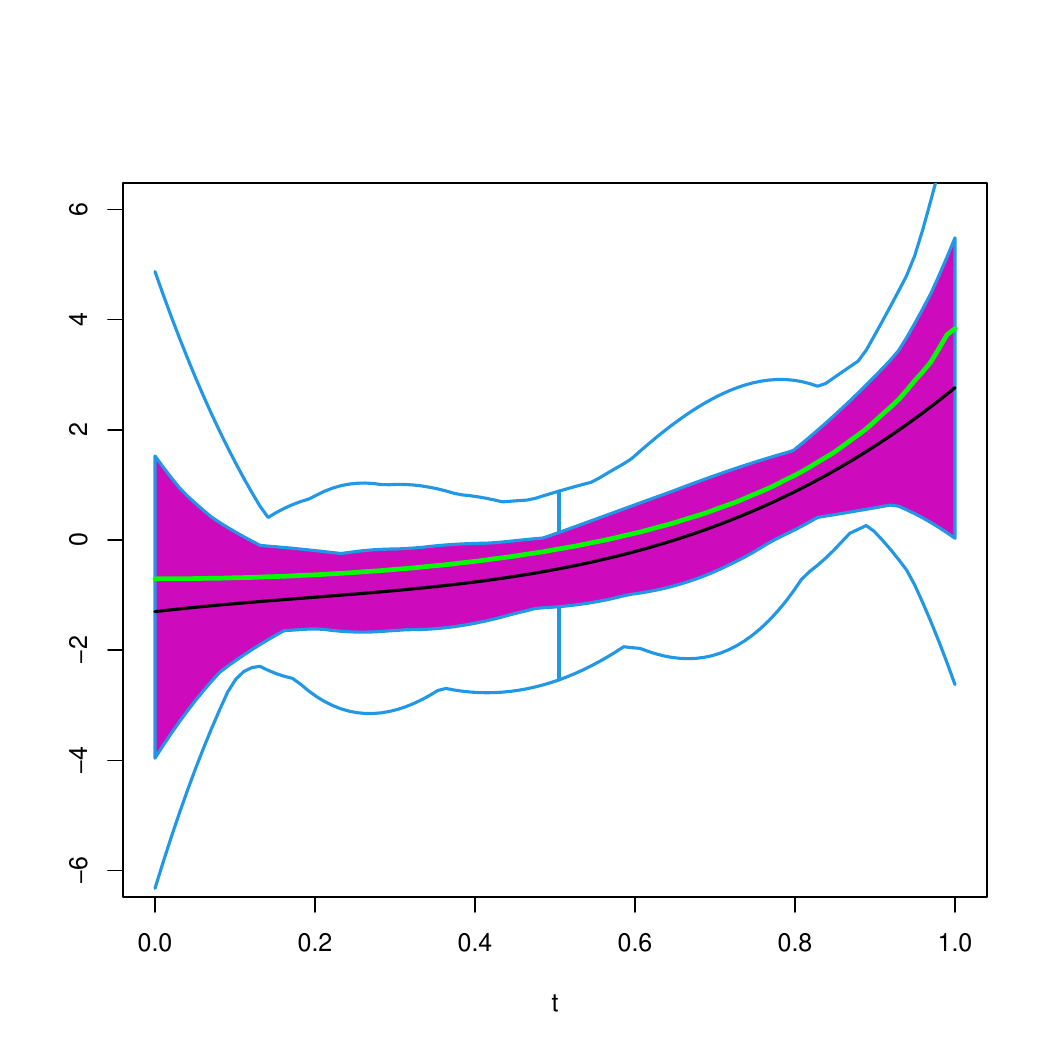}
  &  \includegraphics[scale=0.40]{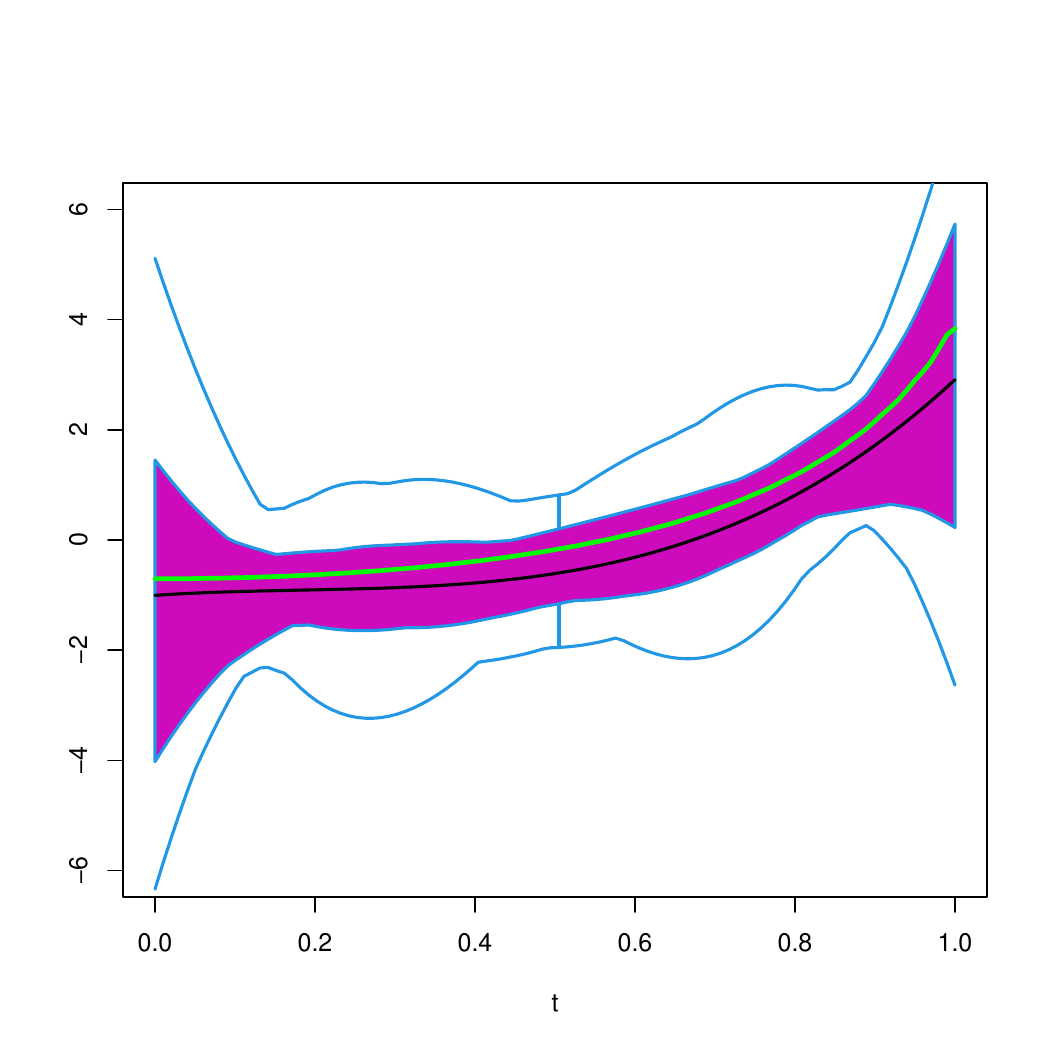}

\end{tabular}
\caption{\small \label{fig:wbeta-C35-poda0}  Functional boxplot of the estimators for $\beta_0$ under $C_{3,0.05}$  within the interval $[0,1]$. 
The true function is shown with a green dashed line, while the black solid one is the central 
curve of the $n_R = 1000$ estimates $\wbeta$.  }
\end{center} 
\end{figure}

\begin{figure}[tp]
 \begin{center}
 \footnotesize
 \renewcommand{\arraystretch}{0.2}
 \newcolumntype{M}{>{\centering\arraybackslash}m{\dimexpr.01\linewidth-1\tabcolsep}}
   \newcolumntype{G}{>{\centering\arraybackslash}m{\dimexpr.45\linewidth-1\tabcolsep}}
\begin{tabular}{GG}
  $\wbeta_{\clas}$ & $\wbeta_{\eme}$   \\[-3ex]    
 
\includegraphics[scale=0.40]{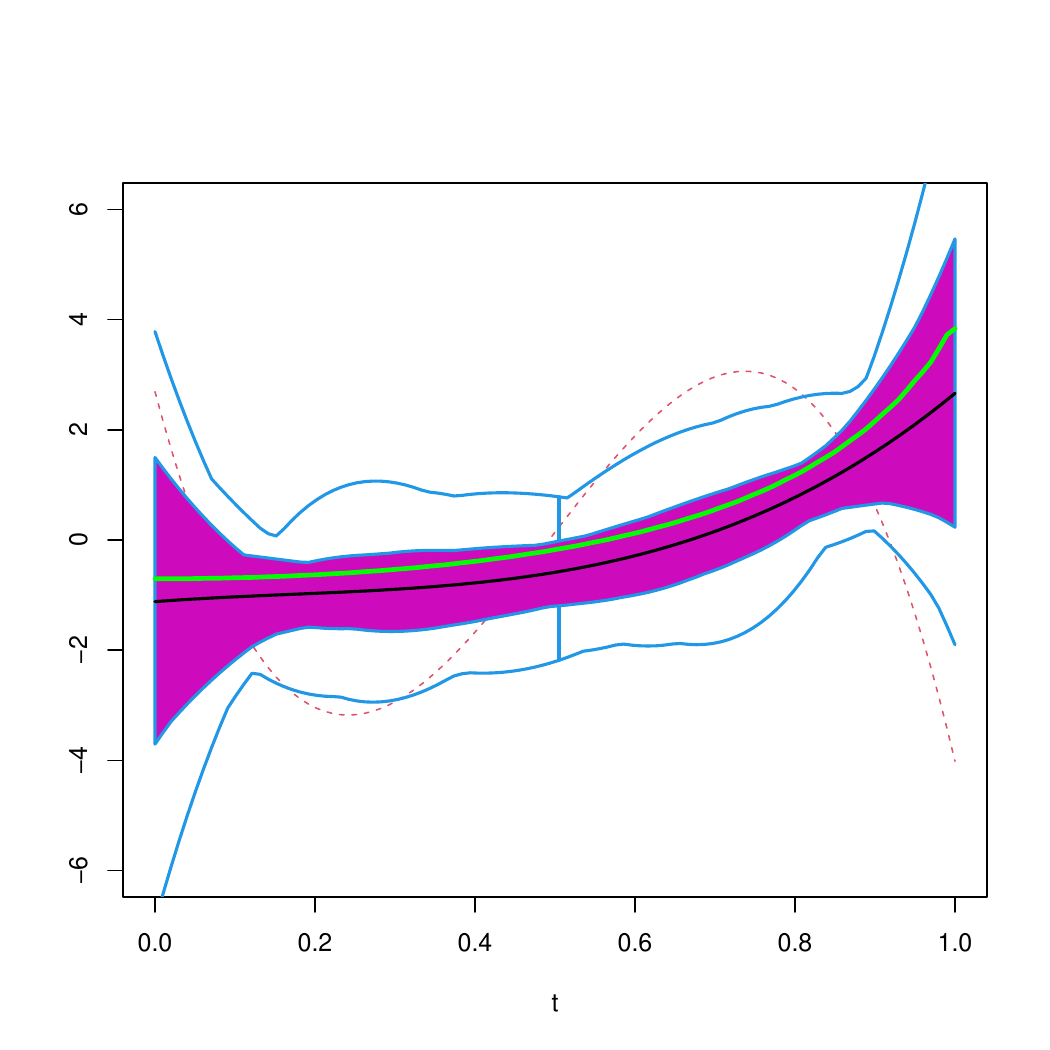}
 &  \includegraphics[scale=0.40]{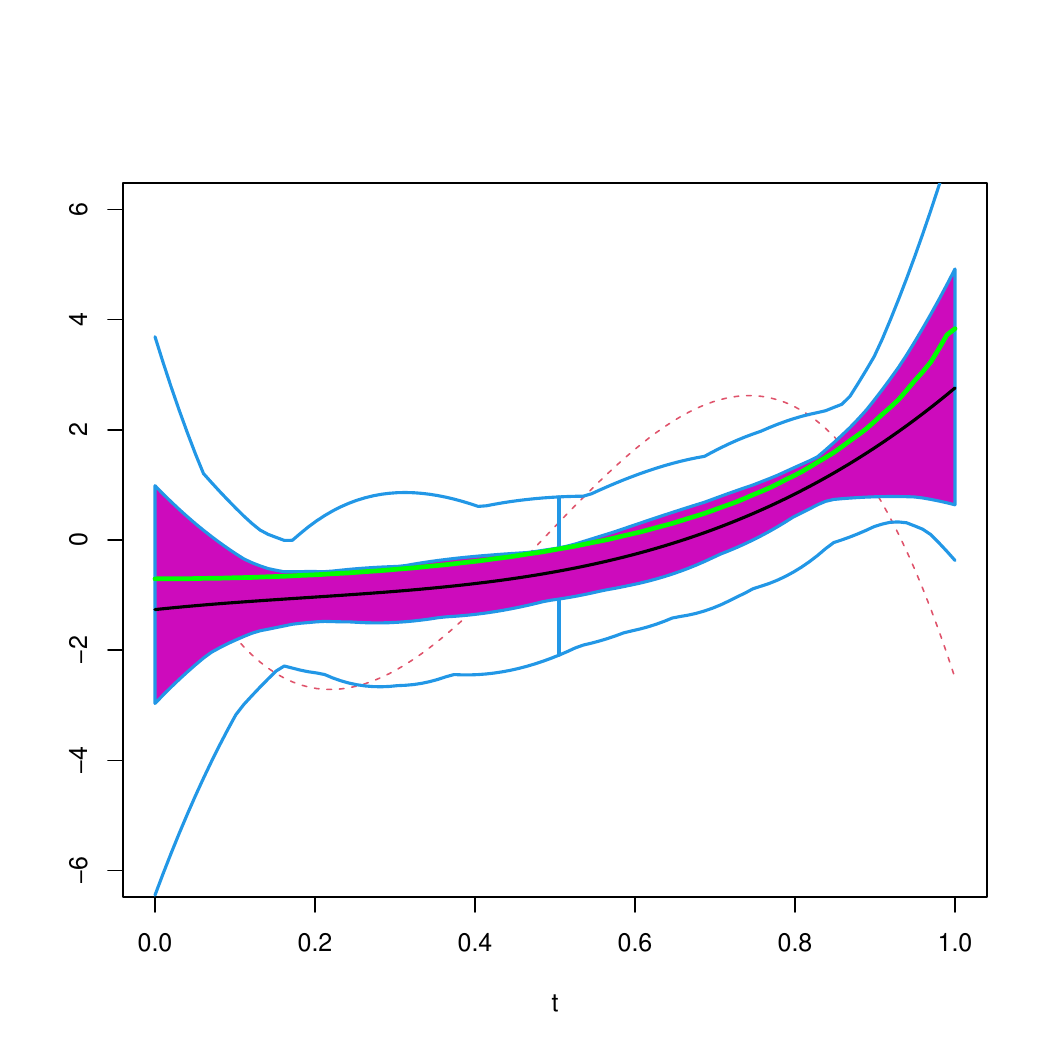}\\
   $\wbeta_{\wclHR}$ & $\wbeta_{\wemeHR}$ \\[-3ex] 
    \includegraphics[scale=0.40]{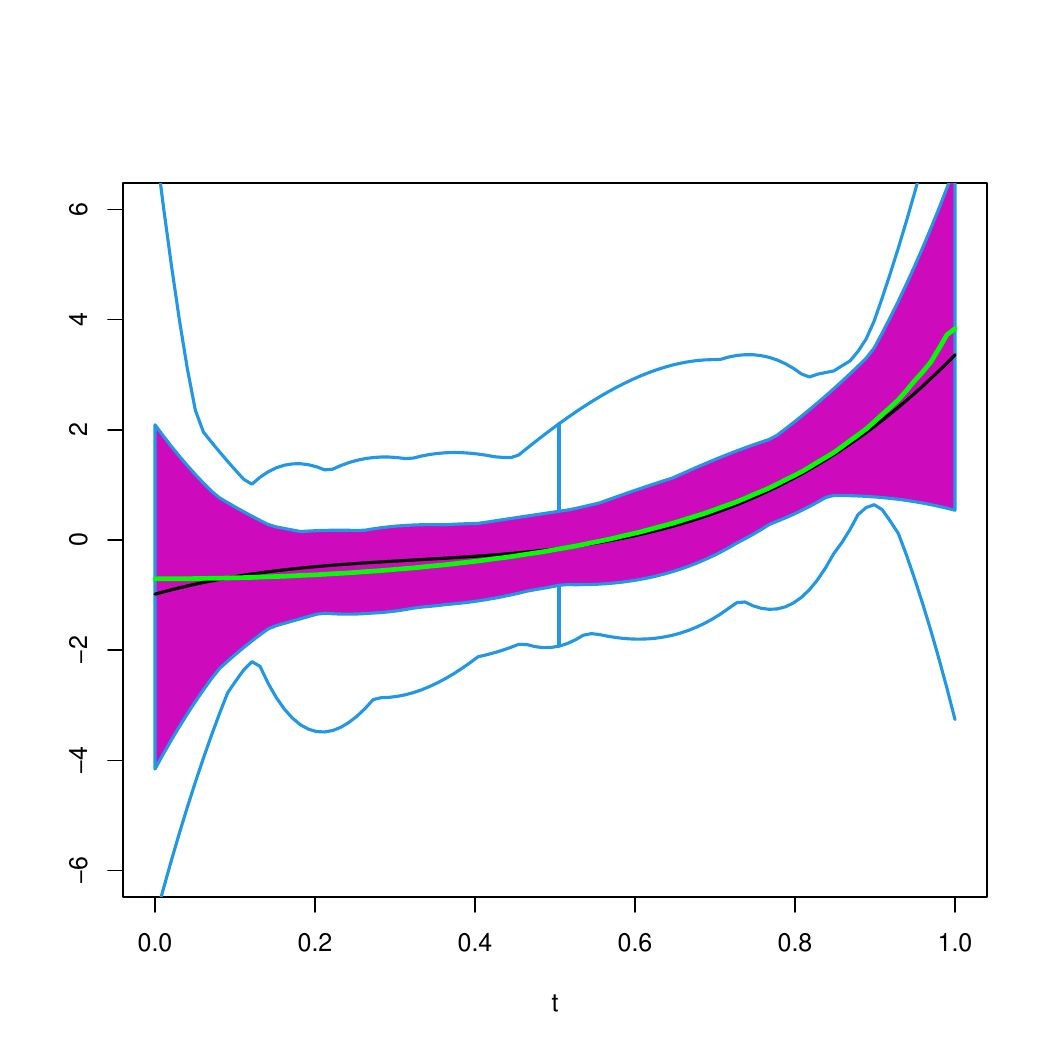}
  &  \includegraphics[scale=0.40]{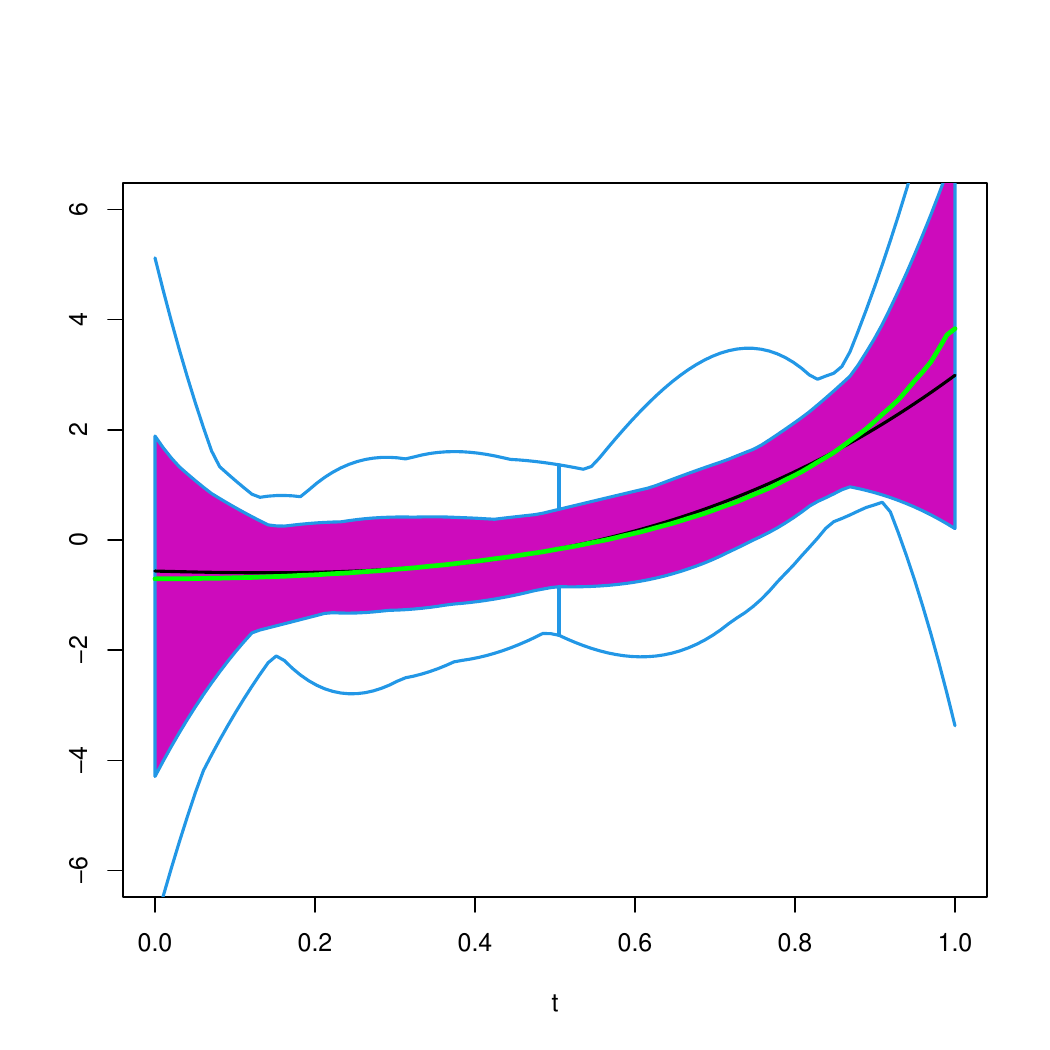}
   \\
   $\wbeta_{\wclBOX}$ & $\wbeta_{\wemeBOX}$ \\[-3ex]
  \includegraphics[scale=0.40]{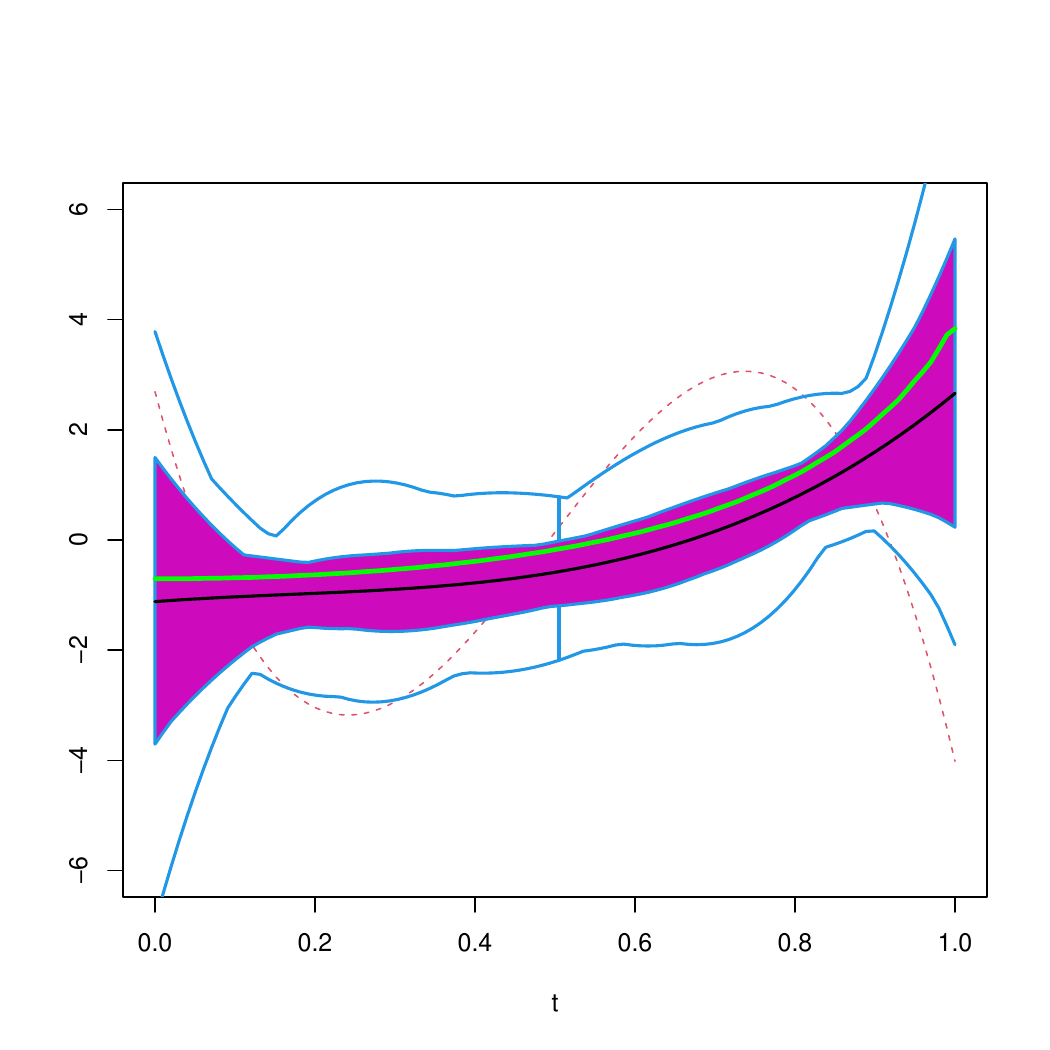}
  &  \includegraphics[scale=0.40]{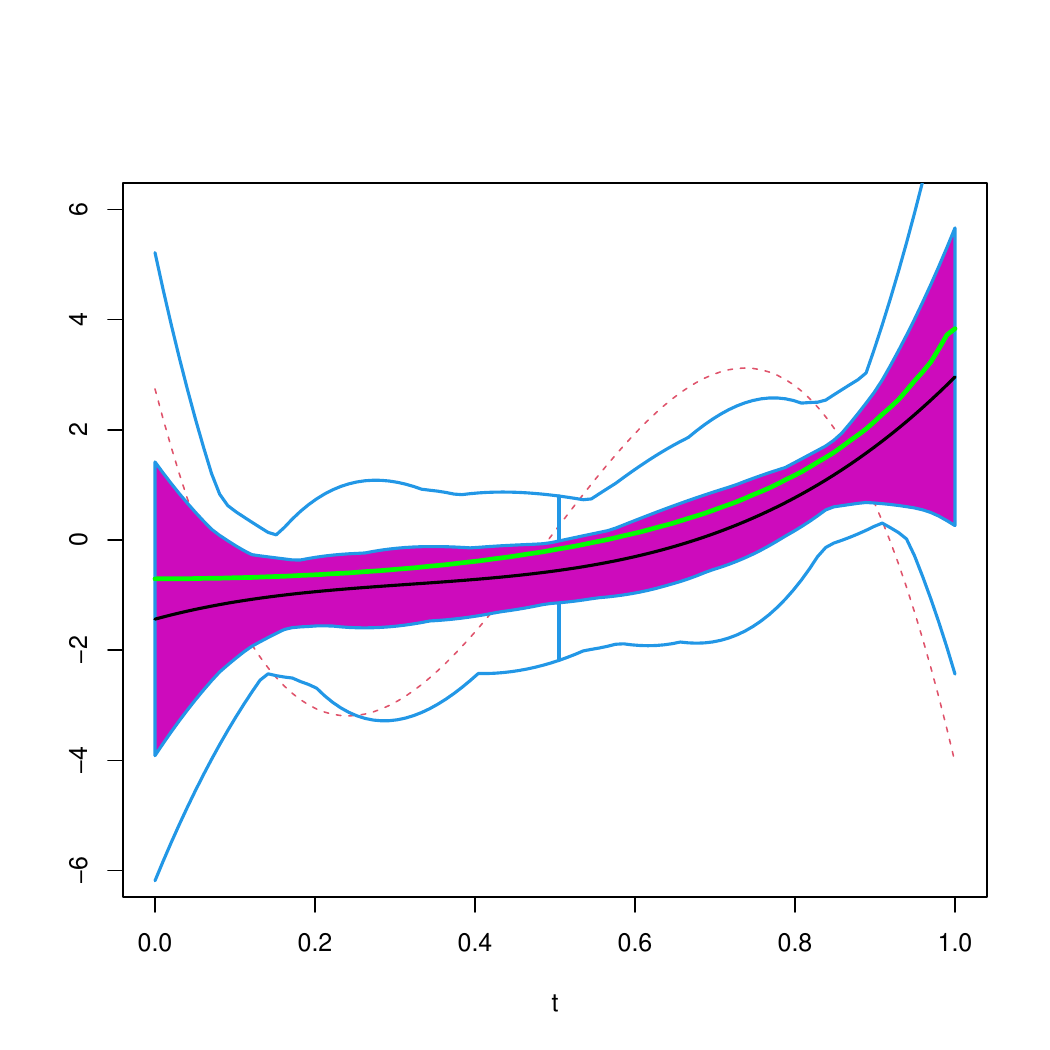}

\end{tabular}
\caption{\small \label{fig:wbeta-C310-poda0}  Functional boxplot of the estimators for $\beta_0$ under $C_{3,0.10}$  within the interval $[0,1]$. 
The true function is shown with a green dashed line, while the black solid one is the central 
curve of the $n_R = 1000$ estimates $\wbeta$.  }
\end{center} 
\end{figure}

\begin{figure}[tp]
 \begin{center}
 \footnotesize
 \renewcommand{\arraystretch}{0.2}
 \newcolumntype{M}{>{\centering\arraybackslash}m{\dimexpr.01\linewidth-1\tabcolsep}}
   \newcolumntype{G}{>{\centering\arraybackslash}m{\dimexpr.45\linewidth-1\tabcolsep}}
\begin{tabular}{GG}
  $\wbeta_{\clas}$ & $\wbeta_{\eme}$   \\[-3ex]    
 
\includegraphics[scale=0.40]{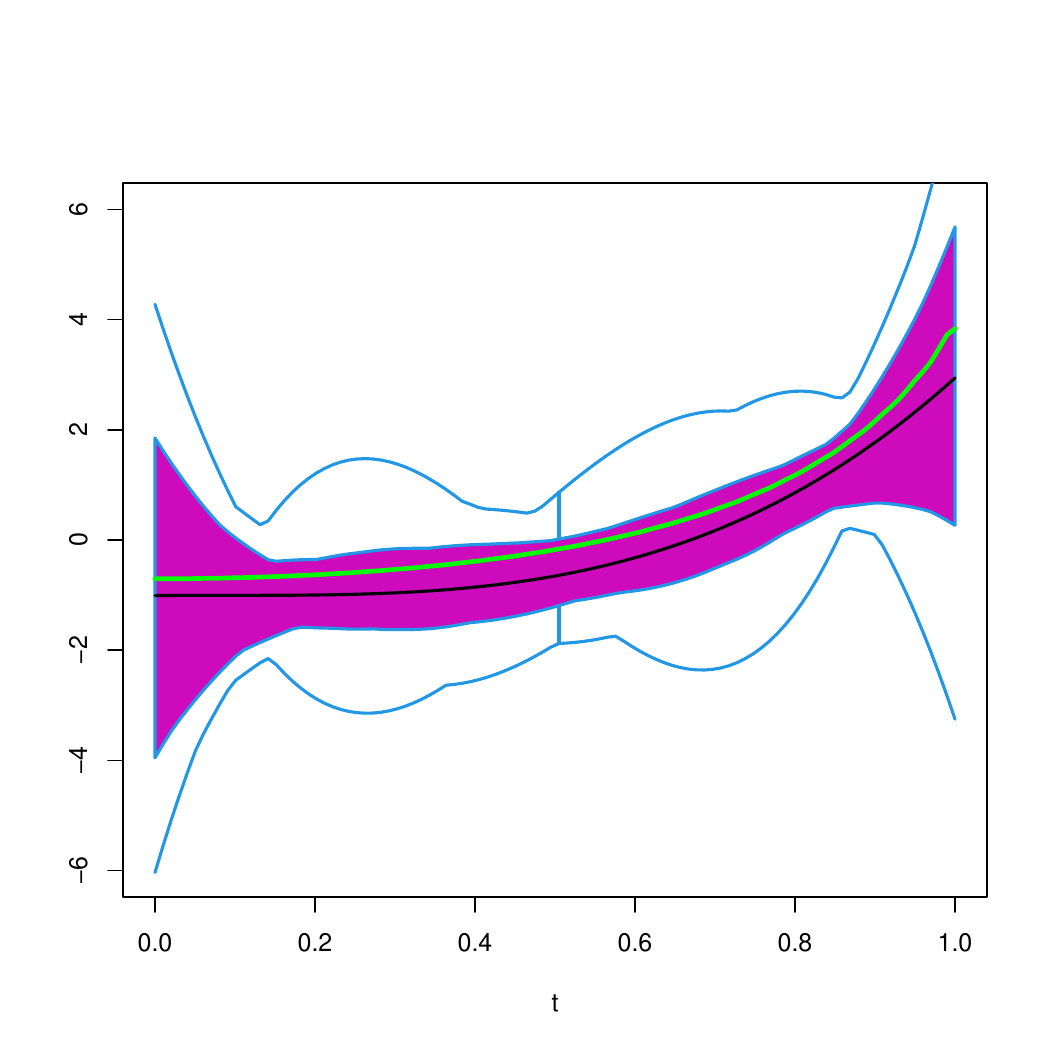}
 &  \includegraphics[scale=0.40]{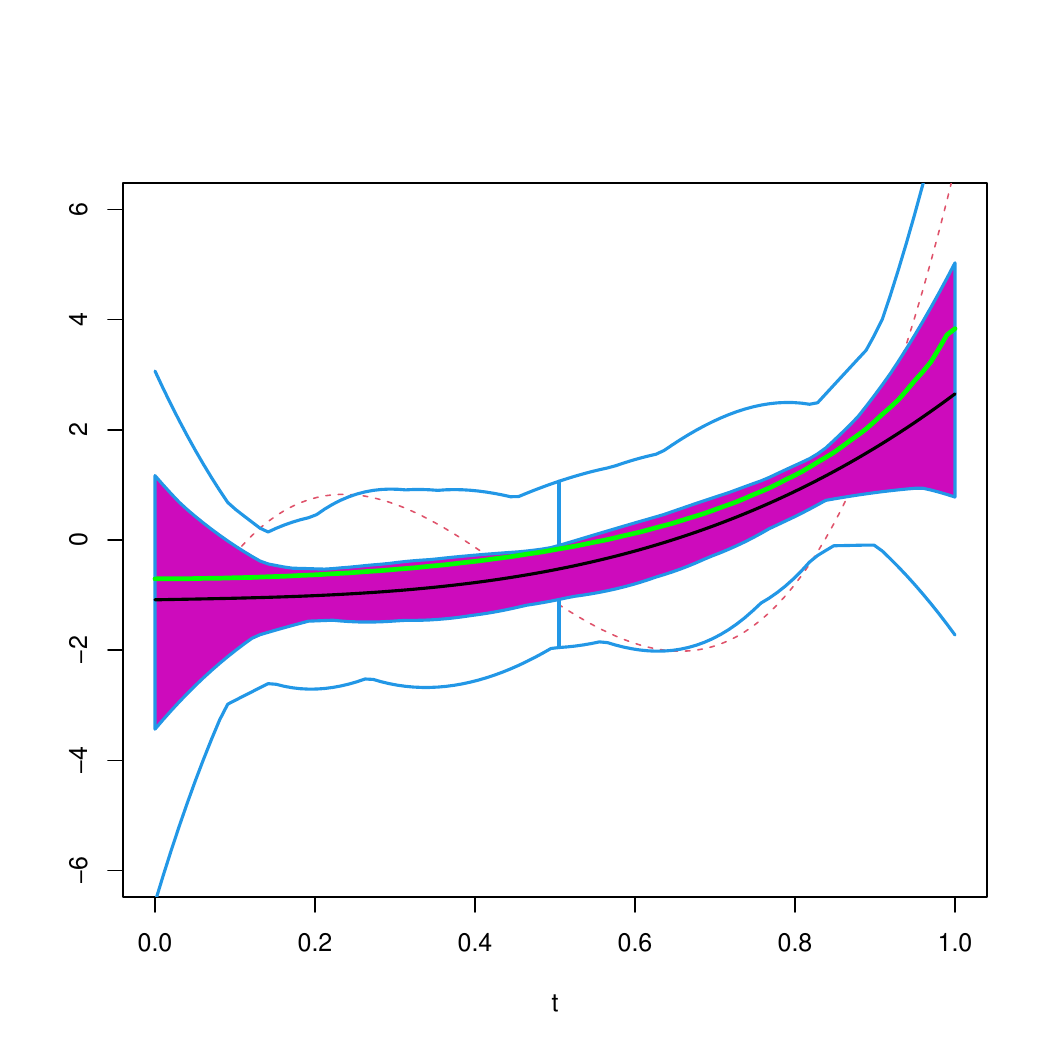}\\
   $\wbeta_{\wclHR}$ & $\wbeta_{\wemeHR}$ \\[-3ex] 
    \includegraphics[scale=0.40]{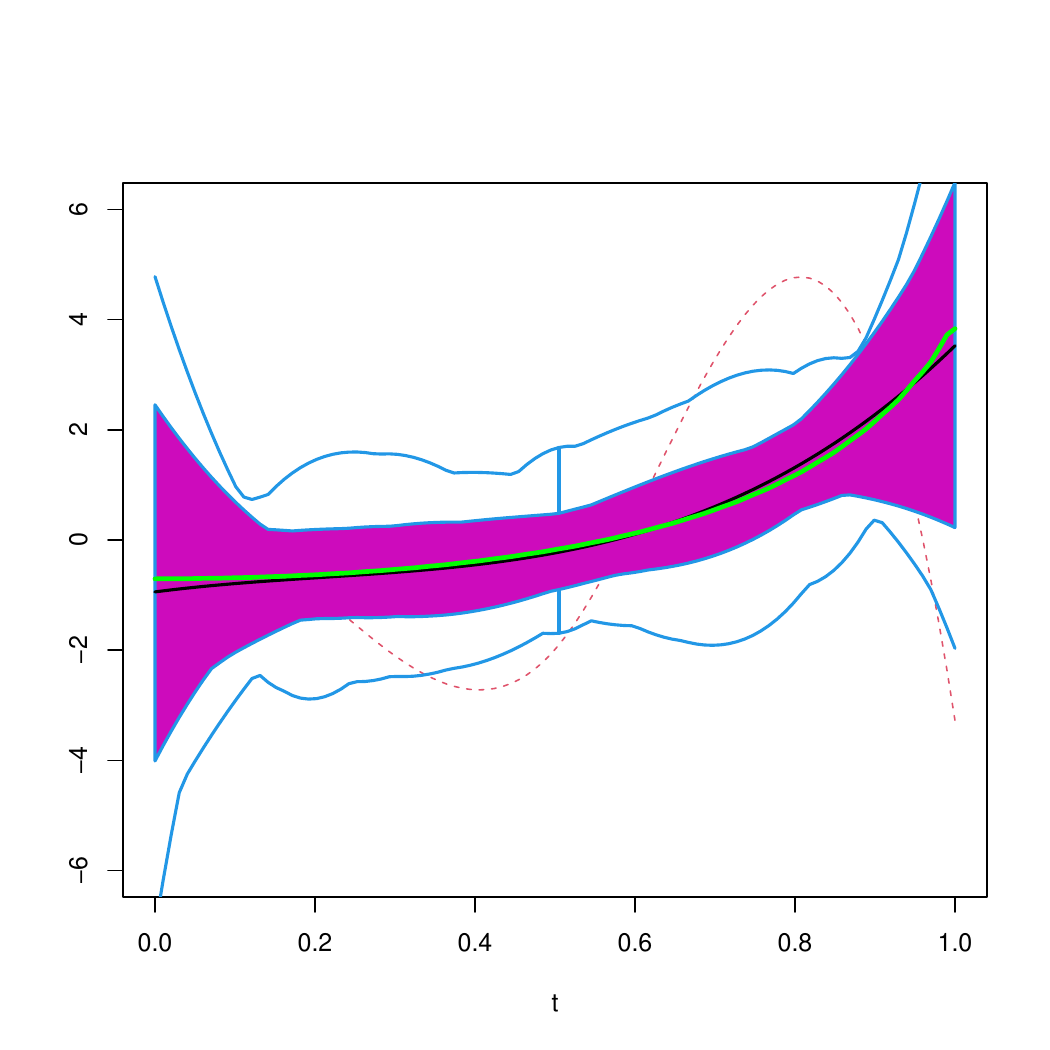}
  &  \includegraphics[scale=0.40]{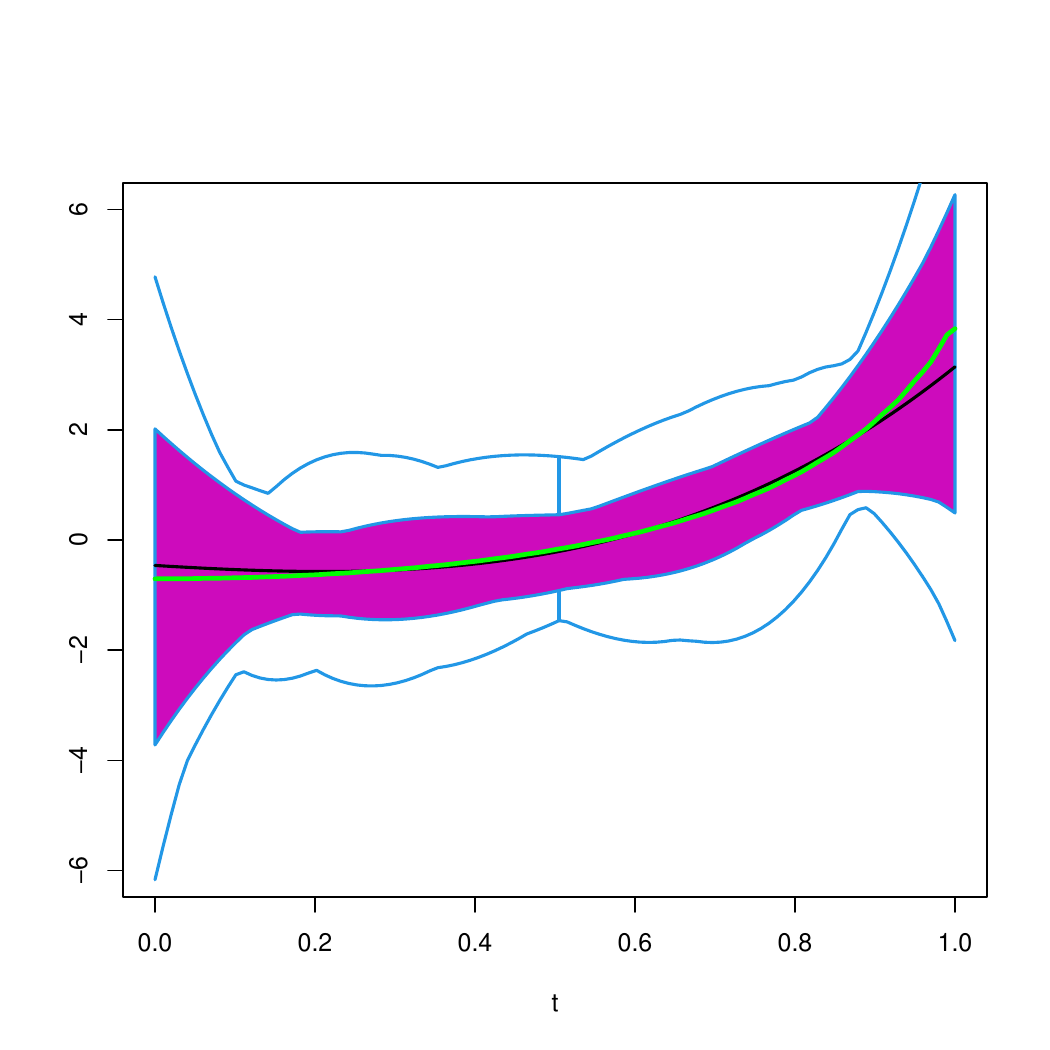}
   \\
   $\wbeta_{\wclBOX}$ & $\wbeta_{\wemeBOX}$ \\[-3ex]
  \includegraphics[scale=0.40]{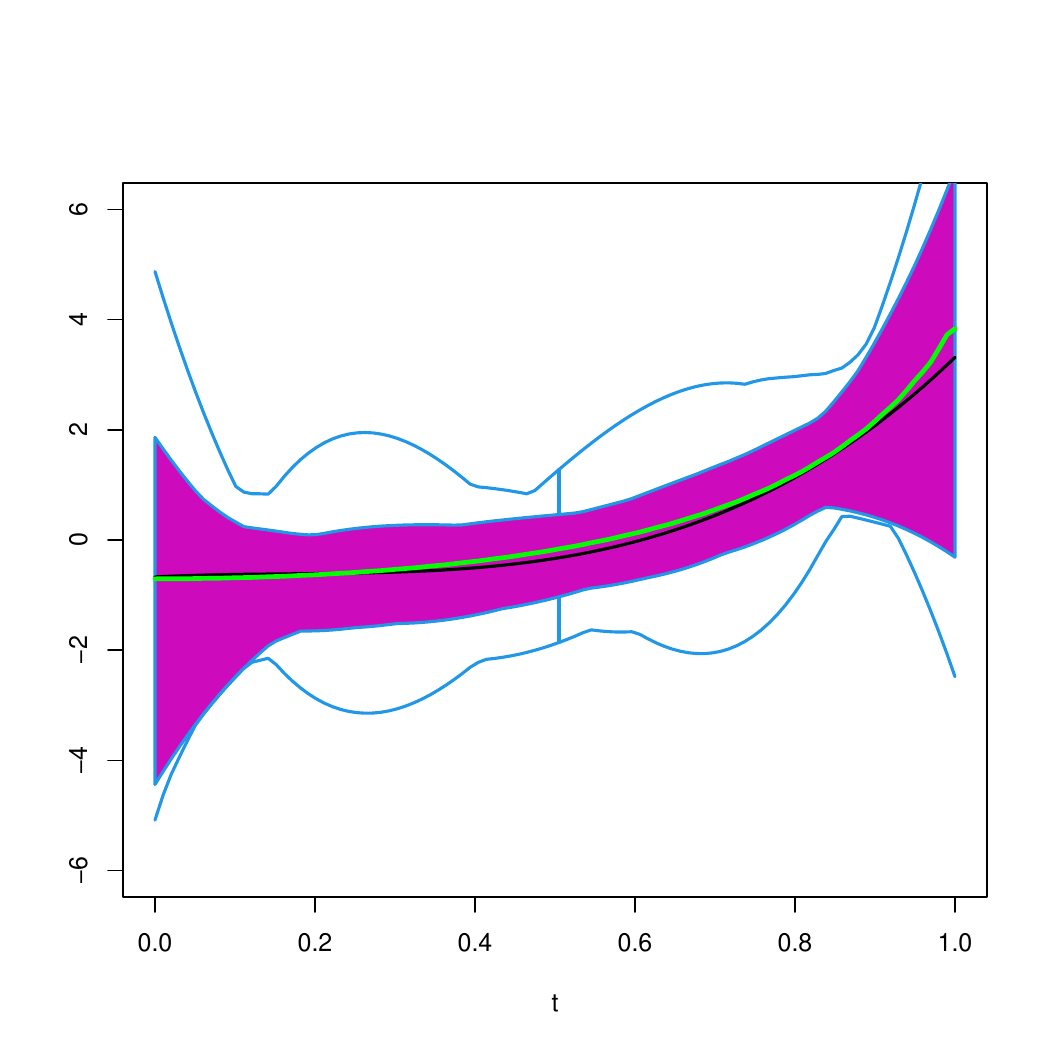}
  &  \includegraphics[scale=0.40]{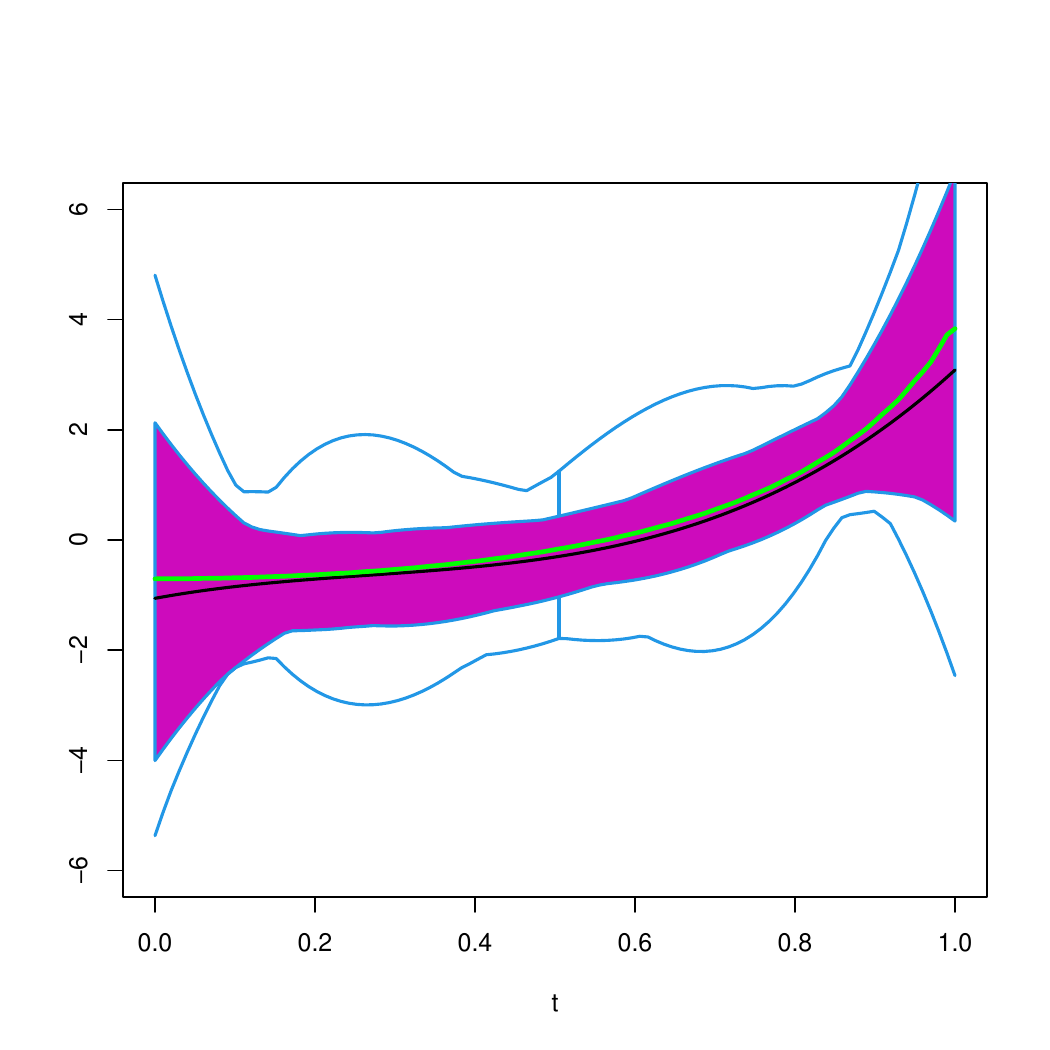}

\end{tabular}
\caption{\small \label{fig:wbeta-C45-poda0}  Functional boxplot of the estimators for $\beta_0$ under $C_{4,0.05}$  within the interval $[0,1]$. 
The true function is shown with a green dashed line, while the black solid one is the central 
curve of the $n_R = 1000$ estimates $\wbeta$.  }
\end{center} 
\end{figure}

\begin{figure}[tp]
 \begin{center}
 \footnotesize
 \renewcommand{\arraystretch}{0.2}
 \newcolumntype{M}{>{\centering\arraybackslash}m{\dimexpr.01\linewidth-1\tabcolsep}}
   \newcolumntype{G}{>{\centering\arraybackslash}m{\dimexpr.45\linewidth-1\tabcolsep}}
\begin{tabular}{GG}
  $\wbeta_{\clas}$ & $\wbeta_{\eme}$   \\[-3ex]    
 
\includegraphics[scale=0.40]{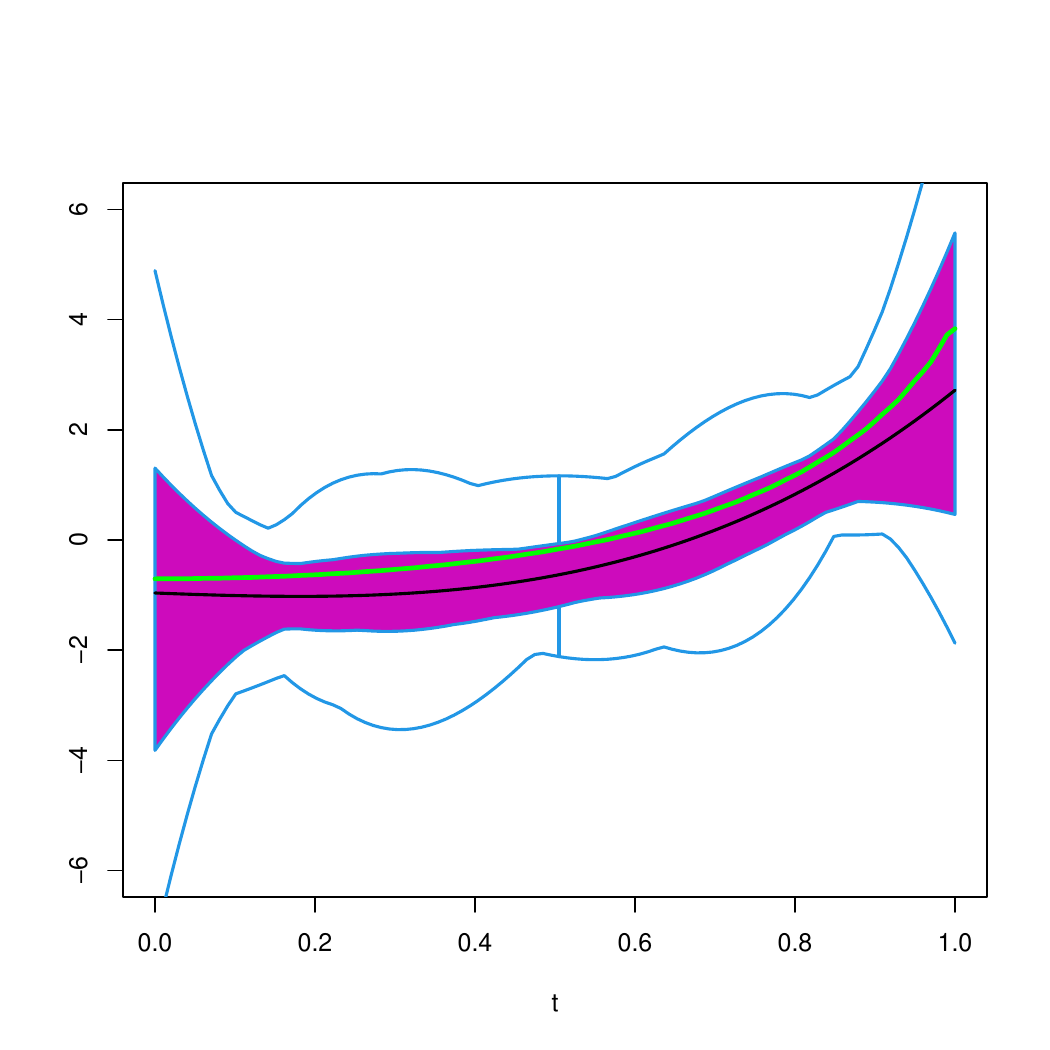}
 &  \includegraphics[scale=0.40]{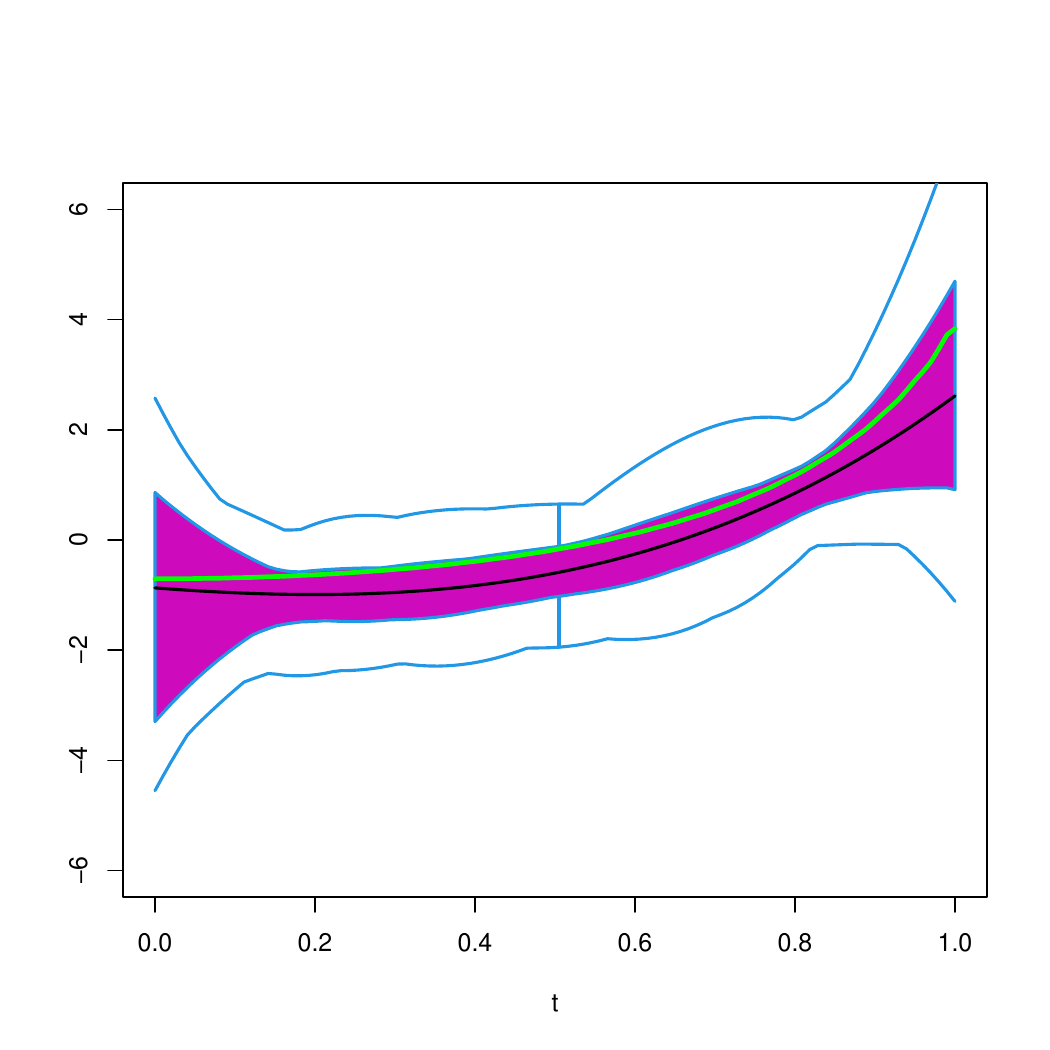}\\
   $\wbeta_{\wclHR}$ & $\wbeta_{\wemeHR}$ \\[-3ex] 
    \includegraphics[scale=0.40]{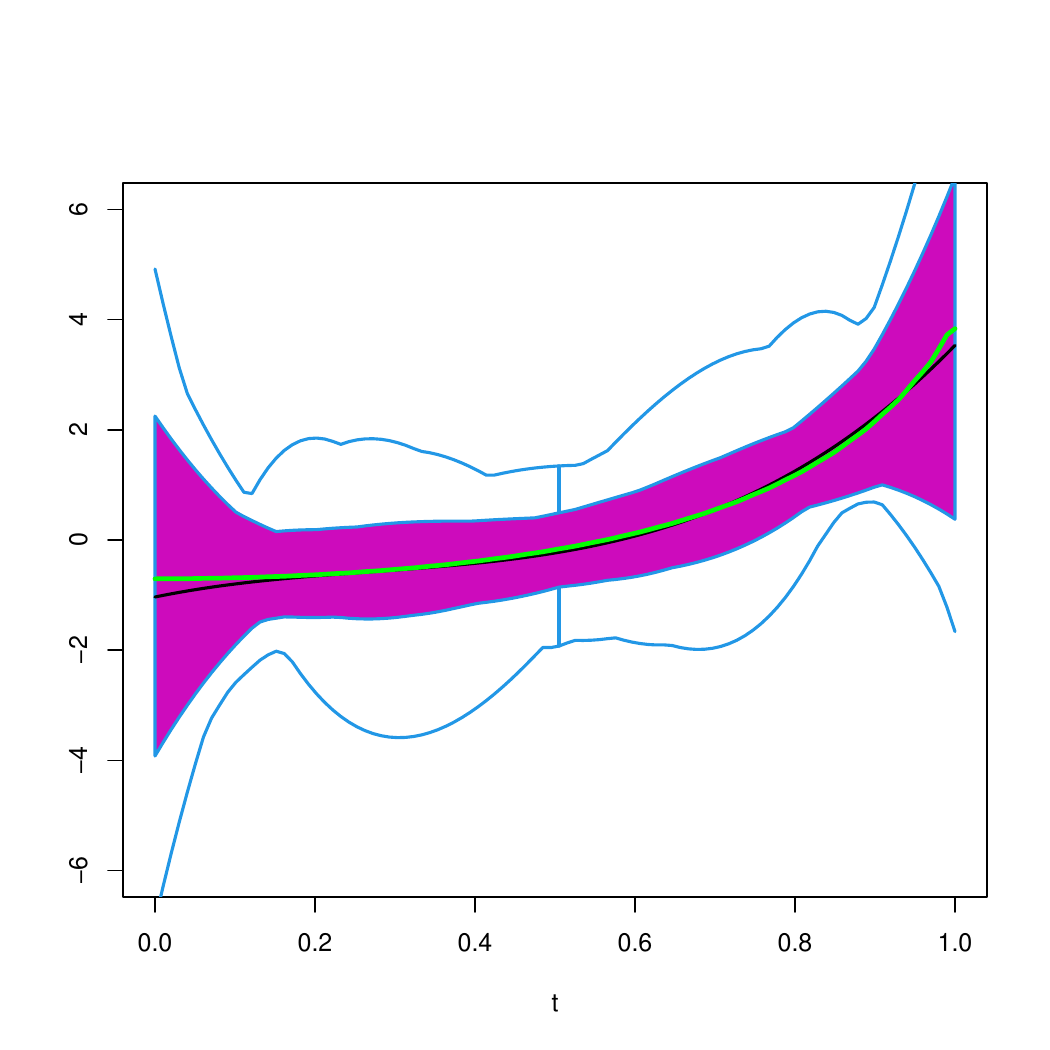}
  &  \includegraphics[scale=0.40]{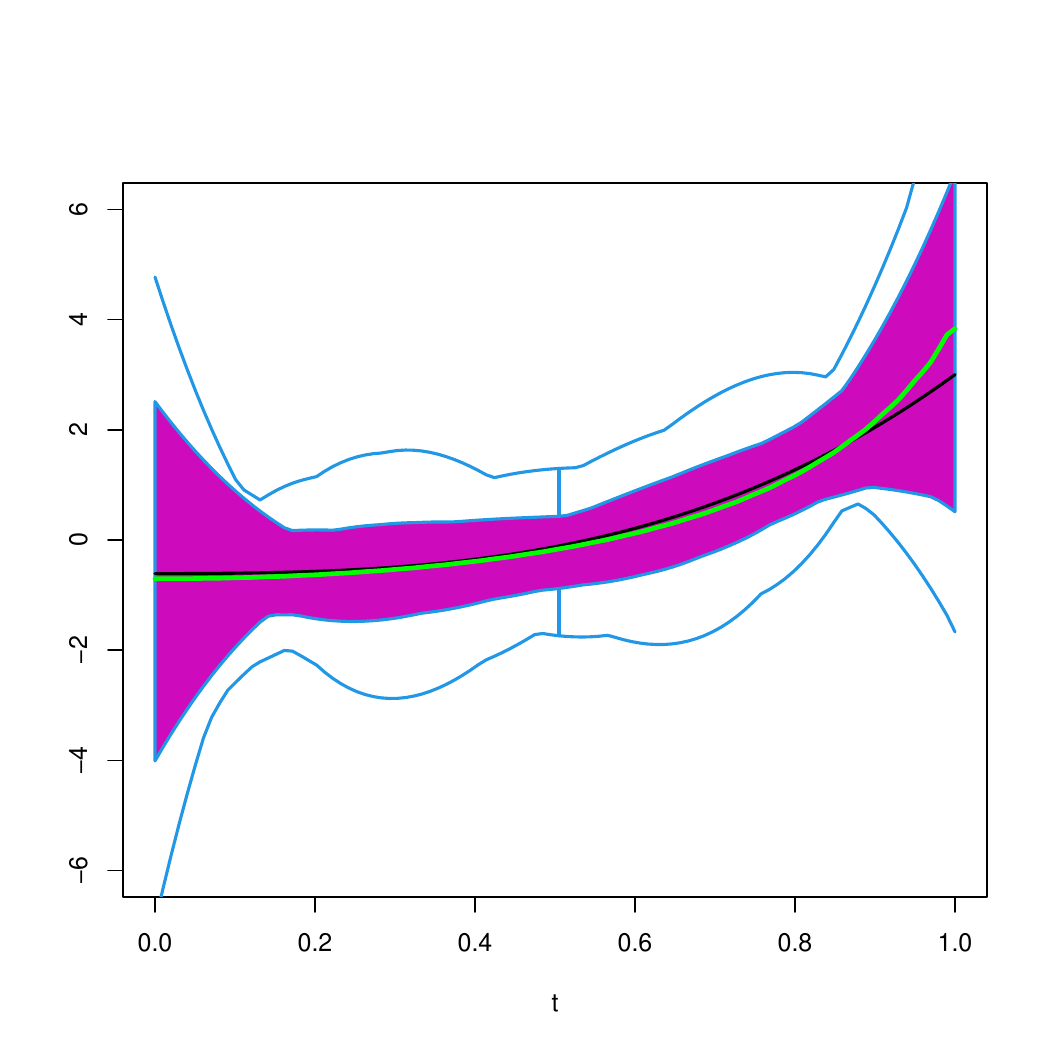}
   \\
   $\wbeta_{\wclBOX}$ & $\wbeta_{\wemeBOX}$ \\[-3ex]
  \includegraphics[scale=0.40]{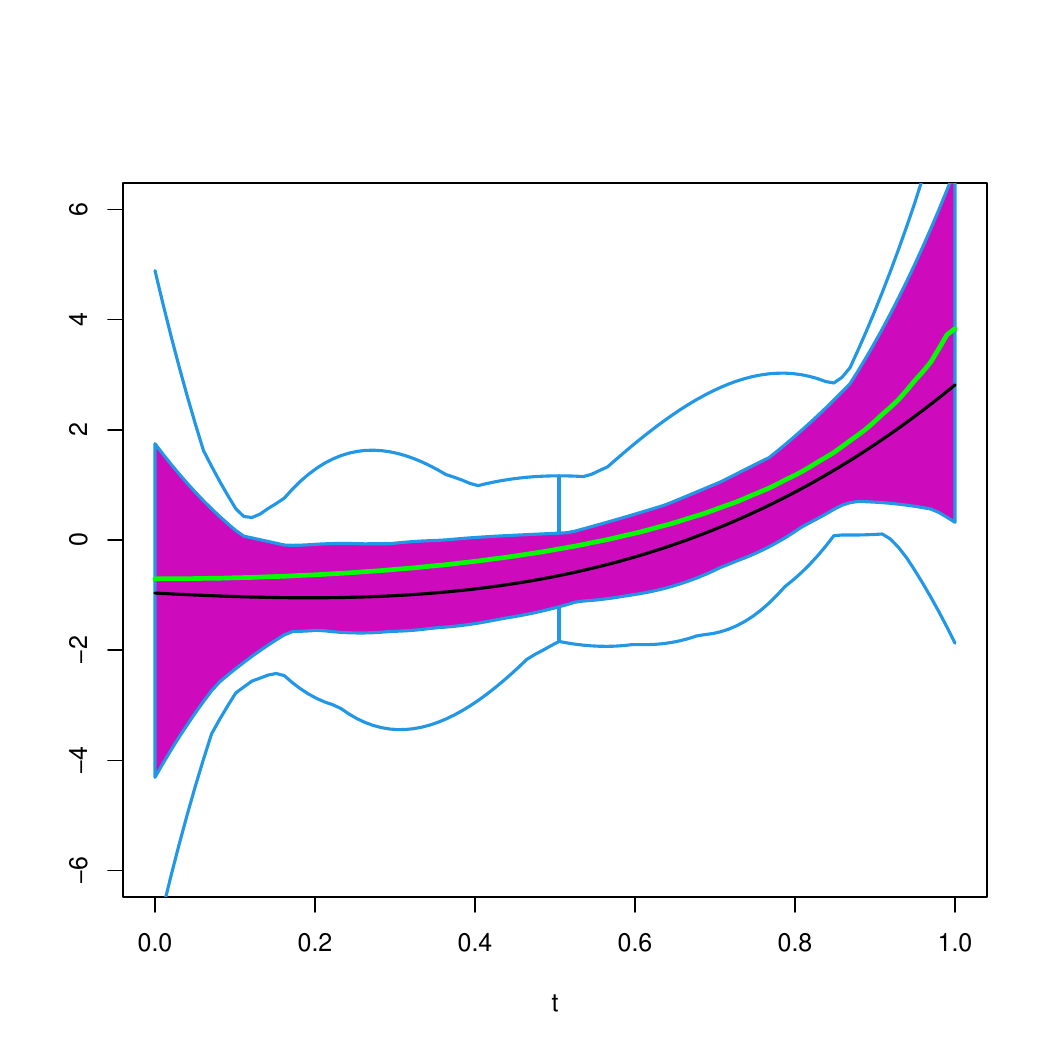}
  &  \includegraphics[scale=0.40]{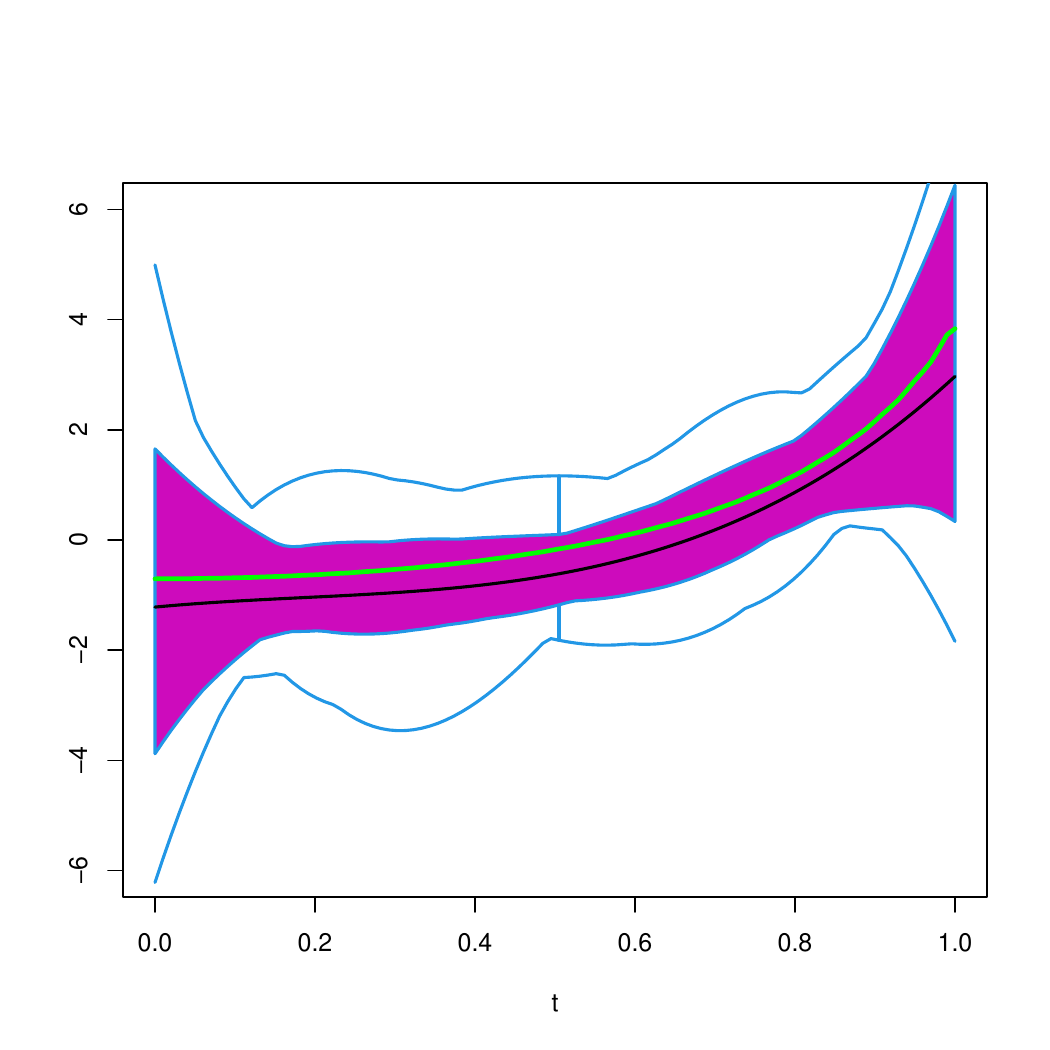}

\end{tabular}
\caption{\small \label{fig:wbeta-C410-poda0}  Functional boxplot of the estimators for $\beta_0$ under $C_{4,0.10}$  within the interval $[0,1]$. 
The true function is shown with a green dashed line, while the black solid one is the central 
curve of the $n_R = 1000$ estimates $\wbeta$.  }
\end{center} 
\end{figure}

\begin{figure}[tp]
 \begin{center}
 \footnotesize
 \renewcommand{\arraystretch}{0.2}
 \newcolumntype{M}{>{\centering\arraybackslash}m{\dimexpr.01\linewidth-1\tabcolsep}}
   \newcolumntype{G}{>{\centering\arraybackslash}m{\dimexpr.45\linewidth-1\tabcolsep}}
\begin{tabular}{GG}
  $\wbeta_{\clas}$ & $\wbeta_{\eme}$   \\[-3ex]    
 
\includegraphics[scale=0.40]{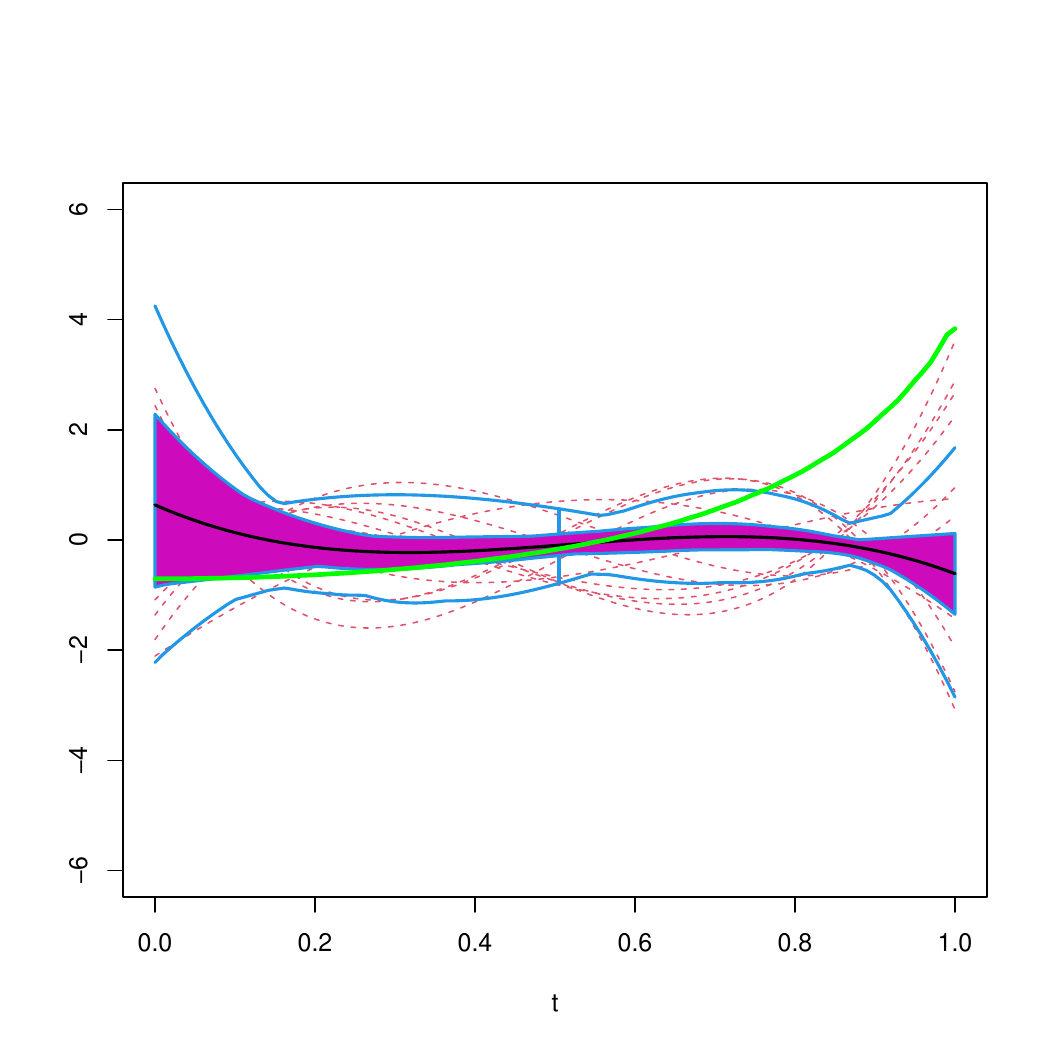}
 &  \includegraphics[scale=0.40]{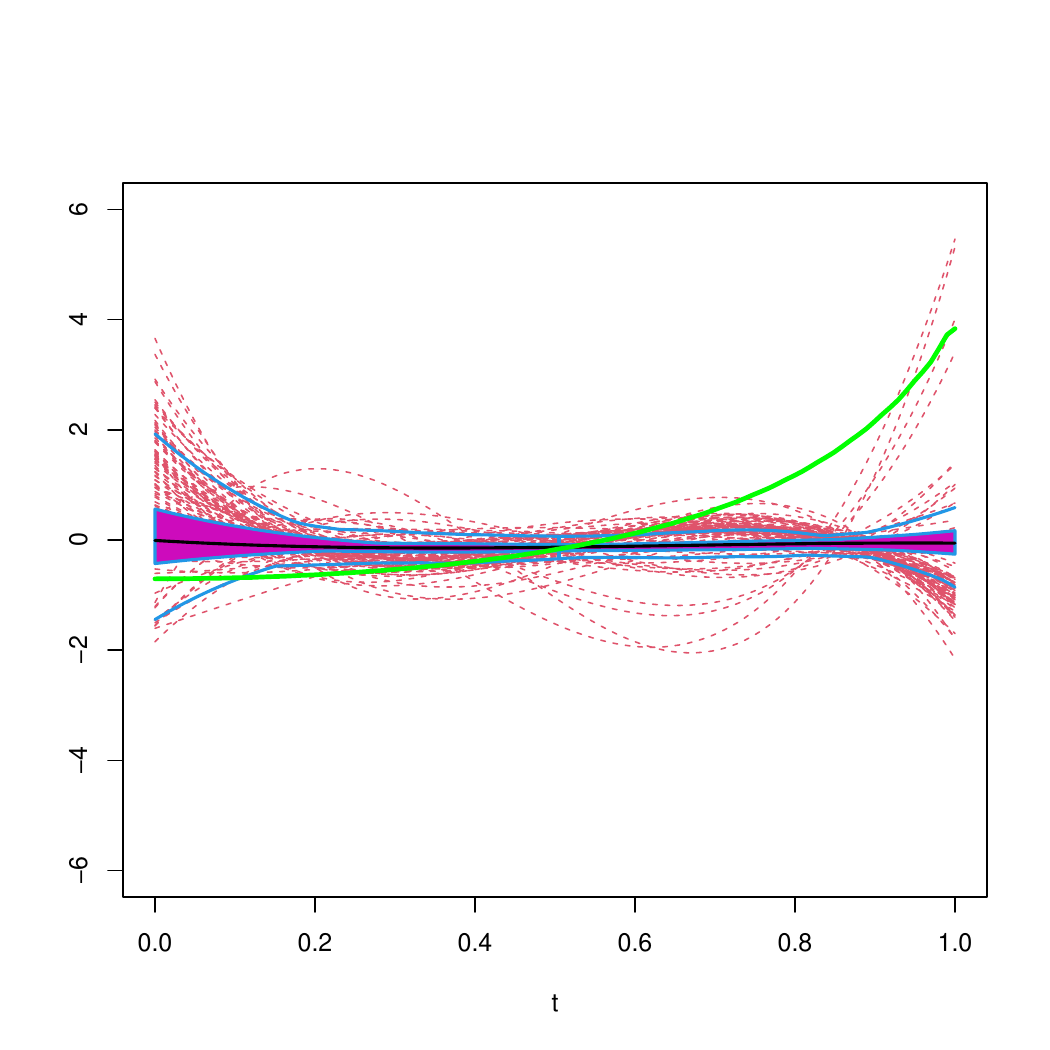}\\
   $\wbeta_{\wclHR}$ & $\wbeta_{\wemeHR}$ \\[-3ex] 
    \includegraphics[scale=0.40]{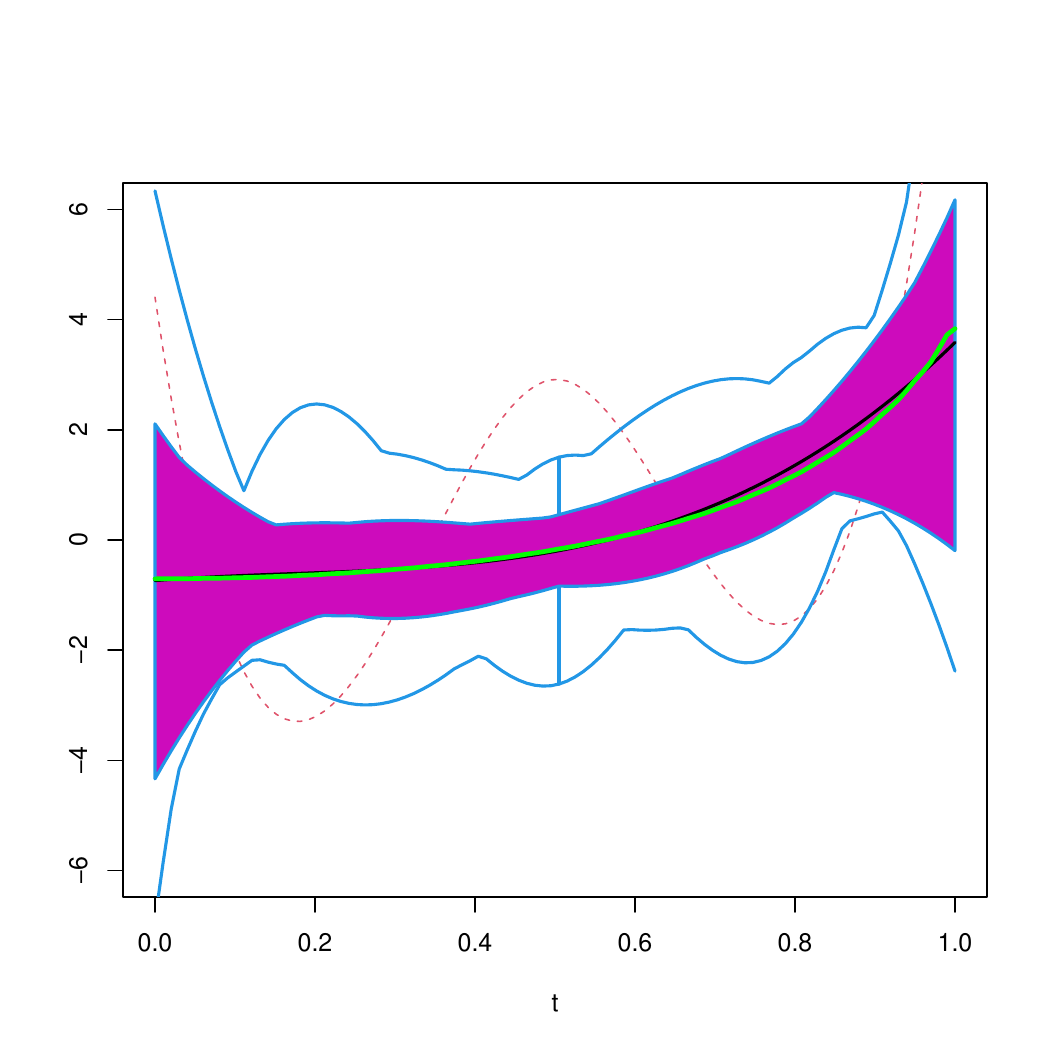}
  &  \includegraphics[scale=0.40]{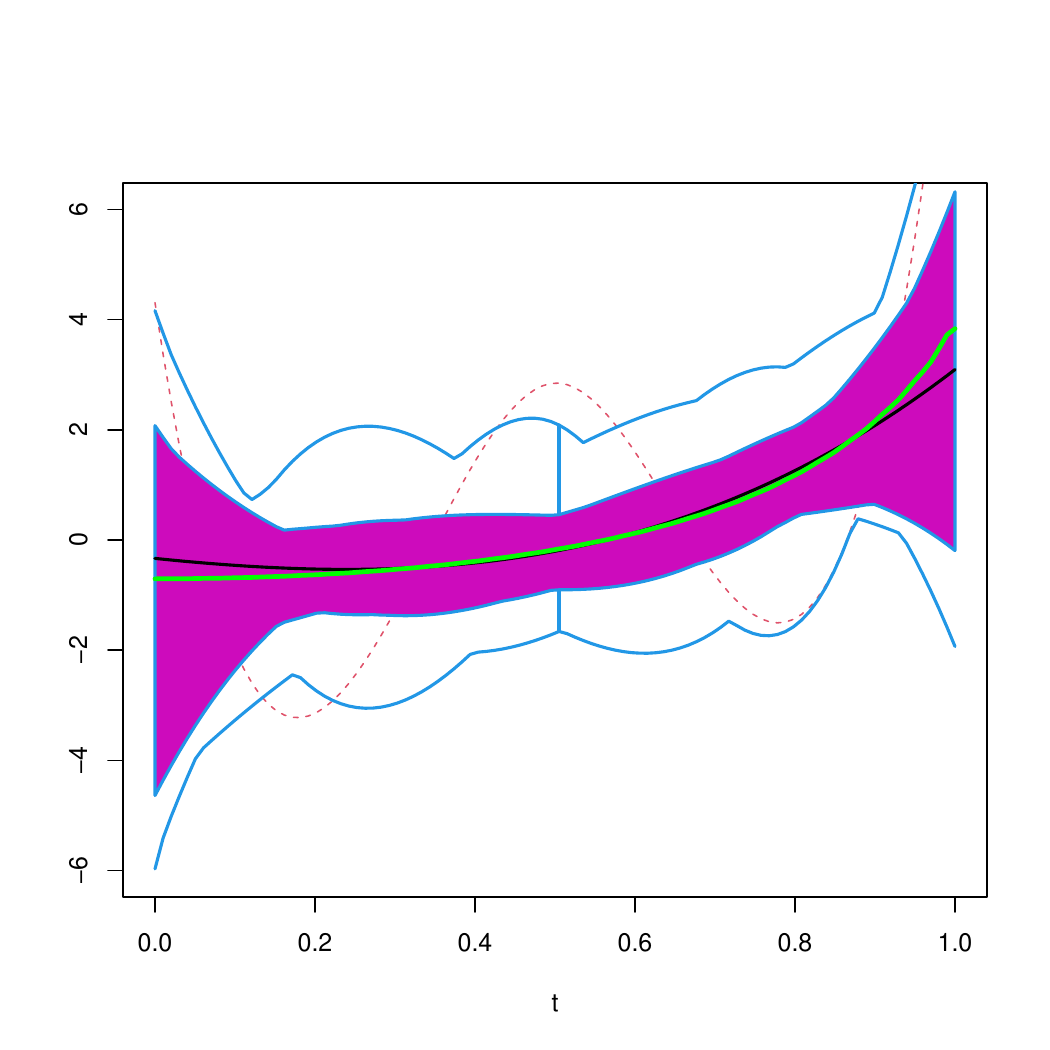}
   \\
   $\wbeta_{\wclBOX}$ & $\wbeta_{\wemeBOX}$ \\[-3ex]
  \includegraphics[scale=0.40]{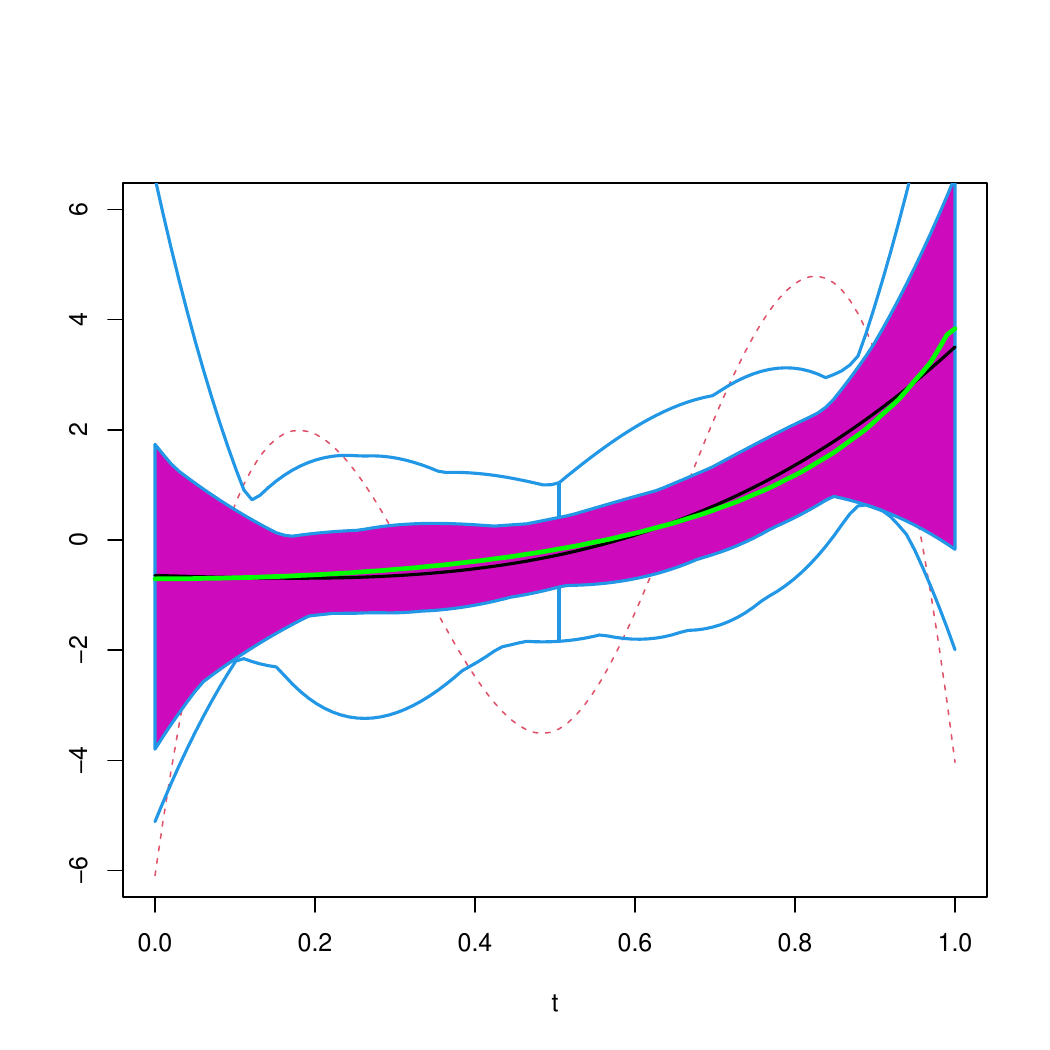}
  &  \includegraphics[scale=0.40]{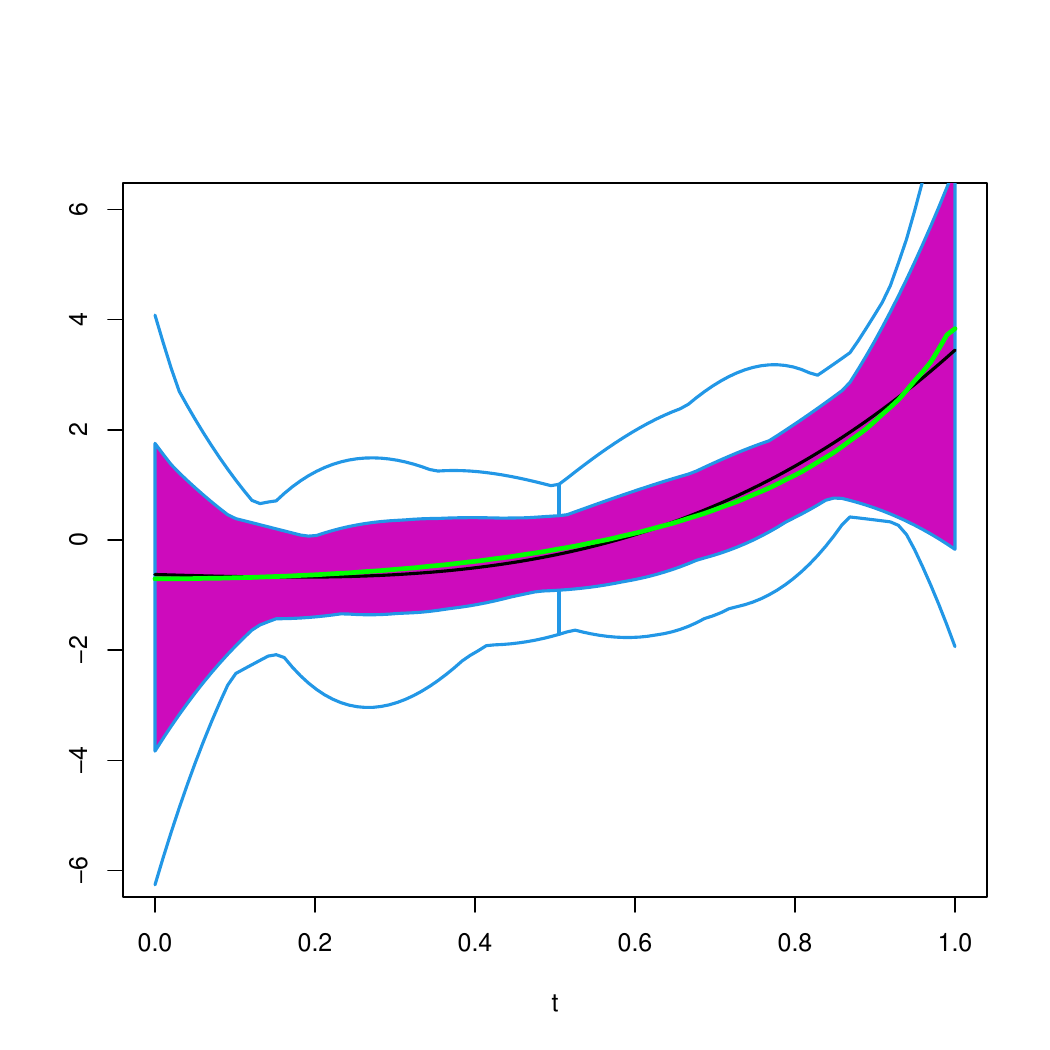}

\end{tabular}
\caption{\small \label{fig:wbeta-C55-poda0}  Functional boxplot of the estimators for $\beta_0$ under $C_{5,0.05}$  within the interval $[0,1]$. 
The true function is shown with a green dashed line, while the black solid one is the central 
curve of the $n_R = 1000$ estimates $\wbeta$.  }
\end{center} 
\end{figure}

\begin{figure}[tp]
 \begin{center}
 \footnotesize
 \renewcommand{\arraystretch}{0.2}
 \newcolumntype{M}{>{\centering\arraybackslash}m{\dimexpr.01\linewidth-1\tabcolsep}}
   \newcolumntype{G}{>{\centering\arraybackslash}m{\dimexpr.45\linewidth-1\tabcolsep}}
\begin{tabular}{GG}
  $\wbeta_{\clas}$ & $\wbeta_{\eme}$   \\[-3ex]    
 
\includegraphics[scale=0.40]{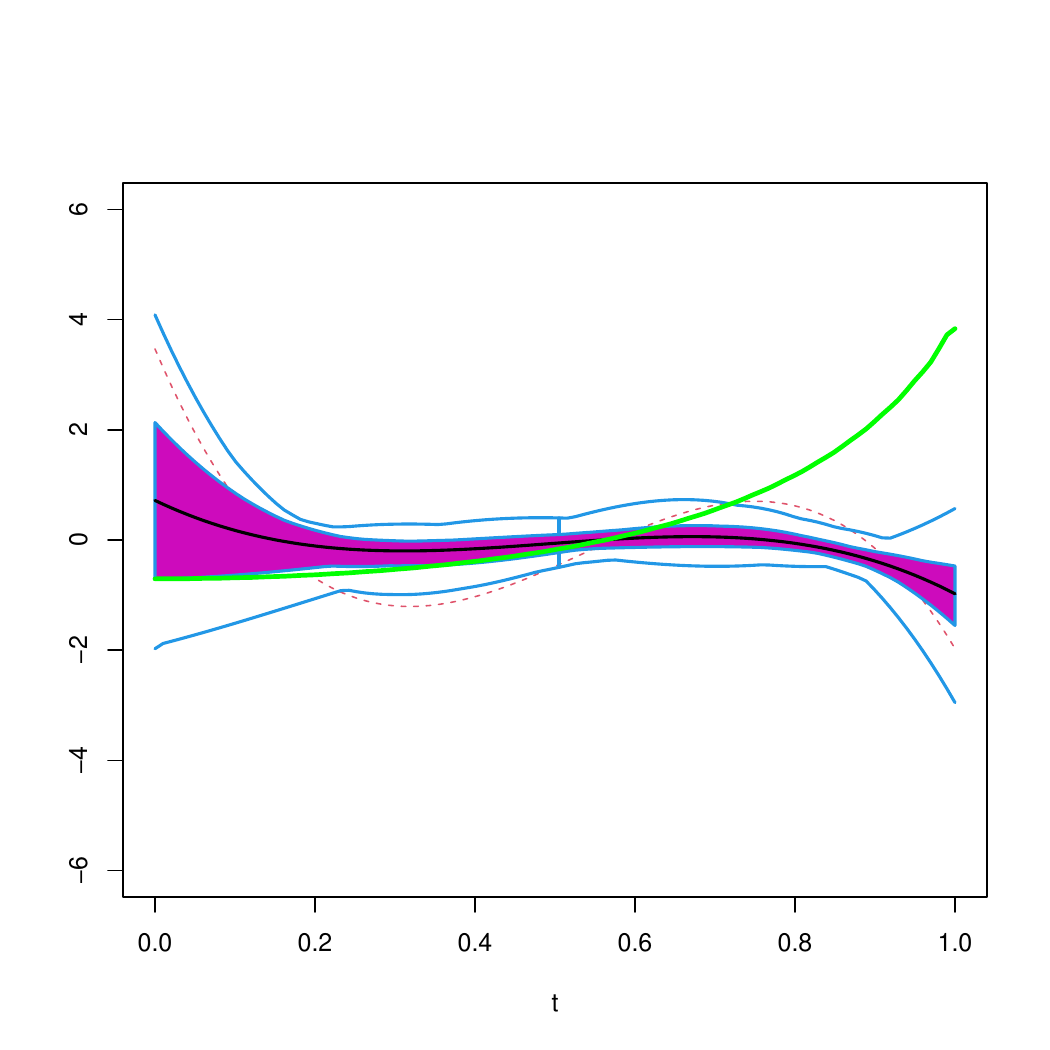}
 &  \includegraphics[scale=0.40]{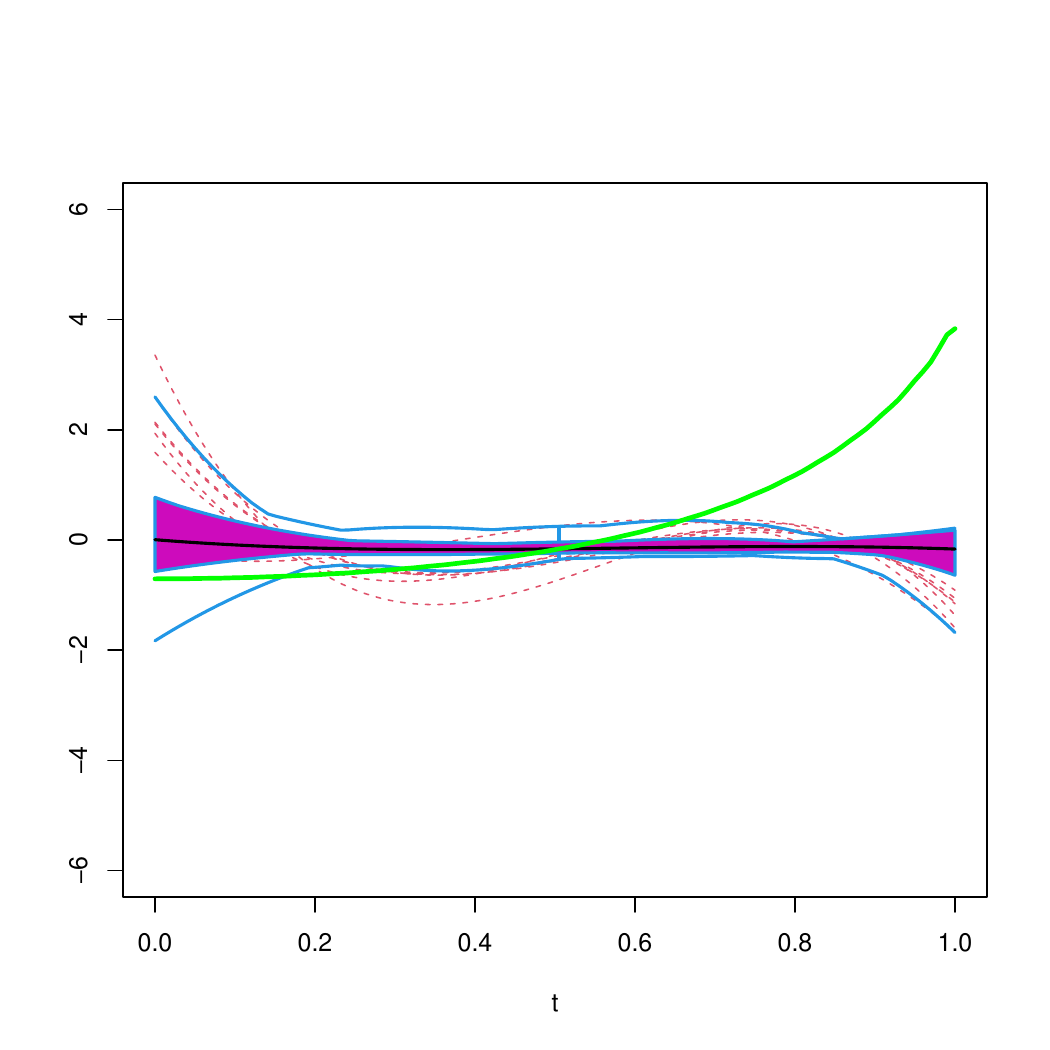}\\
   $\wbeta_{\wclHR}$ & $\wbeta_{\wemeHR}$ \\[-3ex] 
    \includegraphics[scale=0.40]{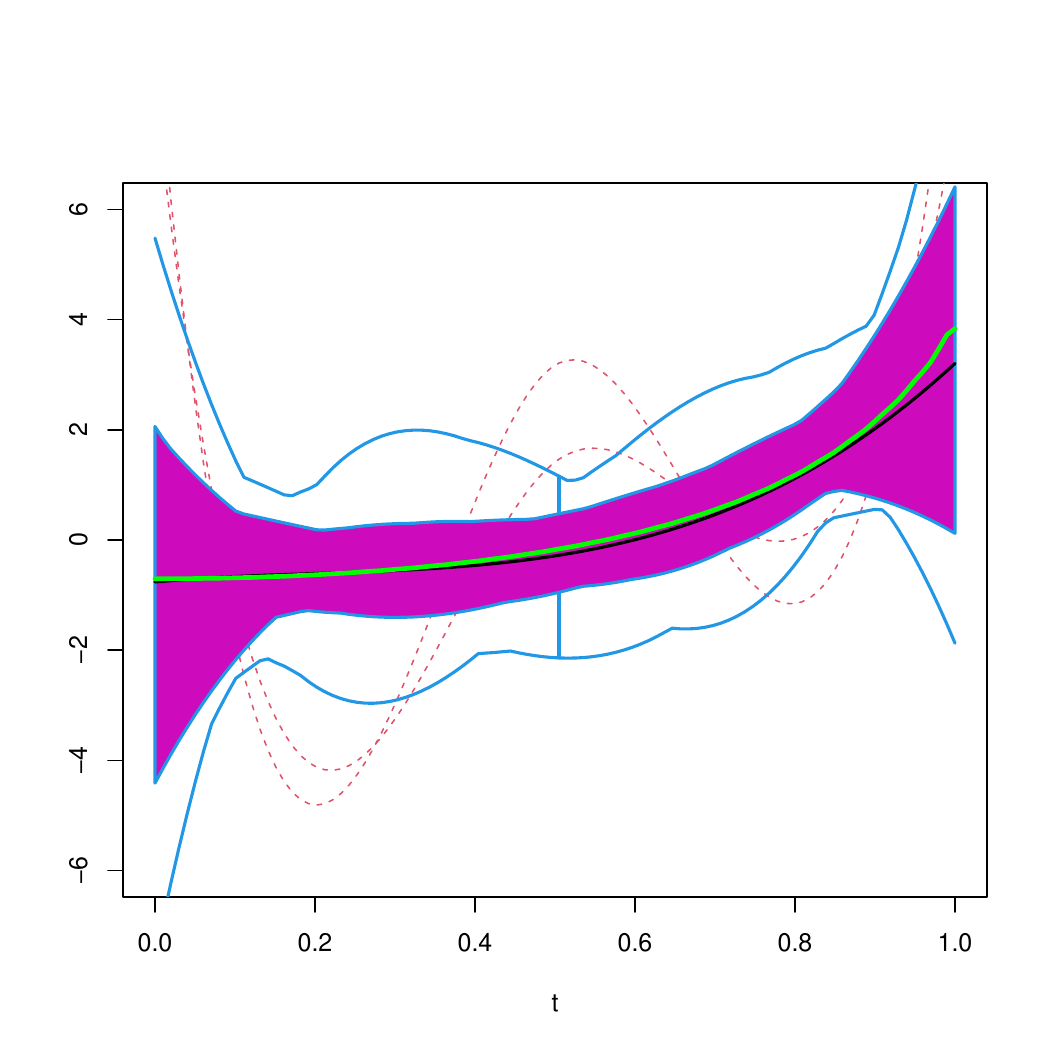}
  &  \includegraphics[scale=0.40]{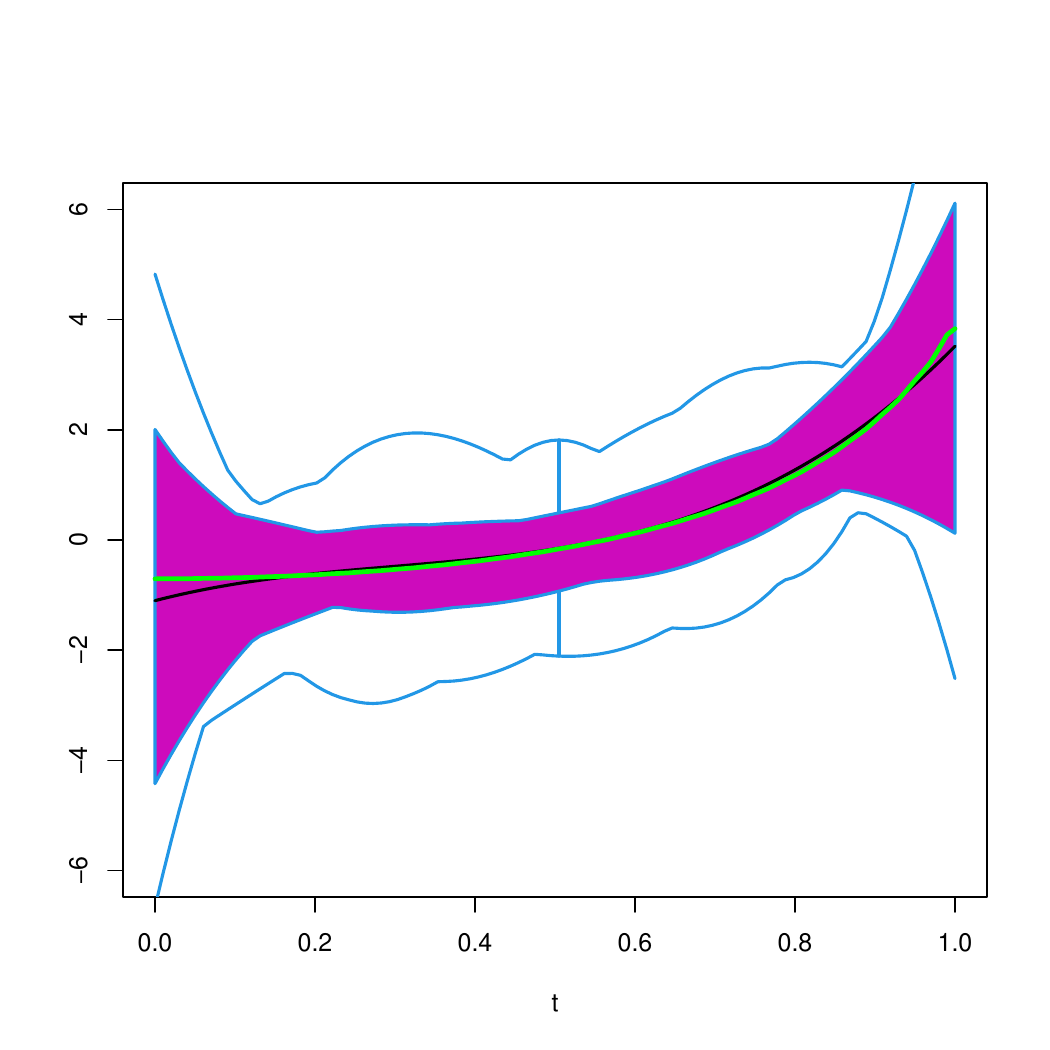}
   \\
   $\wbeta_{\wclBOX}$ & $\wbeta_{\wemeBOX}$ \\[-3ex]
  \includegraphics[scale=0.40]{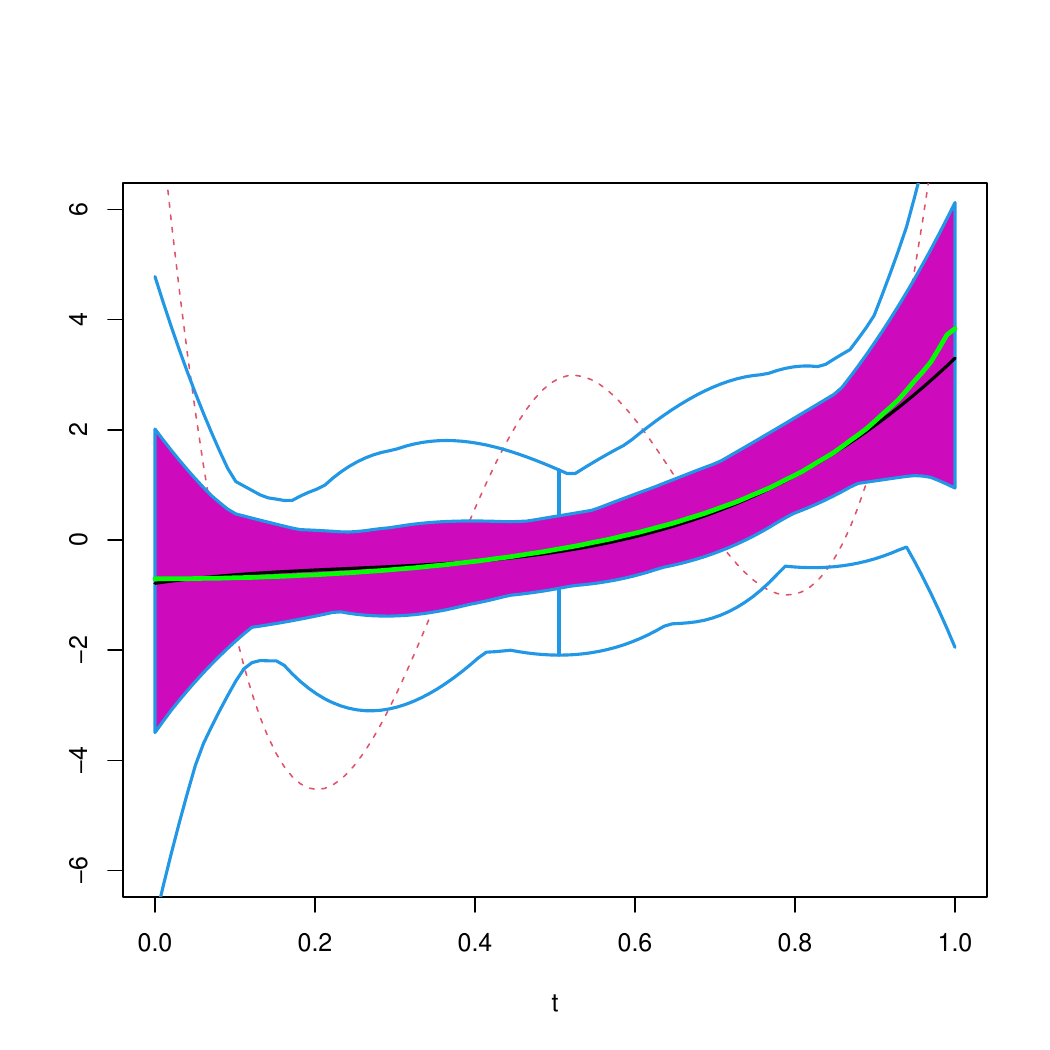}
  &  \includegraphics[scale=0.40]{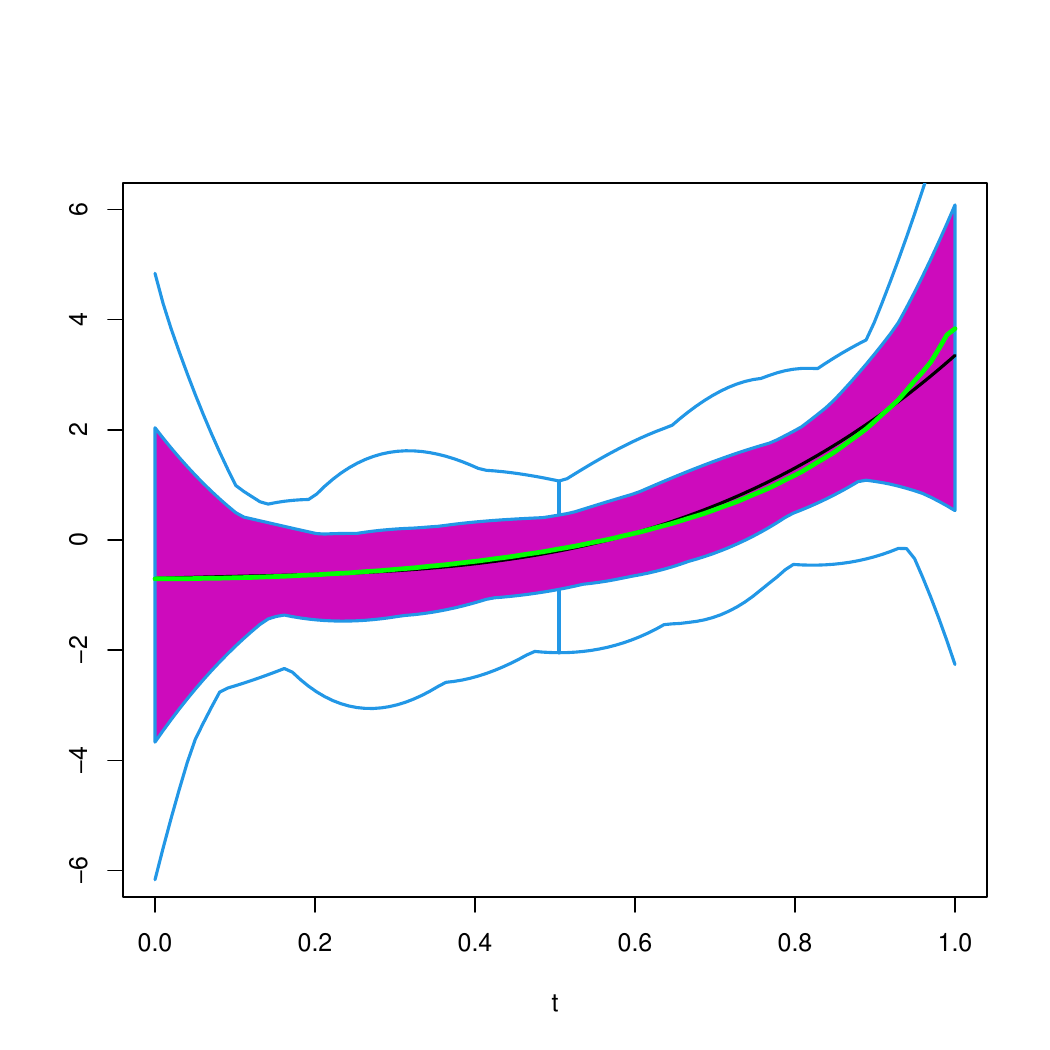}

\end{tabular}
\caption{\small \label{fig:wbeta-C510-poda0}  Functional boxplot of the estimators for $\beta_0$ under $C_{5,0.10}$  within the interval $[0,1]$. 
The true function is shown with a green dashed line, while the black solid one is the central 
curve of the $n_R = 1000$ estimates $\wbeta$.  }
\end{center} 
\end{figure}

 \begin{figure}[tp]
 \begin{center}
 \footnotesize
 \renewcommand{\arraystretch}{0.2}
 \newcolumntype{M}{>{\centering\arraybackslash}m{\dimexpr.01\linewidth-1\tabcolsep}}
   \newcolumntype{G}{>{\centering\arraybackslash}m{\dimexpr.45\linewidth-1\tabcolsep}}
\begin{tabular}{GG}
  $\wbeta_{\clas}$ & $\wbeta_{\eme}$   \\[-3ex]
     \includegraphics[scale=0.40]{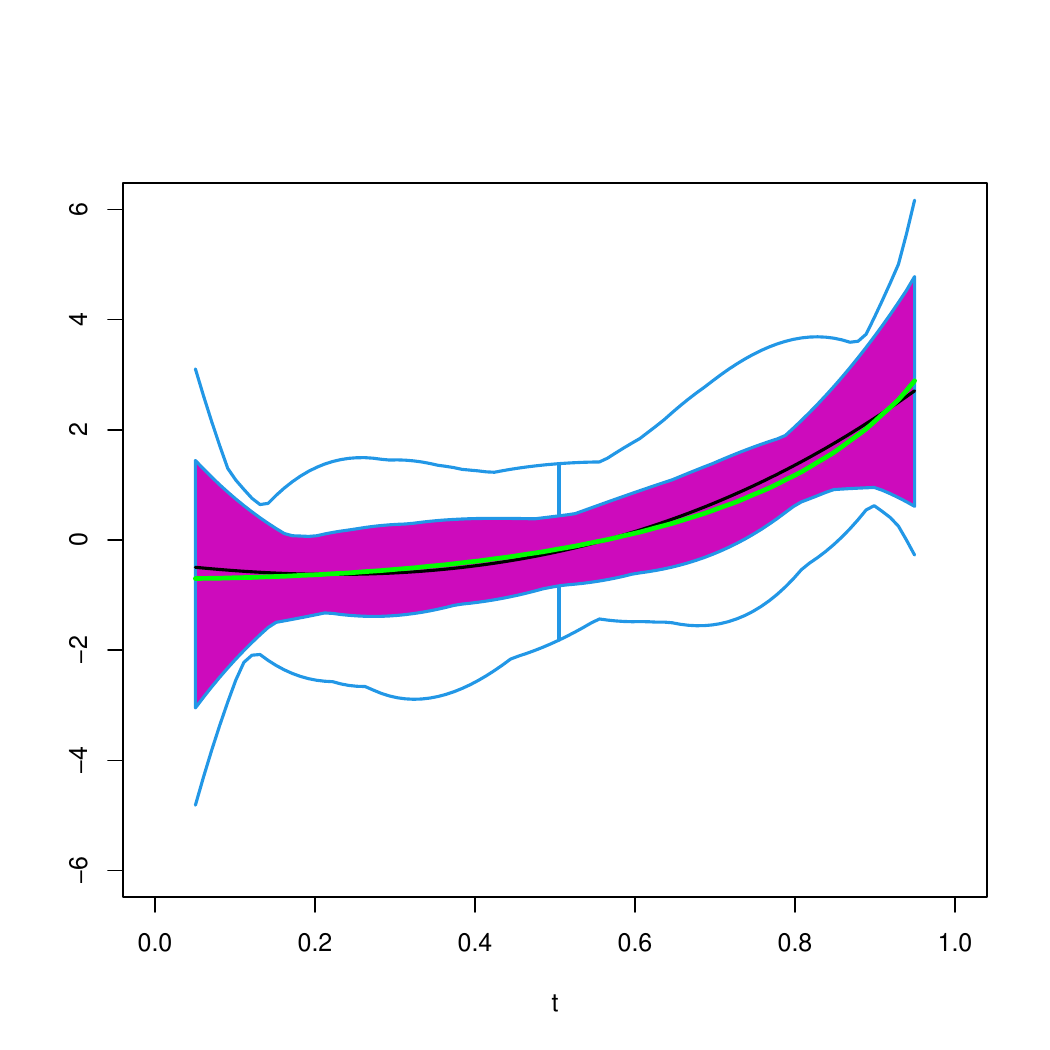} 
&  \includegraphics[scale=0.40]{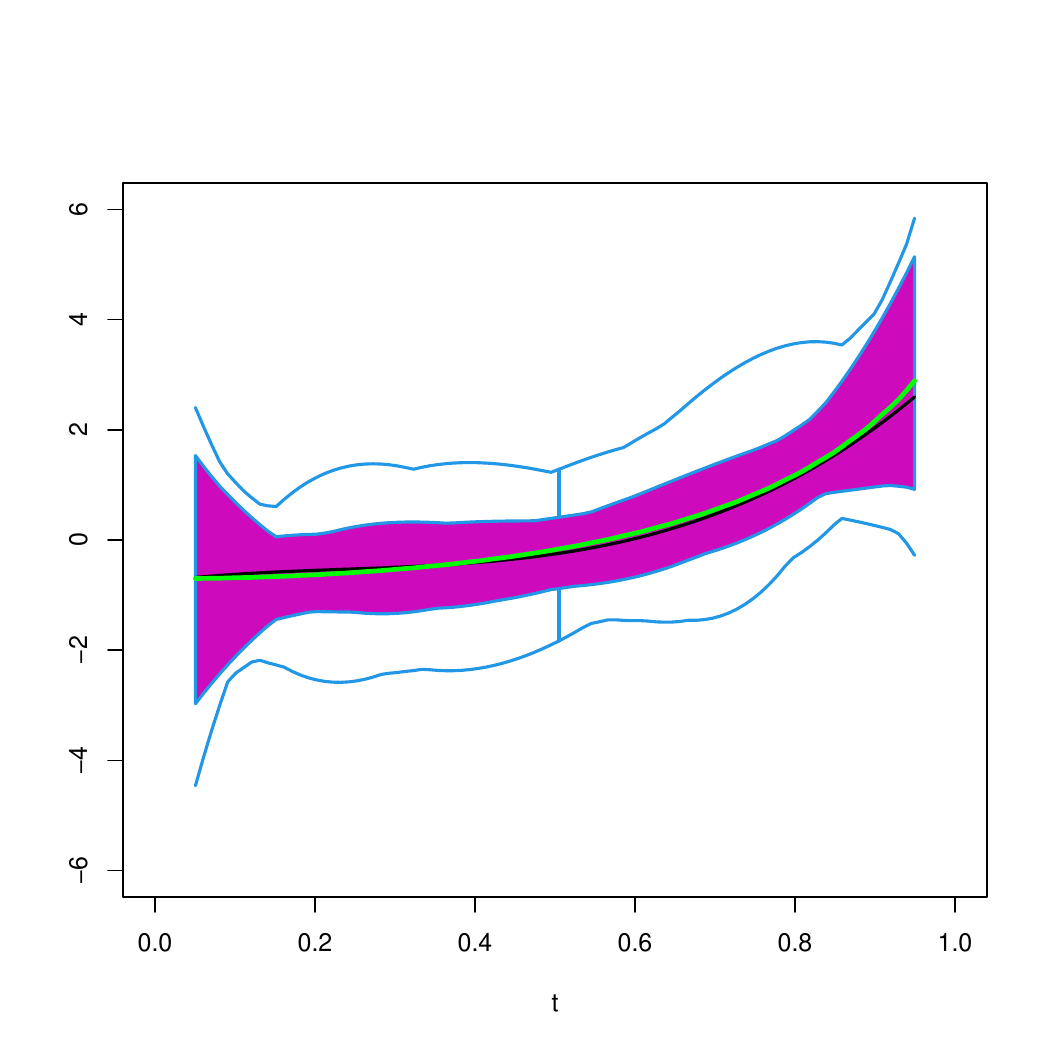} \\
   $\wbeta_{\wclHR}$ & $\wbeta_{\wemeHR}$ \\ [-3ex]
   \includegraphics[scale=0.40]{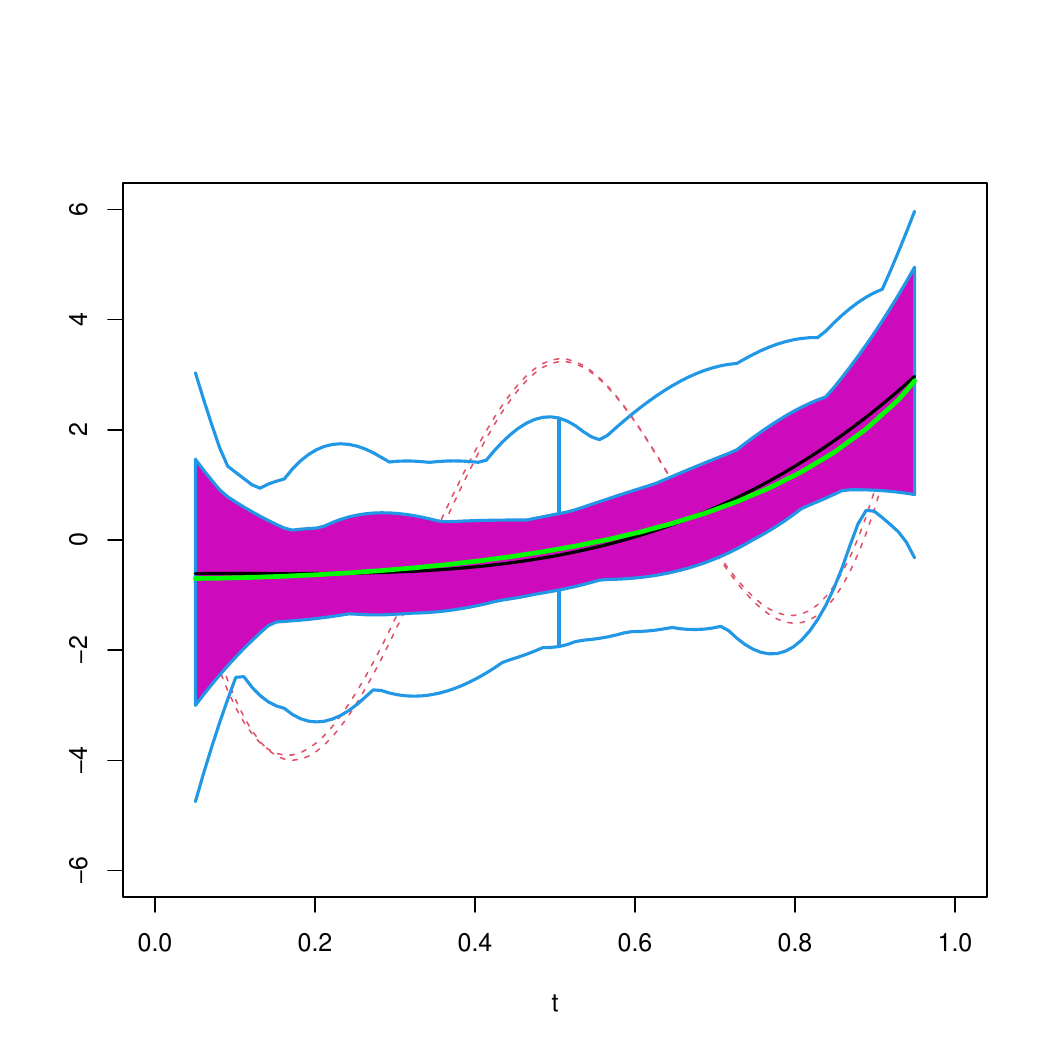}
&  \includegraphics[scale=0.40]{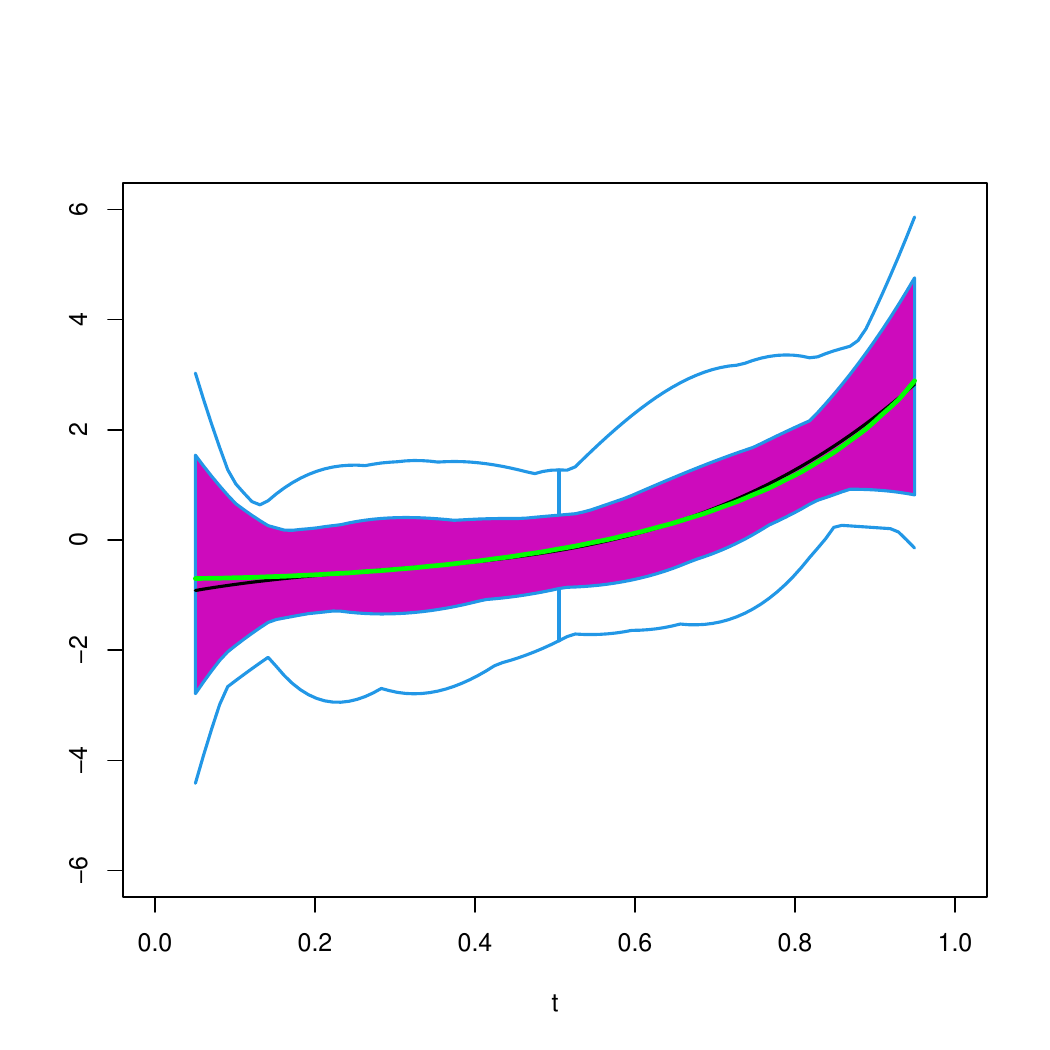} 
  \\
   $\wbeta_{\wclBOX}$ & $\wbeta_{\wemeBOX}$ \\[-3ex]
\includegraphics[scale=0.40]{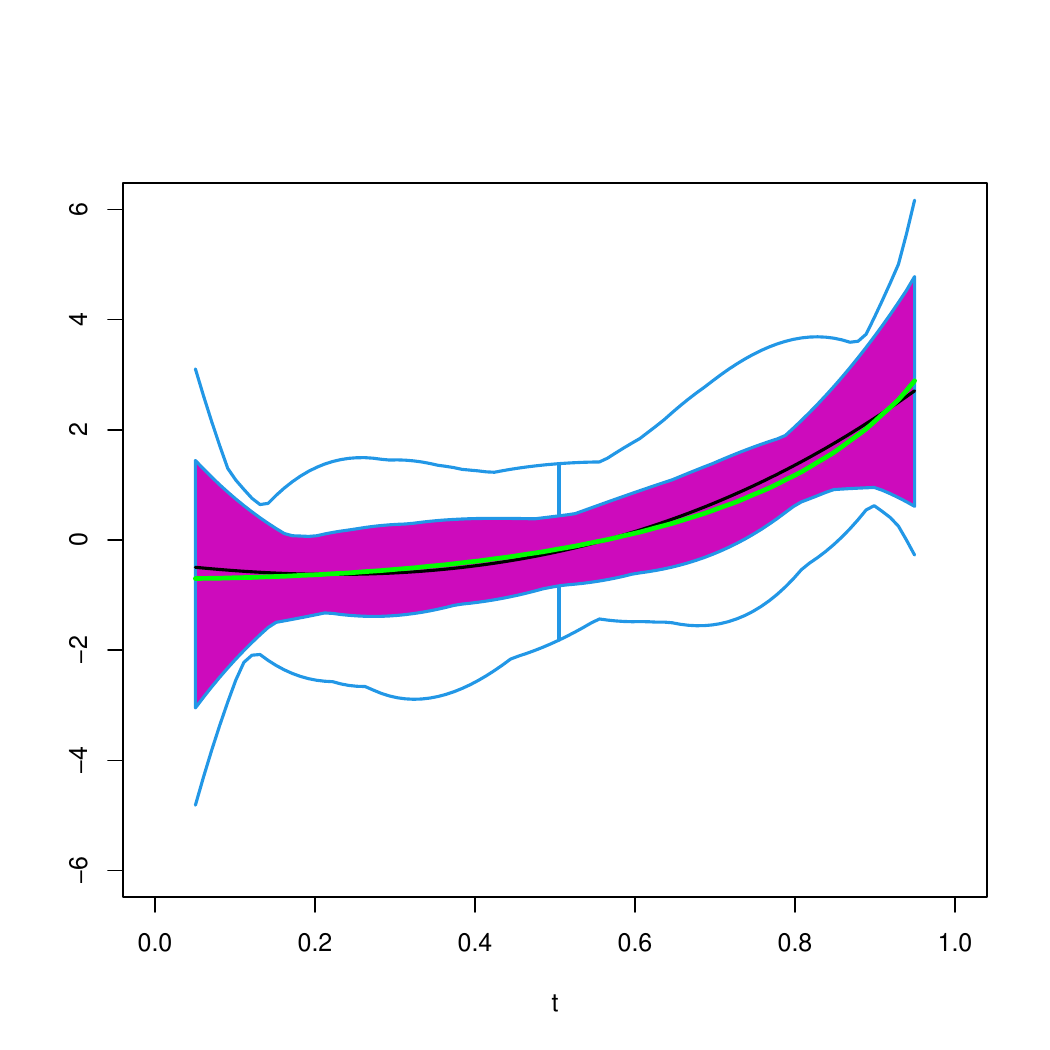}
&   \includegraphics[scale=0.40]{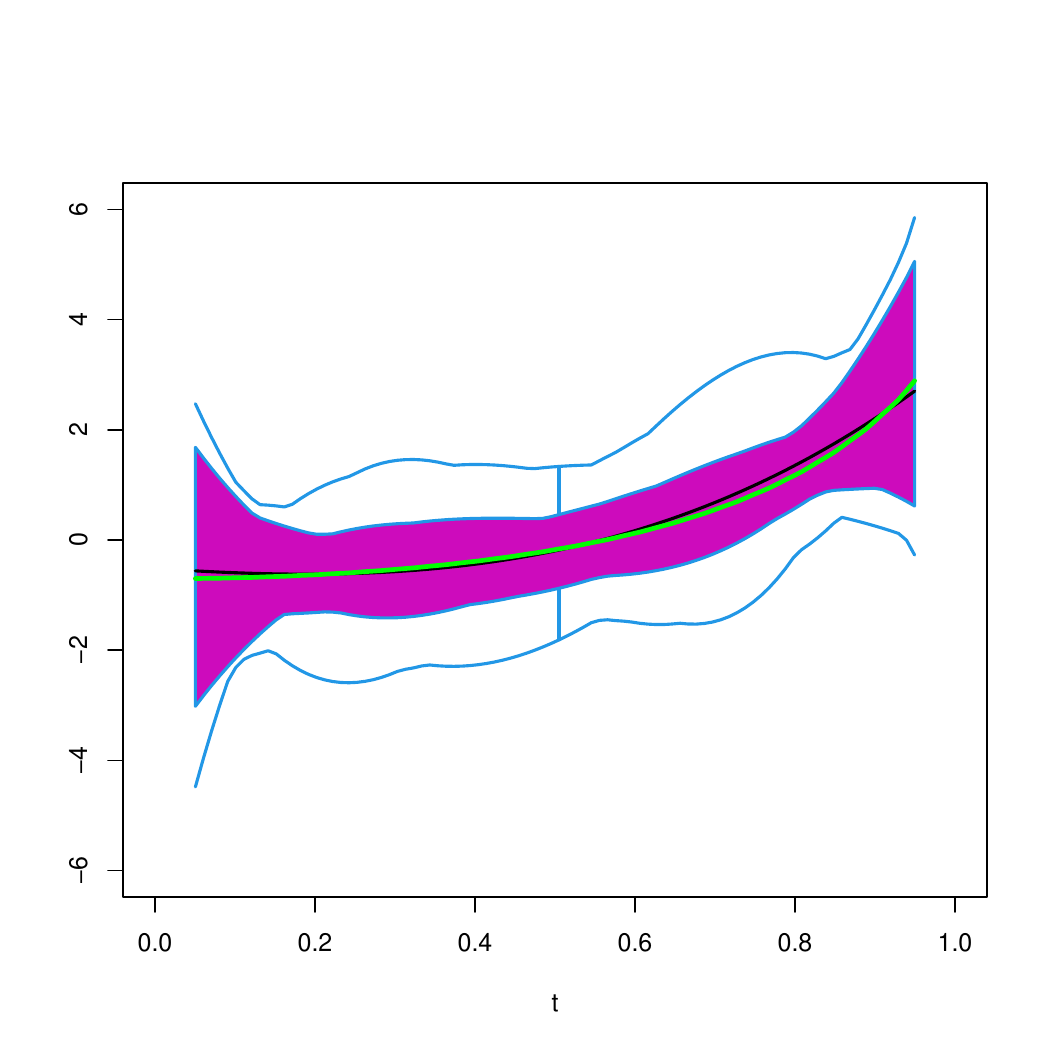} 
 \end{tabular}
\caption{\small \label{fig:wbeta-C0-poda5}  Functional boxplot of the estimators for $\beta_0$ under $C_{0}$ within the interval $[0.05,0.95]$. 
The true function is shown with a green dashed line, while the black solid one is the central 
curve of the $n_R = 1000$ estimates $\wbeta$. }
\end{center} 
\end{figure}

\begin{figure}[tp]
 \begin{center}
 \footnotesize
 \renewcommand{\arraystretch}{0.2}
 \newcolumntype{M}{>{\centering\arraybackslash}m{\dimexpr.01\linewidth-1\tabcolsep}}
   \newcolumntype{G}{>{\centering\arraybackslash}m{\dimexpr.45\linewidth-1\tabcolsep}}
\begin{tabular}{GG}
  $\wbeta_{\clas}$ & $\wbeta_{\eme}$   \\[-3ex] 
 
\includegraphics[scale=0.40]{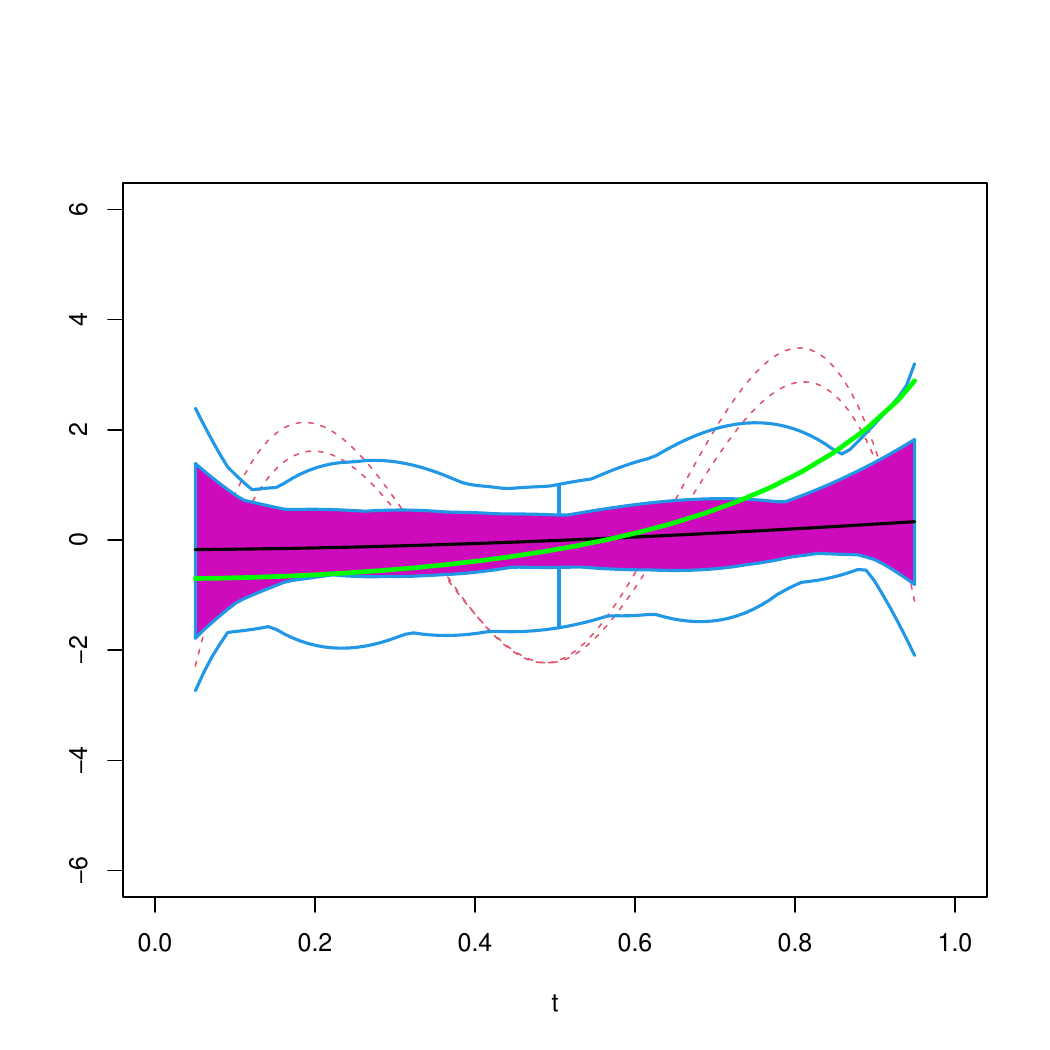}
 &  \includegraphics[scale=0.40]{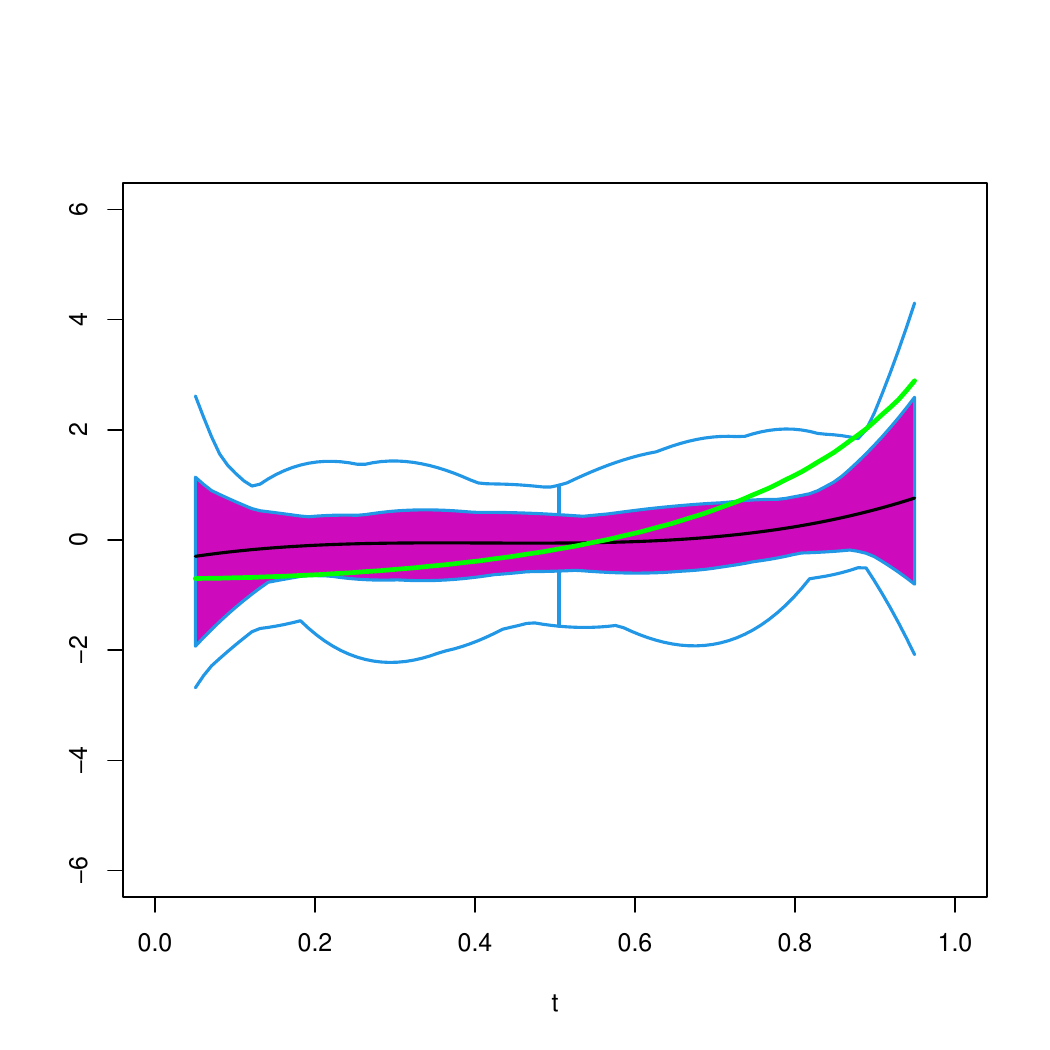}\\
   $\wbeta_{\wclHR}$ & $\wbeta_{\wemeHR}$ \\[-3ex] 
    \includegraphics[scale=0.40]{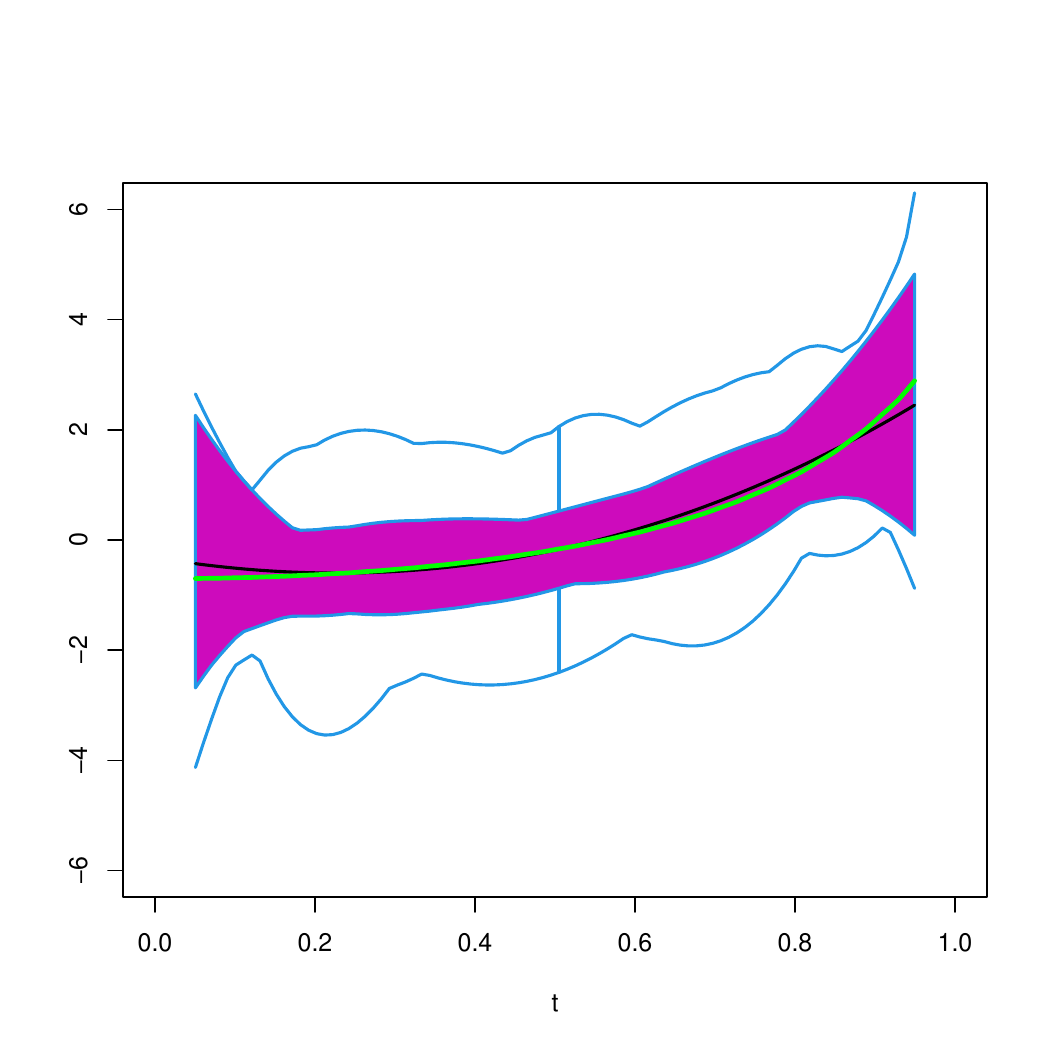}
  &  \includegraphics[scale=0.40]{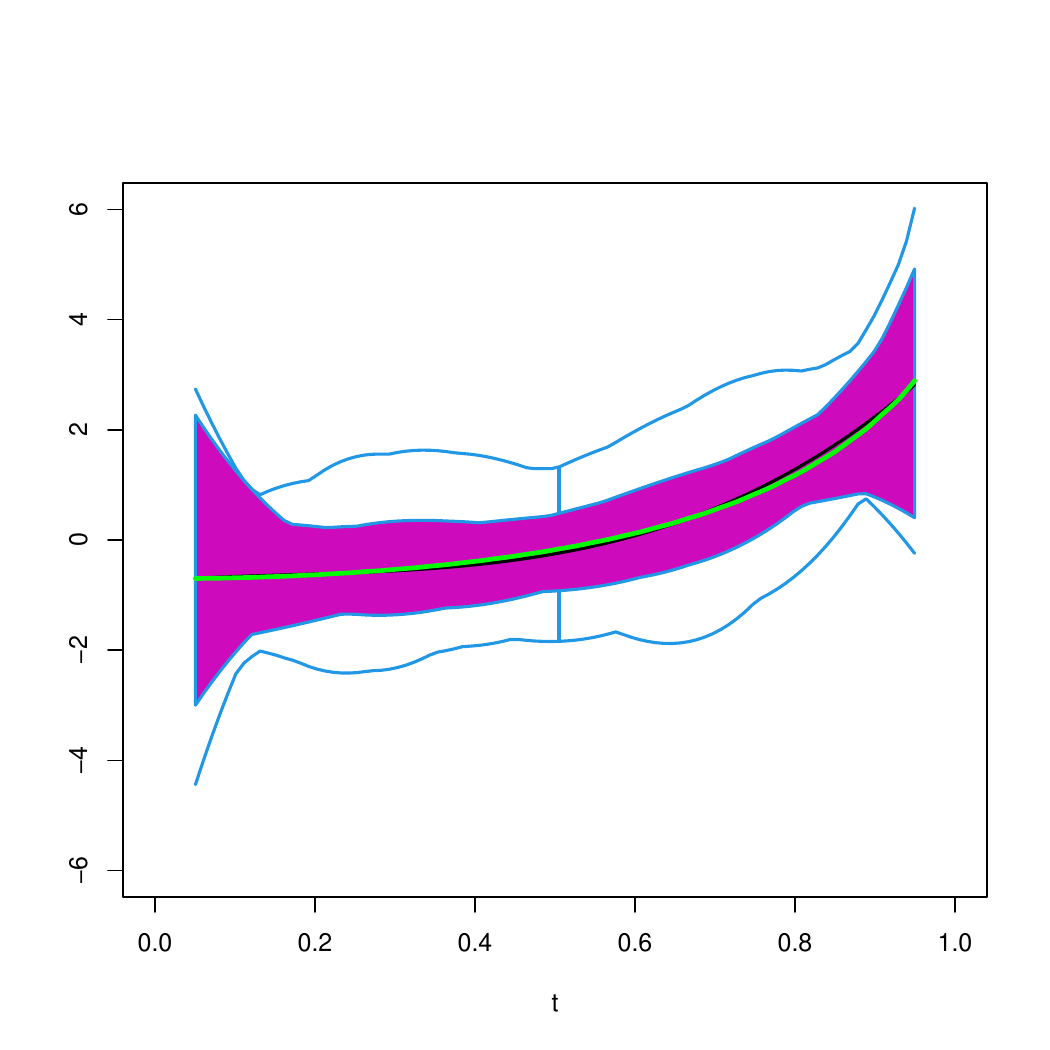}
   \\
   $\wbeta_{\wclBOX}$ & $\wbeta_{\wemeBOX}$ \\[-3ex]
  \includegraphics[scale=0.40]{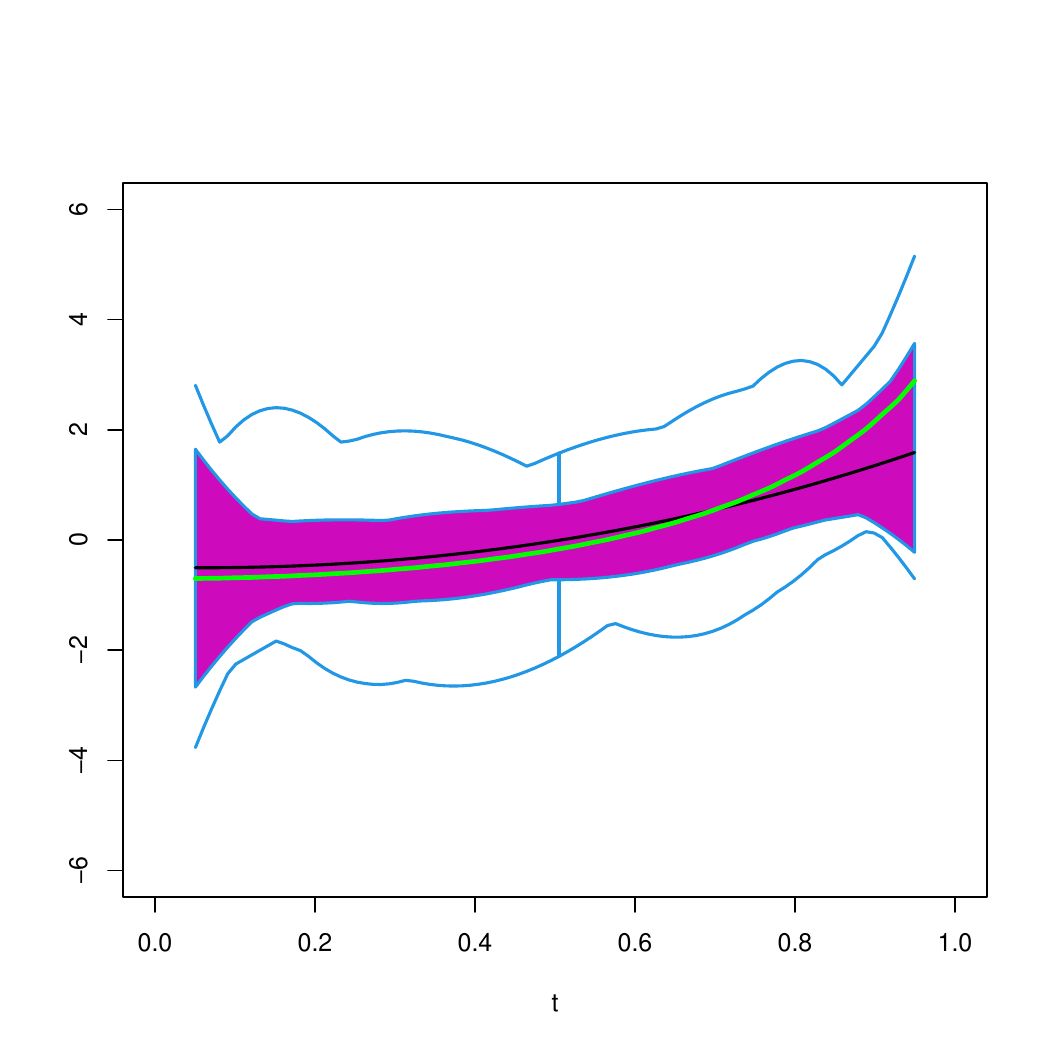}
  &  \includegraphics[scale=0.40]{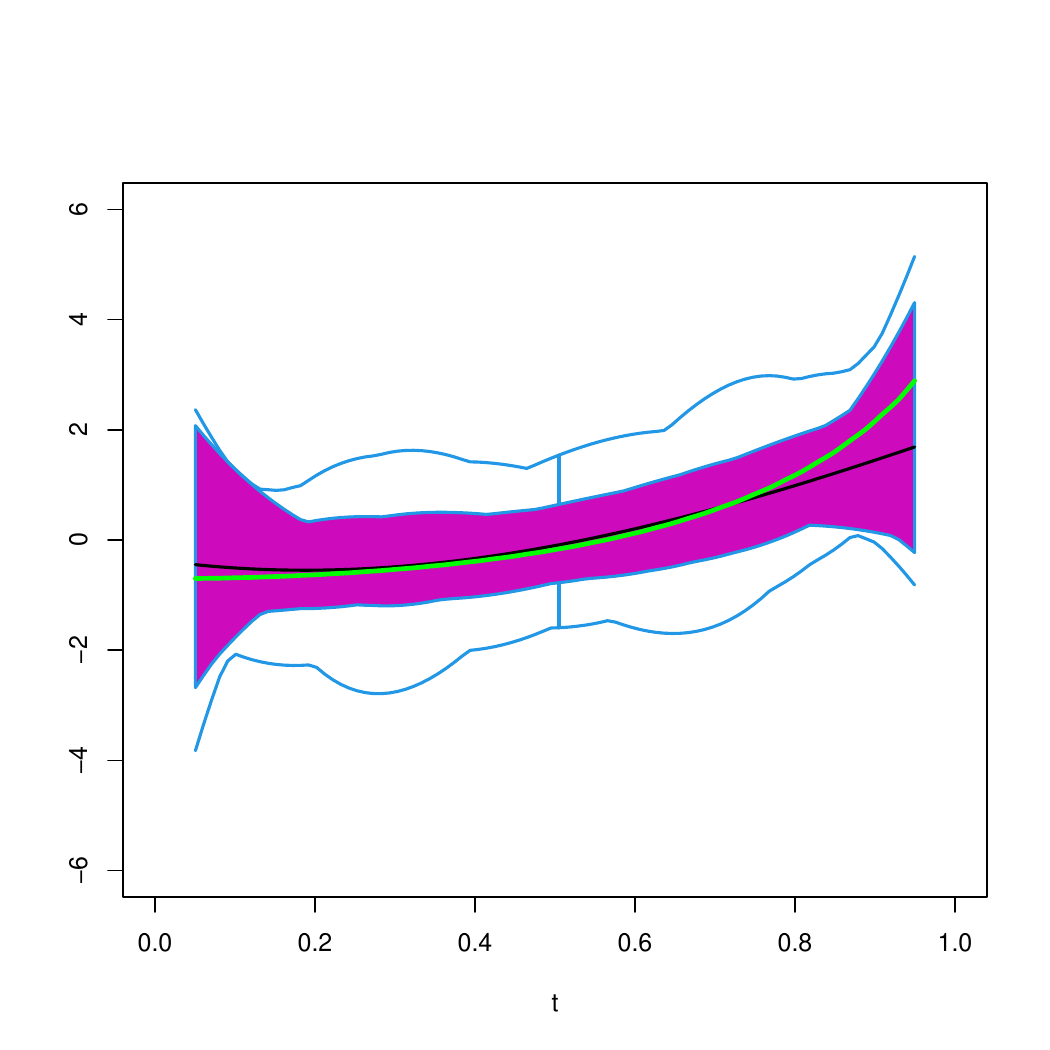}

\end{tabular}
\caption{\small \label{fig:wbeta-C15-poda5}  Functional boxplot of the estimators for $\beta_0$ under $C_{1,0.05}$  within the interval $[0.05,0.95]$. 
The true function is shown with a green dashed line, while the black solid one is the central 
curve of the $n_R = 1000$ estimates $\wbeta$.  }
\end{center} 
\end{figure}

\begin{figure}[tp]
 \begin{center}
 \footnotesize
 \renewcommand{\arraystretch}{0.2}
 \newcolumntype{M}{>{\centering\arraybackslash}m{\dimexpr.01\linewidth-1\tabcolsep}}
   \newcolumntype{G}{>{\centering\arraybackslash}m{\dimexpr.45\linewidth-1\tabcolsep}}
\begin{tabular}{GG}
  $\wbeta_{\clas}$ & $\wbeta_{\eme}$   \\[-3ex]   
\includegraphics[scale=0.40]{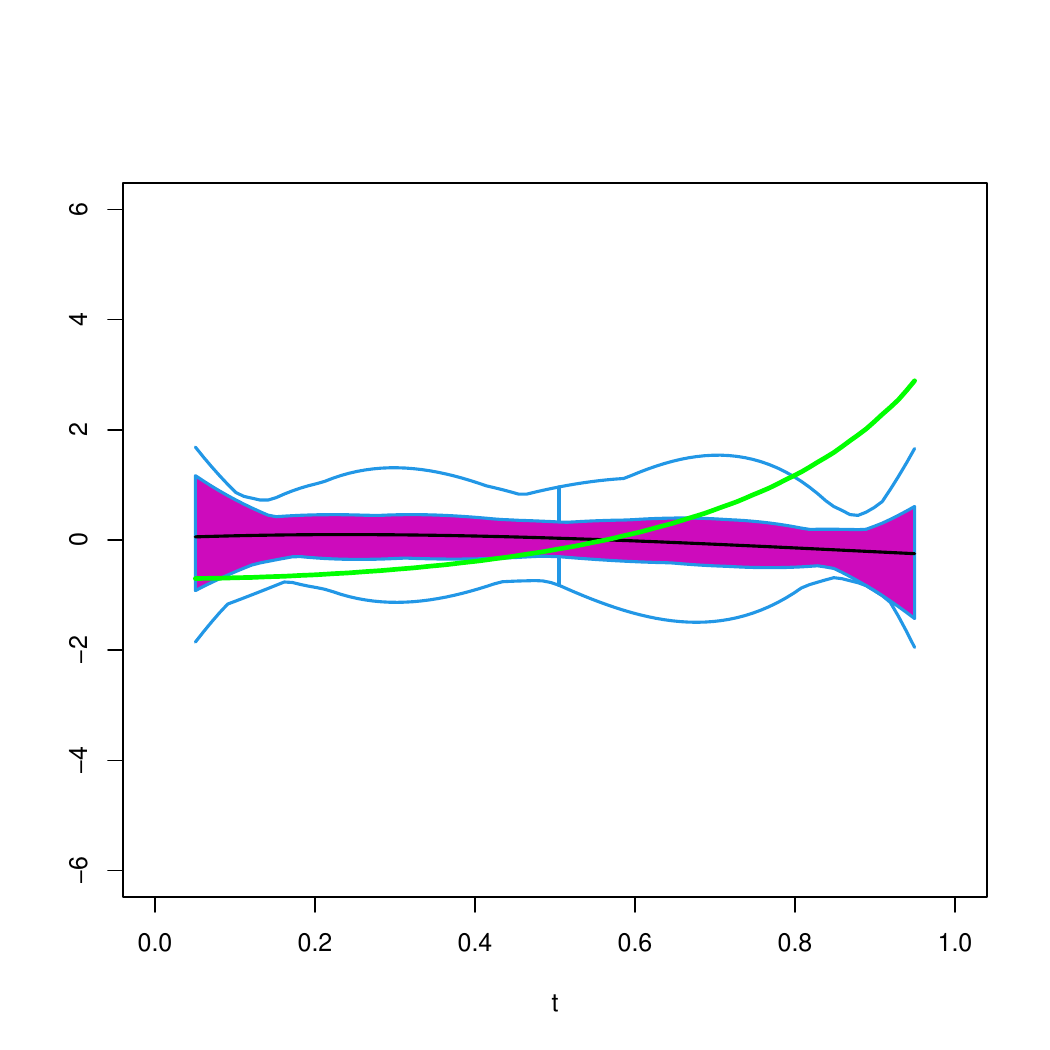}
&   \includegraphics[scale=0.40]{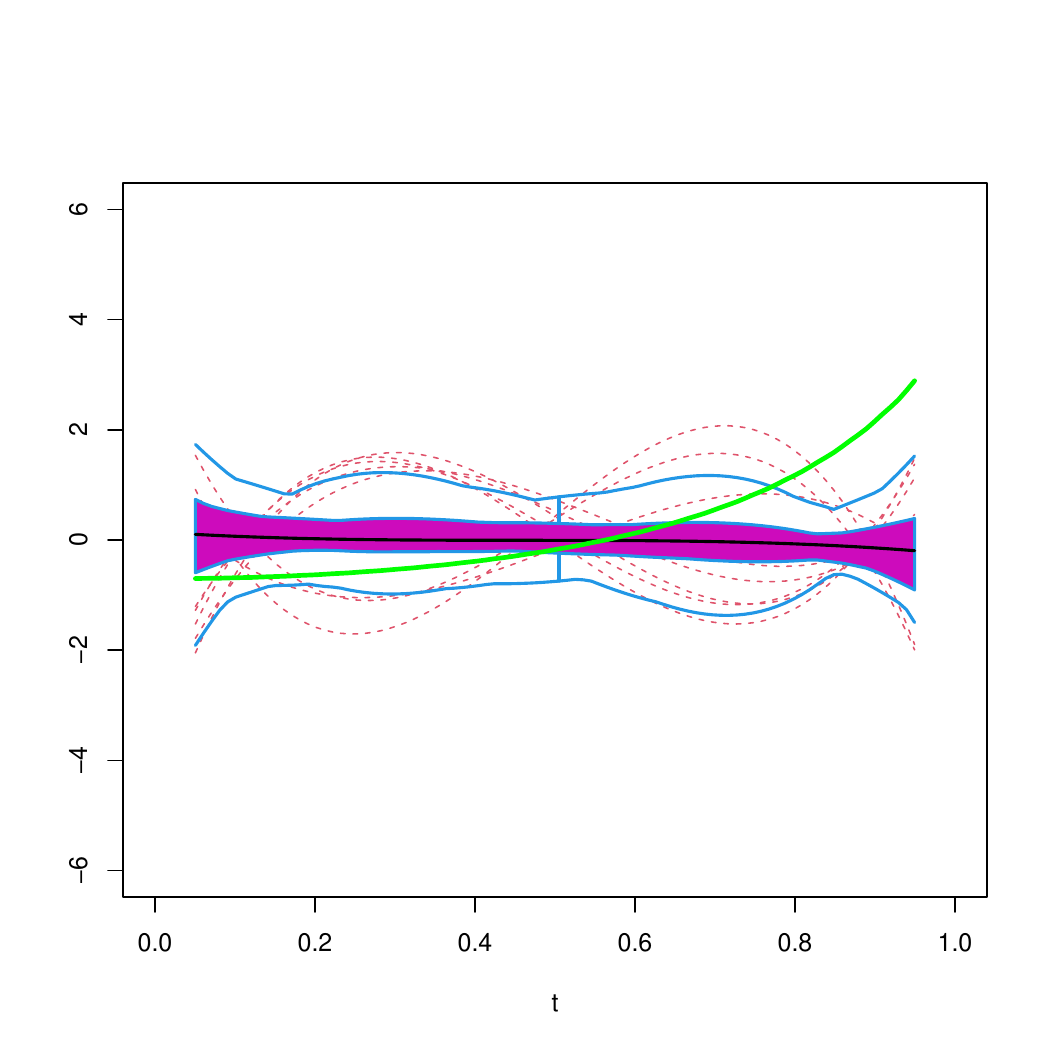}\\
   $\wbeta_{\wclHR}$ & $\wbeta_{\wemeHR}$ \\[-3ex] 
    \includegraphics[scale=0.40]{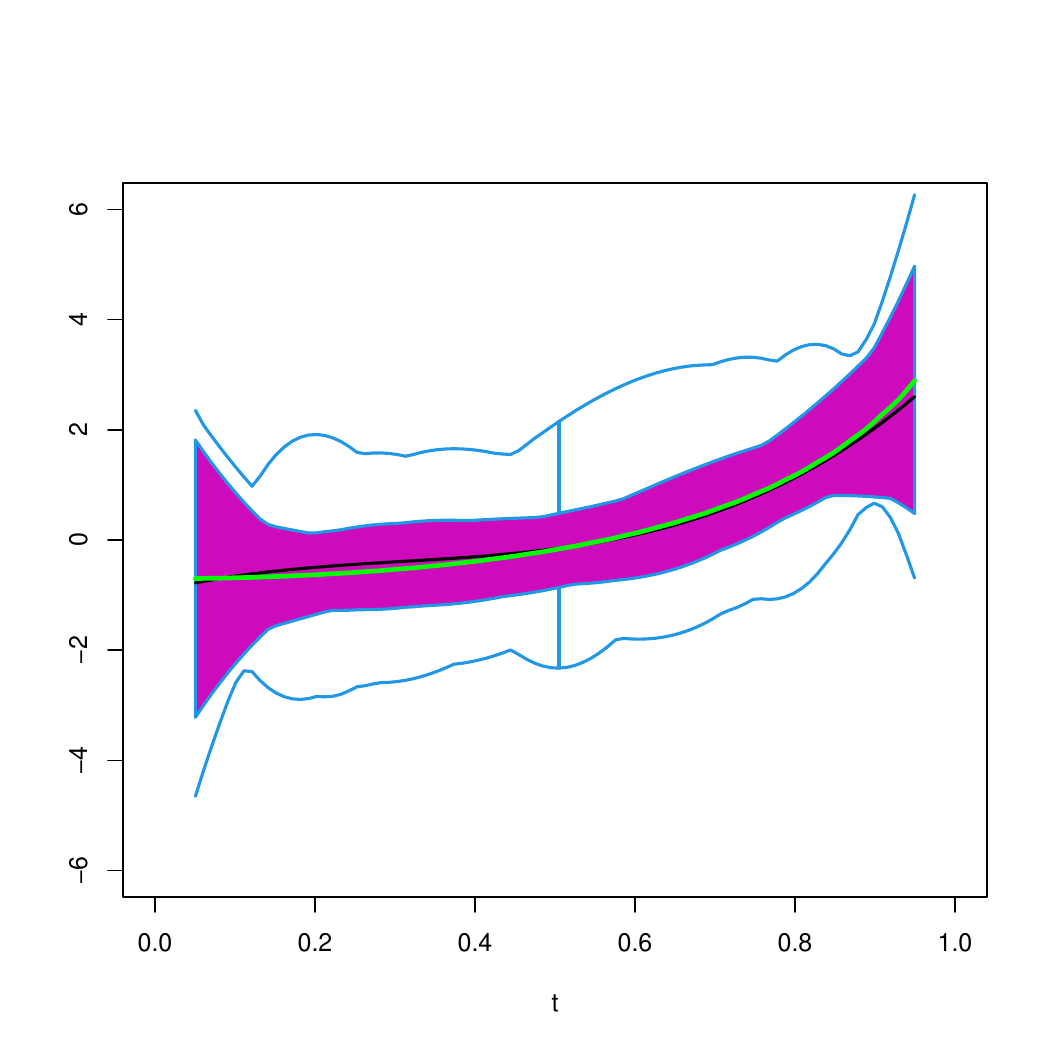}
  &  \includegraphics[scale=0.40]{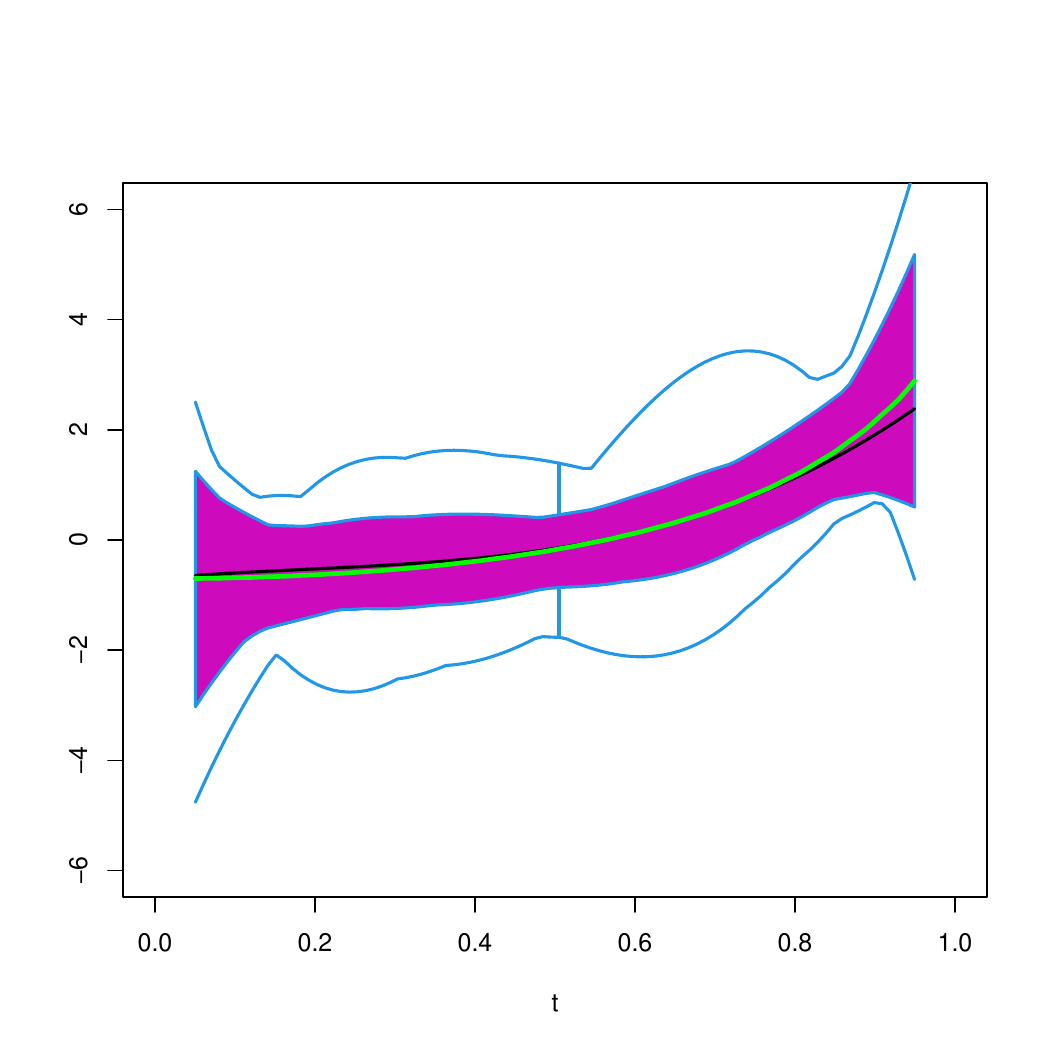}
   \\
   $\wbeta_{\wclBOX}$ & $\wbeta_{\wemeBOX}$ \\[-3ex]
    \includegraphics[scale=0.40]{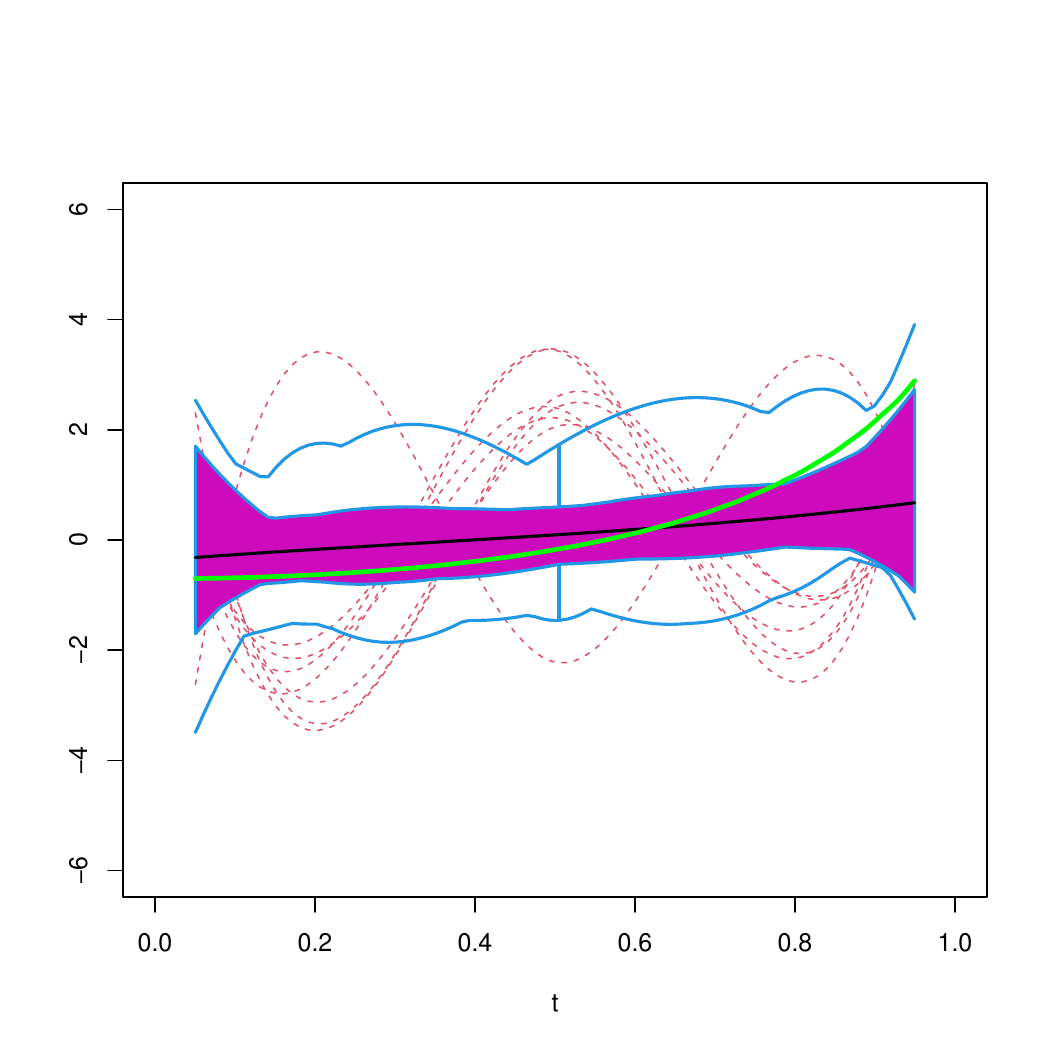}
  &  \includegraphics[scale=0.40]{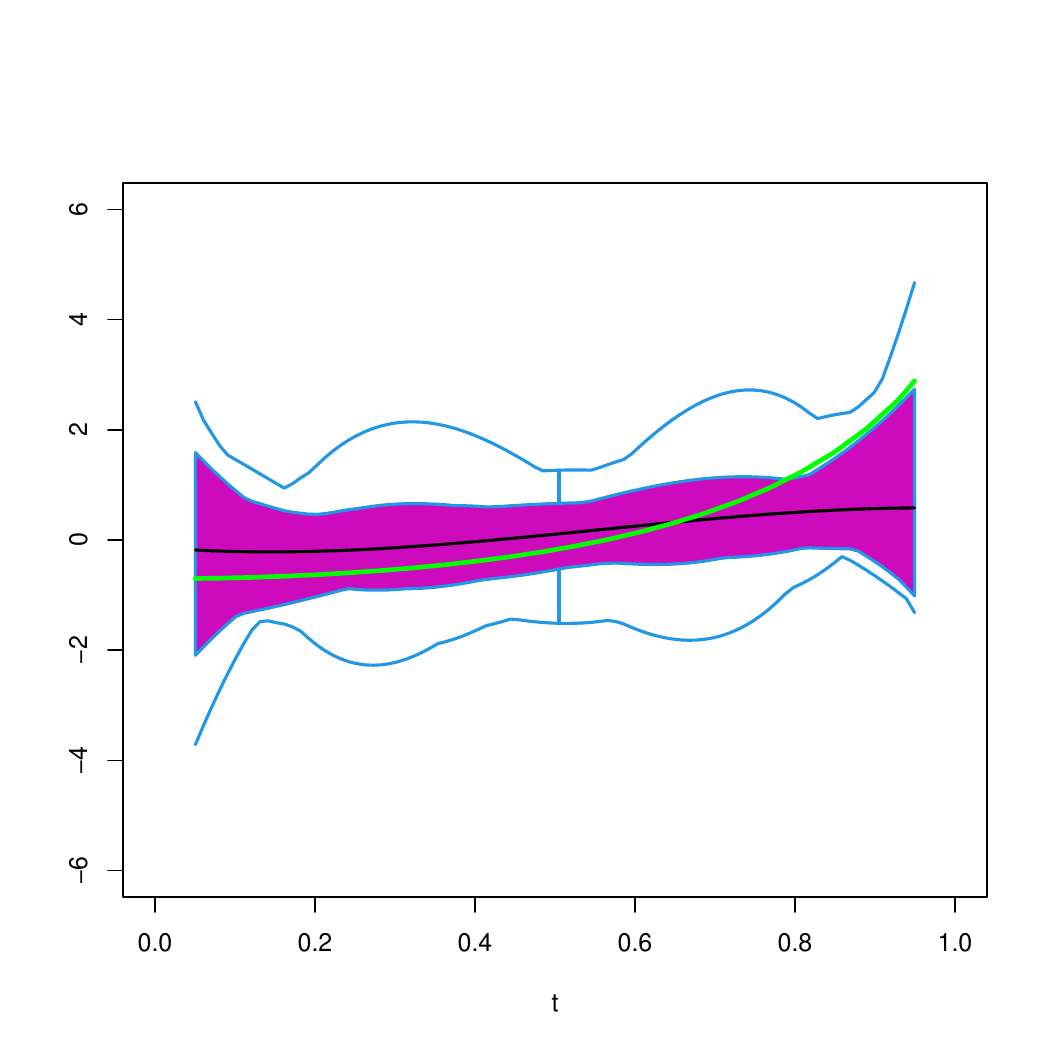}

\end{tabular}
\caption{\small \label{fig:wbeta-C110-poda5}  Functional boxplot of the estimators for $\beta_0$ under $C_{1,0.10}$  within the interval $[0.05,0.95]$. 
The true function is shown with a green dashed line, while the black solid one is the central 
curve of the $n_R = 1000$ estimates $\wbeta$. }
\end{center} 
\end{figure}  

\begin{figure}[tp]
 \begin{center}
 \footnotesize
 \renewcommand{\arraystretch}{0.2}
 \newcolumntype{M}{>{\centering\arraybackslash}m{\dimexpr.01\linewidth-1\tabcolsep}}
   \newcolumntype{G}{>{\centering\arraybackslash}m{\dimexpr.45\linewidth-1\tabcolsep}}
\begin{tabular}{GG}
  $\wbeta_{\clas}$ & $\wbeta_{\eme}$   \\[-3ex]      
 \includegraphics[scale=0.40]{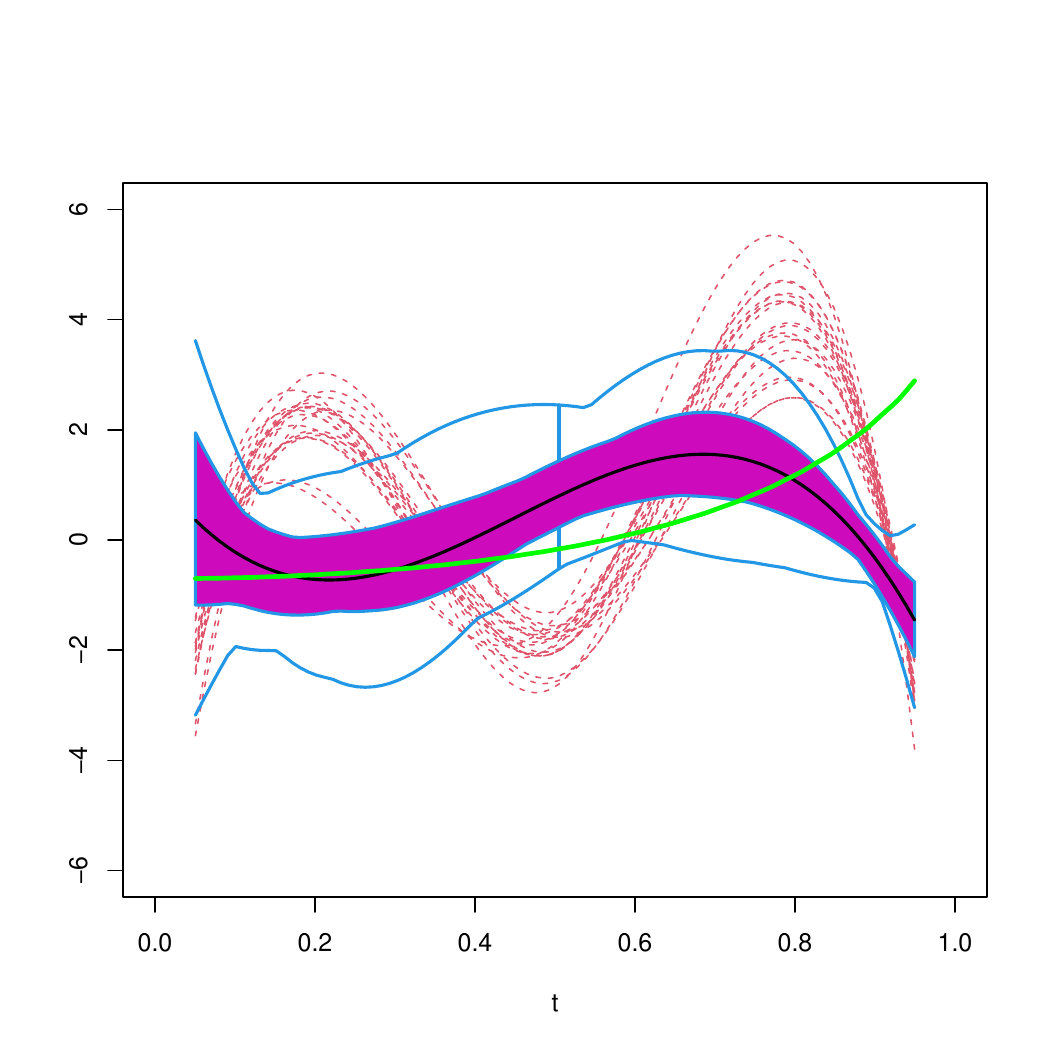}
 &  \includegraphics[scale=0.40]{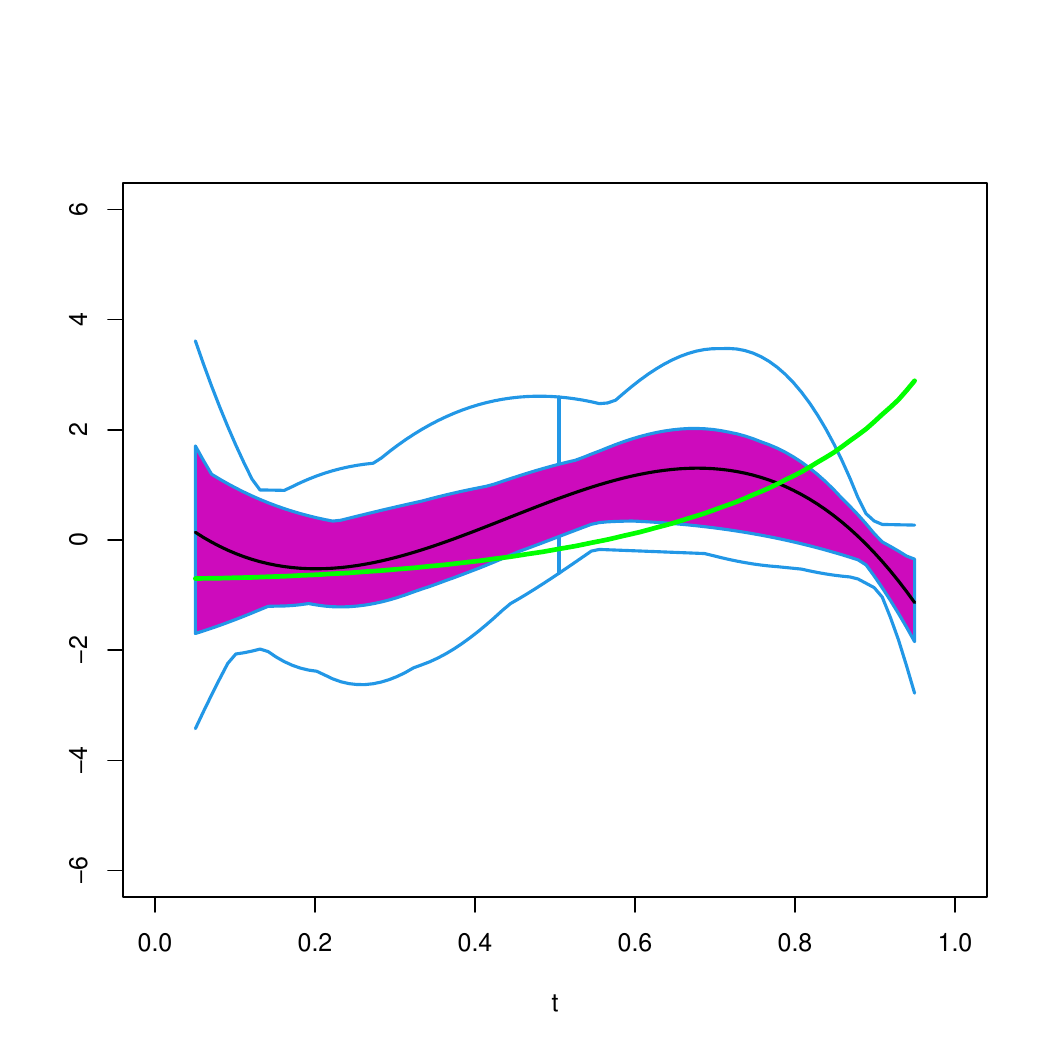}\\
   $\wbeta_{\wclHR}$ & $\wbeta_{\wemeHR}$ \\[-3ex] 
    \includegraphics[scale=0.40]{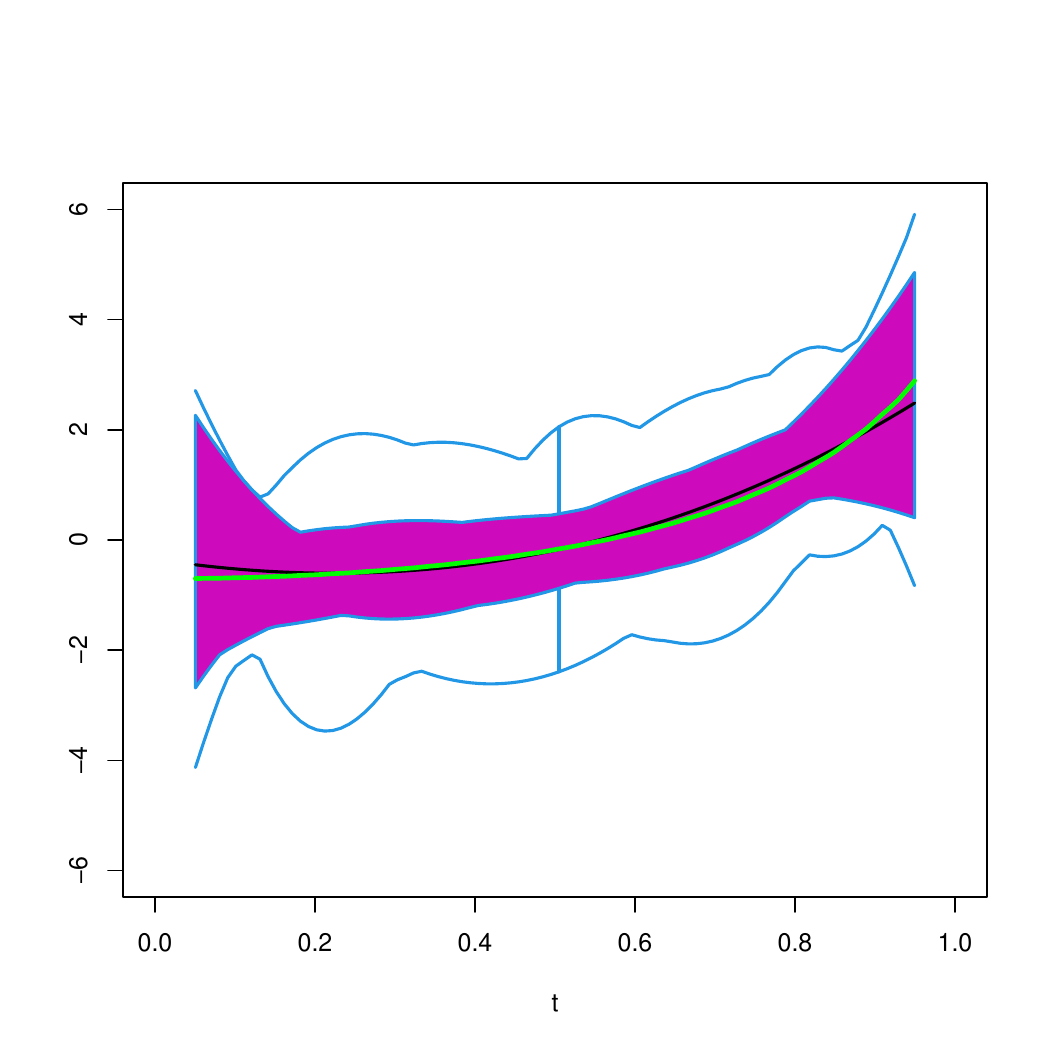}
  &  \includegraphics[scale=0.40]{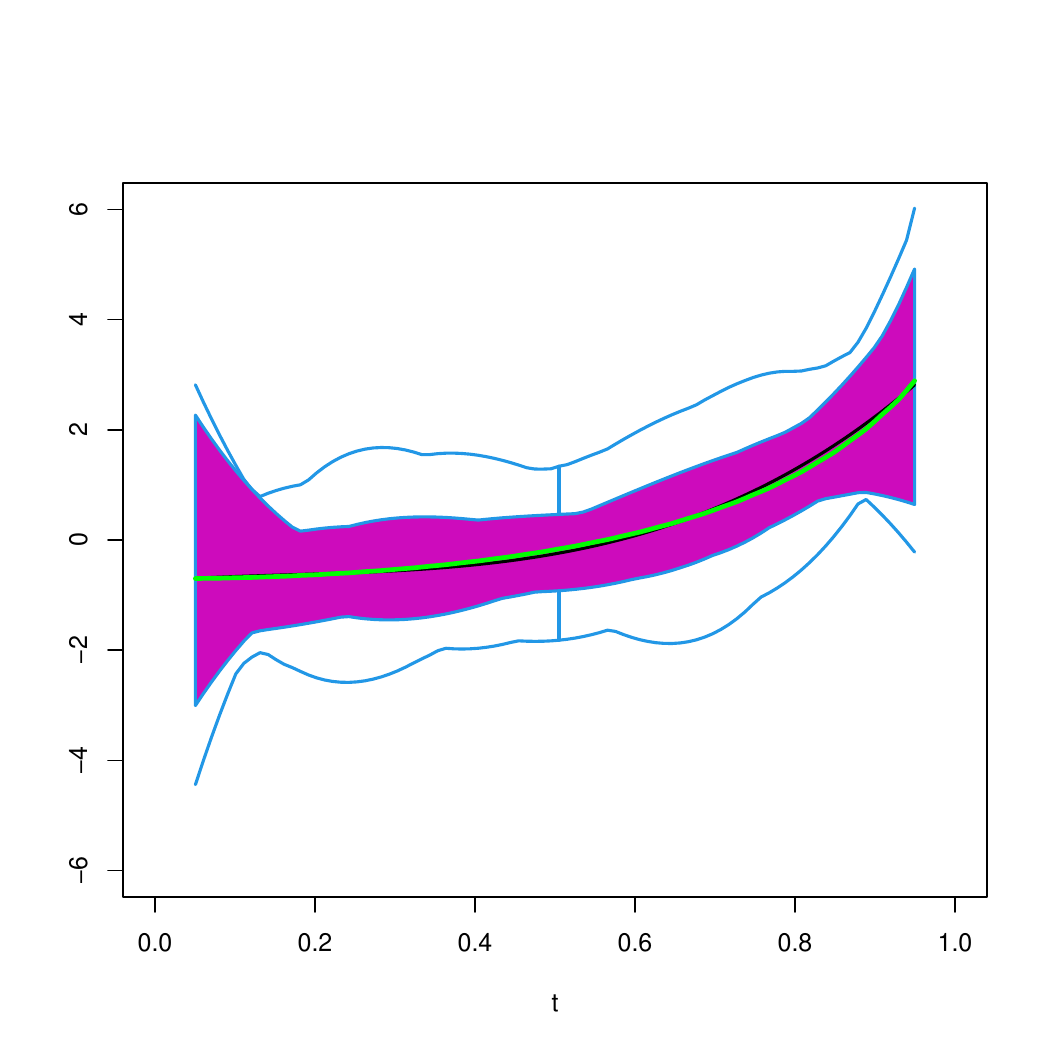}
   \\
   $\wbeta_{\wclBOX}$ & $\wbeta_{\wemeBOX}$ \\[-3ex]
 \includegraphics[scale=0.40]{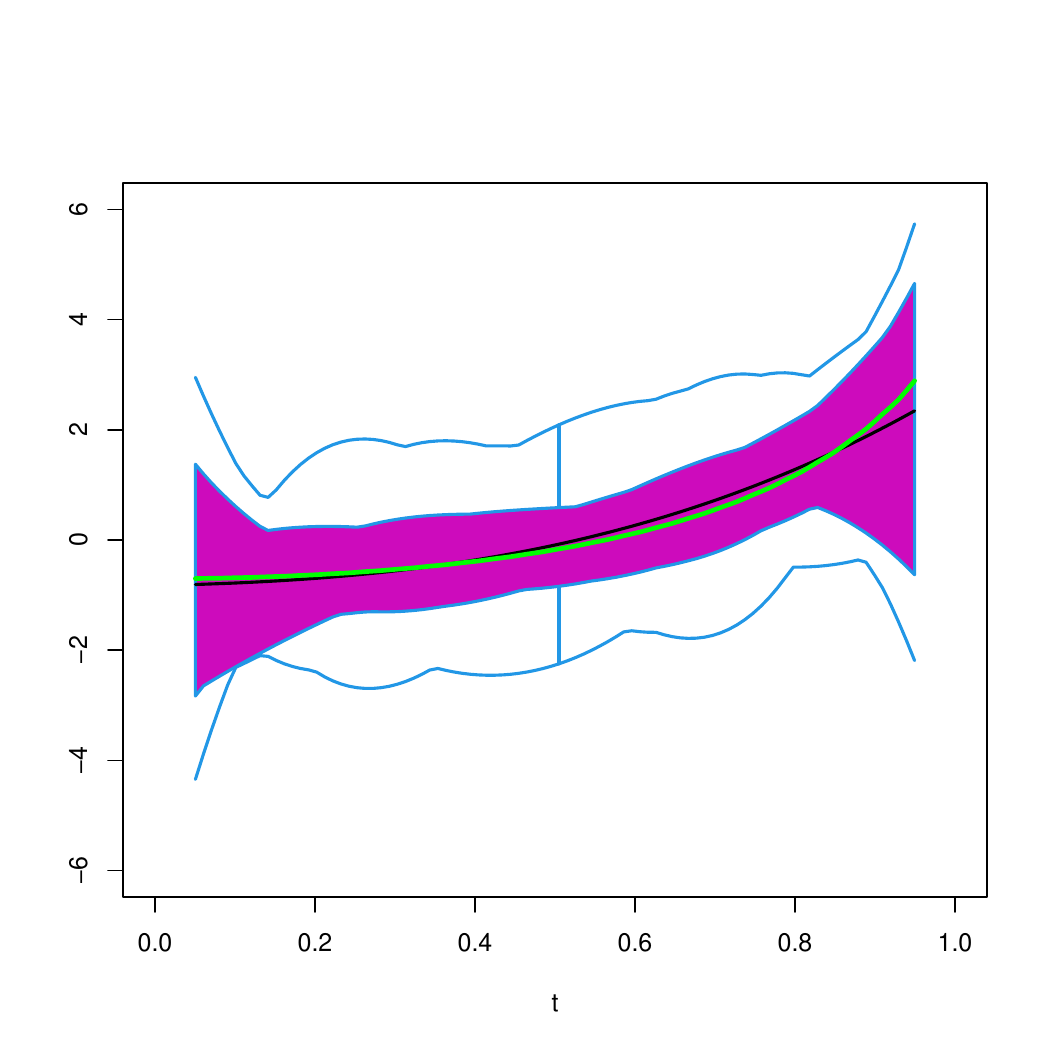}
  &  \includegraphics[scale=0.40]{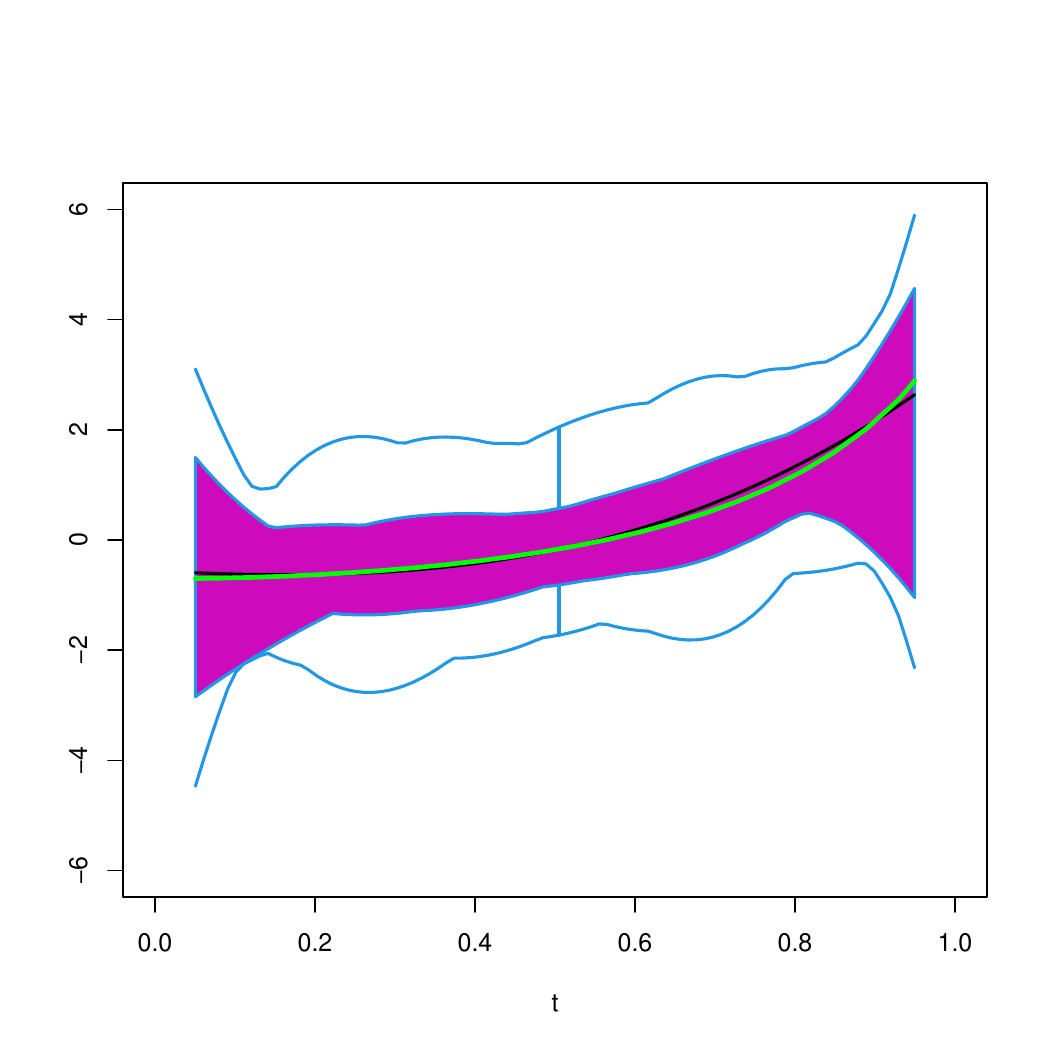}
  
\end{tabular}
\caption{\small \label{fig:wbeta-C25-poda5}  Functional boxplot of the estimators for $\beta_0$ under $C_{2,0.05}$  within the interval $[0.05,0.95]$. 
The true function is shown with a green dashed line, while the black solid one is the central 
curve of the $n_R = 1000$ estimates $\wbeta$.  }
\end{center} 
\end{figure}

\begin{figure}[tp]
 \begin{center}
 \footnotesize
 \renewcommand{\arraystretch}{0.2}
 \newcolumntype{M}{>{\centering\arraybackslash}m{\dimexpr.01\linewidth-1\tabcolsep}}
   \newcolumntype{G}{>{\centering\arraybackslash}m{\dimexpr.45\linewidth-1\tabcolsep}}
\begin{tabular}{GG}
  $\wbeta_{\clas}$ & $\wbeta_{\eme}$   \\[-3ex]      
\includegraphics[scale=0.40]{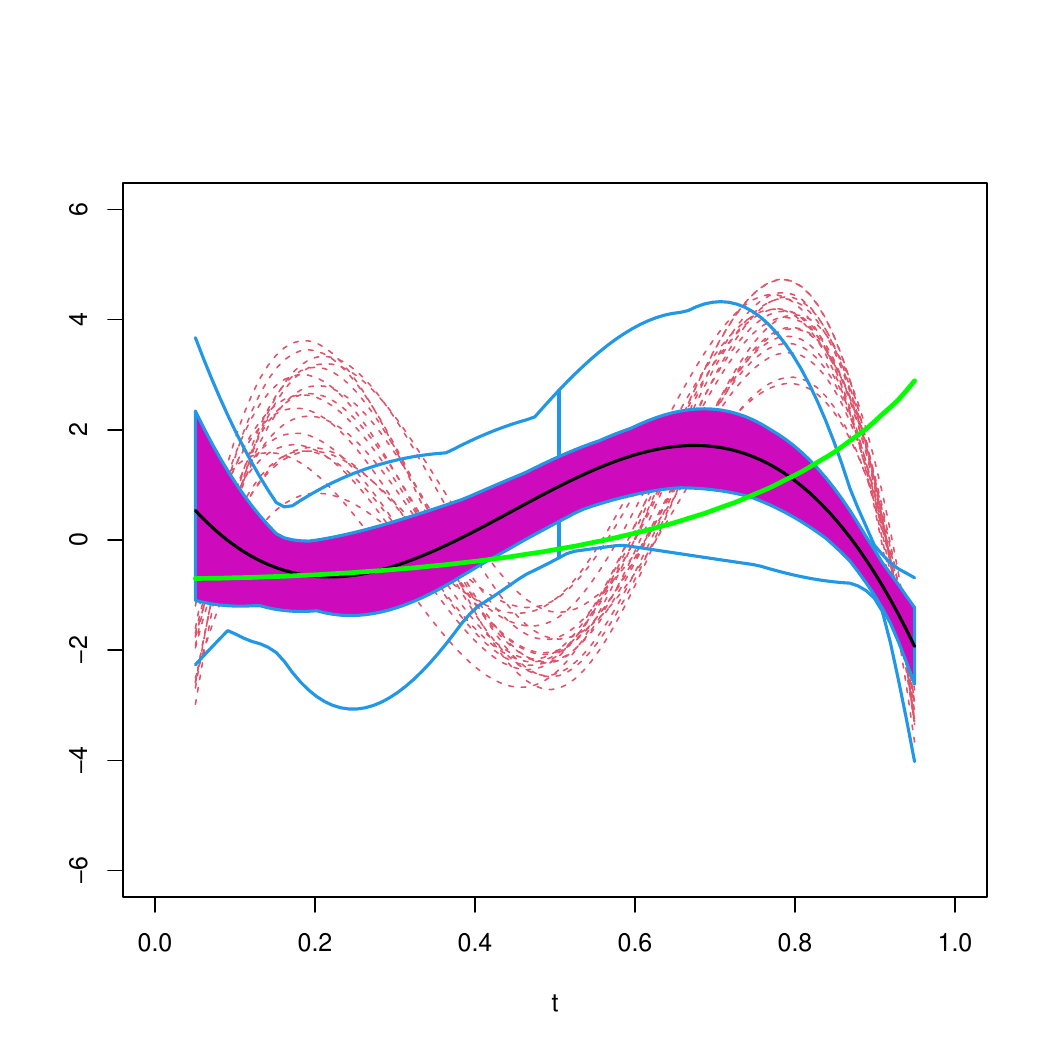}
 &  \includegraphics[scale=0.40]{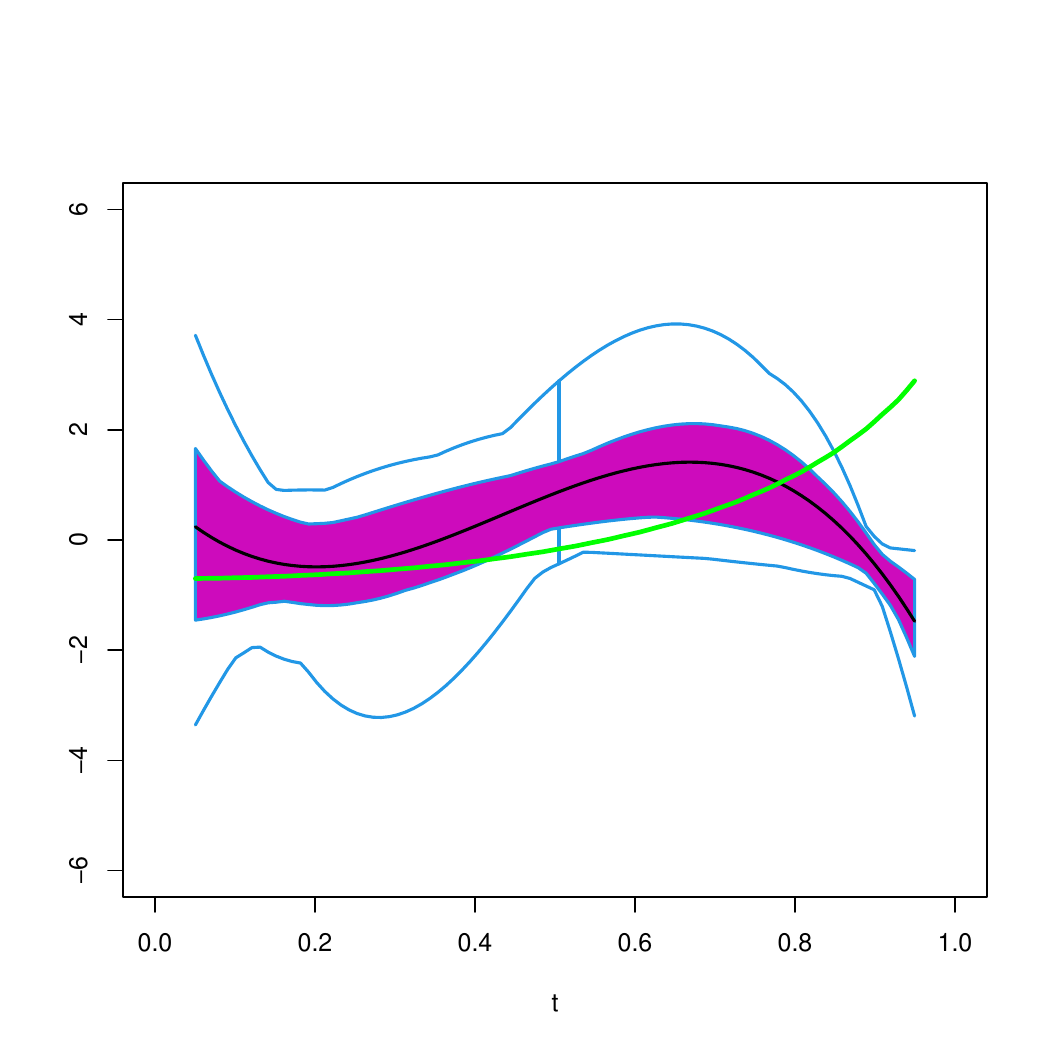}\\
   $\wbeta_{\wclHR}$ & $\wbeta_{\wemeHR}$ \\[-3ex] 
     \includegraphics[scale=0.40]{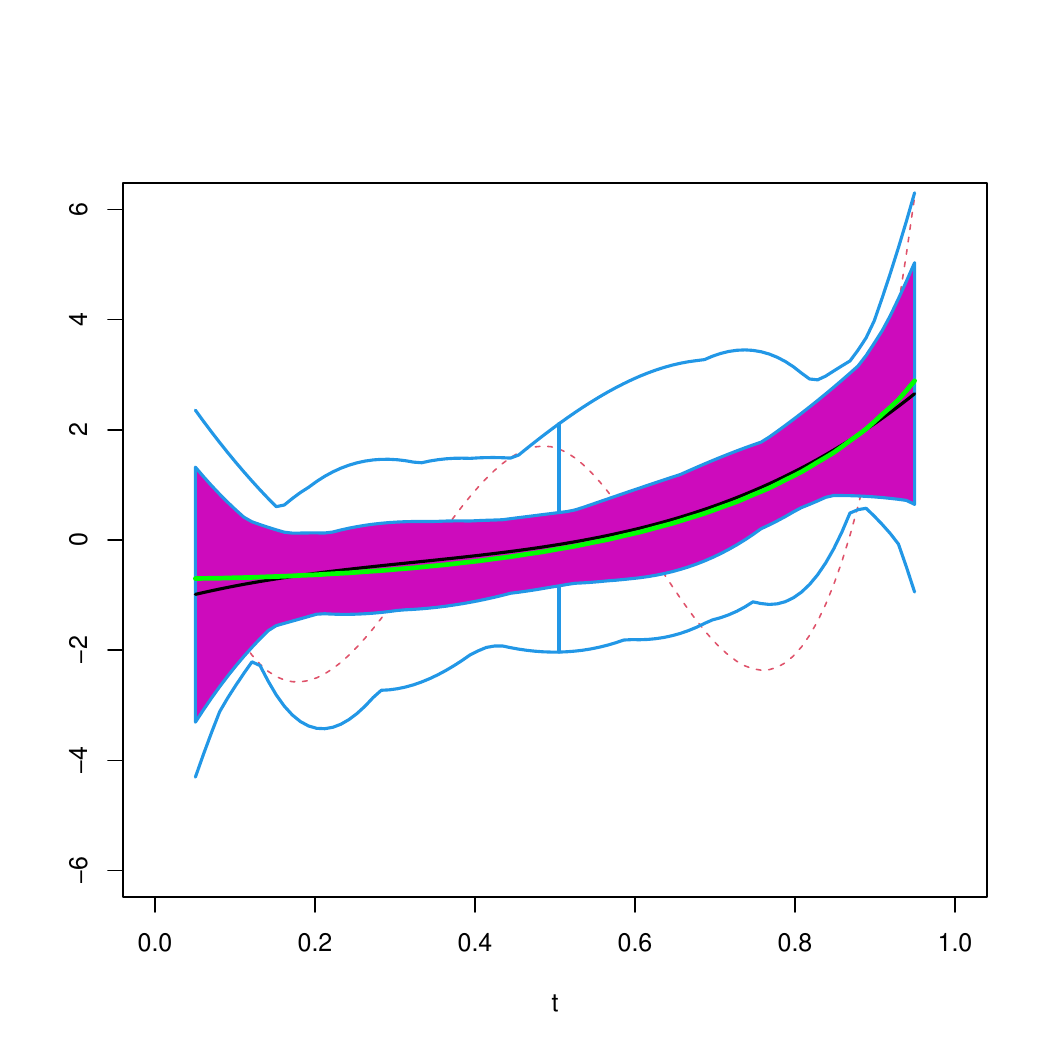}
  &  \includegraphics[scale=0.40]{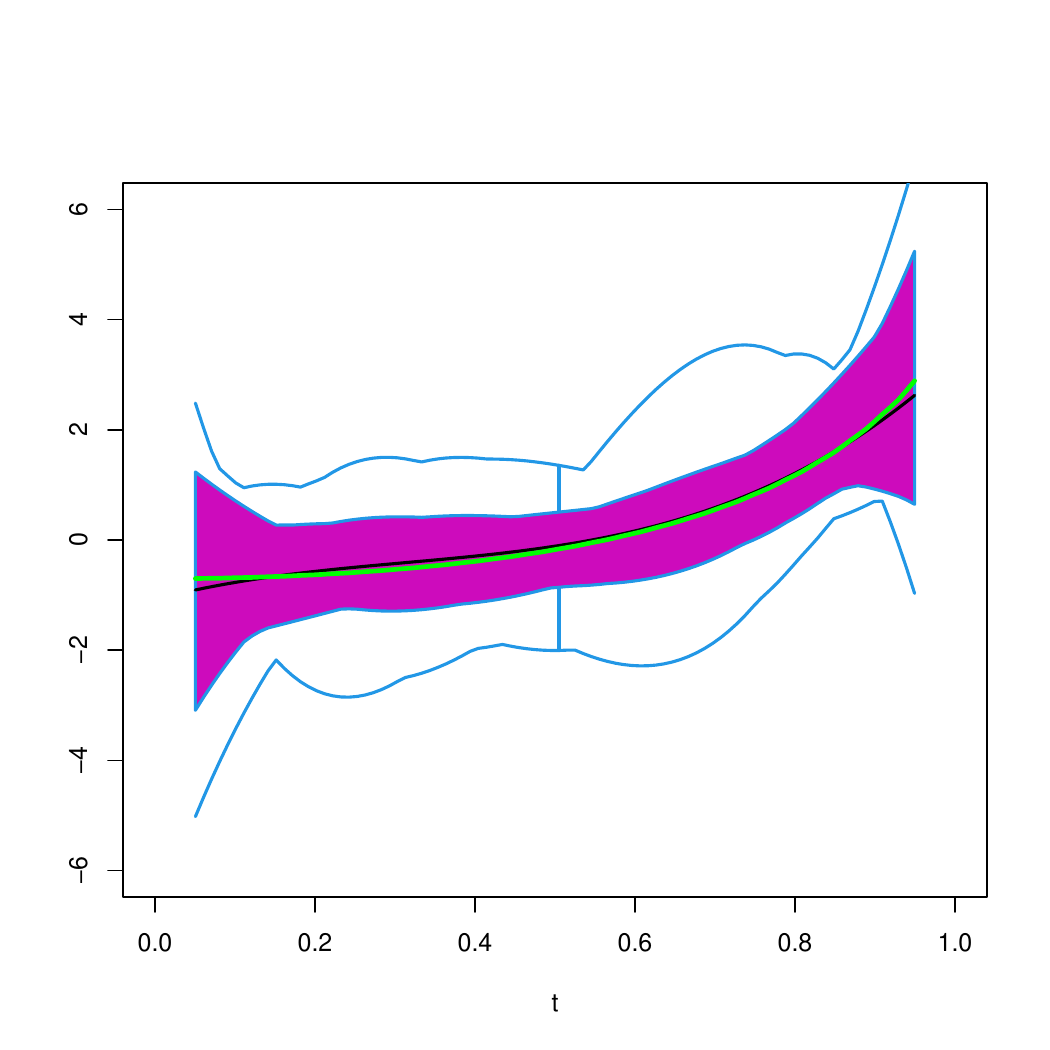}
     \\
   $\wbeta_{\wclBOX}$ & $\wbeta_{\wemeBOX}$ \\[-3ex]
    \includegraphics[scale=0.40]{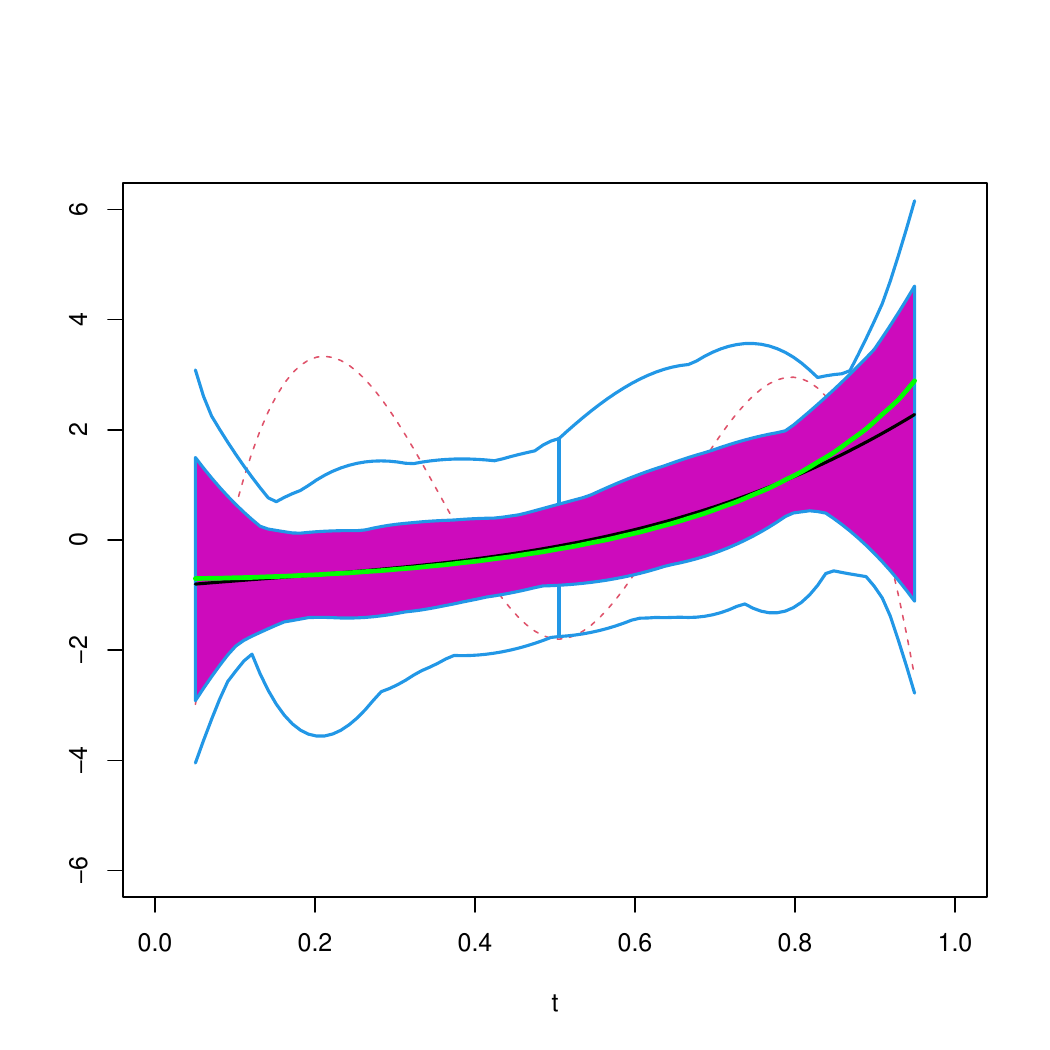}
  &  \includegraphics[scale=0.40]{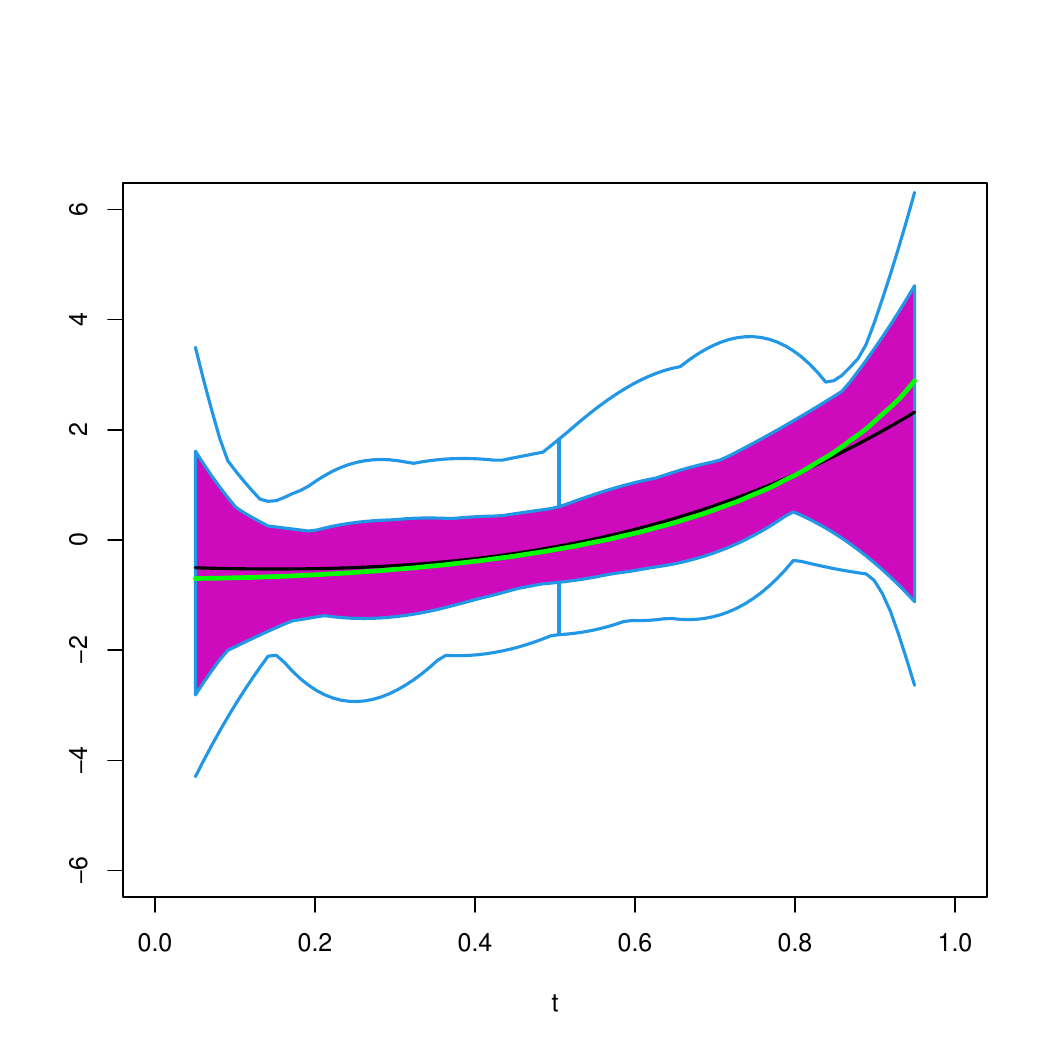}
 
\end{tabular}
\caption{\small \label{fig:wbeta-C210-poda5}  Functional boxplot of the estimators for $\beta_0$ under $C_{2,0.10}$  within the interval $[0.05,0.95]$. 
The true function is shown with a green dashed line, while the black solid one is the central 
curve of the $n_R = 1000$ estimates $\wbeta$.  }
\end{center} 
\end{figure}

\begin{figure}[tp]
 \begin{center}
 \footnotesize
 \renewcommand{\arraystretch}{0.2}
 \newcolumntype{M}{>{\centering\arraybackslash}m{\dimexpr.01\linewidth-1\tabcolsep}}
   \newcolumntype{G}{>{\centering\arraybackslash}m{\dimexpr.45\linewidth-1\tabcolsep}}
\begin{tabular}{GG}
  $\wbeta_{\clas}$ & $\wbeta_{\eme}$   \\[-3ex] 
\includegraphics[scale=0.40]{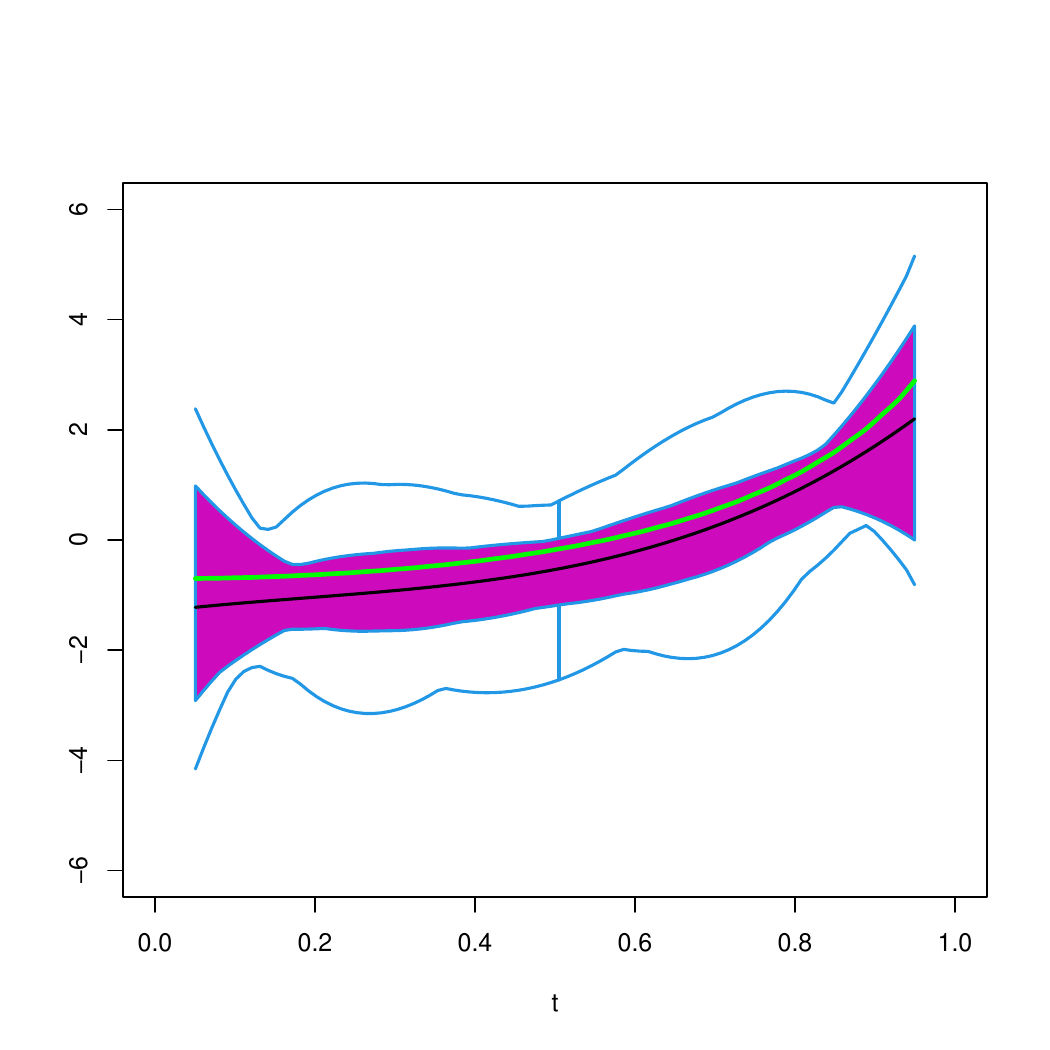}
 &  \includegraphics[scale=0.40]{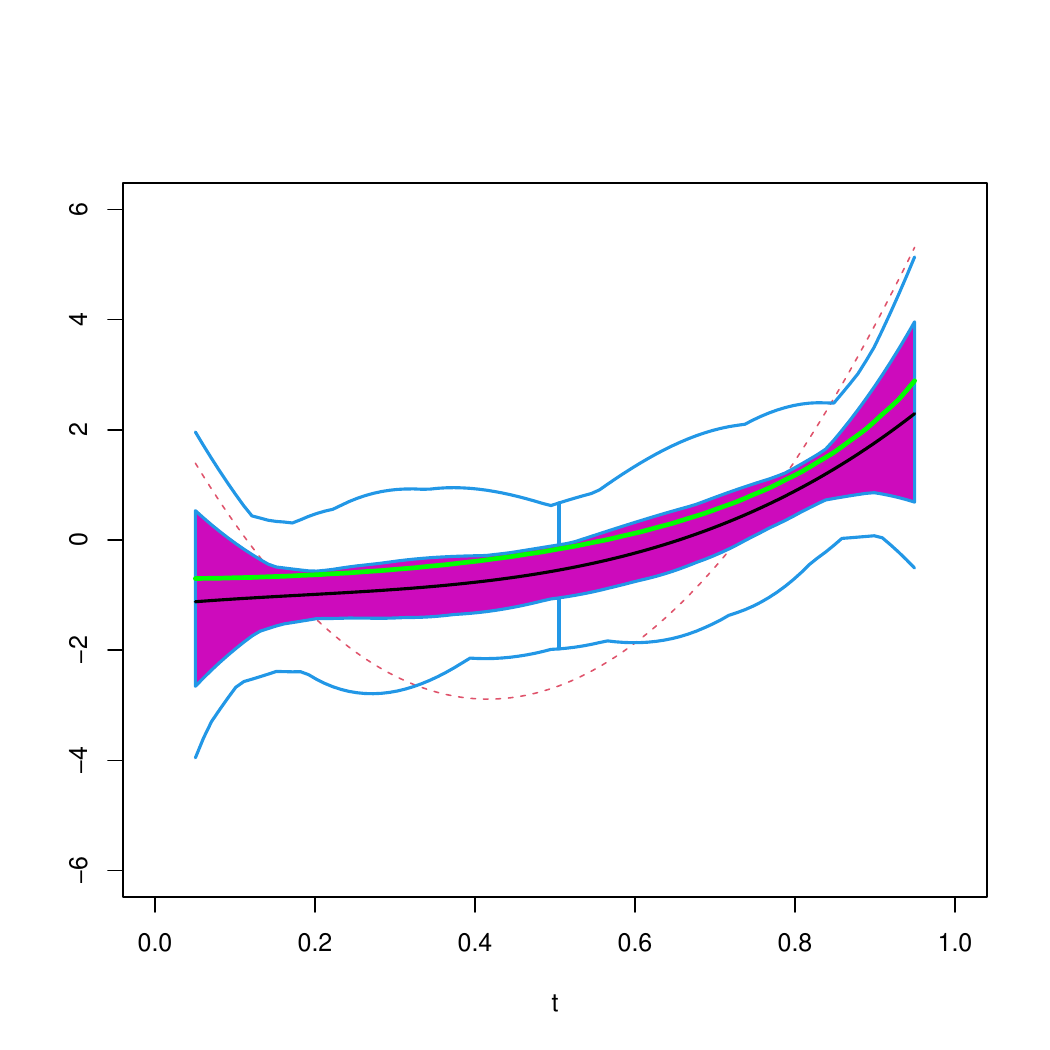}\\
   $\wbeta_{\wclHR}$ & $\wbeta_{\wemeHR}$ \\[-3ex] 
    \includegraphics[scale=0.40]{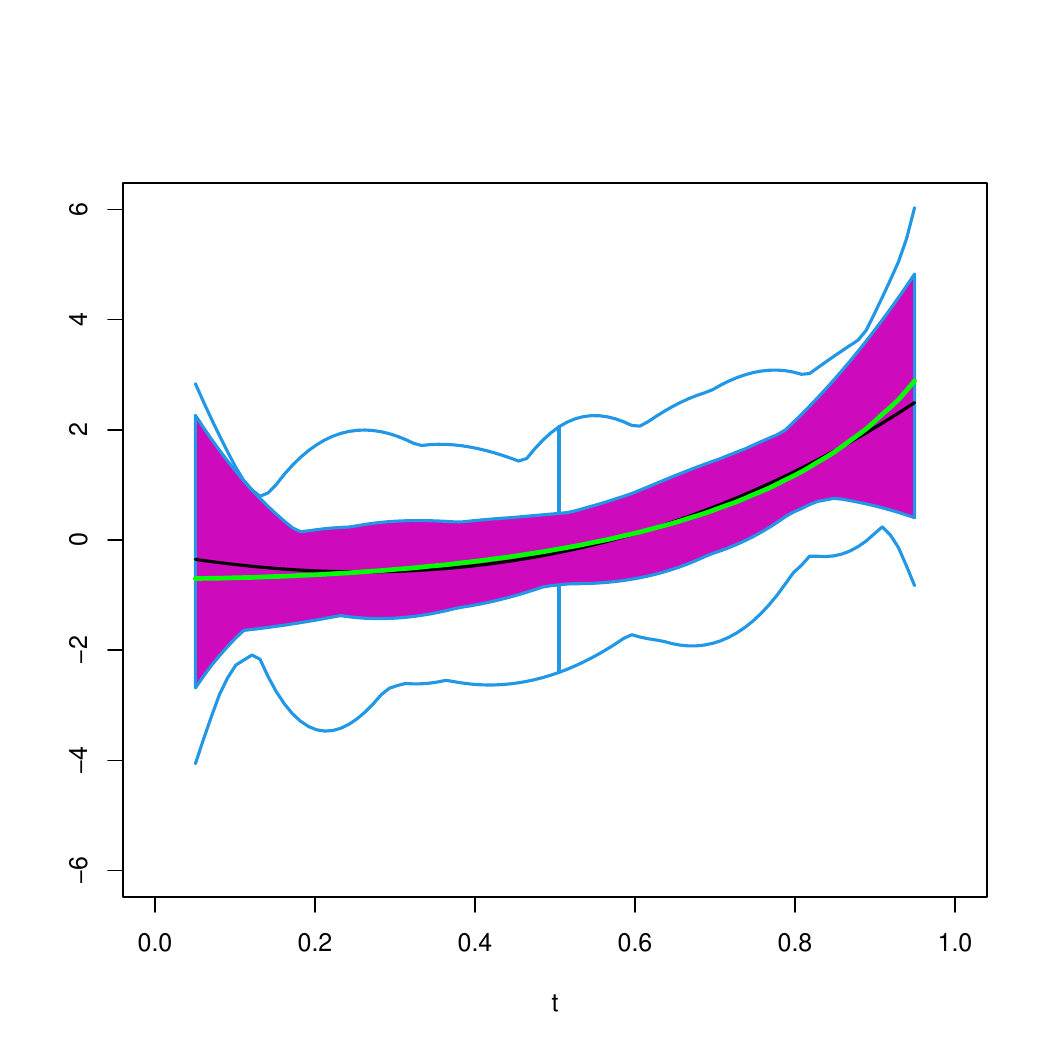}
  &  \includegraphics[scale=0.40]{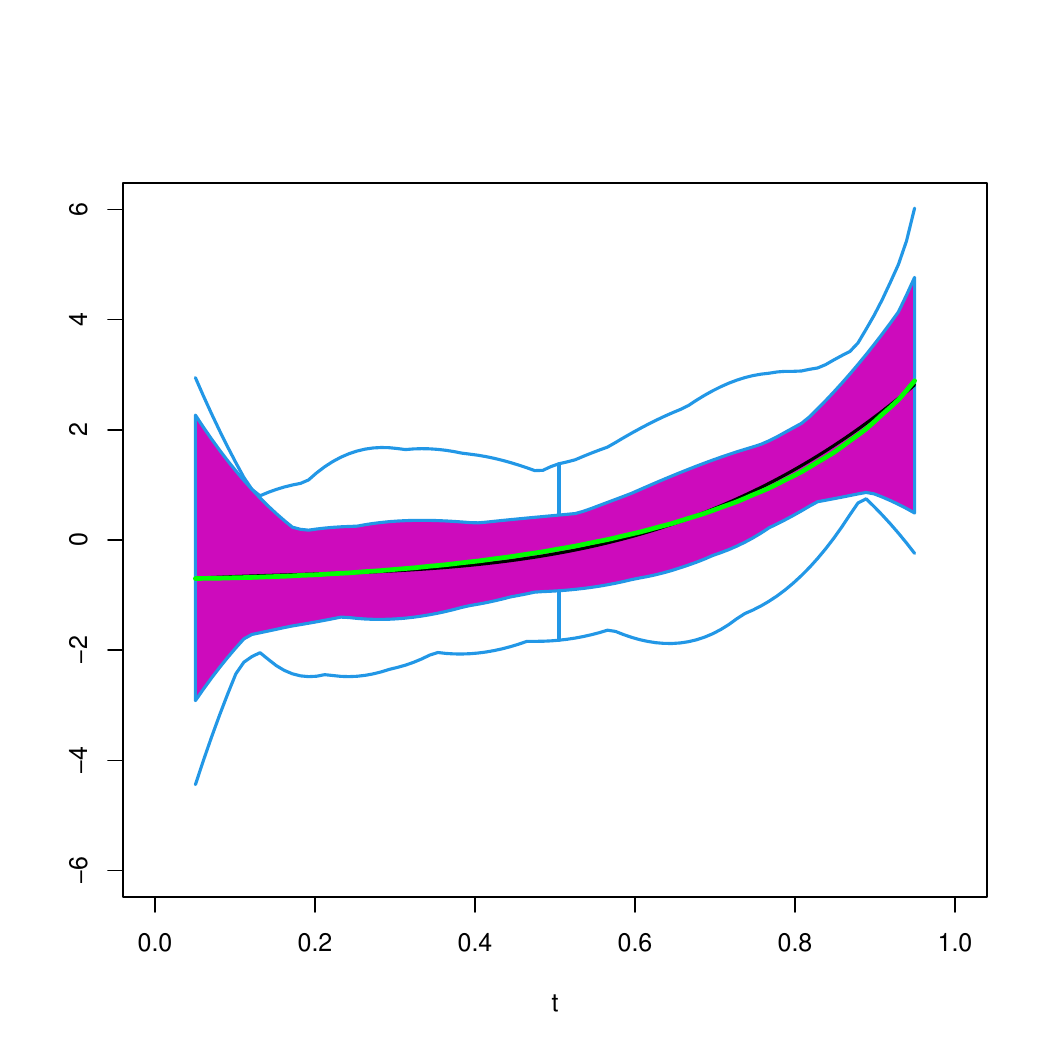}
   \\
   $\wbeta_{\wclBOX}$ & $\wbeta_{\wemeBOX}$ \\[-3ex]
  \includegraphics[scale=0.40]{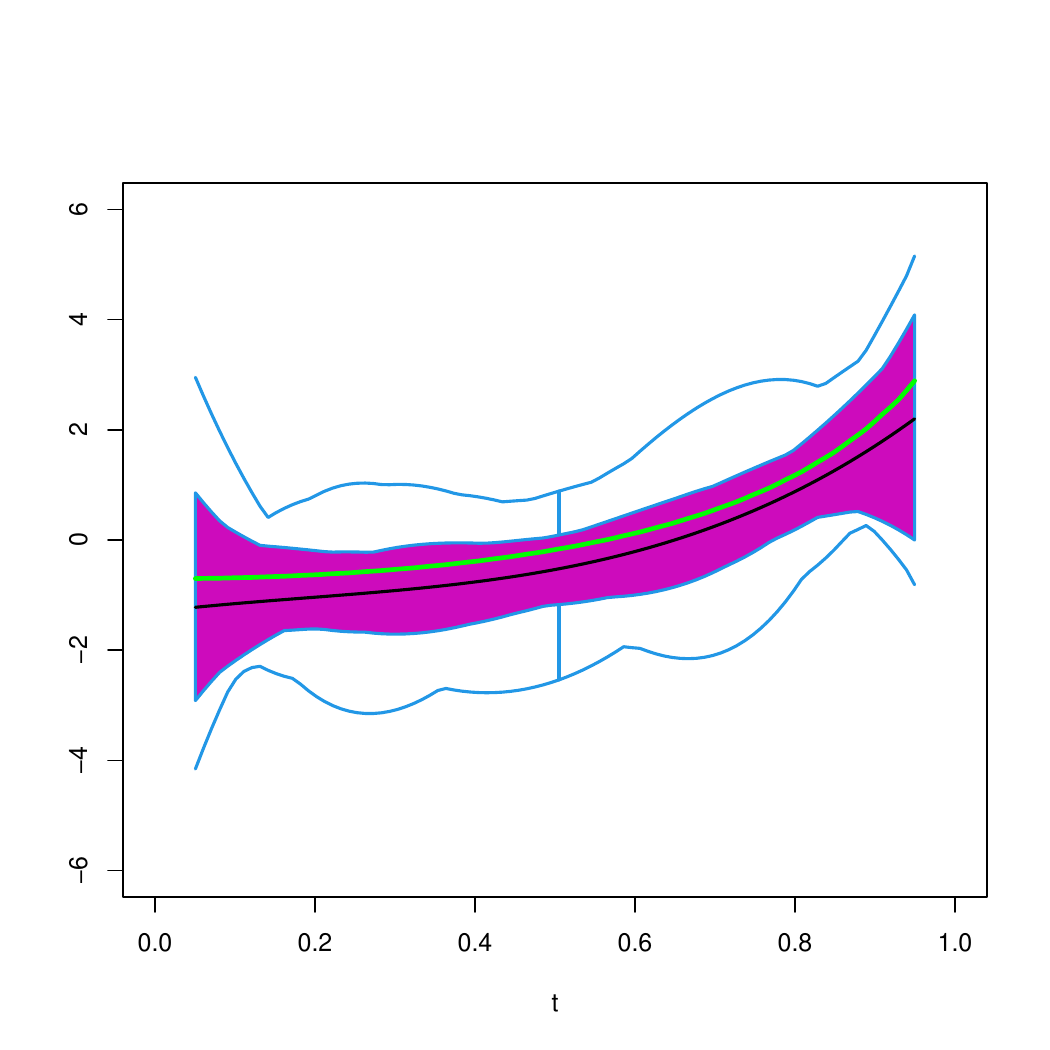}
  &  \includegraphics[scale=0.40]{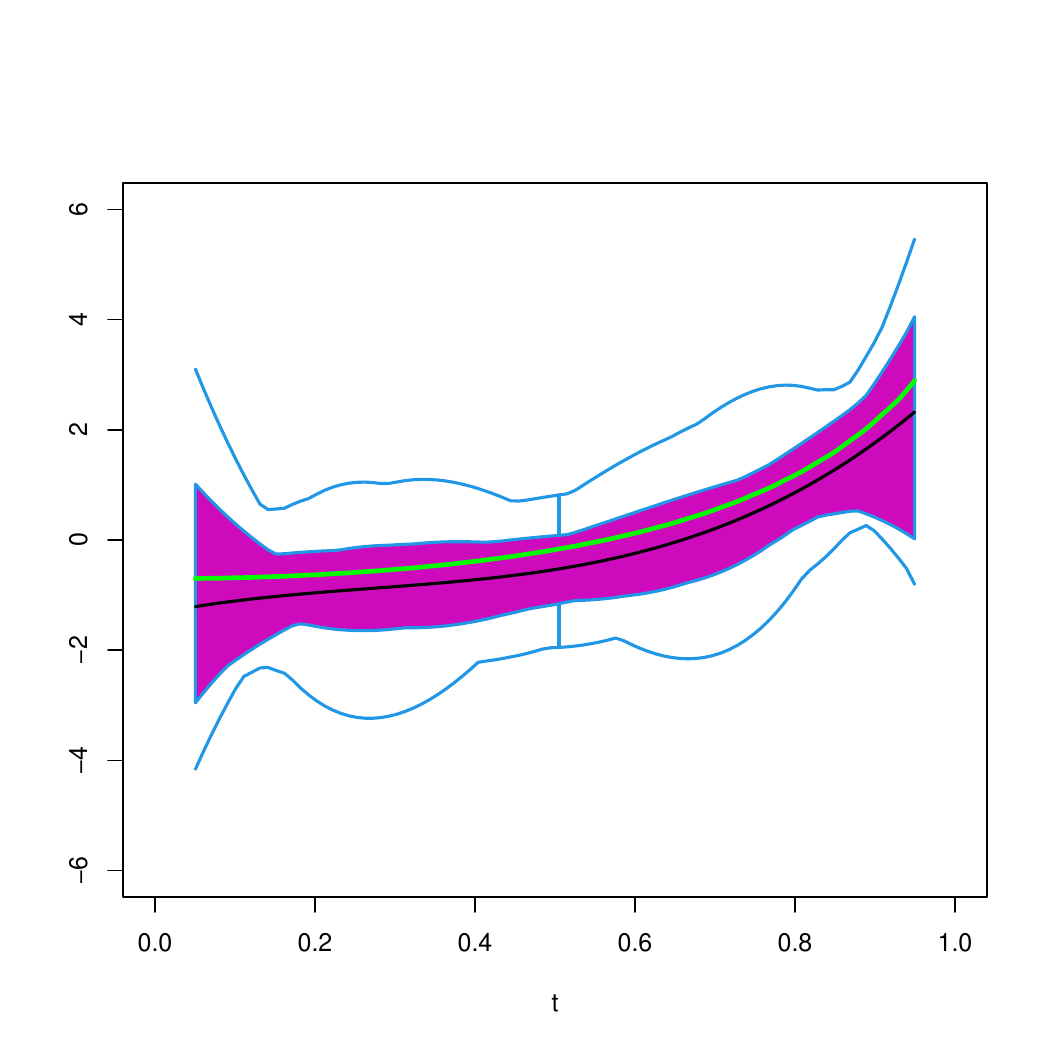}

\end{tabular}
\caption{\small \label{fig:wbeta-C35-poda5}  Functional boxplot of the estimators for $\beta_0$ under $C_{3,0.05}$  within the interval $[0.05,0.95]$. 
The true function is shown with a green dashed line, while the black solid one is the central 
curve of the $n_R = 1000$ estimates $\wbeta$.  }
\end{center} 
\end{figure}

\begin{figure}[tp]
 \begin{center}
 \footnotesize
 \renewcommand{\arraystretch}{0.2}
 \newcolumntype{M}{>{\centering\arraybackslash}m{\dimexpr.01\linewidth-1\tabcolsep}}
   \newcolumntype{G}{>{\centering\arraybackslash}m{\dimexpr.45\linewidth-1\tabcolsep}}
\begin{tabular}{GG}
  $\wbeta_{\clas}$ & $\wbeta_{\eme}$   \\[-3ex] 
\includegraphics[scale=0.40]{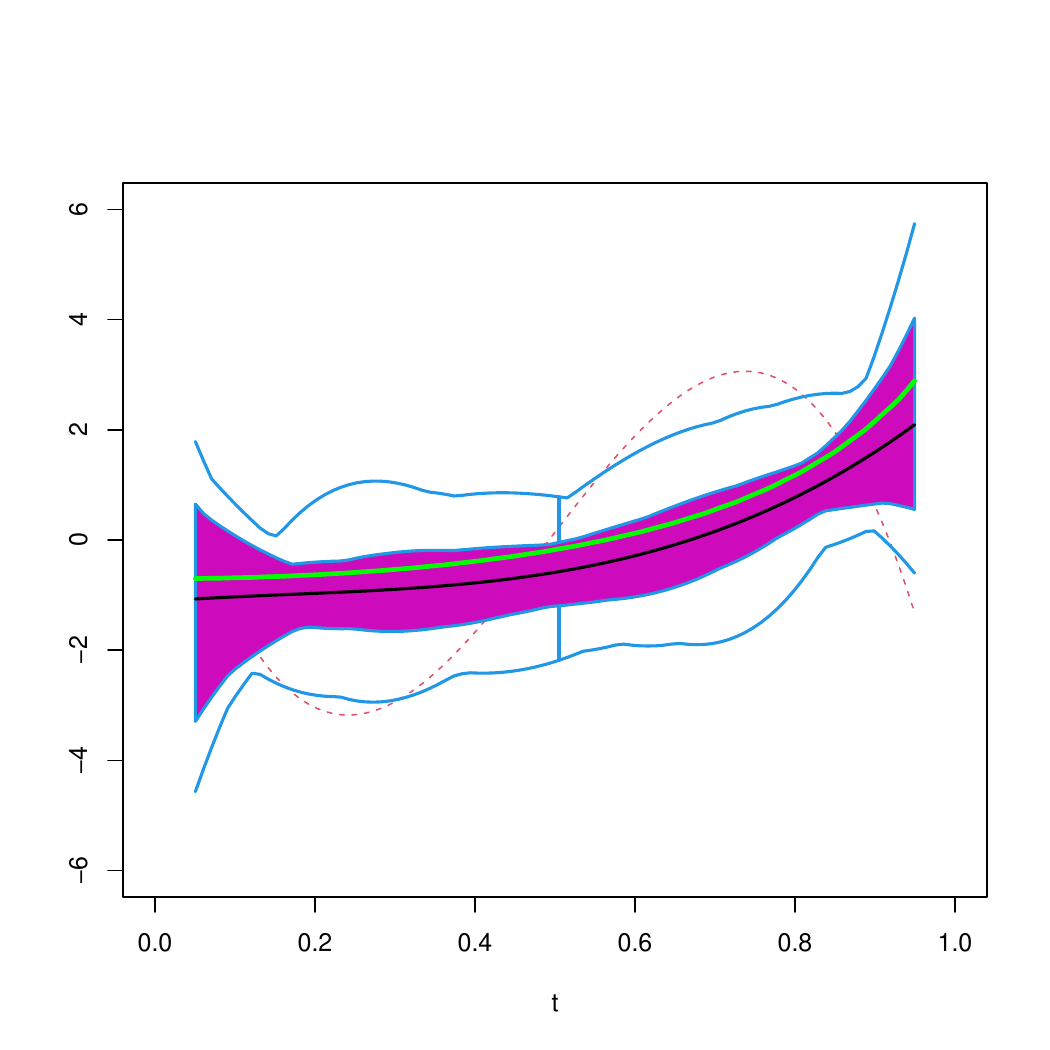}
 &  \includegraphics[scale=0.40]{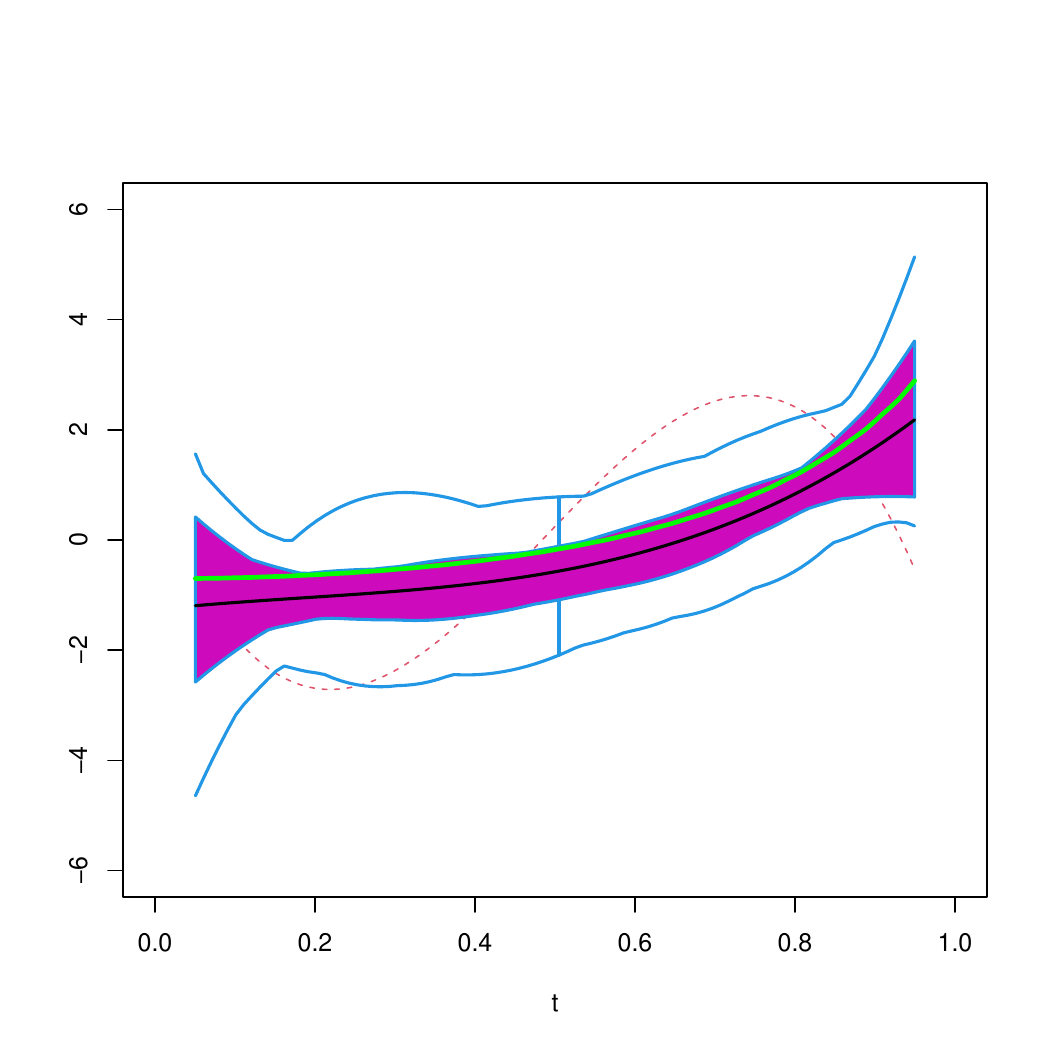}\\
   $\wbeta_{\wclHR}$ & $\wbeta_{\wemeHR}$ \\[-3ex] 
    \includegraphics[scale=0.40]{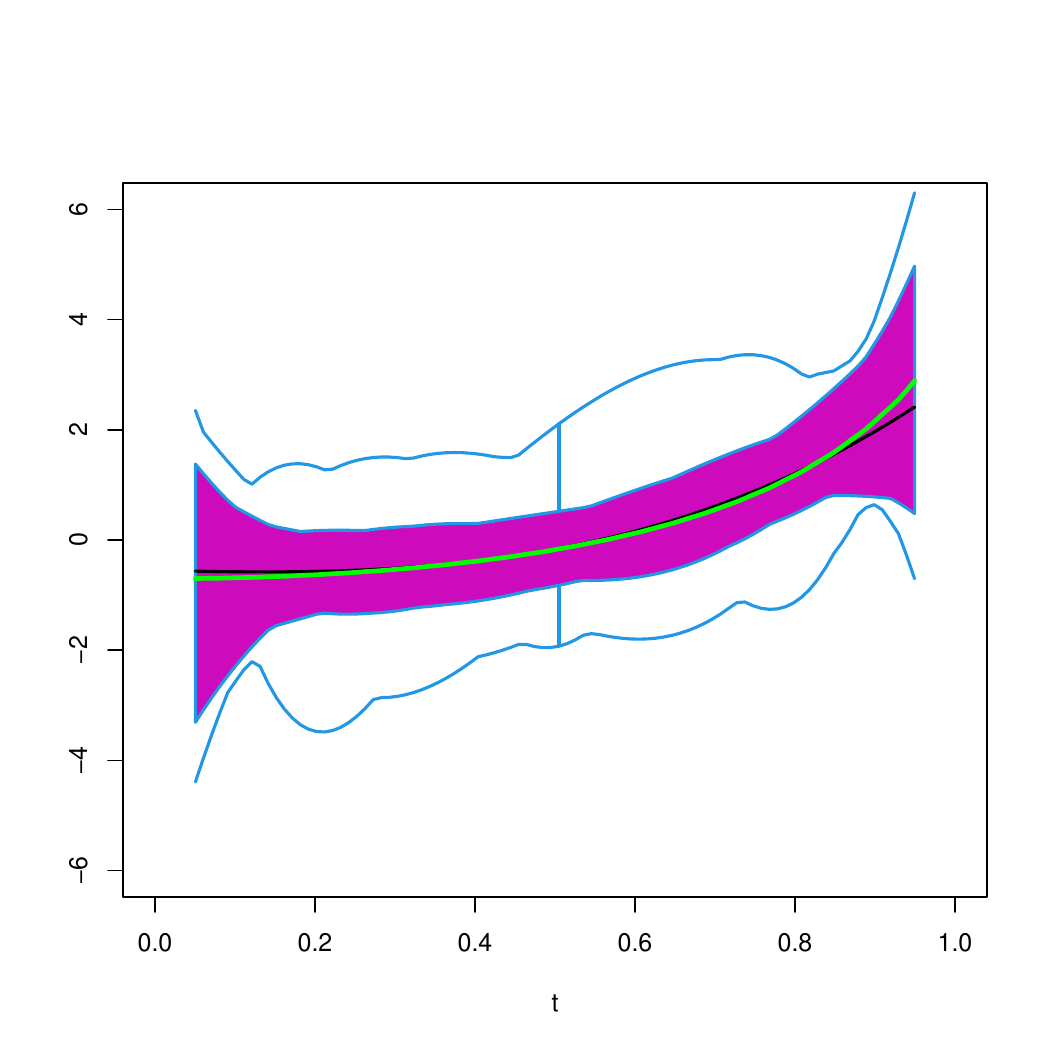}
  &  \includegraphics[scale=0.40]{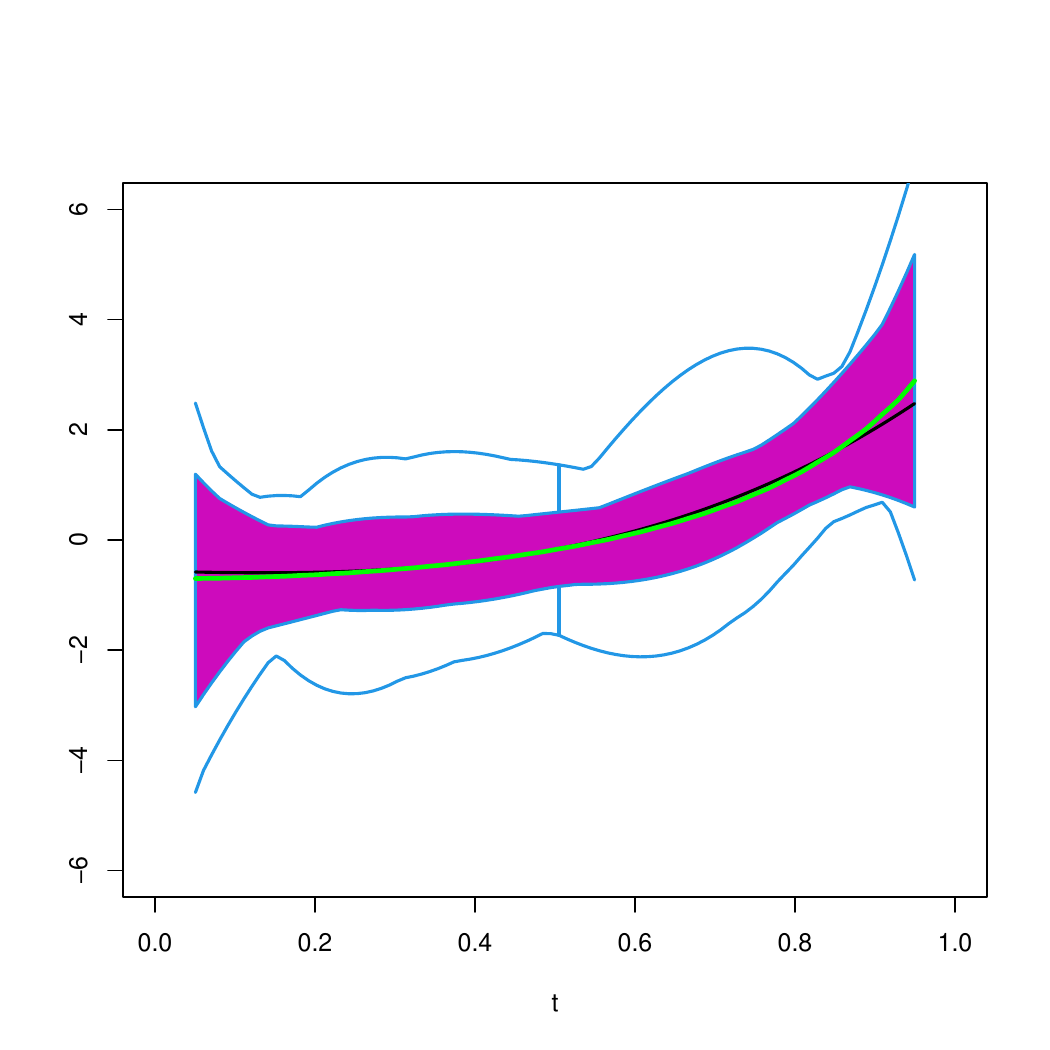}
   \\
   $\wbeta_{\wclBOX}$ & $\wbeta_{\wemeBOX}$ \\[-3ex]
  \includegraphics[scale=0.40]{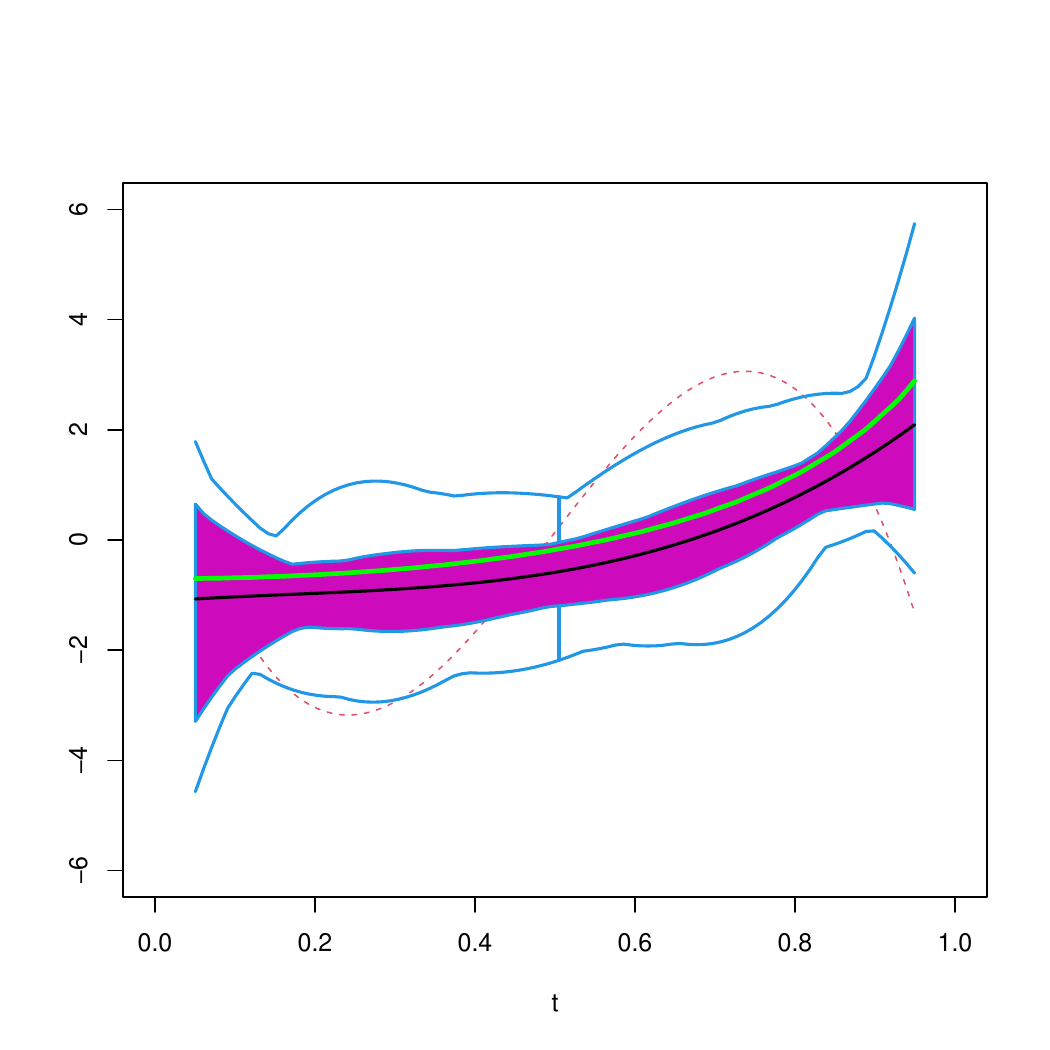}
  &  \includegraphics[scale=0.40]{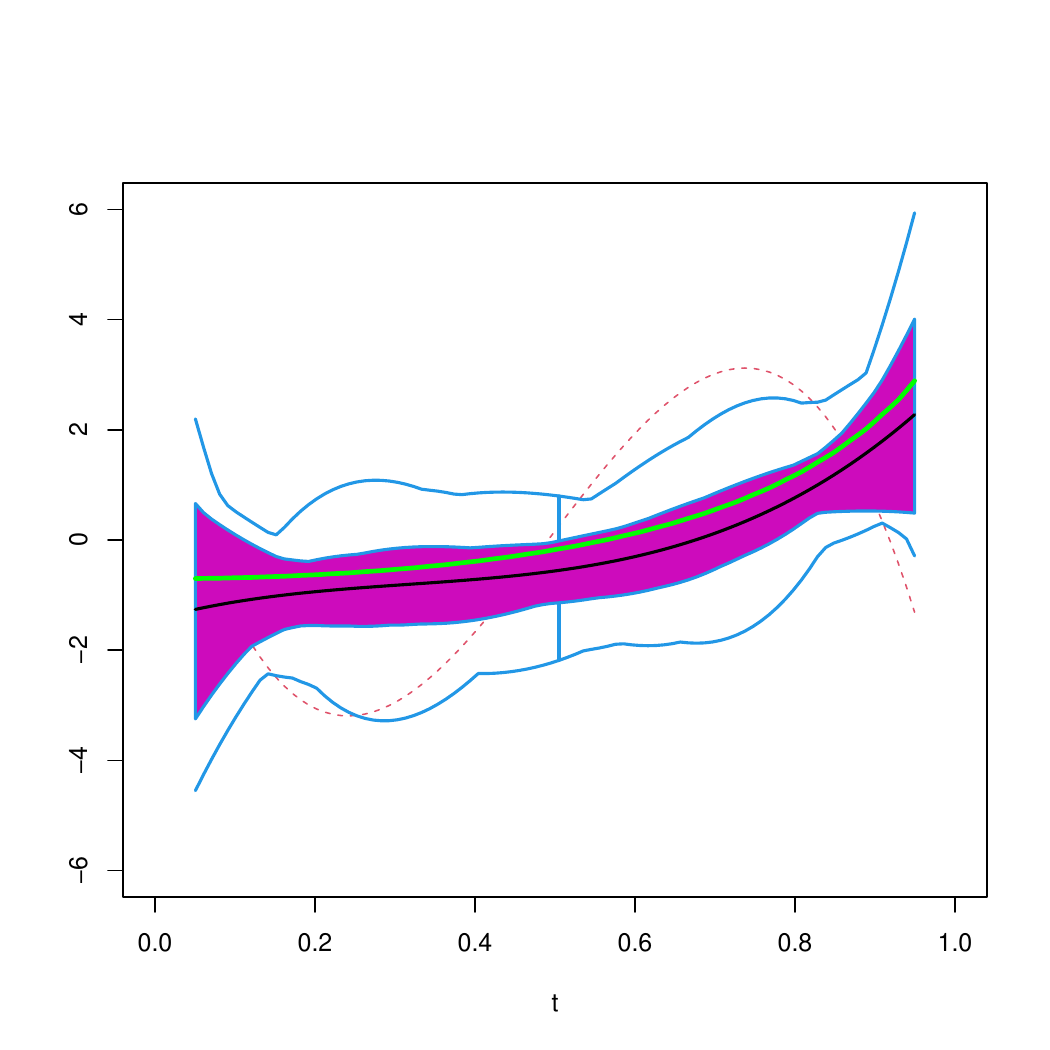}

\end{tabular}
\caption{\small \label{fig:wbeta-C310-poda5}  Functional boxplot of the estimators for $\beta_0$ under $C_{3,0.10}$  within the interval $[0.05,0.95]$. 
The true function is shown with a green dashed line, while the black solid one is the central 
curve of the $n_R = 1000$ estimates $\wbeta$.  }
\end{center} 
\end{figure}

\begin{figure}[tp]
 \begin{center}
 \footnotesize
 \renewcommand{\arraystretch}{0.2}
 \newcolumntype{M}{>{\centering\arraybackslash}m{\dimexpr.01\linewidth-1\tabcolsep}}
   \newcolumntype{G}{>{\centering\arraybackslash}m{\dimexpr.45\linewidth-1\tabcolsep}}
\begin{tabular}{GG}
  $\wbeta_{\clas}$ & $\wbeta_{\eme}$   \\[-3ex]    
 
\includegraphics[scale=0.40]{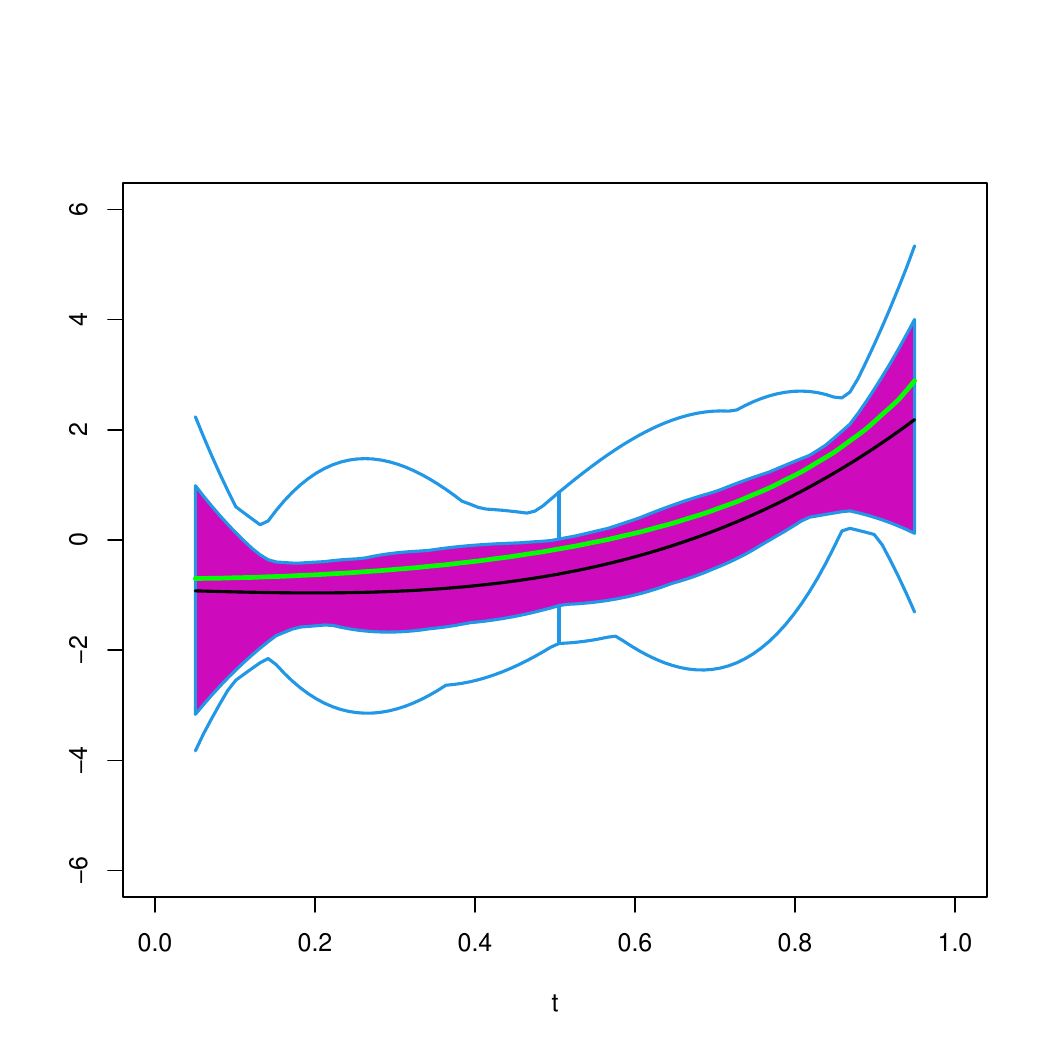}
 &  \includegraphics[scale=0.40]{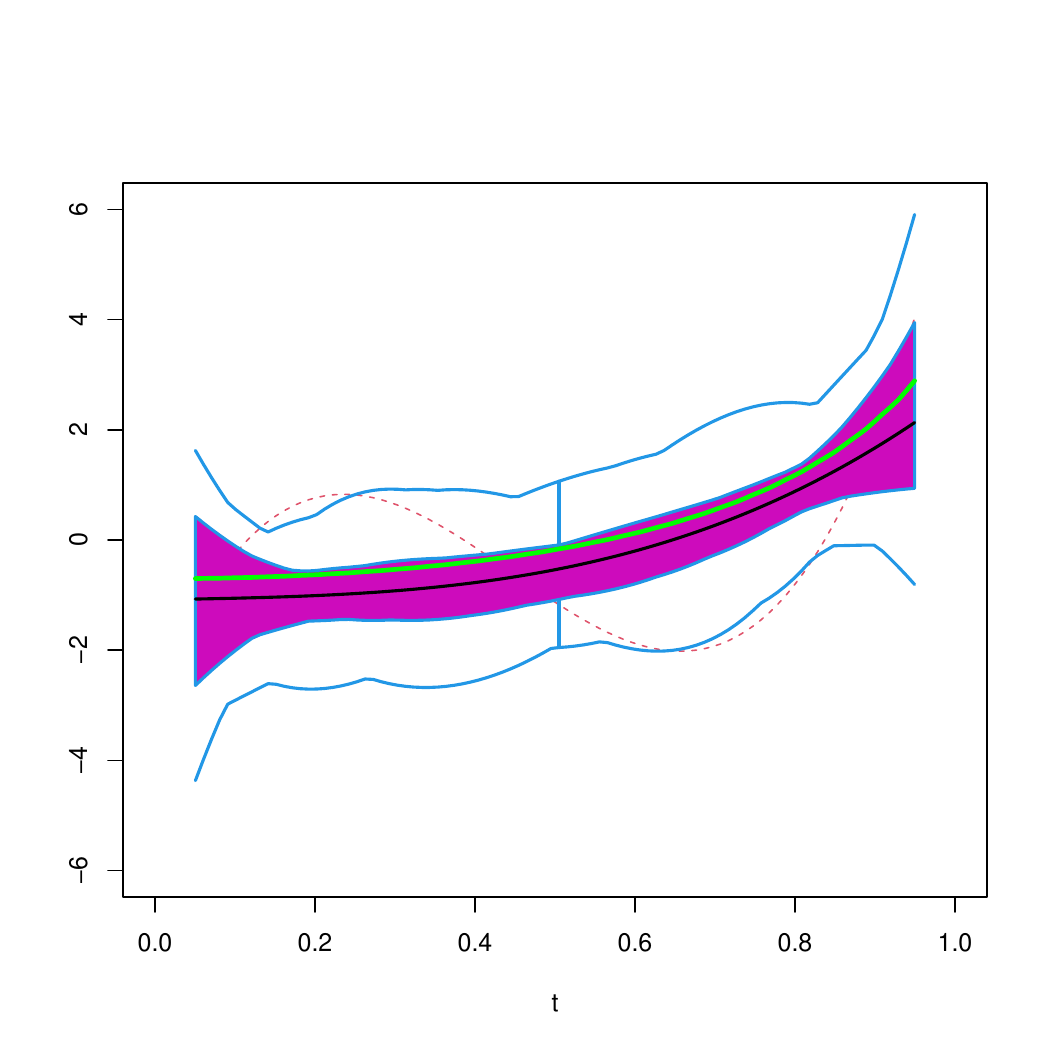}\\
   $\wbeta_{\wclHR}$ & $\wbeta_{\wemeHR}$ \\[-3ex]
    \includegraphics[scale=0.40]{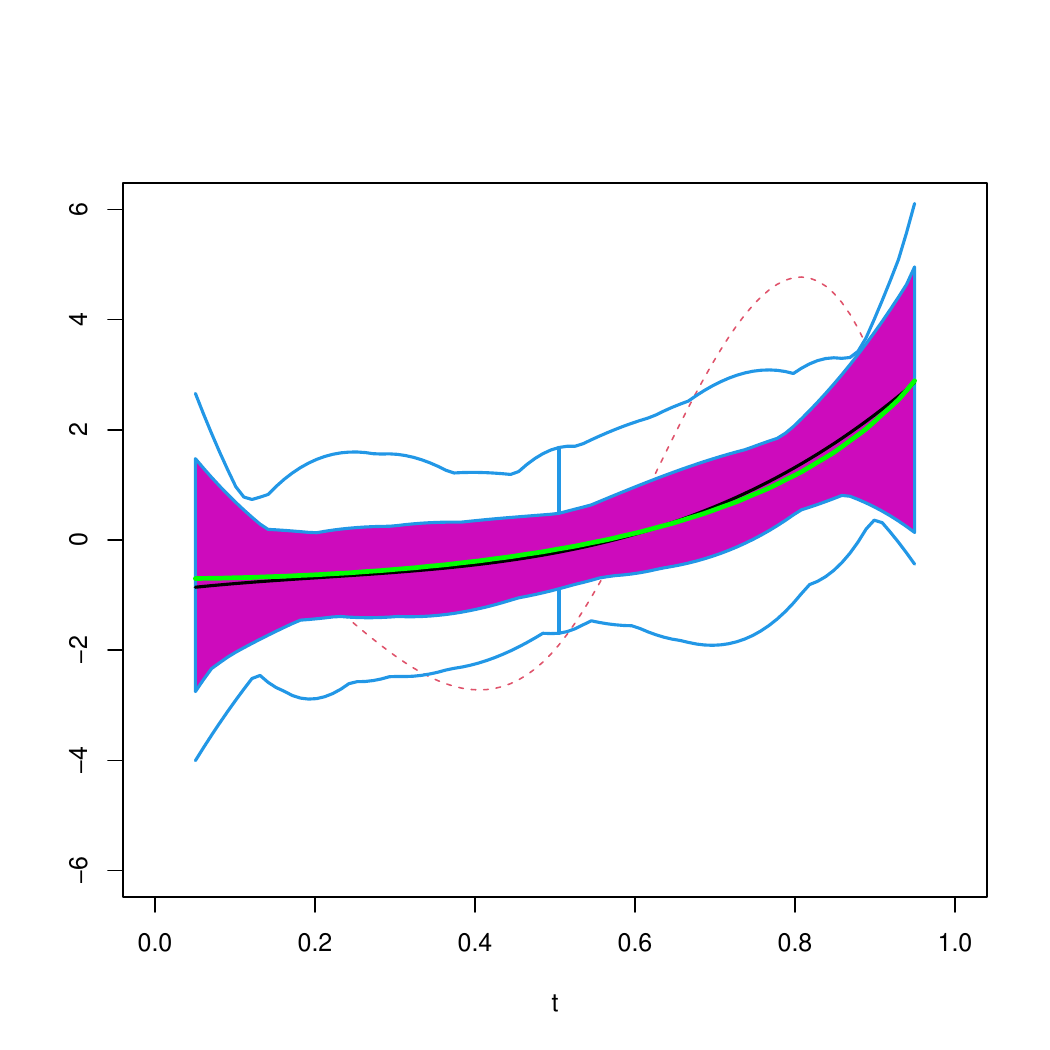}
  &  \includegraphics[scale=0.40]{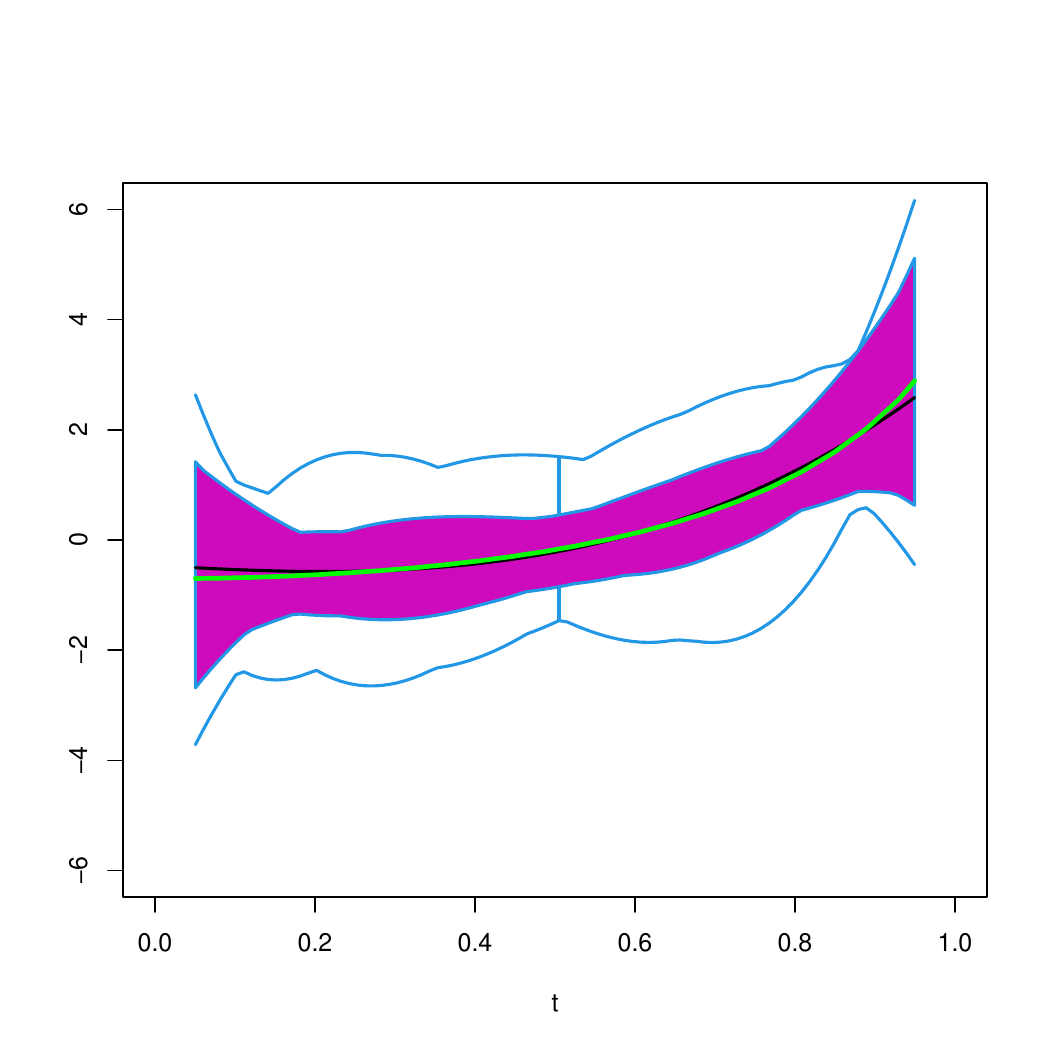}
   \\
   $\wbeta_{\wclBOX}$ & $\wbeta_{\wemeBOX}$ \\[-3ex]
  \includegraphics[scale=0.40]{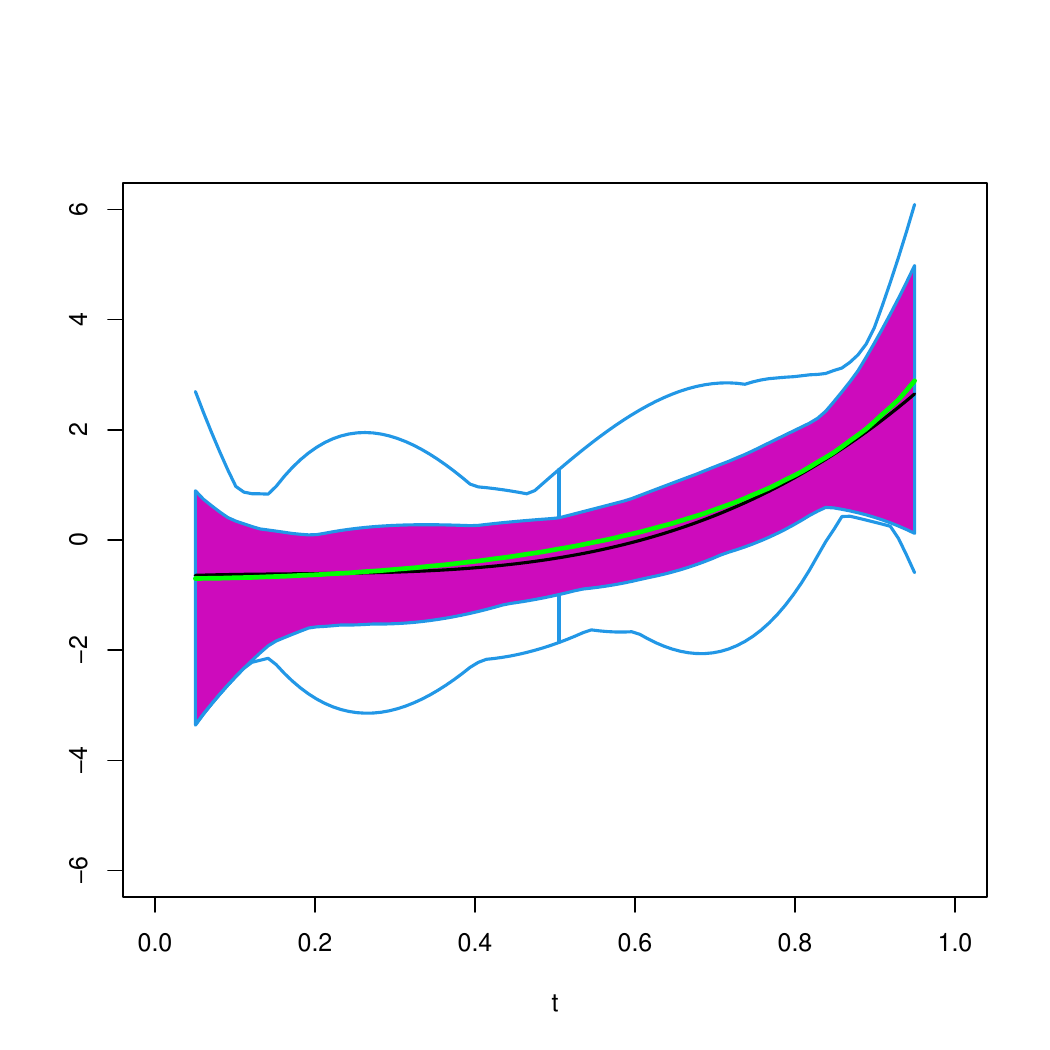}
  &  \includegraphics[scale=0.40]{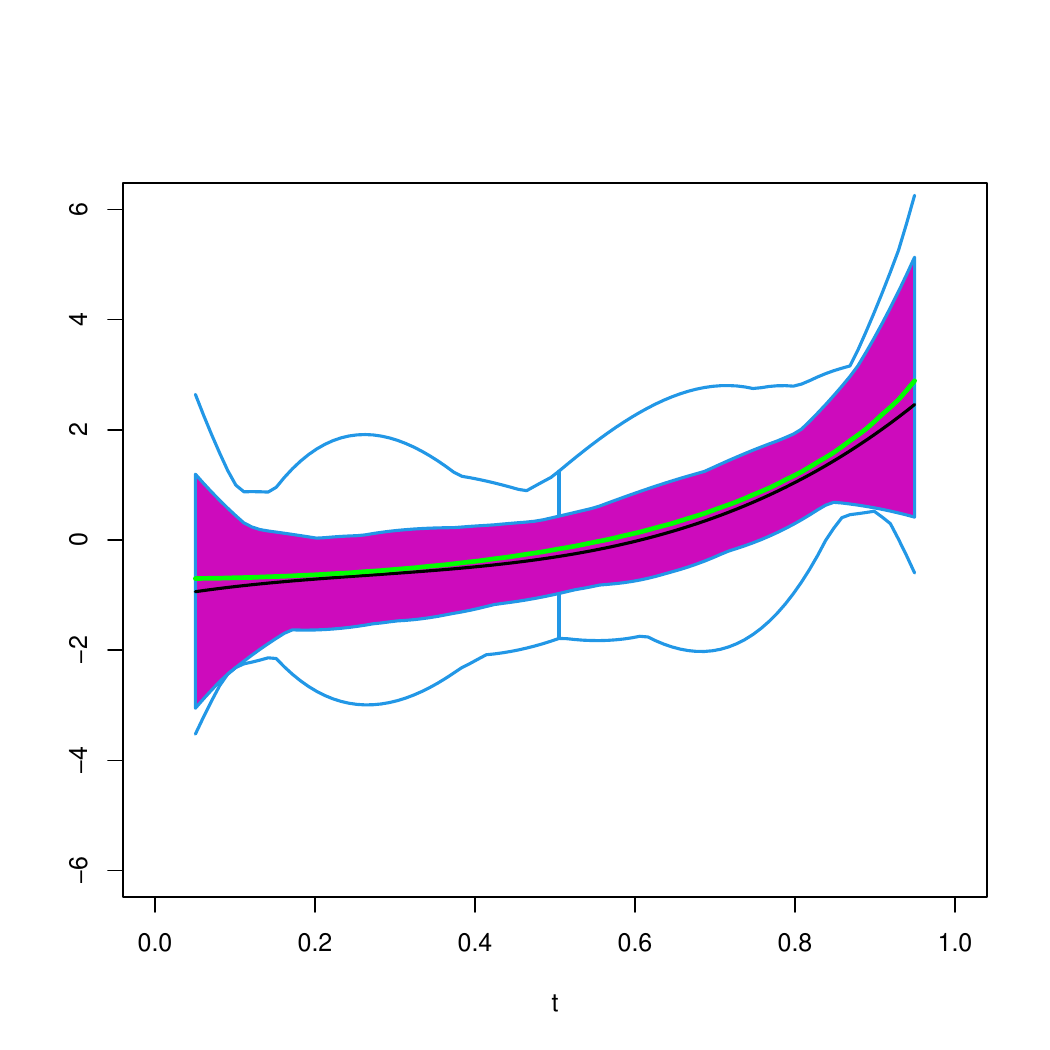}

\end{tabular}
\caption{\small \label{fig:wbeta-C45-poda5}  Functional boxplot of the estimators for $\beta_0$ under $C_{4,0.05}$  within the interval $[0.05,0.95]$. 
The true function is shown with a green dashed line, while the black solid one is the central 
curve of the $n_R = 1000$ estimates $\wbeta$.  }
\end{center} 
\end{figure}

\begin{figure}[tp]
 \begin{center}
 \footnotesize
 \renewcommand{\arraystretch}{0.2}
 \newcolumntype{M}{>{\centering\arraybackslash}m{\dimexpr.01\linewidth-1\tabcolsep}}
   \newcolumntype{G}{>{\centering\arraybackslash}m{\dimexpr.45\linewidth-1\tabcolsep}}
\begin{tabular}{GG}
  $\wbeta_{\clas}$ & $\wbeta_{\eme}$   \\[-3ex]    
 
\includegraphics[scale=0.40]{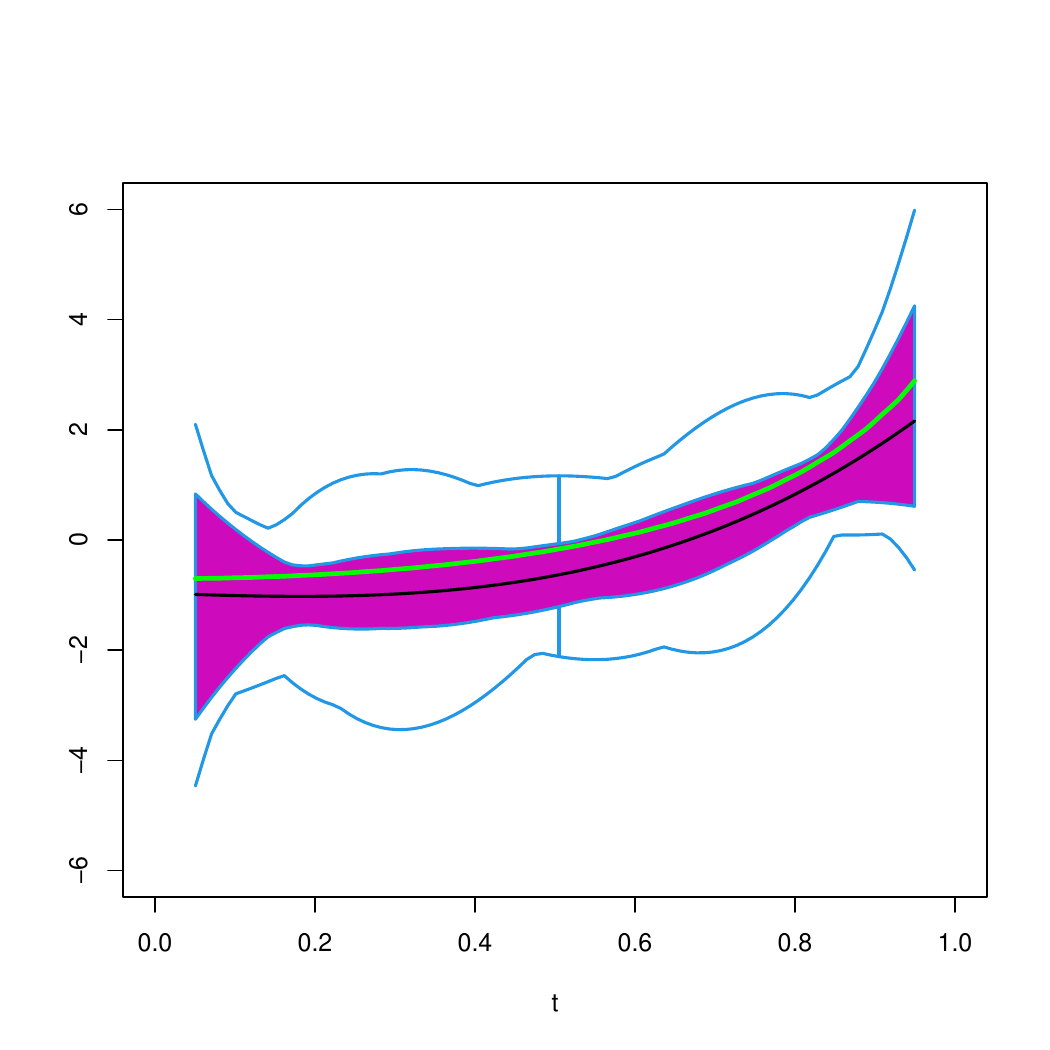}
 &  \includegraphics[scale=0.40]{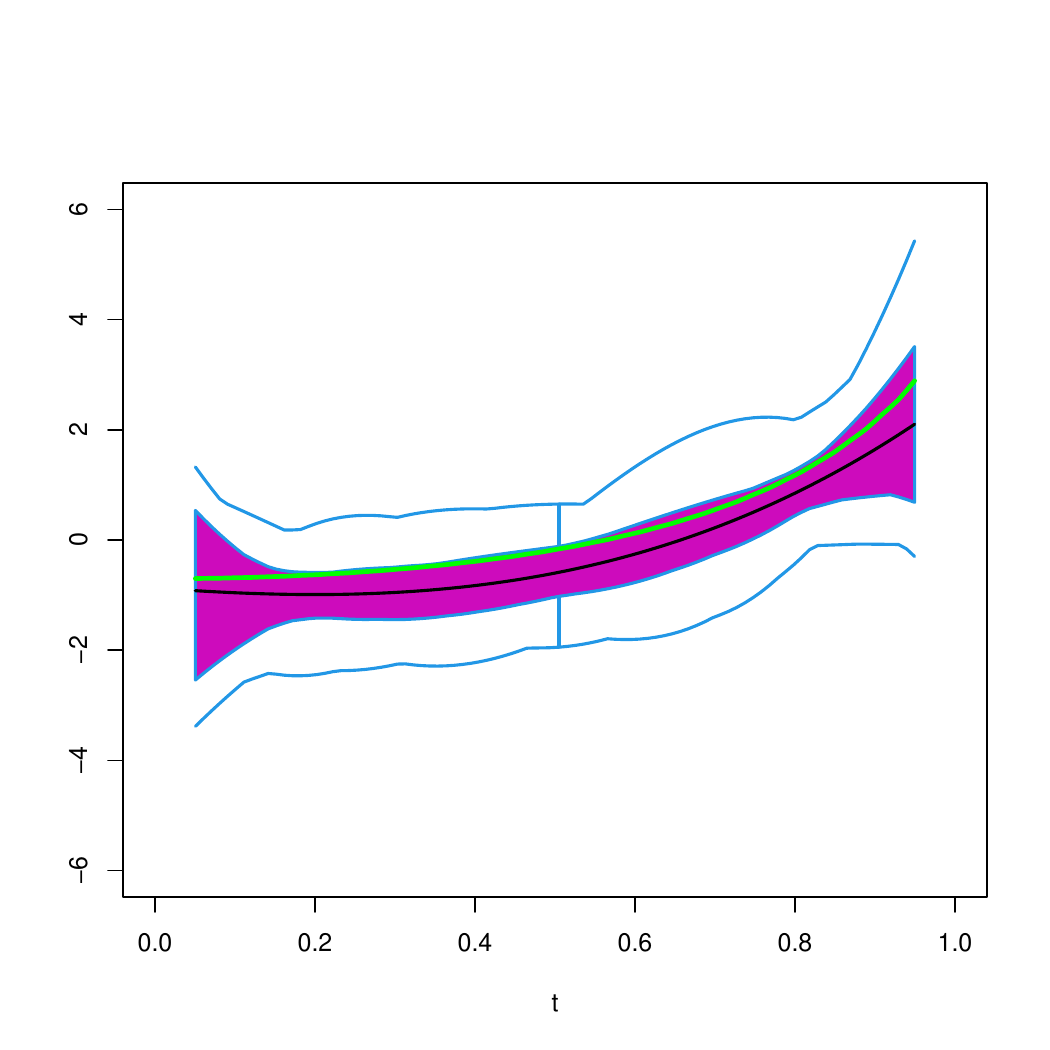}\\
   $\wbeta_{\wclHR}$ & $\wbeta_{\wemeHR}$ \\[-3ex] 
    \includegraphics[scale=0.40]{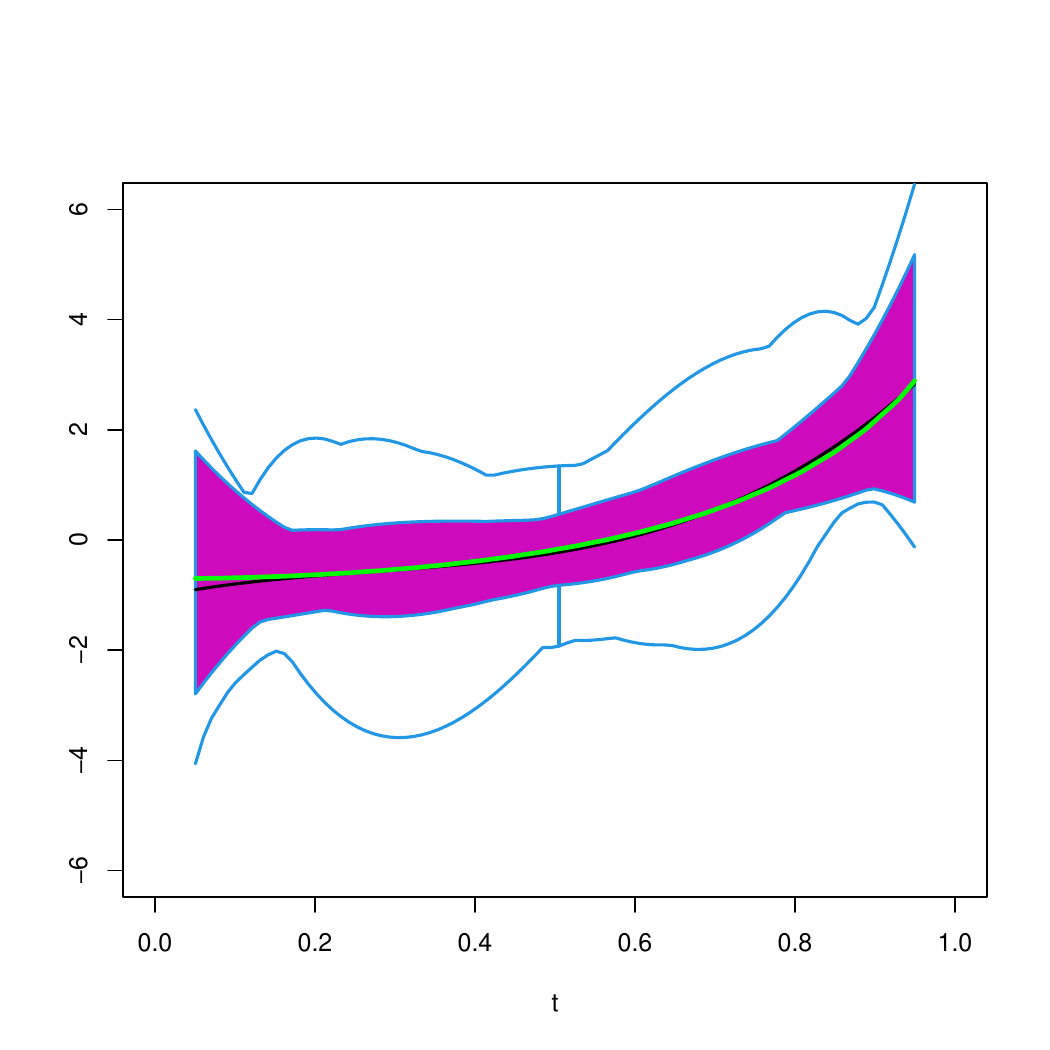}
  &  \includegraphics[scale=0.40]{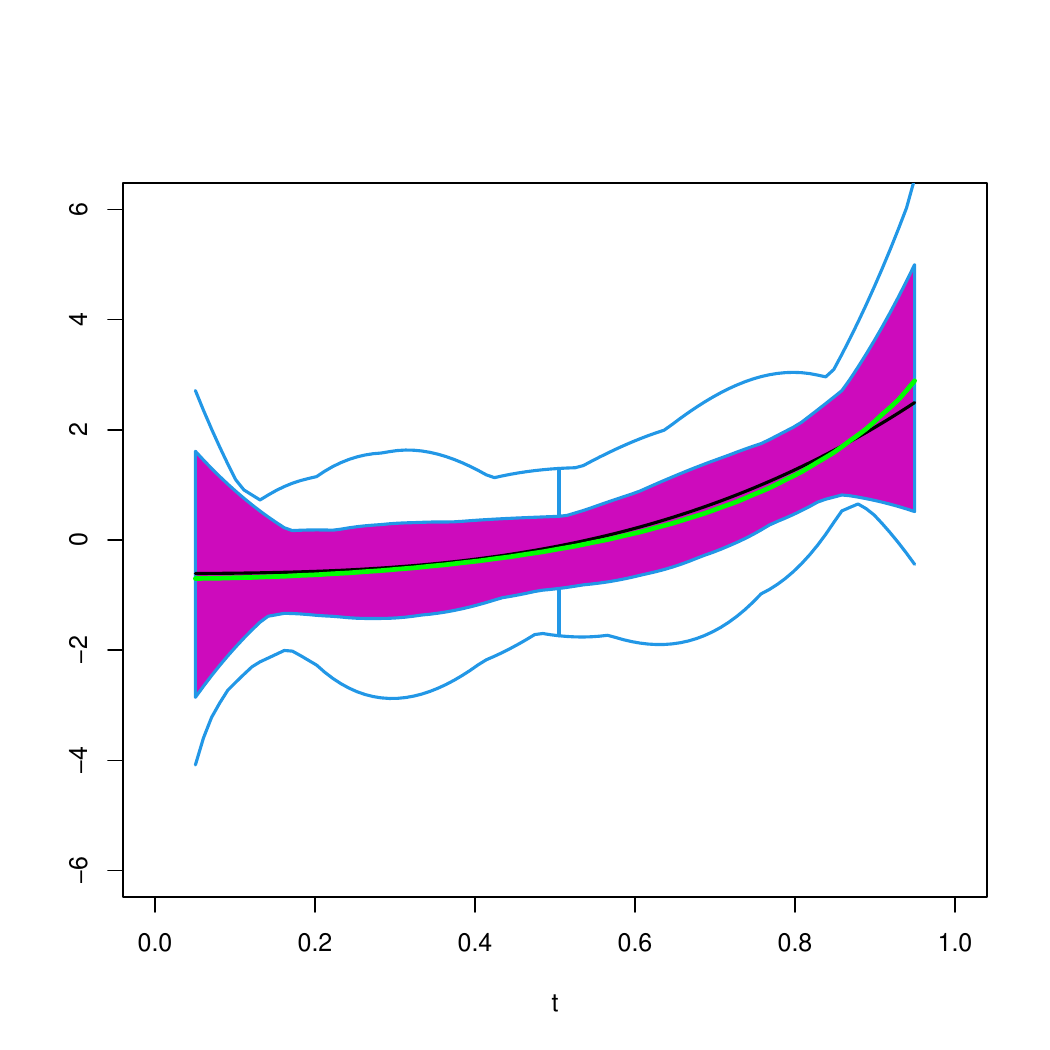}
   \\
   $\wbeta_{\wclBOX}$ & $\wbeta_{\wemeBOX}$ \\[-3ex]
  \includegraphics[scale=0.40]{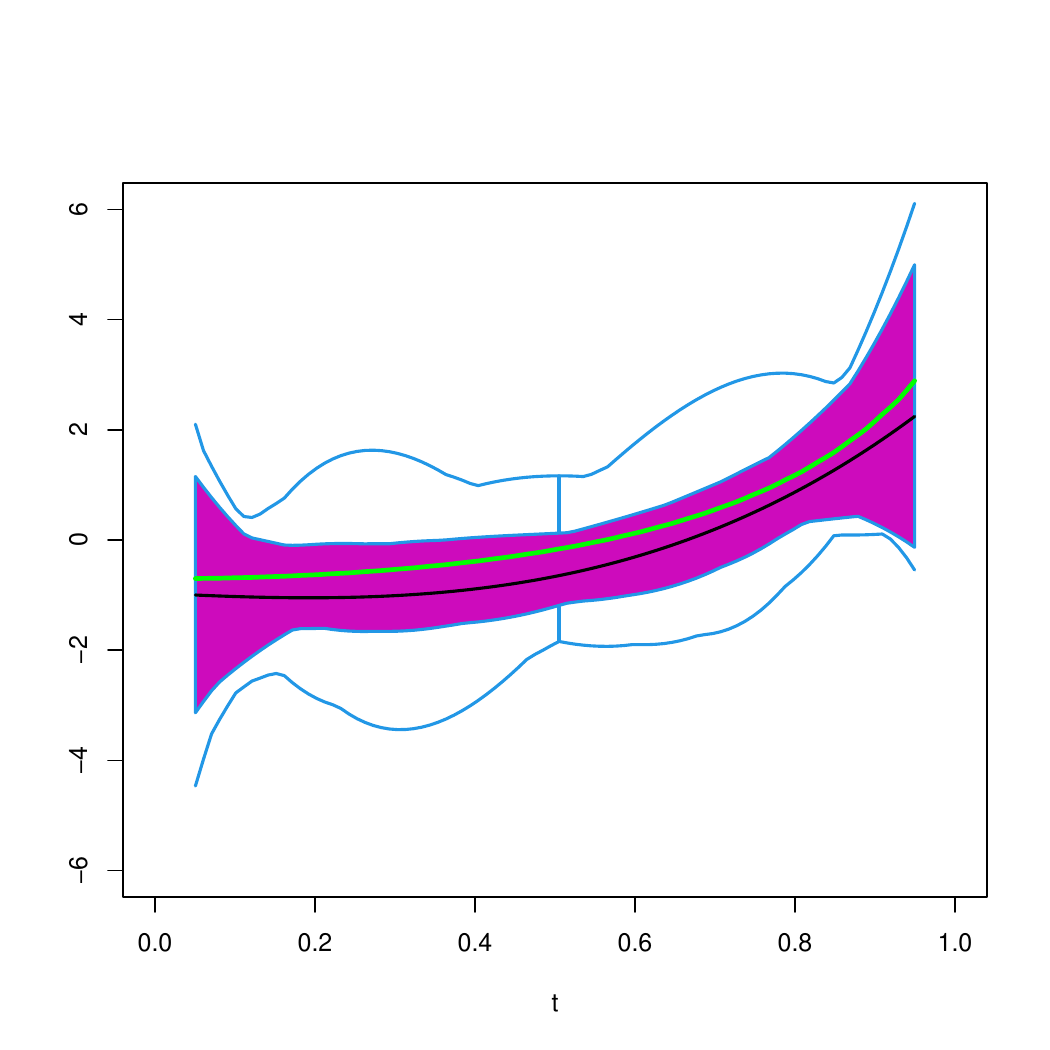}
  &  \includegraphics[scale=0.40]{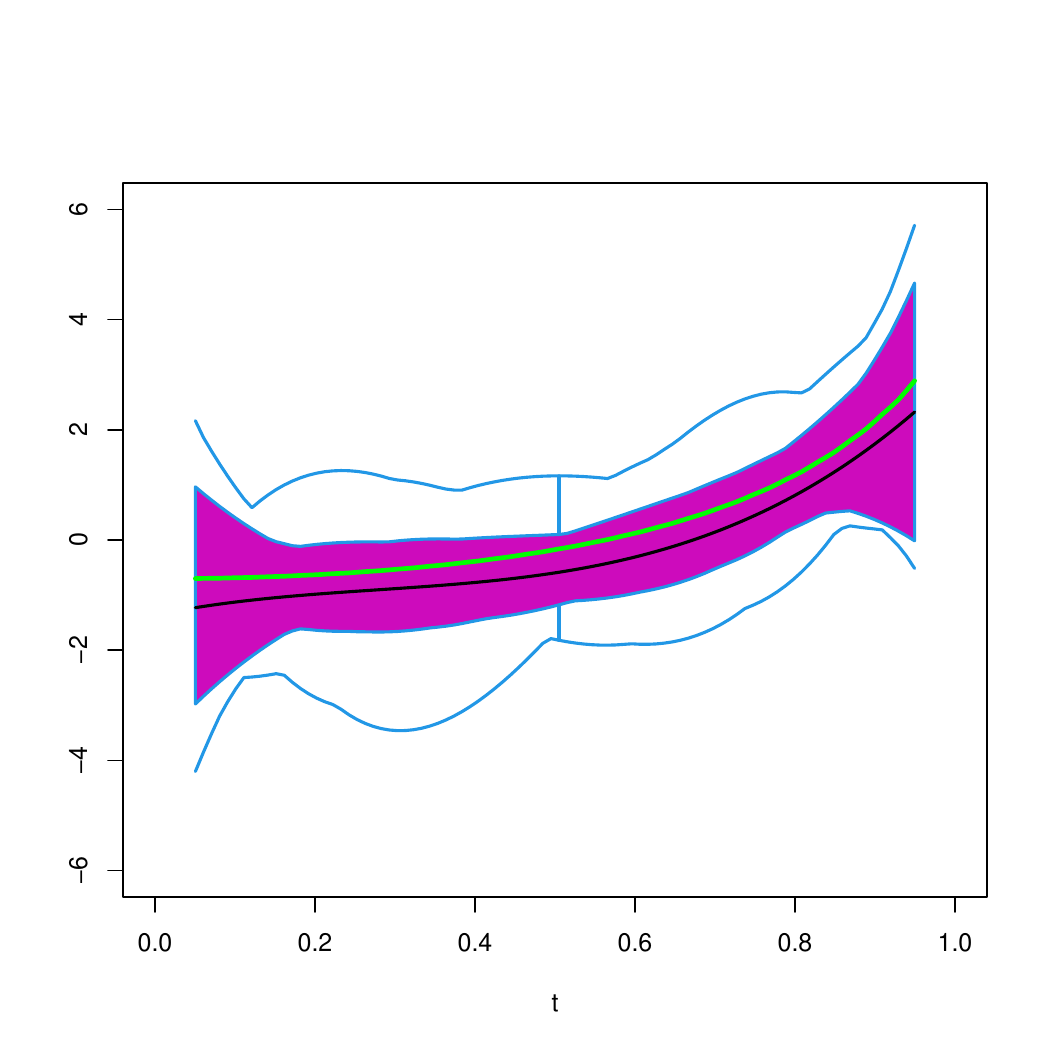}

\end{tabular}
\caption{\small \label{fig:wbeta-C410-poda5}  Functional boxplot of the estimators for $\beta_0$ under $C_{4,0.10}$  within the interval $[0.05,0.95]$. 
The true function is shown with a green dashed line, while the black solid one is the central 
curve of the $n_R = 1000$ estimates $\wbeta$.  }
\end{center} 
\end{figure}

\begin{figure}[tp]
 \begin{center}
 \footnotesize
 \renewcommand{\arraystretch}{0.2}
 \newcolumntype{M}{>{\centering\arraybackslash}m{\dimexpr.01\linewidth-1\tabcolsep}}
   \newcolumntype{G}{>{\centering\arraybackslash}m{\dimexpr.45\linewidth-1\tabcolsep}}
\begin{tabular}{GG}
  $\wbeta_{\clas}$ & $\wbeta_{\eme}$   \\[-3ex]    
 
\includegraphics[scale=0.40]{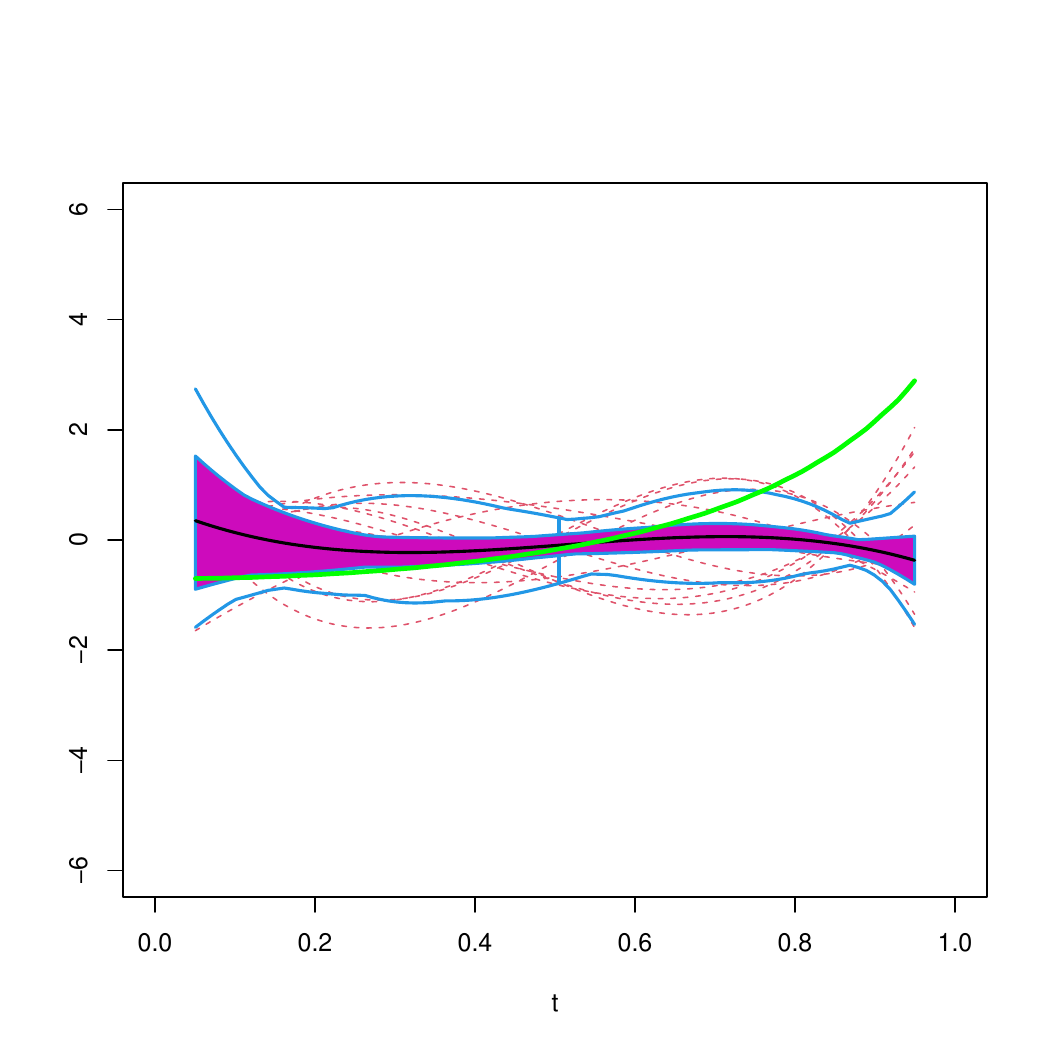}
 &  \includegraphics[scale=0.40]{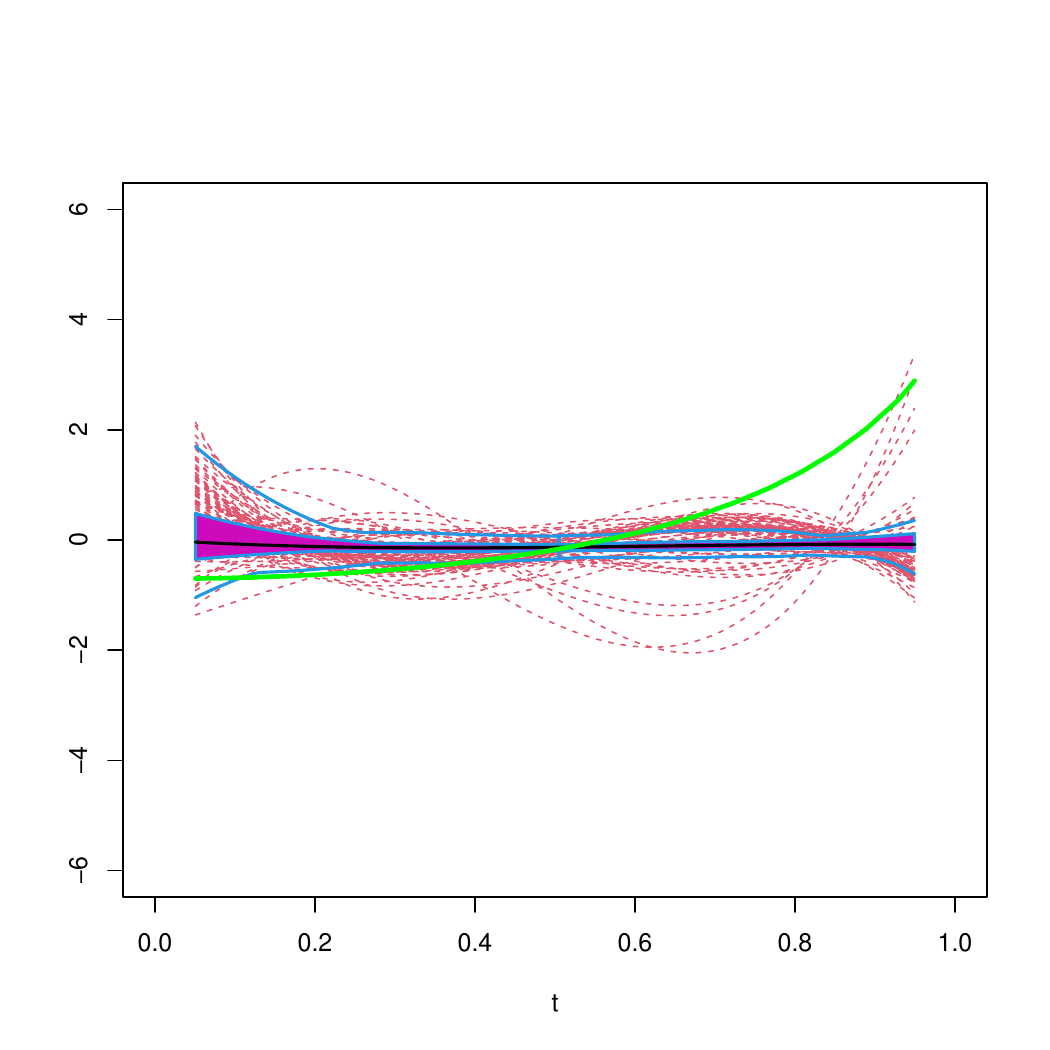}\\
   $\wbeta_{\wclHR}$ & $\wbeta_{\wemeHR}$ \\[-3ex] 
    \includegraphics[scale=0.40]{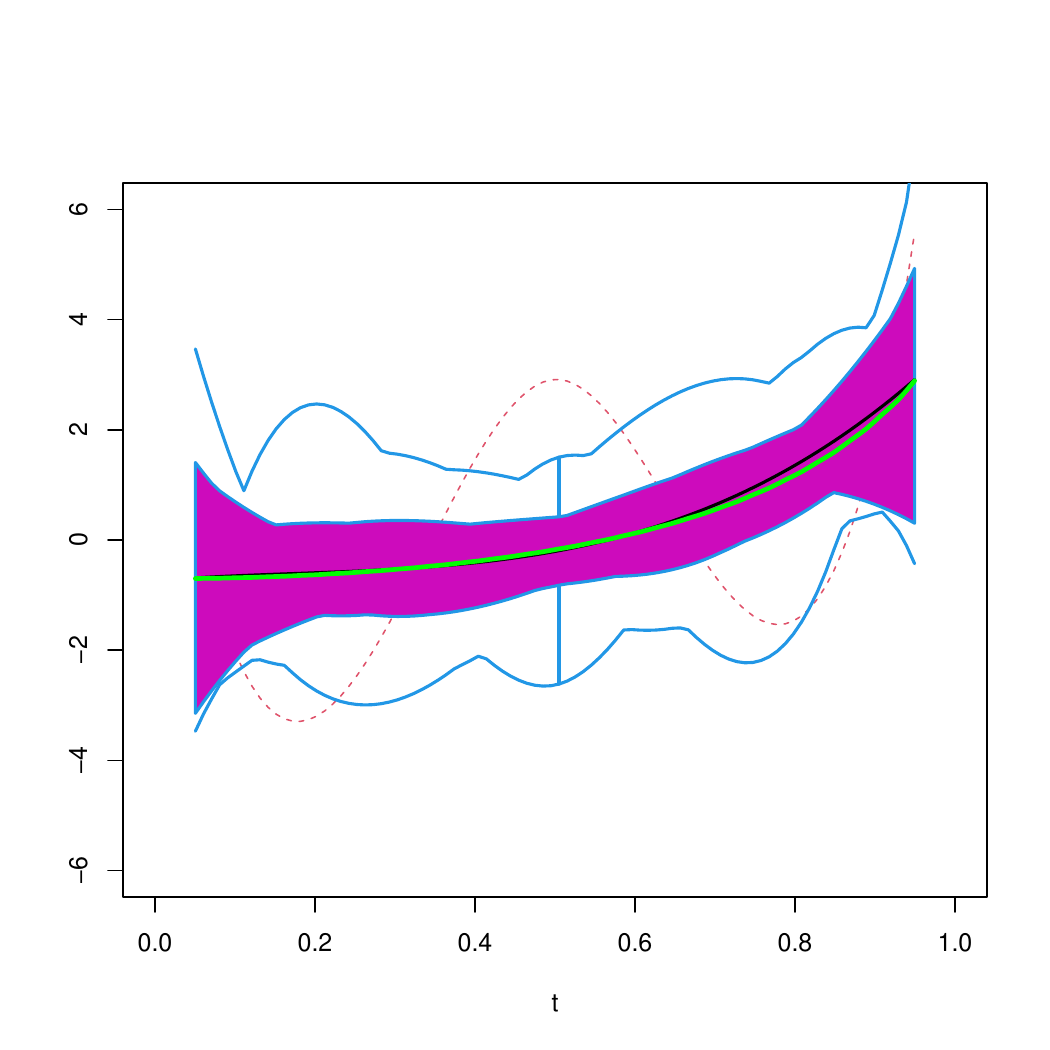}
  &  \includegraphics[scale=0.40]{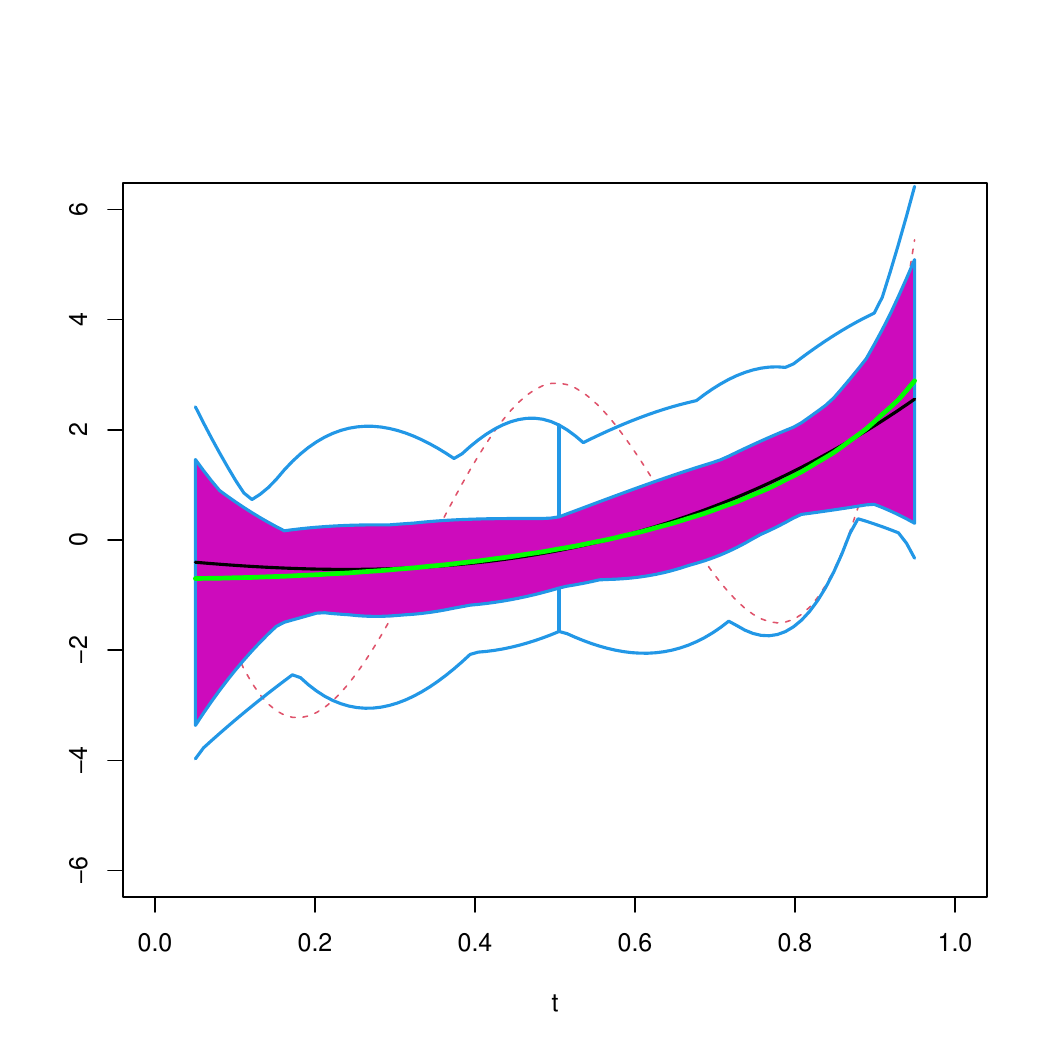}
   \\
   $\wbeta_{\wclBOX}$ & $\wbeta_{\wemeBOX}$ \\[-3ex]
  \includegraphics[scale=0.40]{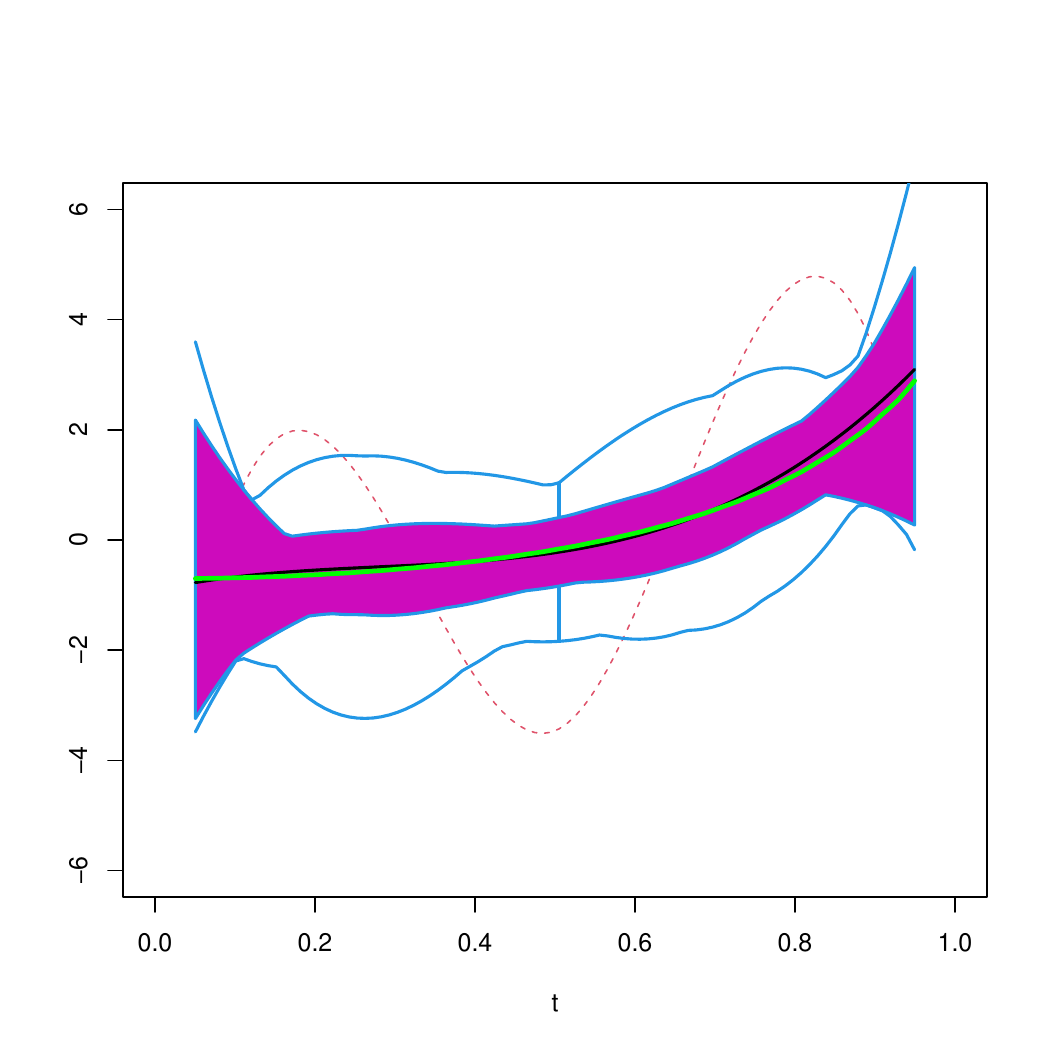}
  &  \includegraphics[scale=0.40]{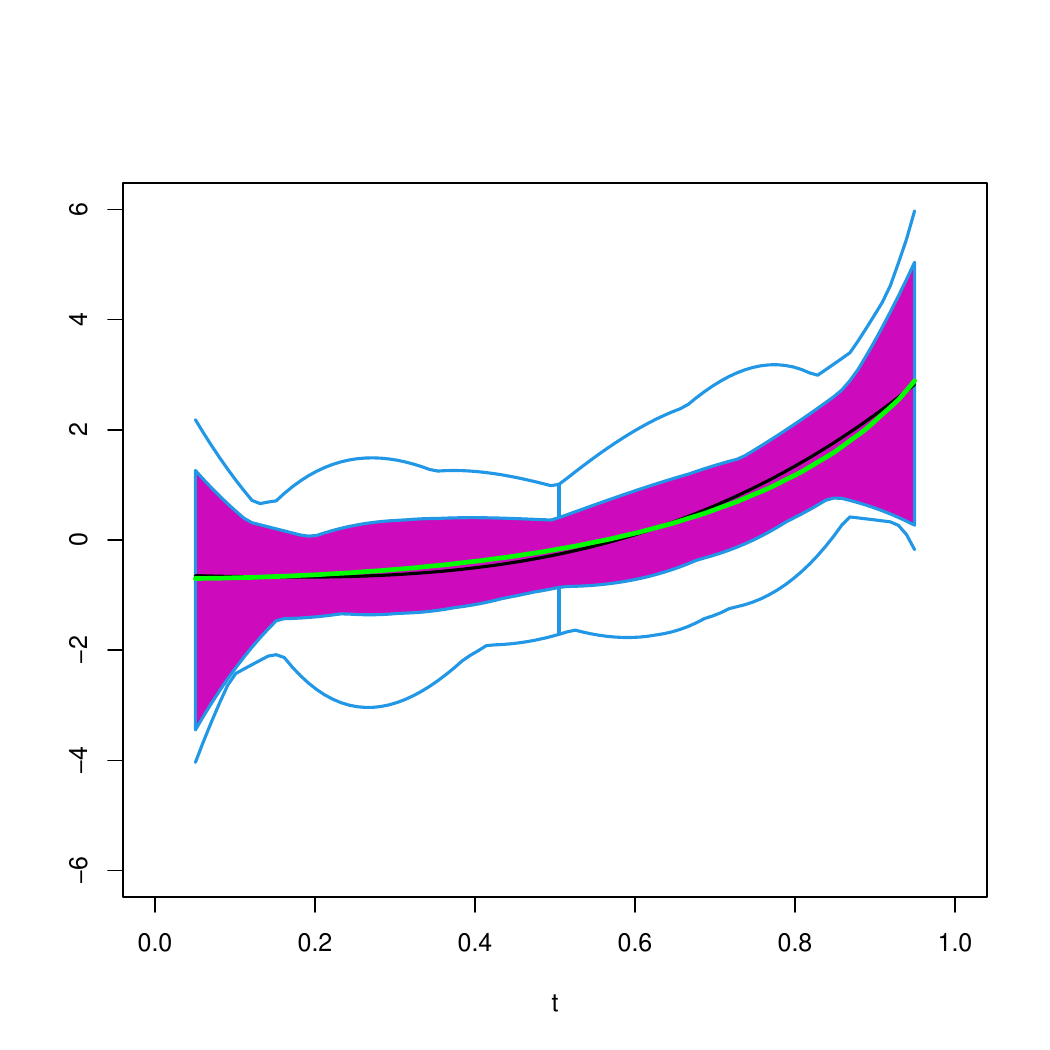}

\end{tabular}
\caption{\small \label{fig:wbeta-C55-poda5}  Functional boxplot of the estimators for $\beta_0$ under $C_{5,0.05}$  within the interval $[0.05,0.95]$. 
The true function is shown with a green dashed line, while the black solid one is the central 
curve of the $n_R = 1000$ estimates $\wbeta$.  }
\end{center} 
\end{figure}

\begin{figure}[tp]
 \begin{center}
 \footnotesize
 \renewcommand{\arraystretch}{0.2}
 \newcolumntype{M}{>{\centering\arraybackslash}m{\dimexpr.01\linewidth-1\tabcolsep}}
   \newcolumntype{G}{>{\centering\arraybackslash}m{\dimexpr.45\linewidth-1\tabcolsep}}
\begin{tabular}{GG}
  $\wbeta_{\clas}$ & $\wbeta_{\eme}$   \\[-3ex]    
 
\includegraphics[scale=0.40]{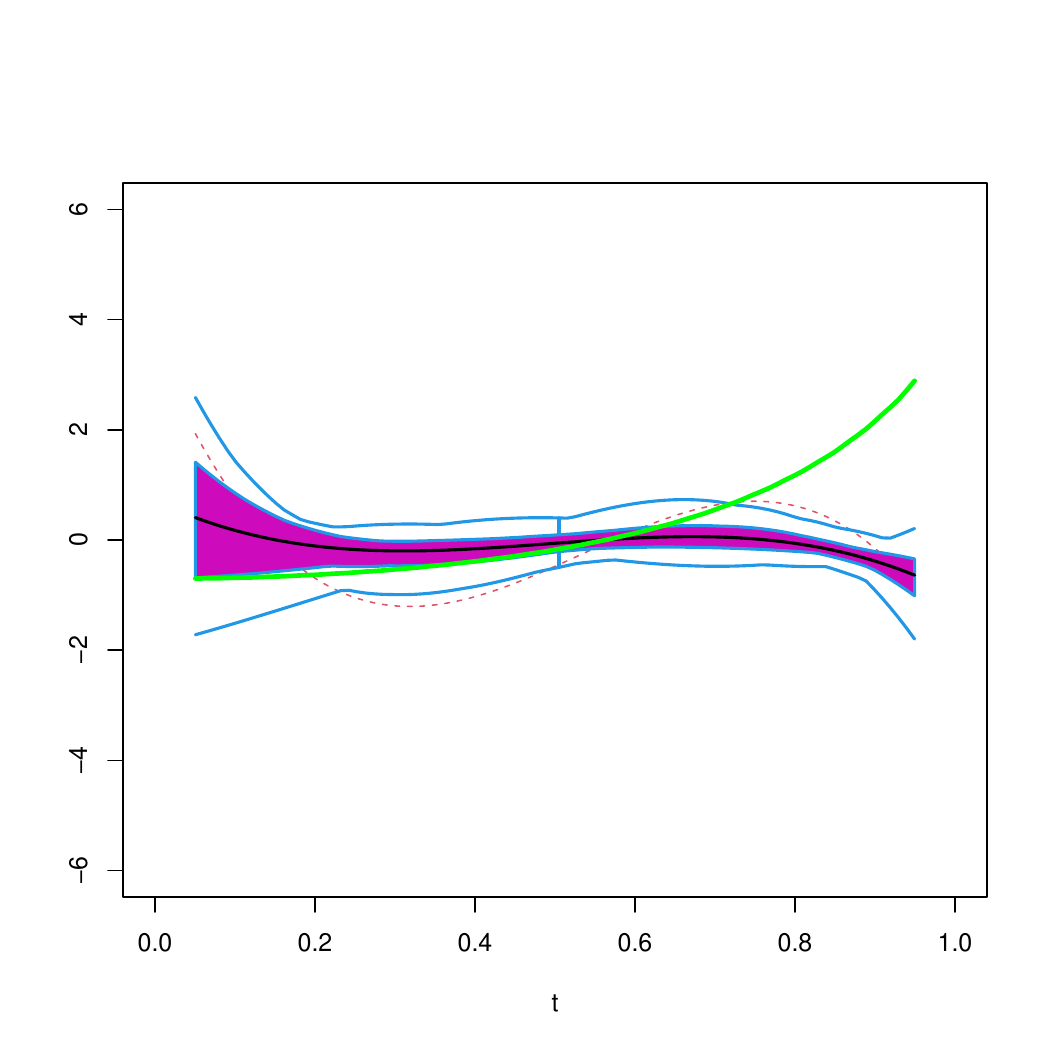}
 &  \includegraphics[scale=0.40]{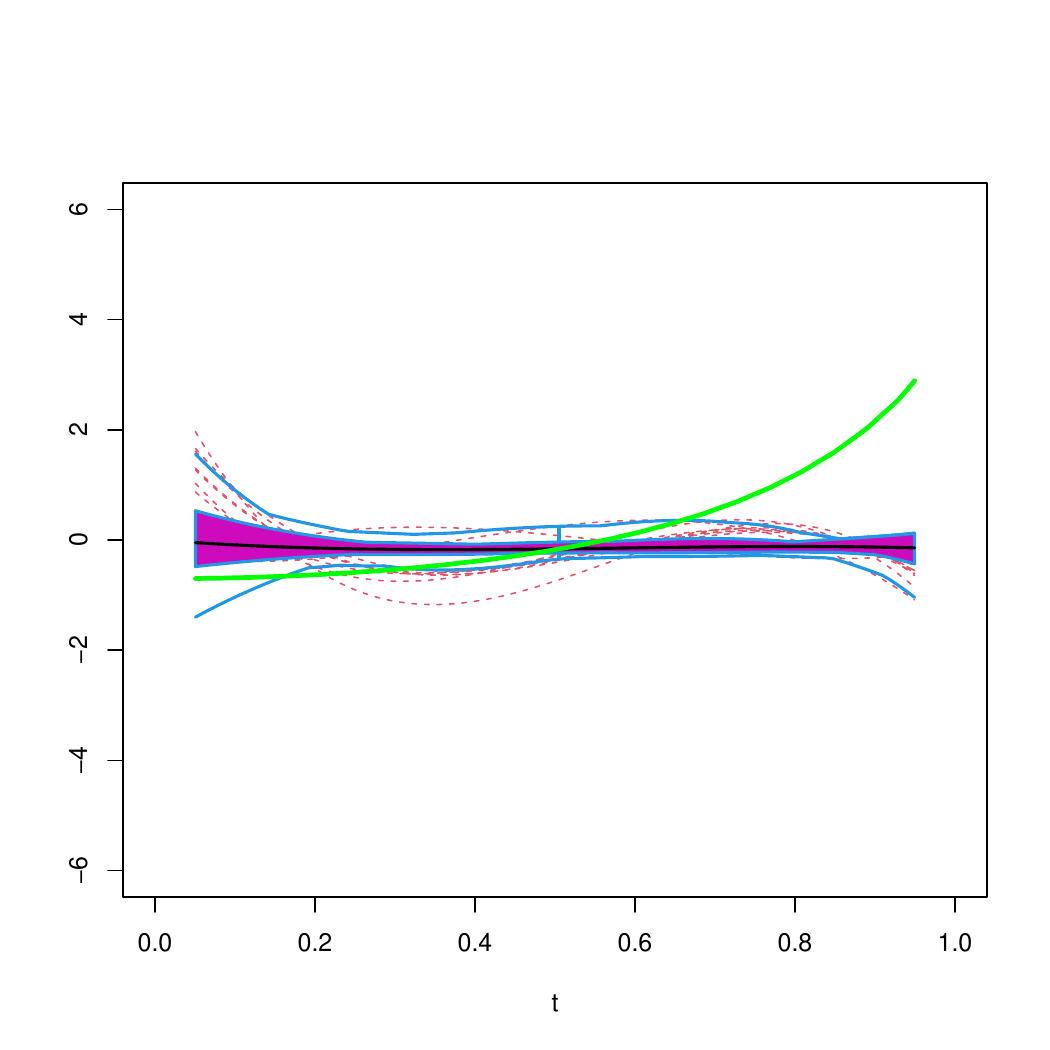}\\
   $\wbeta_{\wclHR}$ & $\wbeta_{\wemeHR}$ \\[-3ex] 
    \includegraphics[scale=0.40]{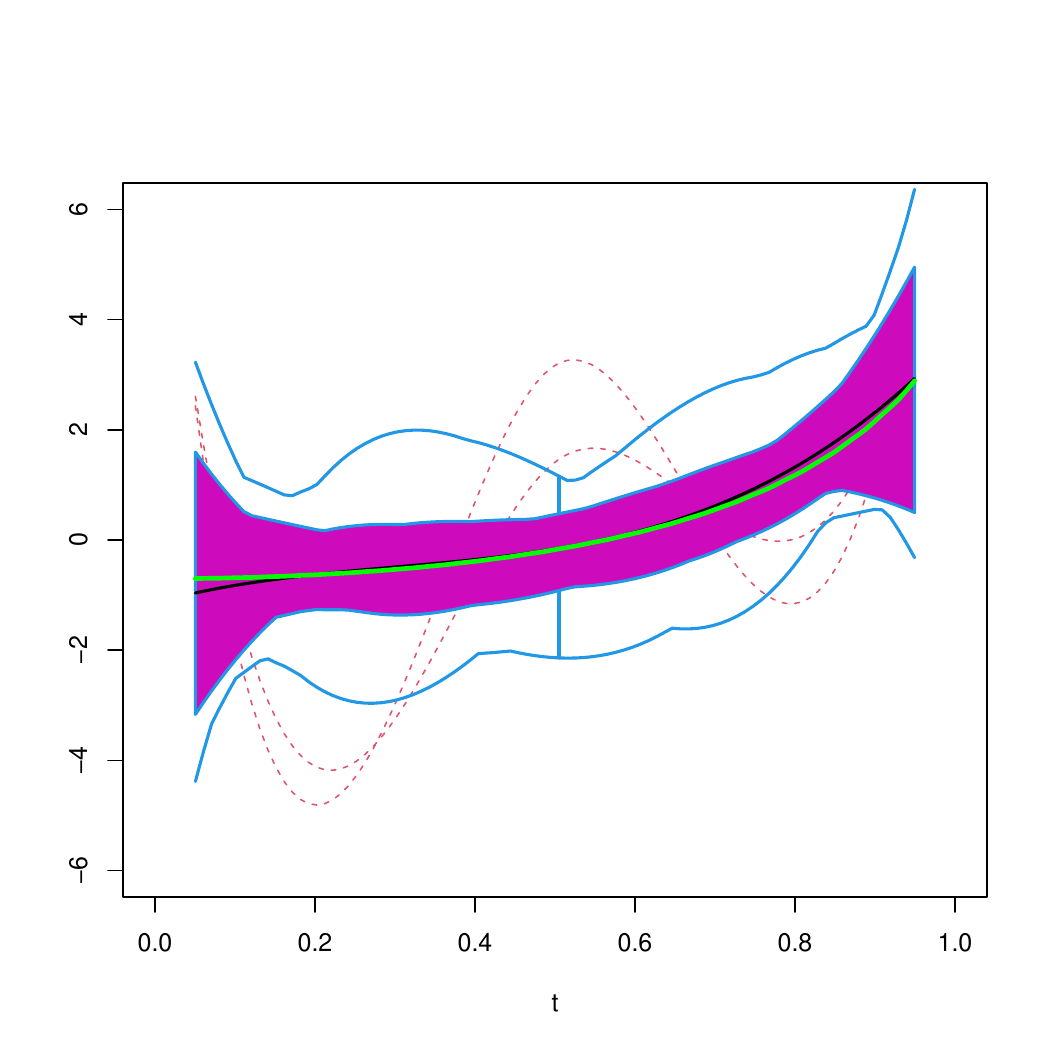}
  &  \includegraphics[scale=0.40]{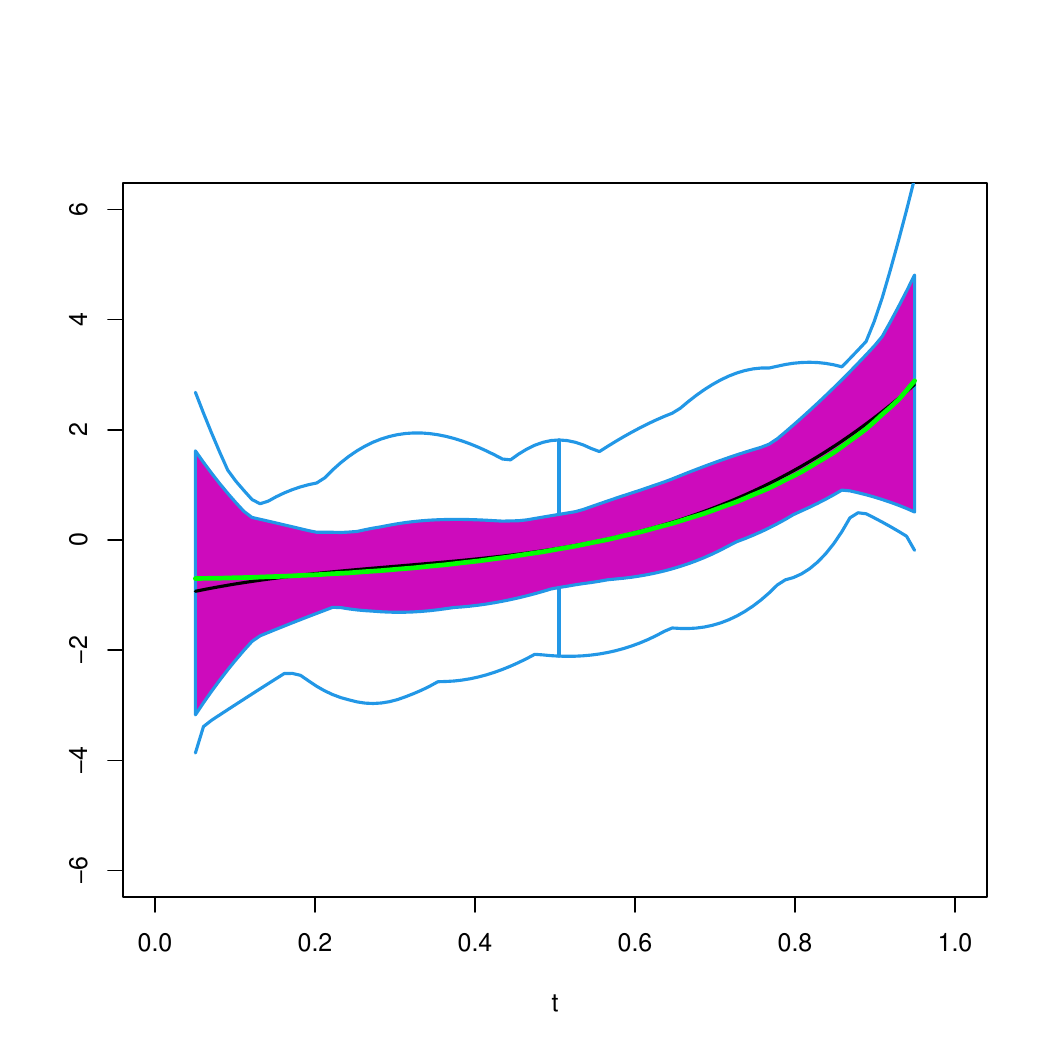}
   \\
   $\wbeta_{\wclBOX}$ & $\wbeta_{\wemeBOX}$ \\[-3ex]
  \includegraphics[scale=0.40]{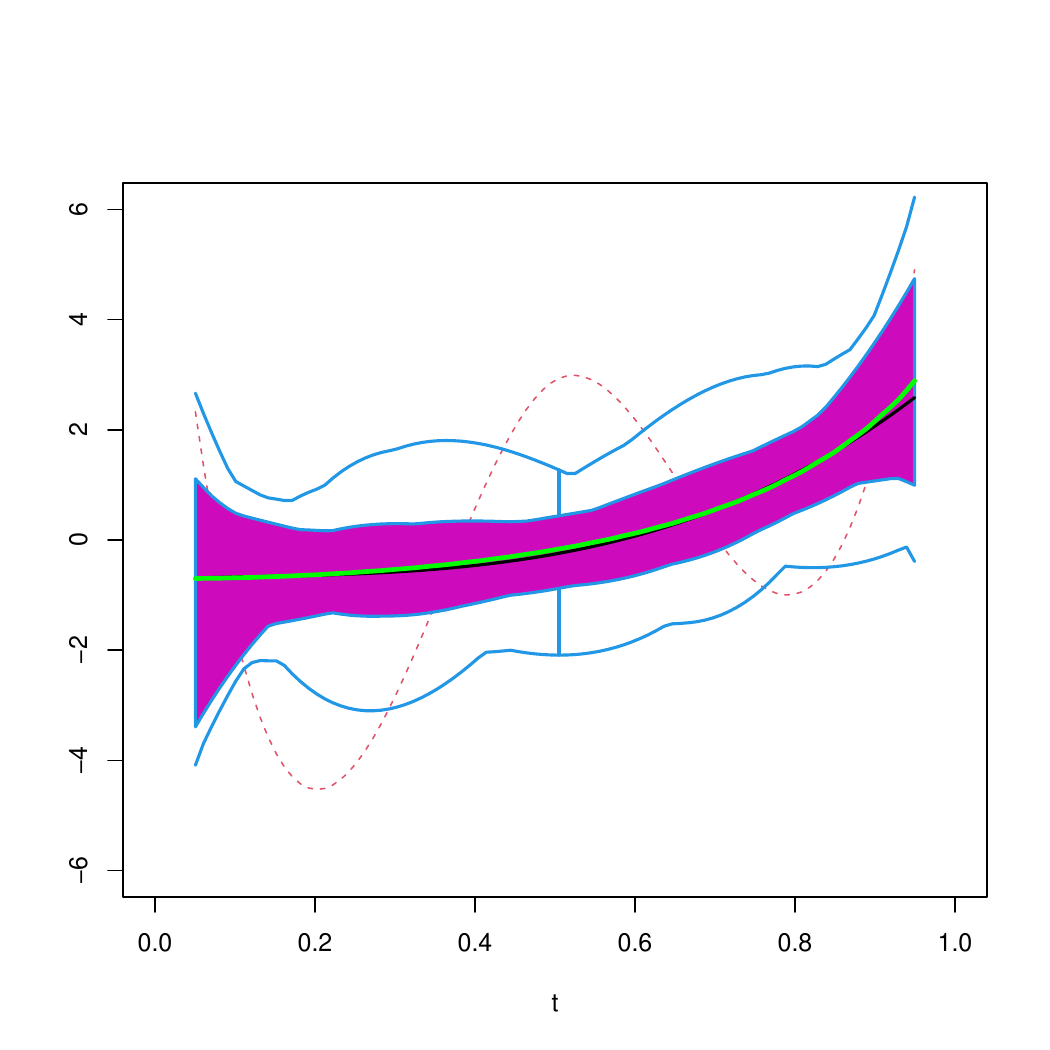}
  &  \includegraphics[scale=0.40]{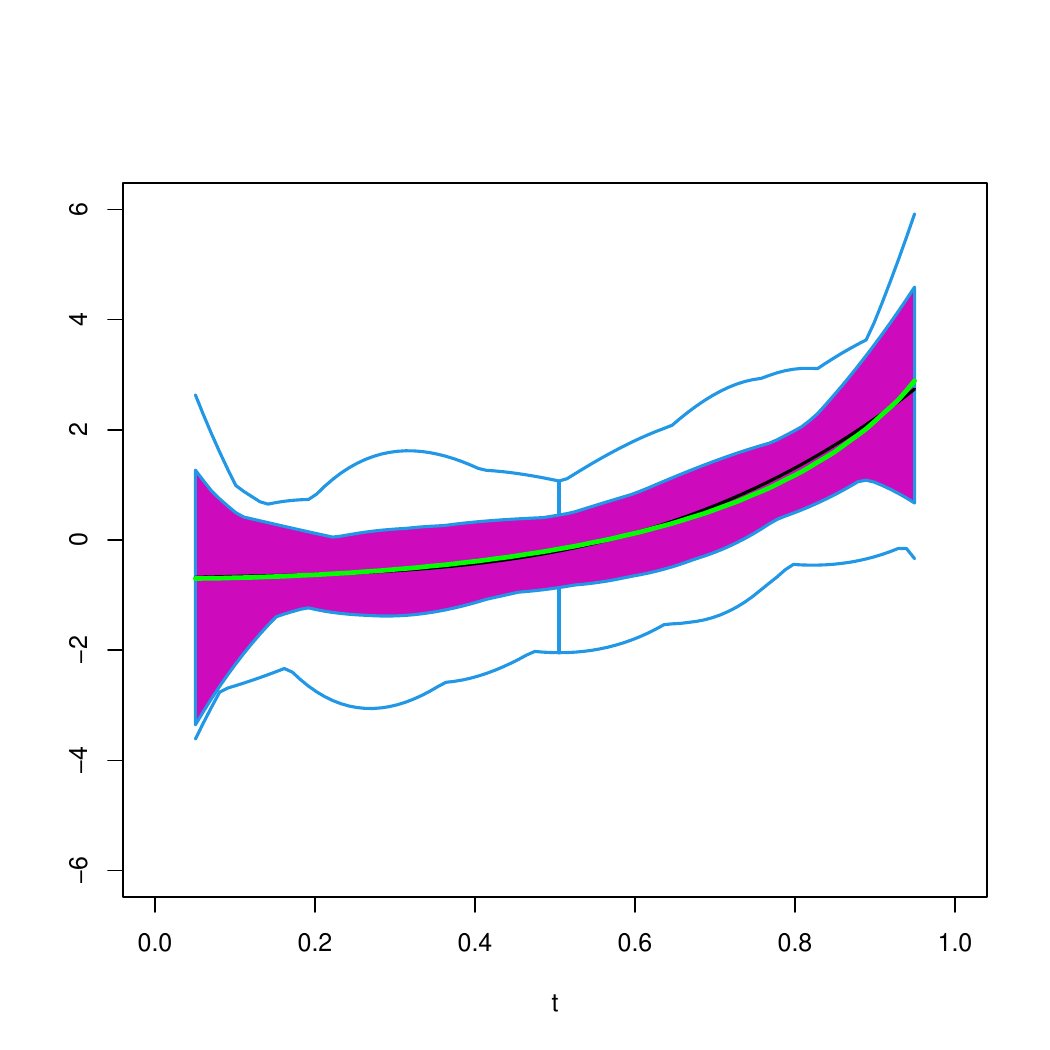}

\end{tabular}
\caption{\small \label{fig:wbeta-C510-poda5}  Functional boxplot of the estimators for $\beta_0$ under $C_{5,0.10}$  within the interval $[0.05,0.95]$. 
The true function is shown with a green dashed line, while the black solid one is the central 
curve of the $n_R = 1000$ estimates $\wbeta$.  }
\end{center} 
\end{figure}

\section{Real data example} \label{sec:ejemplos}
 

 In this section, we consider   the electricity prices data set analysed in \citet{liebl:2013} in the context of electricity
price forecasting and we investigate the performance of the proposed estimators.

The data consist of hourly electricity prices in Germany between 1 January 2006 and
30 September 2008, as traded at the Leipzig European Energy Exchange, German electricity demand (as reported by the
European Network of Transmission System Operators for Electricity). These data  were also used in \citet{boente:salibian:vena:2020} who modelled the daily average hourly energy demand through a semi-functional linear regression model  using as euclidean covariate the
mean hourly amount of wind-generated electricity in the system for that day and as functional covariate  the curve of energy prices  observed hourly. 

In our analysis, the response measures high or low demand of electricity, that is, we define the binary variable $y=1$ if the average hourly demand exceeds 55000 (\lq\lq High Demand\rq\rq) and $0$ (\lq\lq Low Demand\rq\rq) otherwise. Furthermore, the functional covariates used to predict the conditional probability that a day has high demand, correspond to the curves of energy prices. These curves are observed hourly originating a matrix of dimension $638 \times 24$, after removing weekends, holidays and other non-working days. 
Hence, with these data we fit a functional logistic regression model, 
$$ \prob(y = 1|X)  = F\left(\alpha_0 + \left\langle X, \beta_0 \right \rangle\right)\,,$$
using the weighted robust estimators defined in this paper and their classical alternatives. 
The trajectories corresponding to $y=1$ and $y=0$ are given in Figure \ref{fig:maplotX-HDS}.

\begin{figure}[ht!]
 \begin{center}
    \newcolumntype{G}{>{\centering\arraybackslash}m{\dimexpr.5\linewidth-1\tabcolsep}}
    \begin{tabular}{GG}
			 Low demand ($y=0$) & High demand ($y=1$) \\[-3ex]   			
			\includegraphics[scale=0.4]{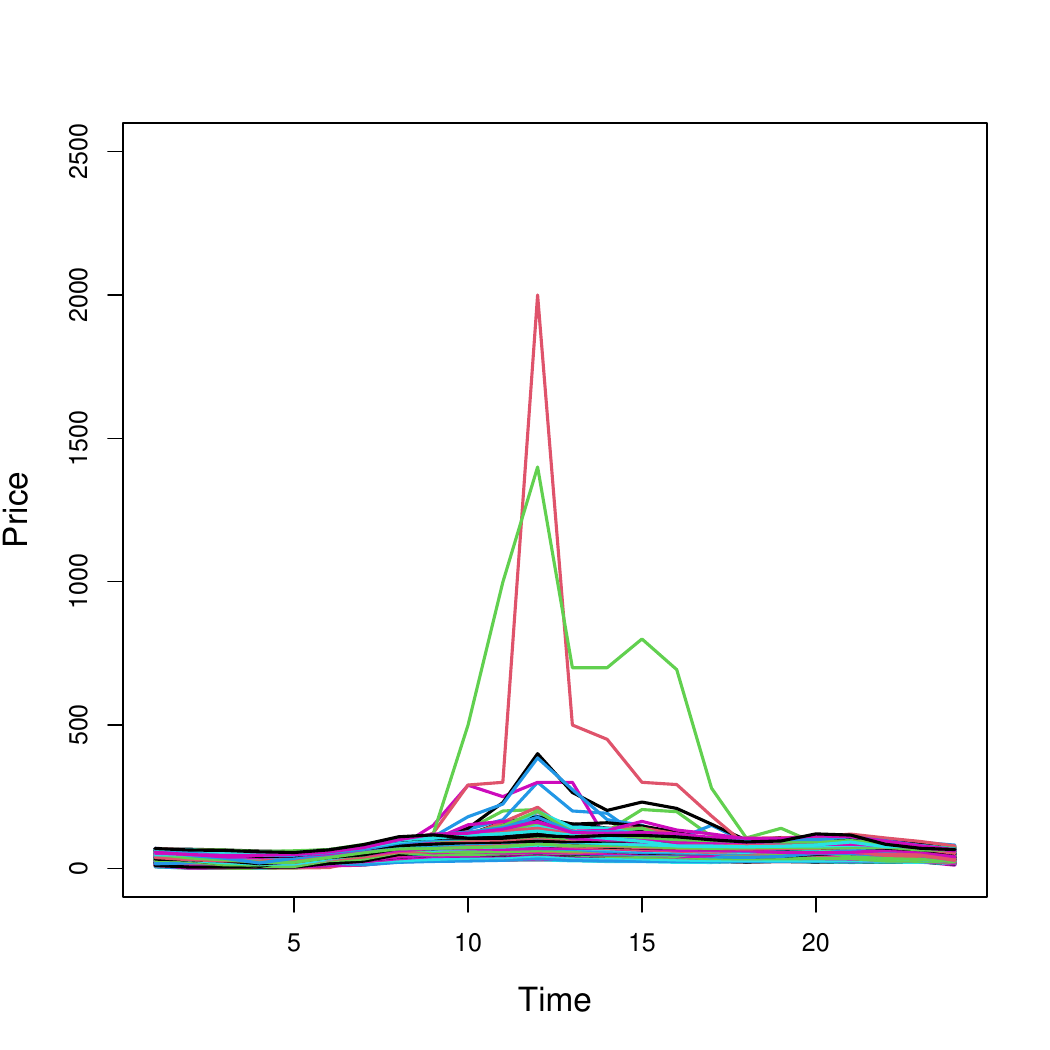}
			&   \includegraphics[scale=0.4]{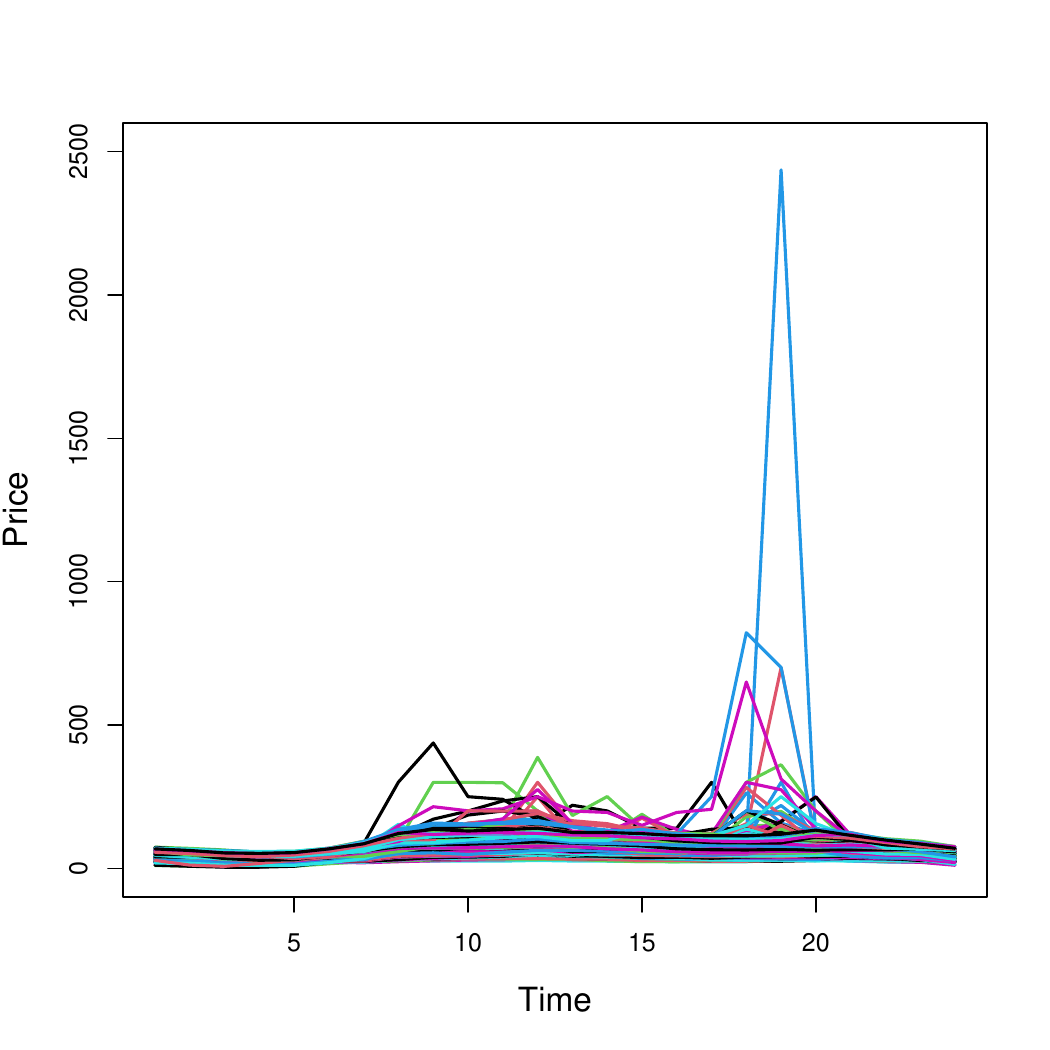} 
\end{tabular}
\caption{\small \label{fig:maplotX-HDS} Energy prices for the  German electricity data.}
\end{center} 
\end{figure}  

 To compute the estimators, we used $B-$splines to generate the space of possible candidates and as above we denote $\{B_j\}_{j=1}^{k_n}$ the $k_n-$dimensional basis. As in the simulation study, after dimension reduction, we computed the finite--dimensional coefficients through \eqref{eq:FWBY} using as loss $\rho(t)=t$ that leads to the deviance based estimators denoted ${\clasnorm}$ and the $\rho$ function introduced in \citet{Croux:H:2003} with tuning constant $0.5$ denoted $\emenorm$. We also computed the weighted versions of the previous estimators choosing as weights the  weights based on the robust Mahalanobis distance of the projected data, that is,  we evaluated the Donoho--Stahel location and scatter estimators, denoted $\wbmu$ and $\wbSi$, respectively, of the sample $\bx_1, \dots, \bx_n$ with $\bx_i=(x_{i1}, \dots, x_{i k_n})\trasp $, $x_{ij}=\langle X_i, B_j\rangle$ and we defined $w(X_i)=1$ when $ d_i^2= ( \bx_i - \wbmu )\trasp \wbSi^{-1}( \bx_i - \wbmu )$ is less than or equal to $\chi_{0.975,k_n}$ and $0$ otherwise.  For simplicity, in Table \ref{tab:german-alfas-HDS} below, these procedures are labelled ${\wclnorm}$ and  ${\wemenorm}$, respectively.   
As in the simulation, to select the basis dimension we use the criterion defined in Section \ref{sec:BIC}, that is,  we minimize the quantity $RBIC(k)$ defined in  \eqref{eq:bic1}. All procedures  choose $k_n=7$, except the $M-$estimator that selects $k_n=5$.

\begin{table}[ht]
\centering
\begin{tabular}{|cccc|}
  \hline
${\clasnorm}$ & $\emenorm$ & ${\wclnorm}$ &  ${\wemenorm}$ \\
\hline
-1.564 &  -1.588 & -1.693 & -1.811  \\
\hline
\end{tabular}
\caption{\small \label{tab:german-alfas-HDS} Estimates of $\alpha_0$.} 
\end{table} 

The estimates of $\alpha_0$ are reported in Table \ref{tab:german-alfas-HDS}, while those for $\beta_0$ are shown in Figure \ref{fig:german-betas-HDS}. In solid black and gray lines
we represent the ${\wemenorm}$ and $M-$estimators,  respectively, while red solid and dashed lines correspond to the ${\clasnorm}$ and ${\wclnorm}$  procedures. Comparing the obtained results, we note that the weighted estimators lead to slightly larger absolute values of the intercept. Regarding the estimation of $\beta_0$, it is clear that the $M-$estimator produces a smoother curve since the basis dimension is smaller. The classical procedure is almost equal to $0$ between 9am and 4pm, while all the weighted estimators detect a positive \lq\lq peak\rq\rq around 1pm and two  slumps near 8am and 17pm. All procedures detect the large peaks close 4am and 9pm, where prices seem to have a larger (in magnitude) association when predicting the demand, but for the classical procedure  the magnitude of the function $\wbeta$ is smaller
than that of the weighted ones in that range.

\begin{figure}[ht!]
\centering
\includegraphics[scale=0.4]{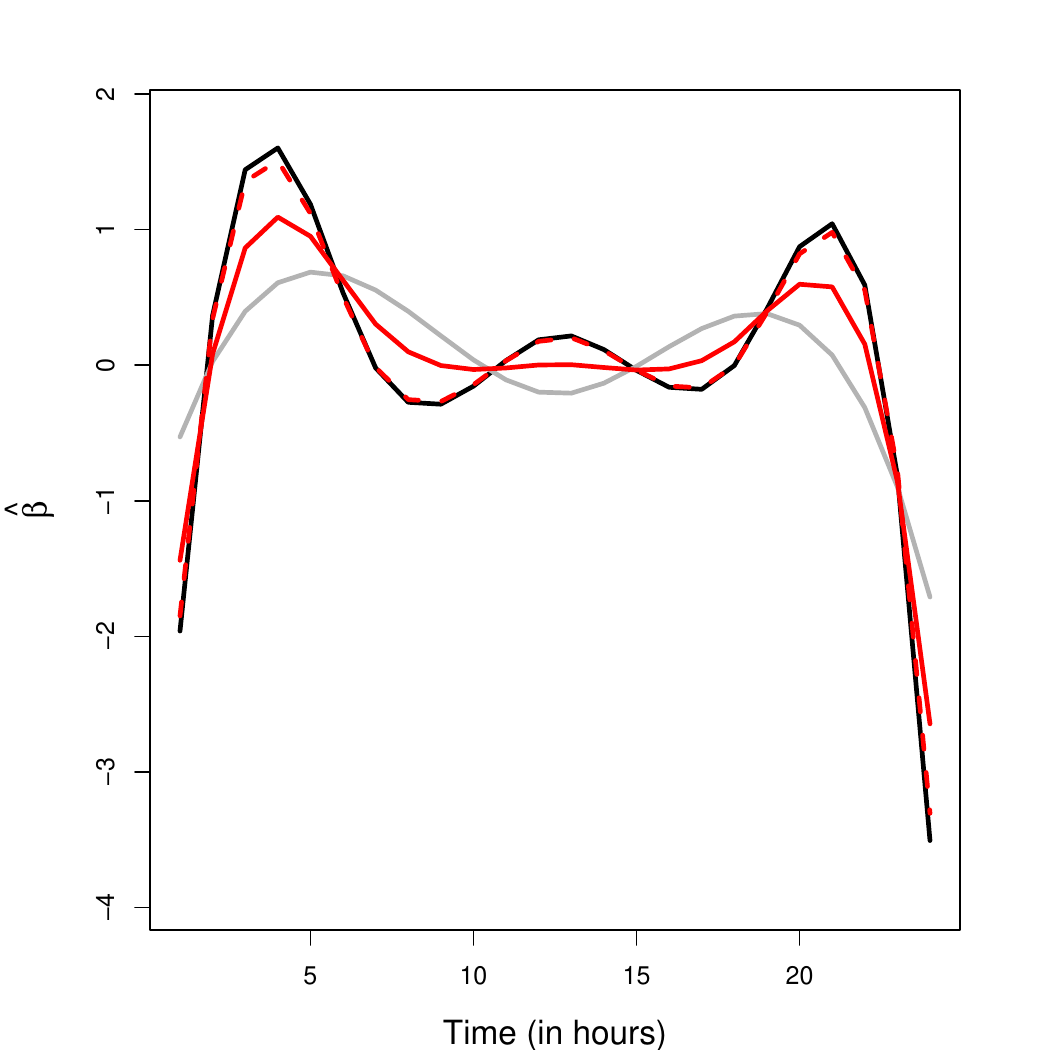}  
\caption{\small \label{fig:german-betas-HDS}  German Electricity: Estimates of $\beta_0$. Solid black and gray lines
are used for the weighted-$M$ and $M-$estimators, respectively, while red solid and dashed lines correspond to the classical   procedure and its weighted counterpart.}
\end{figure}

To identify potential atypical observations, we used the deviance QQ-plot defined in \citet{garciaben:yohai:2004}. For that purpose, we considered the deviance residuals based on the ${\wemenorm}-$estimator. More precisely, if  $(\walfa, \wbeta)$ stands for the weighted $M-$estimator, we compute the predicted probabilities and deviances $\wpe_i=F\left(\walfa+ \left\langle X_i, \wbeta\right \rangle\right)$ and  
$d_i = \text{sign}(y_i - \wpe_i)\sqrt{-2\left\{(1-y_i)\log(1-\wpe_i)+ y \log(\wpe_i))\right\}}\,.$ 
Let $\wF_D$ be the estimator of their distribution function, given by  
$$
\wF_D(d)=\frac{1}{n}\left\{\sum_{\wpe_i \in \itA} \wpe_i+\sum_{1-\wpe_i \in \itA}\left(1-\wpe_i\right)\right\}\,,
$$
where $\itA=\left\{t:(-2 \log t)^{1 / 2} \leq d\right\}$.  We consider outliers those observations with deviance residual $d_i$  smaller than $\wF_D^{-1}(0.005)$ or larger than $\wF_D^{-1}(0.995)$, leading to $19$ observations detected as possible atypical observations.  
These observations, together with the value of their residual deviance and predicted probability are reported in Table \ref{tab:resdeviance-HDS}. 

\begin{table}[ht]
\centering
\small
\begin{tabular}{|c|ccc|c|c|ccc|}
  \hline
   & Date & $d_i$ &  $\wpe_i$ & &  & Date & $d_i$ &  $\wpe_i$ \\ 
  \hline
  $i$ & \multicolumn{3}{c|}{$y_i=0$} & & $i$ & \multicolumn{3}{c|}{$y_i=1$}\\
  \hline
  134 & 2006-07-25 & -5.686 & 1.000 & &  66 & 2006-04-10 & 2.344 & 0.064\\ 
  136 & 2006-07-27 & -4.972 & 1.000 & &  75 & 2006-04-25 & 2.905 & 0.015 \\  
  137 & 2006-07-28 & -2.667 & 0.971 & &  76 & 2006-04-26 & 2.511 & 0.043 \\ 
  469 & 2008-01-10 & -2.493 & 0.955 & &  77 & 2006-04-27 & 2.313 & 0.069 \\ 
  502 & 2008-03-03 & -2.412 & 0.945 & & 445 & 2007-11-20 & 3.208 & 0.006 \\ 
  508 & 2008-03-11 & -2.282 & 0.926 & & 543 & 2008-05-06 & 2.393 & 0.057 \\  
  519 & 2008-03-28 & -3.052 & 0.990 & & 550 & 2008-05-16 & 2.435 & 0.052 \\  
  621 & 2008-09-05 & -2.512 & 0.957 & & 580 & 2008-07-01 & 2.196 & 0.090 \\ 
  622 & 2008-09-08 & -2.497 & 0.956 & & & & &\\ 
  627 & 2008-09-15 & -2.385 & 0.942 & & & & &\\ 
  638 & 2008-09-30 & -2.923 & 0.986 & & & & &\\   
   \hline
\end{tabular}
\caption{\small \label{tab:resdeviance-HDS} Outliers identified by weighted $M-$estimator  with weights based on the Mahalanobis distance of the projected data and a hard rejection weight function.} 
\end{table}

It is worth mentioning, that the observation corresponding to November 7 of 2006, which is an outlier in the covariate space is not detected as such by the deviance QQ-plot, since it does not correspond to a bad leverage point. 

\begin{figure}[ht!]
 \begin{center}
    \newcolumntype{G}{>{\centering\arraybackslash}m{\dimexpr.5\linewidth-1\tabcolsep}}
    \begin{tabular}{GG}
			 Low demand & High demand \\[-3ex]   			
			\includegraphics[scale=0.4]{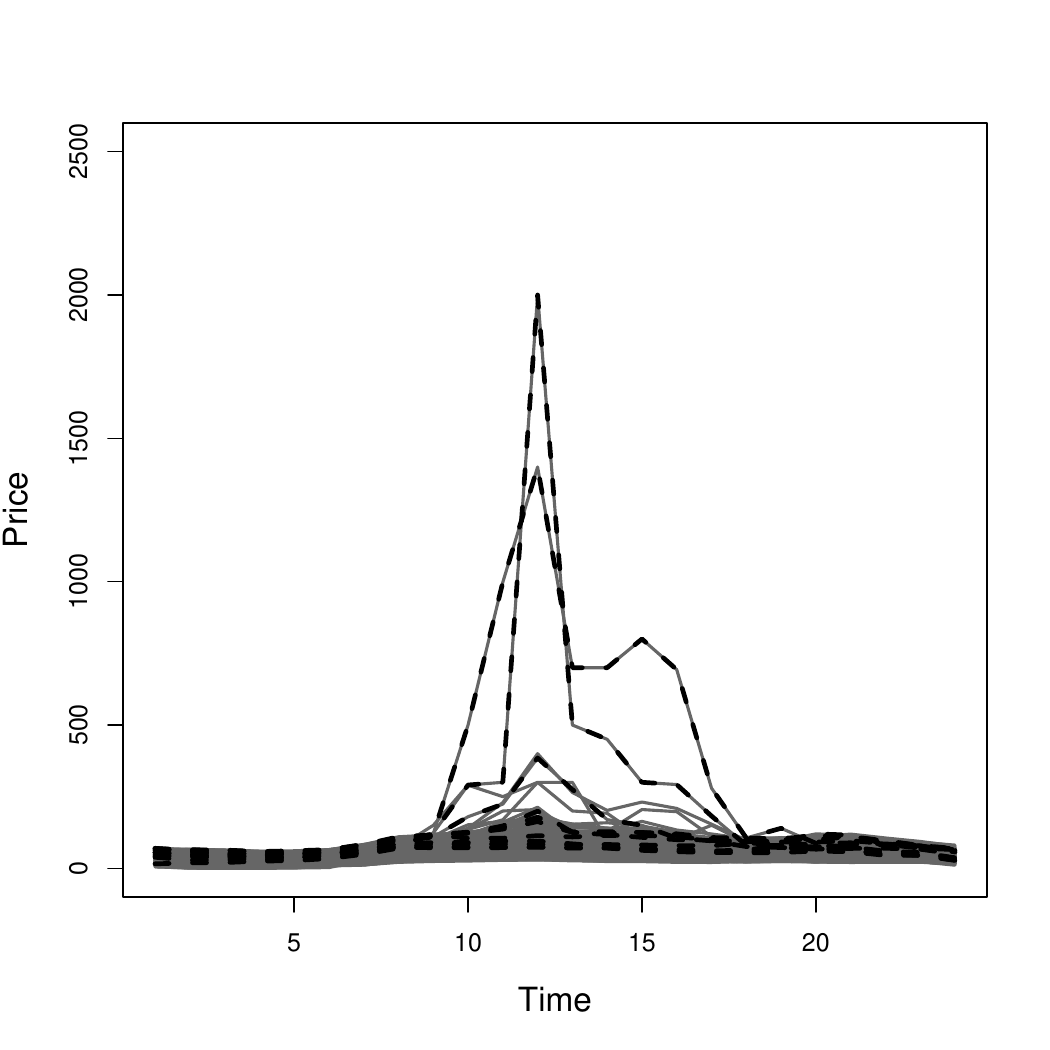}
			&   \includegraphics[scale=0.4]{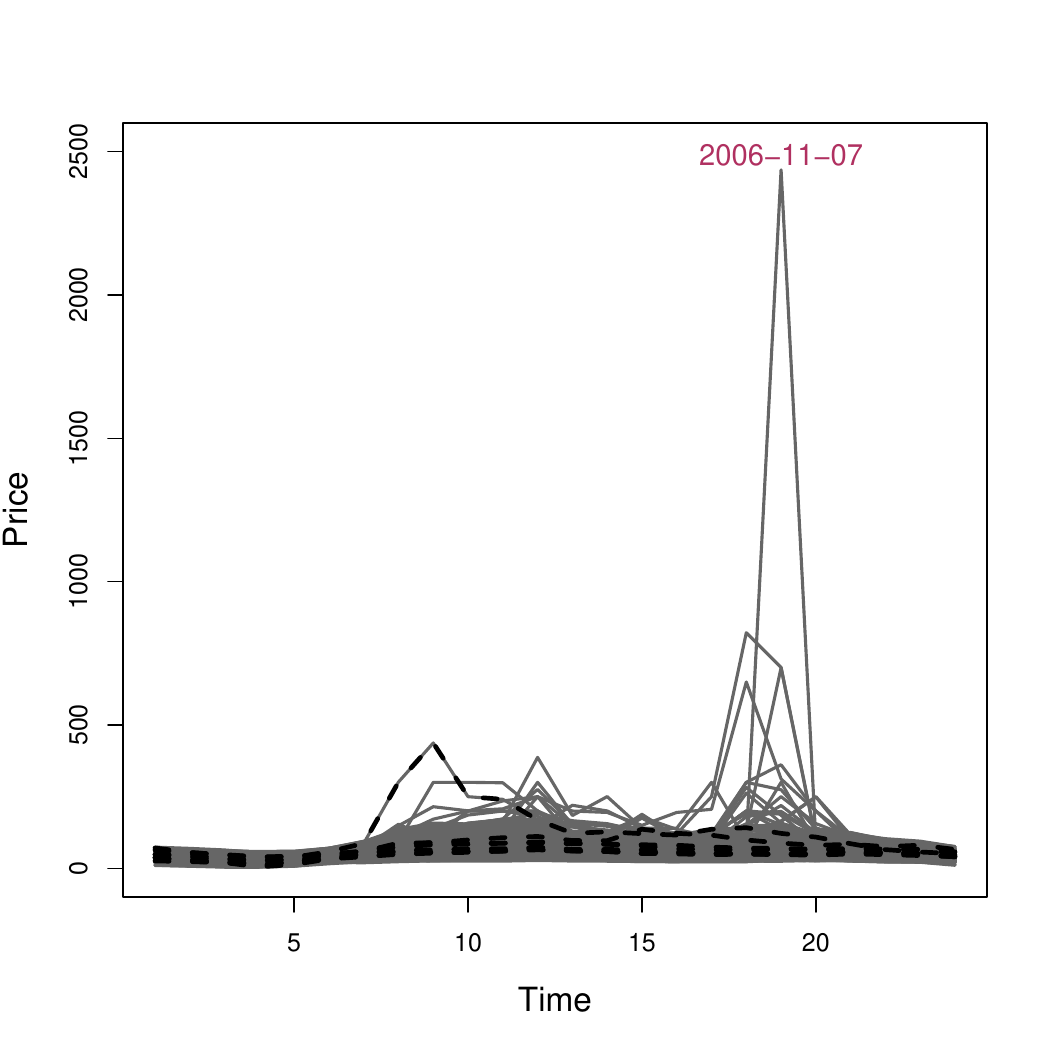} 
\end{tabular}
\vskip-0.2in
\caption{\small \label{fig:maplotX-outs-HDS} Energy prices for the  German electricity data. The covariates related to the detected atypical data are plotted in dashed black lines.}
\end{center} 
\end{figure}  

The covariates related to the detected atypical observations are presented in Figure \ref{fig:maplotX-outs-HDS} with dashed black lines. The left panel correspond to the covariates corresponding to $y=0$ and the right one to $y=1$, where we also indicate the curve corresponding to the 7th of November which  clearly appears as an outlier  in the covariate space.

We next compute the classical estimator after removing the identified outliers. The obtained value for the estimate of $\alpha_0$ is -1.852, a value closer to the one obtained when using the ${\wemenorm}$, as reported in Table \ref{tab:german-alfas-HDS}.  Figure \ref{fig:german-betas-clsout-HDS}  presents the obtained estimator in dotted magenta lines. To facilitate comparisons we present  in the left panel  all the estimators and in the right one, only the weighted $M-$estimate with solid black line and the classical one computed with the whole data set and without the outliers, with a red solid line  and a dotted magenta one. Note the similarity of the curve obtained by the ${\wemenorm}$ method and the estimate obtained when using  the $\clasnorm$ method after removing the atypical observations.

\begin{figure}[ht!]
\centering
\newcolumntype{G}{>{\centering\arraybackslash}m{\dimexpr.5\linewidth-1\tabcolsep}}
    \begin{tabular}{GG} 
   \includegraphics[scale=0.4]{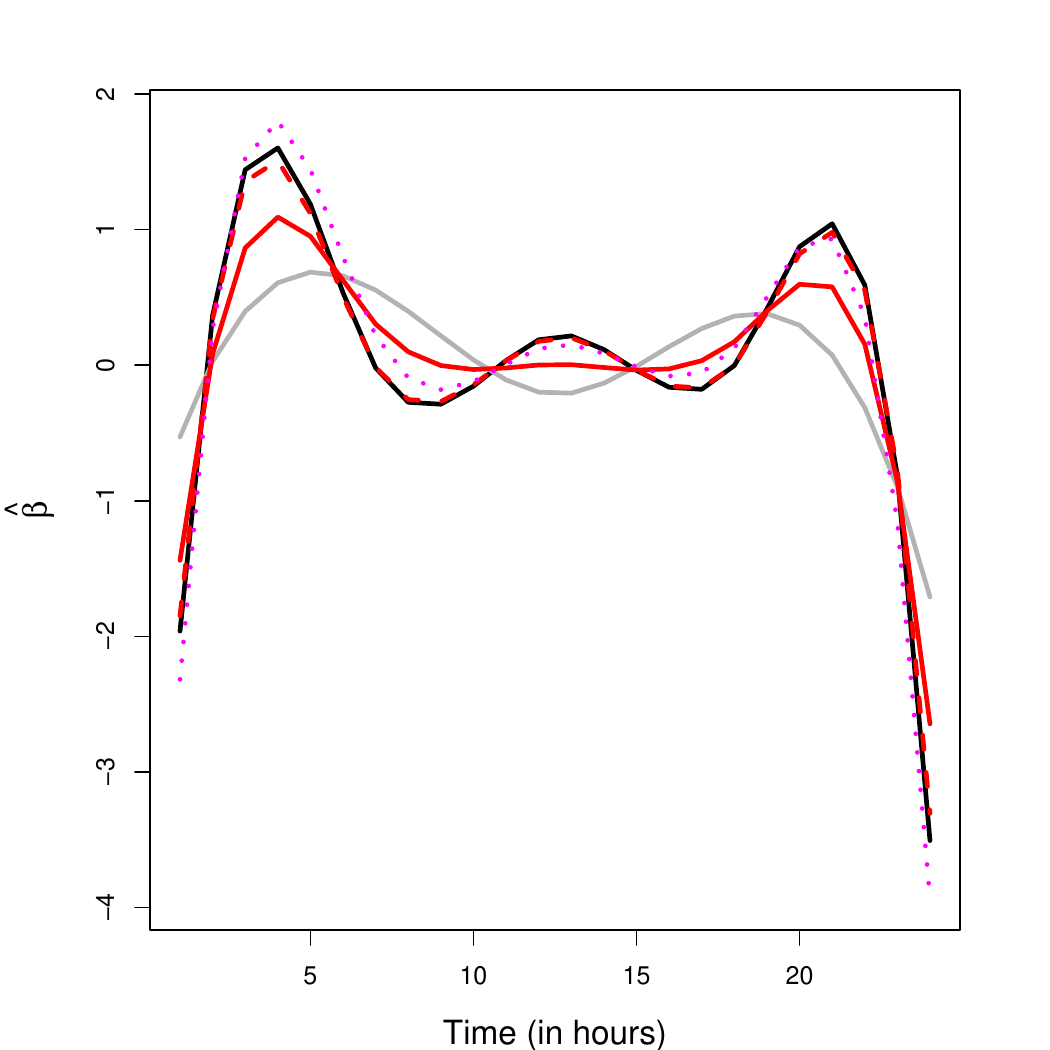} & 
   \includegraphics[scale=0.4]{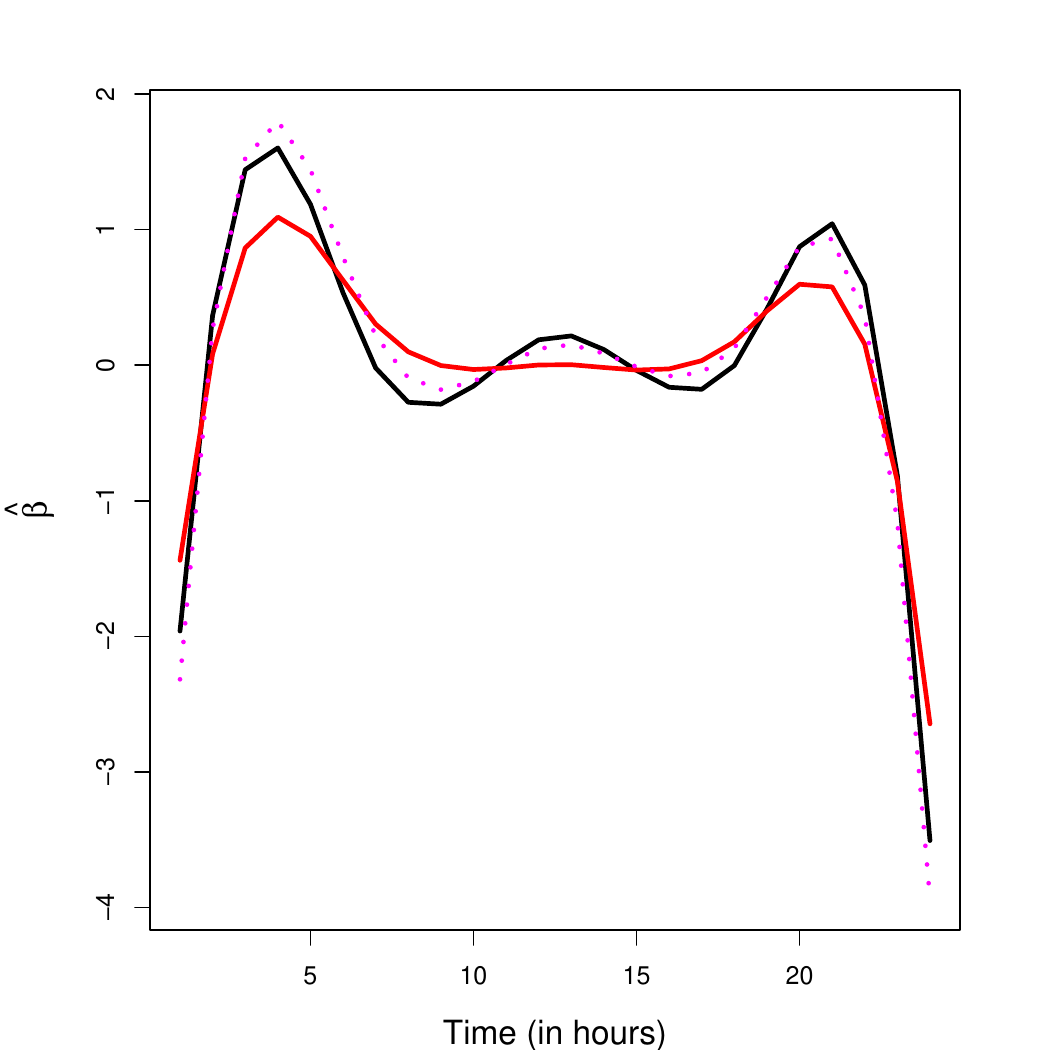}  
\end{tabular}
\caption{\small \label{fig:german-betas-clsout-HDS}  German Electricity: Estimates of $\beta_0$. A solid black and gray lines
are used for the weighted-$M$ and $M-$estimators, respectively, while red solid and dashed lines correspond to the classical   procedure and its weighted counterpart. The classical estimator computed without the detected outliers is given in magenta dotted lines.}
\end{figure}
 
\section{Conclusion}{\label{sec:concl}}
 This paper faces the problem of providing  robust estimators for the functional logistic regression model. We tackle the problem by basis dimension reduction, that is, we consider a finite--dimensional space of candidates and, once  the reduction is made, we use weighted $M-$estimators to down--weight the effect of bad leverage covariates. In this sense, our estimators are robust against outliers corresponding to miss--classified points, and also to outliers in  the functional explanatory variables that have a damaging effect on the estimation. 
We also propose a robust BIC-type criterion to select the optimal dimension of the splines bases that seems to work  well in practice.

Regarding the asymptotic behaviour of our proposal,  we prove that the estimators are strongly
consistent, under a very general framework which allows the practitioner to choose different basis as $B-$splines, Bernstein and Legendre polynomials, Fourier basis or Wavelet ones.   The convergence rates obtained depend on the capability of the basis to approximate the true slope function. It is worth mentioning that unlike the weak consistency results obtained in \citet{kalogridis:2023} for penalized estimators based on divergence measures, our achievements do not require   the covariates to be bounded. 
Moreover, strong consistency rather than convergence in probability of the sieve based weighted $M-$estimators is derived. 

The performance  in finite samples is studied by means of a simulation study and in a real data example. The simulation study reveals the good robustness properties of our proposal and the high sensitivity of the procedure based on the deviance when atypical observations are present.  In particular, weighted $M-$estimators show their advantage over the unweighted ones, a fact already observed by \citet{Croux:H:2003} when   finite--dimensional covariates instead of functional ones are used in the model. 

We also apply our method to a real data set  where the weighted $M-$estimator  remains reliable  even in presence of    misclassified   observations corresponding to atypical functional explanatory variables. The deviance residuals obtained from the robust fit provide a natural way to identify potential atypical observations. The classical estimators obtained minimizing the deviance after dimension reduction and computed over the \lq\lq cleaned\rq\rq  ~  data set, that is, the data set without the outliers present a similar shape to those obtained by the robust procedure which automatically down--weights these data.

As it has been extensively discussed, functional data are intrinsically infinite--dimensional and even when recorded at a finite grid of points, they should be considered as random elements of some functional space rather than multivariate observations.  For instance, in some applications, the functional covariates  may be viewed as longitudinal data, while in other cases,   the predictors are densely observed curves. Examples of these situations  are the Mediterranean fruit flies data   studied in \citet{muller:stadtmuller:2005}, for the former, and the Tecator data set studied in \citet{ferraty:vieu:2006} or the Danone data set reported in \citet{aguilera-morillo:etal:2013}, for the latter.
In more challenging cases, the grid at which the predictors are recorded may be sparse  or irregular and the observations may be possibly subject to measurement errors. The proposals considered in  \citet{james:2002} or \citet{muller:2005} for generalized functional regression models allow  for sparse recording and measurement errors. 
Even when our proposal  can  be implemented if the predictors are observed on a dense grid of points,  the asymptotic results derived make use of the whole process structure. The interesting situation of  sparse and irregular time grids which is common, for instance,  in longitudinal studies is beyond the scope of the paper and may be object of future research.
 
\section{Acknowledgements.}
 This research was partially supported by    Universidad de Buenos Aires [Grant  20020170100022\textsc{ba}]  and  \textsc{anpcyt} [Grant  \textsc{pict} 2021-I-A-00260] at  Argentina (Graciela Boente and Marina Valdora) and by the   Ministry of Economy, Industry
and Competitiveness, Spain (MINECO/AEI/FEDER, UE) [Grant  {MTM2016-76969P}] (Graciela Boente).

  \setcounter{section}{0}
\renewcommand{\thesection}{\Alph{section}}

\section{Appendix} {\label{sec:appendix}}
From now on, we denote   $\nu(t)$  the function 
 \begin{equation}\label{eq:funcionnu}
\nu(t)= \psi\left(-\log F(t)\right)\left[1- F(t)\right] + \psi\left(-\log\left[1- F(t)\right]\right) F(t) \,.
\end{equation} 
Under \ref{ass:rho_bounded_derivable} and \ref{ass:rho_derivative_positive}, the function  $\nu(\cdot)$ is continuous,  bounded and strictly positive.  

Denote $\Psi(y,t) = {\partial} \phi(y,t)/{\partial t}= -[y-F(t)] \nu(t)$ where $\nu(t)$ is given by   \eqref{eq:funcionnu}. 
Under \ref{ass:rho_bounded_derivable} and \ref{ass:rho_derivative_positive}, the function  $\Psi(y, \cdot)$ is continuous and  bounded and we will denote   $M_{\phi}=\sup_{y\in \{0,1\}, t\in \real} \phi(y,t)$ and $M_{\Psi}=\sup_{y\in \{0,1\}, t\in \real} |\Psi(y,t)|$,   where we have used that $\phi(y,t)\ge 0$ from assumptions \ref{ass:rho_bounded_derivable}  and \ref{ass:rho_derivative_positive}.

Henceforth, for any $(\alpha, \beta)\in \real\times L^2([0,1])$, we denote $\theta=(\alpha, \beta)$. Besides, $L(\theta)$ stands for the population counterpart of $L_n(\theta)$, that is,
\begin{equation*}
L( \theta) = \esp\left(\phi\left(y, \alpha + \langle X, \beta\rangle \right) w(X) \right)\,,
\label{eq:Ltheta}
\end{equation*}
where $y|X\sim Bi\left(1, F(\alpha_0+\langle X, \beta_0\rangle)\right)$. Recall that we have denoted $\wtheta=(\walfa,\wbeta)$ and $\theta_0=(\alpha_0,\beta_0)$.

\subsection{Some preliminary results}
Lemma \ref{lema:FC} states the Fisher--consistency of the estimators defined through \eqref{eq:FWBY}. It follows using similar arguments as those considered in Theorem S.1.1  from \citet{Bianco:Boente:Chebi:2022}. 
Note that Lemma \ref{lema:FC}(a) only states that the true value $\theta_0$ is one of the minimizers of the function $L( \theta)$, while in part (b) to derive that it is the unique minimizer an additional requirement, \eqref{eq:identifiafuerte}, is needed. As mentioned in Remark \ref{remark:comentarios}, this condition is related to the fact that the slope parameter is not identifiable if the kernel of the covariance operator of $X$ does not reduce to $\{0\}$.  Instead of asking \eqref{eq:identifiafuerte} to hold for any  $\beta \in L^2([0,1])$, we require that the condition holds only for $  \beta \in  \itH^{\star}$ whenever $  \beta_0 \in  \itH^{\star}$, where henceforth,  $ \itH^{\star}= \itH$ when $\beta_0\in \itH$, while $ \itH^{\star}=\itW^{1,\itH}$  when $\beta_0\in\itW^{1,\itH}$. It is worth mentioning that under \ref{ass:funcionw} and \ref{ass:probaX*}, condition \eqref{eq:identifiafuerte} is fulfilled.

\begin{lemma}\label{lema:FC}
Suppose  $(y, X)$ follows  the functional logistic regression model given in  \eqref{eq:FLOGIT}.	
Let $\phi:\real^2\to \real$ be the function given by \eqref{eq:phiBY}, where $\rho$ satisfies conditions \ref{ass:rho_bounded_derivable} and \ref{ass:rho_derivative_positive}    and let $w$ be a non--negative bounded function. 
\begin{itemize}
\item[a)] For any $\theta=(\alpha,\beta)\in \real\times L^2([0,1])$, we have that  $L(\theta_0)\le  L(\theta)$.
\item[b)] Furthermore,  assume that  $  \beta_0 \in  \itH^{\star}$ and
 \begin{equation}
       \prob\left(\langle X, \beta\rangle  = a \cup \{w(X)=0\}\right)<1, \qquad \forall   a \in \real, \beta \in  \itH^{\star}, (a,\beta) \neq 0 \,,
       \label{eq:identifiafuerte}
     \end{equation}
holds.  Then, for all $\alpha \in \real$, $\beta\in \itH^{\star}$, $\theta=(\alpha,\beta) \neq (\alpha_0,\beta_0)$, we have $L(\theta_0)< L(\theta)$. 
 \end{itemize}
\end{lemma}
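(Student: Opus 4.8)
The plan is to reduce both parts to a pointwise (conditional) minimization problem by conditioning on $X$ and exploiting that $w\ge 0$. Fix $X$ and write $t_0=\alpha_0+\langle X,\beta_0\rangle$ and $p_0=F(t_0)$. Since $y\mid X\sim Bi(1,p_0)$, the representation \eqref{eq:phiBY*} gives that the conditional expectation of $\phi(y,t)$ given $X$ is
\begin{equation*}
g(t)=\esp\left[\phi(y,t)\mid X\right]= p_0\,\rho\!\left(-\log F(t)\right)+(1-p_0)\,\rho\!\left(-\log[1-F(t)]\right)+G(F(t))+G(1-F(t))\,,
\end{equation*}
which is bounded by \ref{ass:rho_bounded_derivable}. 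The whole argument rests on showing that $g$ is uniquely minimized at the value $t=t_0$ that reproduces the true conditional probability $p_0$.

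The key step is to differentiate $g$. Using $F'=F(1-F)$ and the chain rule on each of the four summands, I expect the contributions from $\rho(-\log F)$, from $\rho(-\log[1-F])$ and from the correction $G(F)+G(1-F)$ to combine and factor as
\begin{equation*}
g'(t)=\bigl(F(t)-p_0\bigr)\,\nu(t)\,,
\end{equation*}
with $\nu$ as in \eqref{eq:funcionnu}; this is consistent with $\esp[\Psi(y,t)\mid X]=(F(t)-p_0)\nu(t)$, since $\Psi(y,t)=-[y-F(t)]\nu(t)$ and $\esp[y\mid X]=p_0$. Because $\nu$ is strictly positive (as recorded immediately before the Lemma, under \ref{ass:rho_bounded_derivable} and \ref{ass:rho_derivative_positive} with $a\ge\log 2$) and $F$ is strictly increasing, $g'(t)<0$ for $t<t_0$ and $g'(t)>0$ for $t>t_0$. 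Hence $g$ is strictly decreasing then strictly increasing, so $t_0$ is its unique global minimizer and $g(t)\ge g(t_0)$ for every $t$, with equality iff $t=t_0$.

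Part (a) then follows by the tower property: writing $t=\alpha+\langle X,\beta\rangle$ and using $w\ge 0$ together with the pointwise bound for every $X$,
\begin{equation*}
L(\theta)=\esp\left[w(X)\,g\bigl(\alpha+\langle X,\beta\rangle\bigr)\right]\ge \esp\left[w(X)\,g(t_0)\right]=L(\theta_0)\,.
\end{equation*}
For part (b), the integrand $w(X)\bigl[g(\alpha+\langle X,\beta\rangle)-g(t_0)\bigr]$ is nonnegative and strictly positive exactly when $w(X)>0$ and $\alpha+\langle X,\beta\rangle\neq t_0$, the latter reading $\langle X,\beta_0-\beta\rangle\neq\alpha-\alpha_0$. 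Setting $\beta_\ast=\beta_0-\beta\in\itH^{\star}$ (a linear space, and $\beta,\beta_0\in\itH^{\star}$) and $a_\ast=\alpha-\alpha_0$, the pair $(a_\ast,\beta_\ast)\neq 0$ because $\theta\neq\theta_0$, so \eqref{eq:identifiafuerte} yields $\prob\bigl(\{\langle X,\beta_\ast\rangle=a_\ast\}\cup\{w(X)=0\}\bigr)<1$. Thus the event where the integrand is strictly positive has positive probability, forcing $L(\theta)>L(\theta_0)$.

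The main obstacle is the derivative computation leading to the clean factorization $g'(t)=(F(t)-p_0)\nu(t)$: one must carefully track the cancellations among the three groups of terms, and then lean on the strict positivity of $\nu$. It is precisely here that \ref{ass:rho_derivative_positive} and the threshold $a\ge\log 2$ enter, guaranteeing that at least one of the two $\psi$-terms in $\nu$ is strictly positive for every $F(t)\in(0,1)$, so that $g$ has a genuine minimum at $t_0$ rather than merely a critical point; everything else (integrability, the conditioning, and the measure-theoretic argument for part (b)) is routine once this identity and $\nu>0$ are in hand.
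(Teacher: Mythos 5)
Your proof is correct and follows essentially the same route as the paper's: conditioning on $X$, reducing to the one-dimensional function $t\mapsto\esp[\phi(y,t)\mid X]$, obtaining the derivative $(F(t)-p_0)\nu(t)$ via $\Psi(y,t)=-[y-F(t)]\nu(t)$, and using the strict positivity of $\nu$ to get a unique minimum at $t_0$, with part (b) following from the identifiability condition \eqref{eq:identifiafuerte} applied to $(\alpha-\alpha_0,\beta_0-\beta)$. The only cosmetic difference is that the paper writes the conditional expectation as $\phi(F(R_0(X)),R(X))$ rather than spelling out the four-term expression $g(t)$.
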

\begin{proof}  
(a)  As in Theorem 2.2 in \citet{Bianco:yohai:1996}, taking conditional expectation, we have that
\begin{align*}
 L(\theta) & = \esp \left[\esp \phi(y,\alpha+\langle X, \beta\rangle)w(X)|X\right]=\esp \left\{ \phi(F(\alpha_0+\langle X, \beta_0\rangle  ),\alpha+\langle X, \beta\rangle  ) w(X)\right\}\\
 & = \esp\left\{  \phi(F(R_0(X)),R(X)) w(X)\right\}\,,
 \end{align*}
where for a fixed value $X$, we have denoted $R(X)=\alpha+\langle X, \beta\rangle $ and $R_0(X)=\alpha_0+\langle X, \beta_0\rangle  $.

We will show that, for any fixed $t_0$, the function $\phi( F(t_0),t)$ reaches its unique minimum when $t=t_0$. For simplicity, denote $ \Phi(t)=  \phi( F(t_0),t)$, then, $\Phi^{\prime}(t)= \Psi(F(t_0),t)= \,-\, (F(t_0)-F(t)) \nu(t)$,
where $\nu(t)$   defined in \eqref{eq:funcionnu} is positive.
Hence, $ \Phi^{\prime}(t_0)=0$. Furthermore, $\Phi^{\prime}(t)>0$, for $t>t_0$, and $\Phi^{\prime}(t)<0$ for $t<t_0$ which entails that $\Phi$ has a unique minimum at $t_0$.

Hence, for any $R(X)\ne R_0(X)$, $\phi(F(R_0(X)),R_0(X))< \phi(F(R_0(X)),R(X))$, that is,  
\begin{equation}\label{eq:eq1}
\phi(F(\alpha_0+\langle X, \beta_0\rangle ),\alpha_0 +\langle X,  \beta_0\rangle  )<  \phi(F(\alpha_0 +\langle X,  \beta_0\rangle ),\alpha +\langle X, \beta\rangle )\,,
 \end{equation}
  for any $\langle X,(\beta-\beta_0)\rangle + \alpha-\alpha_0\ne 0$	which concludes the proof of a).
  
The proof of b) follows immediately from \eqref{eq:eq1}  since \eqref{eq:identifiafuerte} holds and $\beta-\beta_0\in   \itH^{\star}$. 
\end{proof}

Lemma \ref{lemma:L_function}  provides an improvement over Lemma \ref{lema:FC} and will be helpful to derive Theorem \ref{teo:RATES1}. 
For its proof, we need the following Lemma which corresponds to   Lemma 3 in \citet{Bianco:Boente:Chebi:2023}. We state it here for completeness.

\begin{lemma} \label{lemma:L_function_Chebi}
Let $\Lambda(p, p_0)$ be defined as 
\begin{equation}
\Lambda(p, p_0) = p_0 \rho(- \log p) + (1 - p_0) \rho(- \log(1-p)) + G(p) + G(1- p)  \,
\label{eq:funcionM}
\end{equation} 
for $(p, p_0) \in (0,1) \times [0,1]$ and assume that  assumption  \ref{ass:rho_bounded_derivable} holds.
\begin{enumerate}
\item[(a)] If in addition  \ref{ass:rho_derivative_positive} holds, then the function $\Lambda(p, p_0)$ has a unique minimum at $p=p_0$. Furthermore, $\Lambda(p, p_0)$ can be extended to a continuous function on $[0,1] \times [0,1]$.
\item[(b)] Assume that, in addition, \ref{ass:psi_strictly_positive} holds. Then there exists a positive constant $C_0$ such that for each $0<p<1$, $\Lambda(p, p_0) - \Lambda(p_0, p_0) \geq  C_0  \,(p - p_0)^2$.
\end{enumerate}
\end{lemma}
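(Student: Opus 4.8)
The plan is to reduce everything to a single derivative computation on $(0,1)$. Writing $G'(s)=\psi(-\log s)$ and differentiating the four summands of $\Lambda(\cdot,p_0)$, I expect the cross terms to collapse into the clean factorization
\[
\frac{\partial \Lambda}{\partial p}(p,p_0)=(p-p_0)\,g(p), \qquad g(p)=\frac{\psi(-\log p)}{p}+\frac{\psi(-\log(1-p))}{1-p}.
\]
Since $\psi\ge 0$ by \ref{ass:rho_derivative_positive}, the factor $g$ is nonnegative, so $\partial_p\Lambda$ has the same sign as $p-p_0$; integrating along the segment between $p_0$ and $p$ (a compact subinterval of $(0,1)$, on which $g$ is continuous and bounded because $\psi$ is continuous and bounded) gives the representation
\[
\Lambda(p,p_0)-\Lambda(p_0,p_0)=\int_{p_0}^{p}(s-p_0)\,g(s)\,ds,
\]
whose integrand has constant sign. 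In both cases $p>p_0$ and $p<p_0$ this already yields $\Lambda(p,p_0)\ge\Lambda(p_0,p_0)$.

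For the strict minimum in part (a), I would argue that $g(s)>0$ for every $s\in(0,1)$ except possibly $s=1/2$: for $s<1/2$ one has $-\log(1-s)\in(0,\log 2)\subseteq(0,a)$, so \ref{ass:rho_derivative_positive} forces $\psi(-\log(1-s))>0$ and hence $g(s)>0$, and symmetrically for $s>1/2$ using $-\log s\in(0,\log 2)$. Consequently $(s-p_0)g(s)$ is strictly positive almost everywhere on the interval between $p_0$ and $p$ whenever $p\ne p_0$, making the integral strictly positive and $p_0$ the unique minimizer. The continuous extension to $[0,1]\times[0,1]$ then follows from the boundedness of $\rho$ (which, being nondecreasing, has a finite limit at $+\infty$, controlling $\rho(-\log p)$ as $p\to0^+$ and $\rho(-\log(1-p))$ as $p\to1^-$) together with $G(0)=0$ and the finiteness of $G(1)=\int_0^1\psi(-\log u)\,du$.

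Part (b) is the same computation with the lower bound strengthened: under \ref{ass:psi_strictly_positive}, the argument above gives $\psi(-\log(1-s))>A_0$ for $s\in(0,1/2)$ and $\psi(-\log s)>A_0$ for $s\in(1/2,1)$, and since the relevant denominators lie in $(1/2,1)$ we get $g(s)>A_0$ for all $s\in(0,1)\setminus\{1/2\}$. Plugging this into the integral representation and comparing with $A_0\int_{p_0}^p(s-p_0)\,ds$ yields
\[
\Lambda(p,p_0)-\Lambda(p_0,p_0)\ge A_0\,\frac{(p-p_0)^2}{2},
\]
so the claim holds with $C_0=A_0/2$; the single point $s=1/2$, where $g$ may vanish when $a=\log 2$, is irrelevant to the integral.

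The main obstacle I anticipate is not any single step but getting the factorization right and being careful at the edges: verifying that the $\psi(-\log p)$ and $\psi(-\log(1-p))$ contributions recombine exactly into $(p-p_0)g(p)$, checking that $a\ge\log 2$ is precisely what makes the two half-intervals $(0,1/2)$ and $(1/2,1)$ cover $(0,1)$ so that $g$ is positive (respectively bounded below by $A_0$) away from the midpoint, and justifying the continuous extension at the boundary $p\in\{0,1\}$ where $-\log p$ or $-\log(1-p)$ diverges.
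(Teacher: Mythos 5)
Your proof is correct: the derivative computation does collapse exactly to $\partial\Lambda(p,p_0)/\partial p=(p-p_0)\bigl[\psi(-\log p)/p+\psi(-\log(1-p))/(1-p)\bigr]$, the condition $a\ge\log 2$ is used precisely as you describe to make the two halves of $(0,1)$ cover everything except possibly $s=1/2$, and the bound $g(s)>A_0$ gives $C_0=A_0/2$. Note that the paper does not actually prove this lemma --- it imports it verbatim as Lemma 3 of Bianco, Boente and Chebi (2023) --- but your argument is the natural one and matches, after the substitution $p=F(t)$, the computation $\Phi^{\prime}(t)=-(F(t_0)-F(t))\nu(t)$ that the paper carries out in its Lemma \ref{lema:FC}, so it can stand as a self-contained proof.
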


The proof of Lemma \ref{lemma:L_function} is now a direct consequence of the previous result.

\begin{lemma} \label{lemma:L_function}
Assume that  assumptions  \ref{ass:rho_bounded_derivable}   and \ref{ass:psi_strictly_positive} hold. Then, there exists a constant $C_0>0$ such that  $L(\theta)-L(\theta_0)\ge C_0\,\pi_\prob^2(\theta,\theta_0)$. 
\end{lemma}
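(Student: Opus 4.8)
The plan is to reduce the statement to the pointwise quadratic bound already furnished by Lemma \ref{lemma:L_function_Chebi}(b), the passage from pointwise to integral form being handled by conditioning on $X$. First I would rewrite $L(\theta)$ using the alternative representation \eqref{eq:phiBY*} of $\phi$ together with the tower property. Writing $R(X)=\alpha+\langle X,\beta\rangle$ and $R_0(X)=\alpha_0+\langle X,\beta_0\rangle$, and recalling that $y\mid X\sim Bi(1,F(R_0(X)))$, conditioning on $X$ gives
$$\esp\bigl[\phi(y,R(X))\mid X\bigr]=\Lambda\bigl(F(R(X)),F(R_0(X))\bigr),$$
where $\Lambda$ is the function introduced in \eqref{eq:funcionM}. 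Indeed, substituting $\esp[y\mid X]=F(R_0(X))$ into \eqref{eq:phiBY*} reproduces exactly the four terms of $\Lambda(p,p_0)$ with $p=F(R(X))$ and $p_0=F(R_0(X))$. Consequently $L(\theta)=\esp[\Lambda(F(R(X)),F(R_0(X)))\,w(X)]$, and specialising to $\alpha=\alpha_0$, $\beta=\beta_0$ yields $L(\theta_0)=\esp[\Lambda(F(R_0(X)),F(R_0(X)))\,w(X)]$.

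Next I would subtract and apply the lower bound. Since the logistic link $F$ takes values in $(0,1)$, we may invoke Lemma \ref{lemma:L_function_Chebi}(b) with $p=F(R(X))\in(0,1)$ and $p_0=F(R_0(X))\in(0,1)$, which supplies a constant $C_0>0$ with
$$\Lambda\bigl(F(R(X)),F(R_0(X))\bigr)-\Lambda\bigl(F(R_0(X)),F(R_0(X))\bigr)\ge C_0\bigl(F(R(X))-F(R_0(X))\bigr)^2$$
holding for every realisation of $X$. Because $w$ is non-negative by \ref{ass:funcionw}, multiplying through by $w(X)$ preserves the inequality, and taking expectations gives
$$L(\theta)-L(\theta_0)\ge C_0\,\esp\Bigl(w(X)\bigl[F(R(X))-F(R_0(X))\bigr]^2\Bigr)=C_0\,\pi_\prob^2(\theta,\theta_0),$$
which is precisely the asserted inequality.

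Since essentially all of the analytic content is carried by Lemma \ref{lemma:L_function_Chebi}(b), there is no genuine obstacle here beyond the bookkeeping of the conditional-expectation identity; the only point demanding a little care is verifying that the arguments fed to $\Lambda$ lie in its admissible range, which holds automatically as $F$ is strictly between $0$ and $1$. The integrability needed to interchange expectation and the pointwise bound is guaranteed by the boundedness of $\phi$, equivalently the boundedness of $\Lambda$ on the compact set $[0,1]\times[0,1]$ granted by Lemma \ref{lemma:L_function_Chebi}(a), together with \ref{ass:funcionw}.
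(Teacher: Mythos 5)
Your proposal is correct and follows essentially the same route as the paper's own proof: conditioning on $X$ to express $L(\theta)=\esp\{w(X)\,\Lambda(F(R(X)),F(R_0(X)))\}$ via \eqref{eq:phiBY*} and \eqref{eq:funcionM}, then applying Lemma \ref{lemma:L_function_Chebi}(b) pointwise and integrating against the non-negative weight $w$. The additional remarks on the admissible range of $\Lambda$ and integrability are sound but not needed beyond what the paper states.
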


\begin{proof} 
For any $\theta=(\alpha,\beta)$, denote  $p(X)=F(\alpha+\langle X, \beta\rangle)$ and  let $p_0(X)= F(\alpha_0+\langle X, \beta_0\rangle)$. 

Using that $y|X\sim Bi\left(1, p_0(X)\right)$, \eqref{eq:phiBY*} and \eqref{eq:funcionM}, taking conditional expectation, we  immediately  get that $ L(\theta)  = \esp\left\{w(X)\,\Lambda\left( p(X) ,  p_0(X) \right)\right\} $. Lemma \ref{lemma:L_function_Chebi}(b) implies that 
 $$\Lambda\left( p(X) ,  p_0(X) \right) - \Lambda\left( p_0(X) ,  p_0(X) \right) \ge C_0\,\left\{p(X)-  p_0(X)\right\}^2= C_0 \left\{F(\alpha+\langle X, \beta\rangle)- F(\alpha_0+\langle X, \beta_0\rangle)\right\}^2\,,$$
 which entails  
 \begin{align*}
  L(\theta)-L(\theta_0) & \ge C_0 \esp\left[ w(X)\left\{F(\alpha+\langle X, \beta\rangle)- F(\alpha_0+\langle X, \beta_0\rangle) \right\}^2\right]=C_0 \, \,\pi_\prob^2(\theta,\theta_0)\,,
  \end{align*}
and concludes the proof.  
\end{proof}

From now on, for any measure $\qu$ and class of functions $\itF$, $N(\epsilon, \itF, L_s(\qu))$ and $N_{[\;]}(\epsilon, \itF, L_s(\qu))$ will denote  the covering  and bracketing  numbers  of the class $\itF $ with respect to the distance in $ L_s(\qu)$, as defined, for instance, in \citet{vanderVaart:wellner:1996}. Furthermore, we will make use of the empirical process  $\bbG_n f= \sqrt{n} (P_n-P)f$,   where $P_n$ stands for the empirical probability measure of $(y_i, X_i)$, $1\le i\le n$, and $P$ is the probability measure corresponding  of $(y,X)$ which follows the functional logistic regression model \eqref{eq:FLOGIT}.

We will first derive a result regarding Glivenko--Cantelli results for classes depending on $n$,  which will be helpful to derive Lemma \ref{lema:entropia}(b) and which is a slight modification of  Theorem 37 in \citet{Pollard:1984}, see also Lemma 2.3.3  in \citet{sara1988}.
 
 \vskip0.1in
\begin{lemma}\label{lema:nuevo}
 Let $z_1,\dots, z_n$, be i.i.d. random elements in a metric space $\itX$ and $\itF_n=\{f:\itX\to \real\}$ be the class of bounded functions depending on $n$, that is, for some positive constant $M$, $|f|\le M$, for any $f\in \itF_n$ and assume that  for any   $0<\epsilon<1$, there exists some constant $C>1$ independent of $n$ and $\epsilon$, such that
$$
\log N(M\, \epsilon , \itF_n,L_1(P_n))\leq  C q_n \log\left(\frac{1}{\epsilon}\right)    \;,
$$ 
where  $q_n$ is a non--random sequence of numbers such that $q_n/n\to 0$. 
Then, we have that 
$$\sup_{f \in \itF_n}  \left |(P_n-P)(f)\right| \convpp 0\,.$$
\end{lemma}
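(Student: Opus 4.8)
The plan is to prove this uniform law of large numbers by the classical symmetrization-and-covering route, adapted to accommodate the fact that the class $\itF_n$ changes with $n$. The overall strategy splits into two parts: first I would bound the expected supremum $\esp \sup_{f\in\itF_n}|(P_n-P)(f)|$ and show it tends to $0$; then I would upgrade this to almost sure convergence by a bounded-differences concentration argument together with Borel--Cantelli.

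First I would symmetrize. By the standard symmetrization inequality (e.g. Lemma 2.3.1 in \citet{vanderVaart:wellner:1996}), $\esp\sup_{f\in\itF_n}|(P_n-P)(f)|\le 2\,\esp\sup_{f\in\itF_n}|P_n^0 f|$, where $P_n^0 f = n^{-1}\sum_{i=1}^n \sigma_i f(z_i)$ and $\sigma_1,\dots,\sigma_n$ are i.i.d. Rademacher signs independent of the data. The advantage of passing to $P_n^0$ is that, conditionally on $z_1,\dots,z_n$, the summands $\sigma_i f(z_i)$ are bounded and mean zero, so Hoeffding's inequality applies cleanly. Next I would discretize using the covering hypothesis: conditionally on the data, choose a minimal $M\epsilon$-net $\{f_1,\dots,f_N\}$ of $\itF_n$ in $L_1(P_n)$, with $N=N(M\epsilon,\itF_n,L_1(P_n))$. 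For any $f$ there is some $f_j$ with $\|f-f_j\|_{L_1(P_n)}\le M\epsilon$, and since $|P_n^0(f-f_j)|\le n^{-1}\sum_i |f-f_j|(z_i)=\|f-f_j\|_{L_1(P_n)}$, one gets $\sup_f |P_n^0 f|\le \max_{1\le j\le N}|P_n^0 f_j| + M\epsilon$.

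Conditionally on the data each $P_n^0 f_j$ is sub-Gaussian with proxy $M/\sqrt n$, so the expected maximum over the net is at most $(M/\sqrt n)\sqrt{2\log(2N)}$; inserting the assumed bound $\log N\le C q_n\log(1/\epsilon)$ gives $\esp\sup_f|P_n^0 f|\le M\epsilon + (M/\sqrt n)\sqrt{2\log 2 + 2C q_n\log(1/\epsilon)}$. For fixed $\epsilon$ the second term is of order $\sqrt{q_n/n}\to 0$, so taking $\limsup_n$ and then letting $\epsilon\to 0$ shows $\esp\sup_f|(P_n-P)(f)|\to 0$. To obtain almost sure convergence I would use the bounded-differences (McDiarmid) inequality: viewed as a function of $(z_1,\dots,z_n)$, the quantity $\sup_f|(P_n-P)(f)|$ changes by at most $2M/n$ when one $z_i$ is replaced, whence $\prob\big(|\sup_f|(P_n-P)(f)|-\esp\sup_f|(P_n-P)(f)||>s\big)\le 2\exp(-ns^2/(2M^2))$. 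Since the mean tends to $0$, for any $\eta>0$ and all large $n$ this yields $\prob(\sup_f|(P_n-P)(f)|>\eta)\le 2\exp(-n\eta^2/(8M^2))$, a summable sequence; Borel--Cantelli then gives $\limsup_n \sup_f|(P_n-P)(f)|\le\eta$ almost surely, and letting $\eta\to 0$ along a countable sequence completes the proof.

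The main obstacle, and the precise reason the ordinary Glivenko--Cantelli theorem for a fixed class does not apply, is the $n$-dependence of $\itF_n$: no reverse-martingale argument is available, so the convergence must be made quantitative. The crux is to verify that the entropy contribution $q_n\log(1/\epsilon)$ is genuinely negligible against the Hoeffding/McDiarmid exponent of order $n$; this is exactly what the assumption $q_n/n\to 0$ delivers, since it forces $\sqrt{q_n/n}\to0$ in the expectation bound and leaves a linear-in-$n$ exponent in the tail bound. Everything else is routine bookkeeping with the symmetrization and concentration inequalities, and one should only take minor care that the chosen scale $\epsilon<1$ so that the covering hypothesis is applicable.
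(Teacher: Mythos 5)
Your proof is correct, and it follows the same broad strategy as the paper (Rademacher symmetrization, discretization via the $L_1(P_n)$ covering hypothesis, a Hoeffding-type exponential bound over the net, Borel--Cantelli), but the execution is genuinely different in one respect. The paper works directly with tail probabilities: it invokes Pollard's symmetrization-in-probability lemma, which converts $\prob\bigl(\sup_f|(P_n-P)f|>8\epsilon\bigr)$ into $4\,\prob\bigl(\sup_f|P_n^0f|>2\epsilon\bigr)$ after checking the small-variance condition (valid once $n\ge 1/(8\epsilon^2)$), and then applies Hoeffding plus a union bound over the net to get a summable tail. You instead pass through the expectation: symmetrization in mean, the sub-Gaussian maximal inequality over the net to show $\esp\sup_f|(P_n-P)f|\to 0$, and then McDiarmid's bounded-differences inequality (with increments $2M/n$) to concentrate the supremum around its vanishing mean, again yielding summable tails. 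Both routes are valid and rest on the same hypothesis $q_n/n\to 0$ in the same way. What your route buys is a clean separation between "the mean vanishes" and "the supremum concentrates," plus an explicit quantitative bound $\esp\sup_f|(P_n-P)f|\lesssim M\epsilon+M\sqrt{q_n\log(1/\epsilon)/n}$ as a by-product --- which is close in spirit to the $O_\prob(\sqrt{k_n/n})$ rate the paper derives separately in part (c) of its entropy lemma via a uniform-entropy maximal inequality; what it costs is the extra appeal to McDiarmid, which the paper avoids. The only caveats in your write-up are the usual measurability issues with suprema over uncountable classes (outer expectations), which the paper glosses over as well, so this is not a substantive gap.
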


\begin{proof} 
First of all, note that without loss of generality we may assume that $M=1$, in which case
$$\var\left(P_n f\right)=\frac{1}{n} \var\left(f(z_1)\right)\le \frac{1}{n} \esp\left(f^2(z_1)\right)\le \frac{1}{n} \,,$$
implying that for $n\ge 1/(8\epsilon^2)$, we have that
$$\frac{\var\left(P_n f\right)}{\left(4\,\epsilon\right)^2}\le \frac{1}{2}\,.$$
Thus, using Lemma 8  in \citet{Pollard:1984}, we obtain the inequality
$$\prob\left(\sup_{f \in \itF_n}  \left |(P_n-P)(f)\right| > 8\, \epsilon\right)\le 4 \prob\left(\sup_{f \in \itF_n}  \left | P_n^{0}f\right| > 2\, \epsilon\right)$$
where $P_n^{0}f= (1/n) \sum_{i=1}^n \xi_i f(z_i)$ and $\{\xi_i\}_{i=1}^n$ is a Rademacher sequence independent of $\{z_i\}_{i=1}^n$, that is, $ \xi_1,\dots \xi_n$ are i.i.d. $\prob(\xi_i=1)=\prob(\xi_i=-1)=1/2$. 

The covering number allows to choose functions $g_j$, $1\le j\le N= N( \epsilon , \itF_n,L_1(P_n)) $, such that for any $f\in \itF_n$, $\min_j P_n |f-g_j| \le \epsilon$. Denote $j(f)$ the index where the minimum is attained. Without loss of generality, we may assume that $g_j\in \itF_n$, so that $|g_j|\le 1$. Thus,    the approximation argument
and Hoeffding' s inequality lead to
\begin{align*}
\prob\left(\sup_{f \in \itF_n}  \left | P_n^{0}f\right| > 2\, \epsilon\,\Big|{\{z_1,\dots,z_n\}}\right)& \le
\prob\left(\sup_{f \in \itF_n}  \left | P_n^{0}g_{j(f)}\right|+   P_n \left |f-g_{j(f)}\right| > 2\, \epsilon\,\Big|{\{z_1,\dots,z_n\}}\right)
\\
& \le
\prob\left(\max_{1\le j\le N}  \left | P_n^{0}g_{j}\right|  >  \, \epsilon\,\Big|{\{z_1,\dots,z_n\}}\right)
\\
& \le
N \max_{1\le j\le N} \prob\left( \left | P_n^{0}g_{j}\right|  >  \, \epsilon\,\Big|{\{z_1,\dots,z_n\}}\right)
\\
&\le  2  N( \epsilon , \itF_n,L_1(P_n)) \exp\left\{-\, \frac{n\epsilon^2}{2}\right\}\le  2 \exp\left\{-\, \frac{n\epsilon^2}{2}+  C q_n \log\left(\frac{1}{\epsilon}\right) \right\} \,.
\end{align*}
Hence, using that $q_n/n\to 0$, we have that for $n$ large enough
$$\frac{q_n}{n}   \log\left(\frac{1}{\epsilon}\right)  <  \frac{ \epsilon^2}{4}\,,$$
which implies that, for $n$ large enough,
 $$\prob\left(\sup_{f \in \itF_n}  \left |(P_n-P)(f)\right| > 8\, \epsilon\right)\le  2 \exp\left\{-\, \frac{n\epsilon^2}{4}  \right\}\,,$$
so $\sum_{n\ge 1}\prob\left(\sup_{f \in \itF_n}  \left |(P_n-P)(f)\right| > 8\, \epsilon\right)<\infty$ and the result follows now from the Borel--Cantelli lemma.
\end{proof}

To avoid burden notation, when there is no doubt we will denote $k$ instead of $k_n$. To derive uniform results, Lemma \ref{lema:entropia}(a) below provides a bound to the covering number of the class of
functions
\begin{equation}
\itF_{n}=\left\{f(y, X)=\phi\left(y,\alpha+ \langle X, \beta\rangle   \right) \, w(X), \beta \in \itM_k, \alpha \in \real\right\}\,.
\label{eq:claseFn}
\end{equation}
Its proof relies on providing   a bound for the Vapnik-Chervonenkis  dimension for the set $\itF_n$, which follows from   Lemma  S.2.2 in \citet{Bianco:Boente:Chebi:2022}.  Besides, Lemma \ref{lema:entropia}(b) shows that $L_n(\theta)$ converges to $L(\theta)$ with probability one, uniformly over
$\theta=(\alpha,\beta)\in \real\times \itM_k$. Its proof uses standard empirical processes. This uniform law of large numbers will be crucial to obtain consistency results for our proposal as given in Theorems \ref{teo:RATES1}(a) and \ref{teo:CONSIST}.

\begin{lemma}\label{lema:entropia}
Let $\rho$ be a function satisfying  \ref{ass:rho_bounded_derivable} and  \ref{ass:rho_derivative_positive}, and $w$ a weight function satisfying \ref{ass:funcionw}. Let $\itF_n$ be the class of functions given in \eqref{eq:claseFn}. Then, 
\begin{enumerate} 
\item[(a)] for any measure $\qu$, $0<\epsilon<1$, there exists some constant $C>1$ independent of $n$ and $\epsilon$, such that
$$
N(M_{\phi}\, \epsilon , \itF_n,L_1(\qu))\leq  C q_n (16\,e)^{q_n}\left(\frac{1}{\epsilon}\right)^{q_n-1}   \;,
$$ 
 where $q_n=2k_n+6$ and $M_{\phi}=\sup_{y,t} |\phi(y,t)|$.
\item[b)] If in addition, \ref{ass:dimbasis} holds, we have $\sup_{\theta\in \real\times \itM_k}\left|L_n (\theta )-L(\theta )\right| \convpp 0$.
\item[c)] There exists a constant $C^{\star}$ that does not depend on $n$ nor $k_n$ such that
\begin{equation}\label{eq:cotaespFn}
\esp\left \{\sup_{f \in \itF_n}  \left |(P_n-P)(f) \right |\right \} \leq C^{\star} \sqrt{\frac{k_n}{n}}\,,
\end{equation}
which entails that $\sup_{f \in \itF_n}  \left |(P_n-P)(f)\right|=O_{\prob}\left(\sqrt{{k_n}/{n}}\right)$.
\end{enumerate}
\end{lemma}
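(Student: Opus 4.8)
The plan is to handle the three parts in order, deriving (b) and (c) from the VC-type control set up in (a). For part (a), the structural observation I would start from is that, writing $\beta_{\bb}=\sum_{j=1}^{k}b_j B_j$ and $\bx=(\langle X,B_1\rangle,\dots,\langle X,B_k\rangle)\trasp$, every $f\in\itF_n$ takes the form $f(y,X)=\phi(y,\alpha+\bx\trasp\bb)\,w(X)$, so it depends on the $(k+1)$-dimensional parameter $(\alpha,\bb)$ only through the affine map $(y,\bx)\mapsto\alpha+\bx\trasp\bb$, whose range is a linear space of dimension $k+1$. Moreover, since $\Psi(y,t)=-[y-F(t)]\nu(t)$ and $\nu$ is strictly positive under \ref{ass:rho_bounded_derivable} and \ref{ass:rho_derivative_positive}, the map $t\mapsto\phi(y,t)$ is strictly increasing for $y=0$ and strictly decreasing for $y=1$; hence $\phi(y,\cdot)$ is monotone for each fixed label $y\in\{0,1\}$. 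These two facts are precisely what let me invoke Lemma S.2.2 of \citet{Bianco:Boente:Chebi:2022} to bound the Vapnik--Chervonenkis dimension of the subgraph class of $\itF_n$ by $q_n=2k_n+6$ (the factor $2$ reflecting that the two monotonicity regimes $y=0,1$ are treated separately, the constant accounting for the intercept and threshold directions). Since $\phi\ge0$ and $0\le w\le1$ by \ref{ass:funcionw}, the constant $M_\phi$ is a measurable envelope for $\itF_n$, and the standard covering bound for VC-subgraph classes, Theorem~2.6.7 of \citet{vanderVaart:wellner:1996} applied with this envelope, yields the stated inequality.

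For part (b), I would verify the hypotheses of Lemma \ref{lema:nuevo} with $z_i=(y_i,X_i)$ and $M=M_\phi$. Taking logarithms in (a) gives, for $\epsilon$ small enough,
\[
\log N(M_\phi\,\epsilon,\itF_n,L_1(P_n))\le \log(Cq_n)+q_n\log(16e)+(q_n-1)\log(1/\epsilon)\le C'q_n\log(1/\epsilon),
\]
where the additive terms are absorbed because $\log(1/\epsilon)$ is bounded away from $0$ in the relevant range and $\log(Cq_n)=O(q_n)$; this is exactly the entropy hypothesis with the non-random sequence $q_n=2k_n+6$, which satisfies $q_n/n\to0$ by \ref{ass:dimbasis}. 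As $\itF_n$ is uniformly bounded by $M_\phi$, Lemma \ref{lema:nuevo} then gives $\sup_{f\in\itF_n}|(P_n-P)f|\convpp0$. Identifying $L_n(\theta)=P_nf$ and $L(\theta)=Pf$ for $f(y,X)=\phi(y,\alpha+\langle X,\beta\rangle)w(X)$ produces the claimed uniform strong law over $\real\times\itM_k$.

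For part (c), I would upgrade the $L_1$ bound of (a) to a uniform $L_2$ entropy bound, which the VC-subgraph property of index $q_n$ supplies directly through Theorem~2.6.7 of \citet{vanderVaart:wellner:1996}: $\sup_{\qu}N(\epsilon M_\phi,\itF_n,L_2(\qu))\le Cq_n(16e)^{q_n}(1/\epsilon)^{2(q_n-1)}$. After symmetrization, Dudley's entropy integral together with the constant envelope $M_\phi$ gives
\[
\esp\Bigl\{\sup_{f\in\itF_n}|(P_n-P)f|\Bigr\}\le \frac{C\,M_\phi}{\sqrt n}\int_0^1\sqrt{\log\bigl(Cq_n(16e)^{q_n}u^{-2(q_n-1)}\bigr)}\,du\le \frac{C^{\star}M_\phi}{\sqrt n}\sqrt{q_n},
\]
because $\int_0^1\sqrt{\log(1/u)}\,du<\infty$ and the remaining constants inside the logarithm contribute $O(q_n)$ under the square root. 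Since $q_n=2k_n+6\le 8k_n$ for $k_n\ge1$, this is bounded by $C^{\star}\sqrt{k_n/n}$, and the $O_\prob$ conclusion follows by Markov's inequality.

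The main obstacle is part (a): securing a VC-dimension bound that is \emph{linear} in $k_n$ with explicit constants. The monotonicity of $\phi(y,\cdot)$ in $t$ for each fixed label $y$ is exactly what turns the subgraphs of $\itF_n$ into a genuine VC class (a generic $\phi$ would not), and keeping the index at order $k_n$ rather than inflating it is what guarantees the rate in (c) is $\sqrt{k_n/n}$ with no spurious logarithmic factor. Once (a) is in place, parts (b) and (c) are routine: (b) is a direct application of Lemma \ref{lema:nuevo}, and (c) follows from symmetrization plus the convergent entropy integral.
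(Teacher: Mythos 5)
Your proposal is correct and follows essentially the same route as the paper: part (a) via the VC-subgraph bound of Lemma S.2.2 in \citet{Bianco:Boente:Chebi:2022} combined with Theorem 2.6.7 of \citet{vanderVaart:wellner:1996}, part (b) via Lemma \ref{lema:nuevo}, and part (c) via the uniform $L_2$ entropy integral and the maximal inequality (Theorem 2.14.1 there). The only cosmetic difference is that the paper first bounds the VC index of the unweighted class $\itG_n$ and then invokes the permanence of the VC-subgraph property under multiplication by the fixed function $w$, a step you fold in implicitly.
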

\begin{proof}
(a) First of all note that since  \ref{ass:rho_bounded_derivable} and \ref{ass:funcionw} hold, we have that   $\phi(y, t) $ is   bounded and $w(X)$ is a bounded function with $\|w\|_{\infty}=1$, hence  $\itF_n$ has envelope $M_{\phi}$.  

Lemma S.2.2 from  \citet{Bianco:Boente:Chebi:2022} implies that the class of functions
$$ \itG_n=\left\{g(y, \bx)=\phi\left(y, \alpha +\bx\trasp \bb \right), \bb \in \real^k, \alpha \in \real\right\}\,,$$
where  $\bx=(\langle X, B_1 \rangle,\dots,  \langle X, B_1 \rangle)\trasp$, is   VC-subgraph with index smaller or equal than $ 2(k+1)+4=2k+6$, since we now include an intercept term $\alpha\in \real$. 

Note that for any $f\in \itF_n$  we have    $f(y, X)=\phi\left(y,\alpha + \langle X, \beta\rangle   \right) w(X)= \phi\left(y,\alpha + \bx\trasp \bb   \right) w(X)=g(y,\bx) w(X)$, where $\beta=\beta_{\bb}$ and $g\in \itG_n$. Hence, using  the permanence property of VC-classes when multiplying by a fixed function, in this case $w(X)$, we conclude that $V(\itF_n)\le  2k+6$ and the result follows now from Theorem 2.6.7 in \citet{vanderVaart:wellner:1996}.

\noi (b) From (a), using that $\log(2k_n+6)/(2k_n+6) <1$ and assuming that $C>1$ we get that 
\begin{align*}
\sup _{\qu} \log\left(N\left(M_{\phi}\,\epsilon, \itF_n, L_{1}(\qu)\right)\right) & \leq \log C + \log (2k_n+6)+ (2k_n+6)\log(16 e)+ (2k_n+5)\log \left(\frac{1}{\epsilon}\right)\\
& \le (2 k_n+6) \left[ \log C + 1+  \log(16 e)+  \log \left(\frac{1}{\epsilon}\right)\right]\,.
\end{align*}
 Hence, for any $\epsilon < \min( e^{-C}, (16 e)^{-1}, 1/e)$  we have that 
$$\frac{1}{n}\log\left(N\left(M_{\phi}\,\epsilon, \itF_n, L_{1}(P_n)\right)\right)  \leq 3\, \frac{2k_n + 6}{n}\log \left(\frac{1}{\epsilon}\right)\to 0\,,$$
since \ref{ass:dimbasis} holds. 
Therefore,  using Lemma \ref{lema:nuevo},  
we conclude that 
$$
\sup _{\theta\in \real\times \itM_k}\left|L_{n}(\theta)-L(\theta)\right| \convpp 0\,,
$$
completing the proof of (b).

\noi (c) As in (a), using that $V(\itF_n)\le  q_n=2k_n+6$ and  Theorem 2.6.7 in \citet{vanderVaart:wellner:1996},  we deduce that there exists some constant $C_0>1$ independent of $n$ and $\epsilon$, such that
\begin{equation}
N(M_{\phi}\, \epsilon , \itF_n,L_2(\qu))\leq  C_0 q_n (16\,e)^{q_n}\left(\frac{1}{\epsilon}\right)^{2q_n-2} = C_0 (2k_n+6) (16\,e)^{2k_n+6}\left(\frac{1}{\epsilon}\right)^{4k_n+10}   \;,
\label{eq:cotaN}
\end{equation} 
for any measure $\qu$, $0<\epsilon<1$. Theorem 2.14.1 in \citet{vanderVaart:wellner:1996} allows us to conclude that, for some universal constant $C_1>0$,
$$
\esp\left\{\sup_{f \in \itF_n}\left| \sqrt{n} (P_n-P)(f)  \right | \right \} \leq C_1\; M_{\phi}  \sup_{\qu} \int_0^1 \sqrt{1+\log N\left(M_{\phi}\,\epsilon , \itF_n, L_2(\qu)\right)} d\epsilon \,,
$$
where the supremum is taken over all discrete probability measures $\qu$. Using \eqref{eq:cotaN} and that $\log t\le t$ for $t\ge 1$ and  denoting $C_2=\log(C_0)+1+\log(16\; e)>1$, we get that
\begin{align*}
\sqrt{1+\log N\left(M_{\phi}\,\epsilon , \itF_n, L_2(\qu)\right)} 
& \le    \sqrt{1+  C_2 (2k_n+6)   + (4k_n+10) \log \left(\frac{1}{\epsilon}\right ) }\\
&  \le    \sqrt{1+ 16 k_n C_2     +  16 k_n \log \left(\frac{1}{\epsilon}\right ) }  \,,
\end{align*}
where we have used that $2k_n+6\le 4k_n+10\le   16 k_n$. Let  $C_4= 4\, C_1\; M_{\phi}\; C_3$ with  $C_3= C_2+1$. Then, we obtain that
\begin{eqnarray*}
\esp\left[\sup_{f \in \itF}\left| \sqrt{n} (P_n-P)(f)  \right | \right ]
 & \leq &  \sqrt{k_n}\; C_4 \, \int_0^1 \sqrt{1+ \log \left(\frac{1}{\epsilon}\right ) } d\epsilon \,,
\end{eqnarray*}
which  entails \eqref{eq:cotaespFn}, since   $\int_0^1 \sqrt{1 - \log(\epsilon)}\, d\epsilon < \infty$. Markov's inequality immediately leads to \linebreak $\sup_{f \in \itF_n}  \left |(P_n-P)(f)\right|=O_{\prob}\left(\sqrt{{k_n}/{n}}\right)$, concluding the proof.
\end{proof}

The following result is needed to prove the convergence rates stated in  Theorem \ref{teo:RATES}. From now on, $\Theta_n^{\star} =  \real\times \itM_{k_n} $.

\begin{lemma}\label{lema:bracketing}  
Let $\rho$ be a function satisfying  \ref{ass:rho_bounded_derivable} and  \ref{ass:rho_derivative_positive}, and $w$ a weight function satisfying \ref{ass:funcionw} and \ref{ass:wx2}.  Given  $\wtbeta_0 \in   \itM_{k}$, define $\wttheta_{0}=(\alpha_0, \wtbeta_0 )$ and the class of functions
\begin{align*}
\itG_{n, c, \wttheta_{0},\theta_0^*} & = \{f_{\theta}=V_{  \theta}-V_{\theta_{0}^*}:      \theta\in \Theta_n^{\star} \quad \mbox{and}\quad |\alpha-\alpha_0|+ \|\beta -\wtbeta_0\|_{\itH} \leq c\}\,,
\end{align*} 
where $\theta_0^*=(\alpha_0^*,\beta_0^*)$ is a fixed element in $\real\times \itH$, and $V_{\theta}=\phi\left(y,\alpha+\langle X, \beta \rangle\right)w(X) $,
 for $ \theta=(\alpha,\beta)$. Then,  there exists   some constant  $A >0$ independent of $n$, $\wttheta_{0}$ and $\epsilon$, such that
 $$N_{[\;]}(\epsilon, \itG_{n, c, \wttheta_{0},\theta_0^*}, L_2(P))\le   \left( \frac{A  c}{\epsilon} + 1 \right)^{k+1} \,.$$
 \end{lemma}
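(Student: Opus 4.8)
The plan is to recognize $\itG_{n, c, \wttheta_{0},\theta_0^*}$ as a Lipschitz image of a ball in the finite--dimensional space $\real\times\itM_k$ and then convert metric entropy of the index set into bracketing numbers via the Lipschitz bound of Theorem 2.7.11 in \citet{vanderVaart:wellner:1996}. The first step is to establish that bound. Fix $\theta=(\alpha,\beta)$ and $\theta'=(\alpha',\beta')$ in $\Theta_n^{\star}=\real\times\itM_k$. Since $f_{\theta}-f_{\theta'}=V_{\theta}-V_{\theta'}$ (the fixed term $V_{\theta_0^*}$ cancels, so the structure is inherited from the family $\{V_\theta\}$), and since under \ref{ass:rho_bounded_derivable} and \ref{ass:rho_derivative_positive} the derivative $\Psi(y,t)={\partial}\phi(y,t)/{\partial t}$ is bounded by $M_{\Psi}$, the mean value theorem gives $|\phi(y,t)-\phi(y,s)|\le M_{\Psi}|t-s|$. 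Taking $t=\alpha+\langle X,\beta\rangle$, $s=\alpha'+\langle X,\beta'\rangle$ and using Cauchy--Schwarz $|\langle X,\beta-\beta'\rangle|\le \|X\|\,\|\beta-\beta'\|$, one obtains
$$
|f_{\theta}(y,X)-f_{\theta'}(y,X)|\;\le\; F(X)\,d(\theta,\theta'),\qquad F(X)=M_{\Psi}\,(1+\|X\|)\,w(X),
$$
where $d(\theta,\theta')=|\alpha-\alpha'|+\|\beta-\beta'\|$ is a norm on $\real\times\itM_k$.

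The second step is to control the envelope norm $\|F\|_{L_2(P)}$. Because $\|w\|_{\infty}=1$ by \ref{ass:funcionw}, we have $w^2\le w$, so $\esp F^2\le M_{\Psi}^2\,\esp\{(1+\|X\|)^2 w(X)\}$; expanding $(1+\|X\|)^2$ and handling the cross term by Cauchy--Schwarz, finiteness follows from \ref{ass:funcionw} and \ref{ass:wx2}. I would denote the resulting finite value $M_F=\|F\|_{L_2(P)}$ and stress that it depends only on $P$ and $M_{\Psi}$, hence is independent of $n$, $\wttheta_{0}$, $\theta_0^*$ and $\epsilon$. The third step covers the index set. Using $\|\cdot\|\le\|\cdot\|_{\itH}$, the constraint $|\alpha-\alpha_0|+\|\beta-\wtbeta_0\|_{\itH}\le c$ forces $d(\theta,\wttheta_{0})\le c$, so the index set $T$ is contained in the $d$--ball of radius $c$ centered at $\wttheta_{0}$ inside the $(k+1)$--dimensional normed space $\real\times\itM_k$. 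The standard volumetric packing bound then yields $N(\epsilon,T,d)\le (2c/\epsilon+1)^{k+1}$.

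Finally I would assemble the pieces. Theorem 2.7.11 in \citet{vanderVaart:wellner:1996} gives
$$
N_{[\;]}\bigl(2\epsilon M_F,\ \itG_{n, c, \wttheta_{0},\theta_0^*},\ L_2(P)\bigr)\;\le\; N(\epsilon,T,d)\;\le\;\Bigl(\tfrac{2c}{\epsilon}+1\Bigr)^{k+1},
$$
and substituting $\eta=2\epsilon M_F$ and setting $A=4M_F$ produces $N_{[\;]}(\eta,\itG_{n, c, \wttheta_{0},\theta_0^*},L_2(P))\le (Ac/\eta+1)^{k+1}$, as claimed, with $A$ uniform in all the required quantities. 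The routine calculations are harmless; the point that most warrants care, and which I would treat as the main obstacle, is verifying that the envelope $F$ is genuinely square--integrable under \ref{ass:wx2} \emph{and} that the constant $M_F$ (hence $A$) is uniform over $n$, $\wttheta_{0}$ and $\theta_0^*$ — this uniformity is exactly what makes the bound usable for the increasing--dimension sieve argument behind Theorem \ref{teo:RATES}.
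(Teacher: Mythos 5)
Your proof is correct and follows essentially the same route as the paper's: a Lipschitz--in--parameter bound with envelope $M_{\Psi}\,(1+\|X\|)\,w(X)$, square--integrability of that envelope from \ref{ass:funcionw} and \ref{ass:wx2}, and a volumetric covering of the $(k+1)$--dimensional parameter ball. The paper builds the brackets explicitly (covering the $\alpha$--interval and the $\beta$--ball separately via Corollary 2.6 of \citet{sara2000}) instead of citing Theorem 2.7.11 of \citet{vanderVaart:wellner:1996}, but that theorem's proof is exactly this bracket construction, so the two arguments coincide in substance and yield the same form of constant $A$.
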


\begin{proof}
First note that $|f_{\theta}|\le 2  M_{\phi}$ and  denote $\wtTheta_{n,c}=\{\theta\in \Theta_n^{\star}: |\alpha-\alpha_0|+ \|\beta -\wtbeta_0\|_{\itH} \leq c \}$. Note that since  $ \|\beta -\wtbeta_0\|\le  \|\beta -\wtbeta_0\|_{\itH}$, $\wtTheta_{n,c} \subset \itI_c\times \itB_{n,c}$, where $\itI_c=[\alpha_0 -c, \alpha_0+c]$ and  $\itB_{n,c}= \{\beta\in \itM_k: \|\beta -\wtbeta_0\|  \leq c\}$. 

For any $\theta_\ell\in \wtTheta_{n,c}$, $\ell=1,2$ we have that
\begin{align}
|f_{\theta_1}-f_{\theta_2}| &= \left|V_{  \theta_1}- V_{  \theta_2}\right| = \left|\phi\left(y,\alpha_1+\langle X, \beta_1 \rangle\right)-\phi\left(y,\alpha_2+\langle X, \beta_2 \rangle\right)\right| w(X) 
\nonumber\\
& \le M_{\Psi} \left| \alpha_1-\alpha_2+\langle X, \beta_1 - \beta_2 \rangle \right|  w(X)
\nonumber\\
& \le  M_{\Psi}\left\{ \left| \alpha_1-\alpha_2\right|+ w(X) \|X\|\; \|\beta_1-\beta_2\|\right\}\,,
\label{eq:fth1th2}
\end{align}
where we recall that $M_{\Psi}=\sup_{y\in \{0,1\}, t\in \real} |\Psi(y,t)|$. Hence,
$$\esp \left(f_{\theta_1}-f_{\theta_2}\right)^2 \le   M_{\Psi}^2\left\{(\alpha_1-\alpha_2)^2+ \esp\left(w^2 (X) \|X\|^2 \right)\|\beta_1-\beta_2\|^2\right\}\,.$$
Taking into account that $\wtbeta_0\in \itM_k$, it can be written as $\wtbeta_0=\sum_{j=1}^{k} \wtb_{0,j} B_{j}$, so $\itB_{n,c}=\{\beta=\sum_{j=1}^{k} b_{j} B_{j}, \bb \in \real^k: \|\sum_{j=1}^{k} \left(b_{j}-\wtb_{0,j}\right) B_{j}\|  \leq c\}$. Thus, according to Corollary 2.6 in \citet{sara2000},  taking therein as measure $\qu$ the uniform measure on $\itT=[0,1]$, we get that $\itB_{n,c}$ can be covered by 
$$N_{\itB_{n,c},\delta}= \left(\frac{4 c + \delta}{\delta}\right)^k\,,$$
  balls with center  $\beta_{j,c,\delta}$ , $1\le j \le N_{\itB_{n,c},\delta}$, and radius $\delta$. Besides, the interval $\itI_c$ may also be covered by $N_{\itI_{c},\delta}= (4c+\delta)/\delta$ balls with center $\alpha_{j,c,\delta}$, $1\le j \le N_{\itI_{c},\delta}$, and radius $\delta$.

Given $\epsilon>0$, take $\delta=\epsilon/\left\{2\;M_{\Psi} \left[1+\left(\esp w^2 (X) \|X\|^2 \right)^{1/2}\right]\right\}$ and for $1\le j\le N_{\itI_{c},\delta}$ and $1\le m\le N_{\itB_{n,c},\delta}$, define the functions $f_{ j,m}=\phi\left(y, \alpha_{j,c,\delta} +\langle X, \beta_{m,c,\delta} \rangle\right)w(X) - V_{\theta_{0}}$ and 
\begin{align*}
f_{j,m}^{(U)}(y,X)& = f_{ j,m} + \delta\; M_{\Psi} \left(  1 + w(X) \|X\|\; \right)\\
f_{j,m}^{(L)}(y,X) & = f_{ j,m} - \delta\; M_{\Psi} \left(  1 + w(X) \|X\|\; \right)\,.
\end{align*}
Given $f_{\theta}\in \itG_{n, c, \wttheta_{0},\theta_0^*}$, let $j$ and $m$ be such that $|\alpha- \alpha_{j,c,\delta}|<\delta$ and $\|\beta- \beta_{m,c,\delta}\| <\delta$ and $\theta_{j,m}=(\alpha_{j,c,\delta},\beta_{m,c,\delta})$. Then, using \eqref{eq:fth1th2}, we obtain that
$$\left|f_{\theta}- f_{\theta_{j,m}}\right|\le  \delta \;M_{\Psi}\left\{ 1+ w(X) \|X\|\; \right\}\,,$$
so $f_{j,m}^{(L)} \le f_{\theta}\le  f_{j,m}^{(U)}$, since $f_{\theta_{j,m}}=f_{ j,m}$. Besides,
$$\|f_{j,m}^{(U)}-f_{j,m}^{(L)} \|= 2\delta    M_{\Psi}  \left\{\esp\left( 1 + w(X) \|X\|\right)^2 \right\}^{1/2}\le 2\delta    M_{\Psi}  \left\{1+ \left(\esp  w^2(X) \|X\|^2\right)^{1/2} \right\}=\epsilon \,,$$
which implies that
$$N_{[\;]}(\epsilon, \itG_{n, c, \wttheta_{0},\theta_0^*}, L_2(P))\le  N_{\itI_{c},\delta}  N_{\itB_{n,c},\delta} \le \left(\frac{4 c + \delta}{\delta}\right)^{k+1}
= \left( \frac{8\;M_{\Psi} \left[1+\left(\esp w^2 (X) \|X\|^2 \right)^{1/2}\right]  c}{\epsilon} + 1 \right)^{k+1}\,,$$
and the result follows taking $A=8\;M_{\Psi} \left[1+\left(\esp w^2 (X) \|X\|^2 \right)^{1/2}\right]$. 
\end{proof}

\subsection{Proof of Theorems \ref{teo:RATES1} and \ref{teo:CONSIST}}
Recall that $\wtheta=(\walfa,\wbeta)$, $\theta_0=(\alpha_0,\beta_0)$. Henceforth,   $\wttheta$ stands for $\wttheta=(\alpha_0,\wtbeta)$ with $\wtbeta=\wtbeta_{k}$ defined in assumptions \ref{ass:approx} and \ref{ass:approxorder}.

The following Lemma  is   useful to derive Theorems  \ref{teo:RATES1} and \ref{teo:CONSIST}.

\begin{lemma}\label{lema:M_betahat} 
Let $\rho$ be a function satisfying  \ref{ass:rho_bounded_derivable} and  \ref{ass:rho_derivative_positive}, and $w$ a weight function satisfying \ref{ass:funcionw}. Assume that \ref{ass:dimbasis} and \ref{ass:approx} hold. Then, we have that $L (\walfa, \beta_{\wbb})=L(\wtheta ) \convpp L(\theta_0)$.
\end{lemma}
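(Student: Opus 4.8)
The plan is to sandwich the random quantity $L(\wtheta)$ between $L(\theta_0)$ and a sequence that converges deterministically to $L(\theta_0)$, exploiting three ingredients already available: the fact that $\wtheta$ minimizes $L_n$ over the sieve $\real\times\itM_k$, the uniform law of large numbers of Lemma \ref{lema:entropia}(b), and the Fisher consistency inequality $L(\theta_0)\le L(\theta)$ of Lemma \ref{lema:FC}(a). Write $\Delta_n=\sup_{\theta\in\real\times\itM_k}|L_n(\theta)-L(\theta)|$ and recall that $\wttheta=(\alpha_0,\wtbeta)$ with $\wtbeta=\wtbeta_{k}\in\itM_k$ the approximating element furnished by \ref{ass:approx}.

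First I would exploit optimality. Since $\wttheta\in\real\times\itM_k$ is a feasible competitor, the defining property \eqref{eq:FWBY} of $\wtheta$ gives $L_n(\wtheta)\le L_n(\wttheta)$. Because both $\wtheta$ and $\wttheta$ lie in $\real\times\itM_k$, replacing $L_n$ by $L$ at the cost of $\Delta_n$ on each side yields
\begin{equation*}
L(\wtheta)\le L_n(\wtheta)+\Delta_n\le L_n(\wttheta)+\Delta_n\le L(\wttheta)+2\,\Delta_n\,.
\end{equation*}
Combining this with Lemma \ref{lema:FC}(a), which gives $L(\theta_0)\le L(\wtheta)$ surely, produces the sandwich
\begin{equation*}
0\le L(\wtheta)-L(\theta_0)\le L(\wttheta)-L(\theta_0)+2\,\Delta_n\,.
\end{equation*}
The hypotheses \ref{ass:rho_bounded_derivable}, \ref{ass:rho_derivative_positive}, \ref{ass:funcionw} and \ref{ass:dimbasis} of Lemma \ref{lema:entropia}(b) are all in force, so $\Delta_n\convpp 0$, and it remains only to establish the deterministic convergence $L(\wttheta)\to L(\theta_0)$.

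This last step is the main point, and the subtlety is that the lemma does not assume any moment bound on $\|X\|$ (assumption \ref{ass:wx} is deliberately absent). I would therefore avoid the Lipschitz estimate $|L(\wttheta)-L(\theta_0)|\le M_{\Psi}\,\esp\{w(X)\|X\|\}\,\|\wtbeta-\beta_0\|$ and argue instead by dominated convergence. For each fixed $X$ with $\|X\|<\infty$ (which holds almost surely), the Cauchy--Schwarz inequality together with $\|\wtbeta-\beta_0\|\le\|\wtbeta-\beta_0\|_{\itH}\to 0$ gives $|\langle X,\wtbeta-\beta_0\rangle|\le\|X\|\,\|\wtbeta-\beta_0\|_{\itH}\to 0$, whence $\alpha_0+\langle X,\wtbeta\rangle\to\alpha_0+\langle X,\beta_0\rangle$; by continuity of $\phi(y,\cdot)$ the integrand $\phi(y,\alpha_0+\langle X,\wtbeta\rangle)\,w(X)$ converges pointwise to $\phi(y,\alpha_0+\langle X,\beta_0\rangle)\,w(X)$. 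Since $0\le\phi\le M_{\phi}$ and $0\le w\le 1$ by \ref{ass:rho_bounded_derivable}, \ref{ass:rho_derivative_positive} and \ref{ass:funcionw}, the integrand is uniformly dominated by the integrable constant $M_{\phi}$. Dominated convergence then yields $L(\wttheta)=\esp\{\phi(y,\alpha_0+\langle X,\wtbeta\rangle)w(X)\}\to\esp\{\phi(y,\alpha_0+\langle X,\beta_0\rangle)w(X)\}=L(\theta_0)$. Feeding this into the sandwich gives $L(\wtheta)-L(\theta_0)\convpp 0$, which is the claim, since $L(\wtheta)=L(\walfa,\beta_{\wbb})$.
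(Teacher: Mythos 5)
Your proposal is correct and follows essentially the same route as the paper's proof: the same optimality-plus-uniform-LLN sandwich $0\le L(\wtheta)-L(\theta_0)\le \{L(\wttheta)-L(\theta_0)\}+2\sup_{\theta\in\real\times\itM_k}|L_n(\theta)-L(\theta)|$, with Lemma \ref{lema:entropia}(b) controlling the empirical-process term and the bounded convergence theorem (via Cauchy--Schwarz and continuity of $\phi(y,\cdot)$) giving $L(\wttheta)\to L(\theta_0)$ without any moment condition on $\|X\|$. No gaps.
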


\begin{proof} 
 From \ref{ass:dimbasis} and \ref{ass:approx}, we have that there exists $\wtbeta_{k_n}\in \itM_{k_n}$,  $\wtbeta_{k_n}=\sum_{j=1}^{k_n} \wtb_j  B_j(x)$ such that  $\|\wtbeta_{k_n}-\beta_{0}\|_{\itH}\to 0$ as  $n\to \infty$. As mentioned above, we denote $\wttheta=(\alpha_0,\wtbeta_{k_n})$. Using that $\wtbeta_{k_n}\in \itM_{k_n}$, we conclude that 
$L_n (\wtheta) \leq L_n(\wttheta)\,,$ 
 while from  Lemma \ref{lema:FC}(a) we have that $0\le L( \wtheta ) -  L(\theta_0)$. Thus,
\begin{align}
0\le L(\wtheta) - L(\theta_0) & = L(\wtheta) - L(\wttheta)+ L(\wttheta)- L(\theta_0)
\nonumber\\
 & = \left\{L_n(\wttheta)-L(\wttheta)\right\} -\left\{L_n(\wtheta)-L(\wtheta)\right\}  +\left\{L(\wttheta)- L(\theta_0)\right\} + L_n(\wtheta)- L_n(\wttheta)
 \nonumber\\
&  \leq  \left\{L_n(\wttheta)-L(\wttheta)\right\} -\left\{L_n(\wtheta)-L(\wtheta)\right\}  +\left\{L(\wttheta)-L(\theta_0)\right\} 
 \nonumber\\
 & \le 2 \, \sup_{f \in \itF_n}  \left |(P_n-P) f\right|  +\left\{L(\wttheta) - L(\theta_0)\right\}= 2\, A_n + B_n\,,
 \label{eq:cotaLwtheta}
\end{align}
where $\itF_n$ is defined in \eqref{eq:claseFn}. 

From Lemma \ref{lema:entropia} we obtain that $A_n\convpp 0$. On the other hand, from  
$\|\wtbeta_{k_n}-\beta_{0}\|_{\itH} \to 0$ as $n\to \infty$, the fact that for any $f\in \itH$, we have the bound $\|f\|\le \|f\|_{\itH}$ and the Cauchy-Schwartz inequality, we get  that for any $v\in L^2([0,1])$,   $\alpha_0+\langle v, \wtbeta_{k_n} \rangle   \to \alpha_0+\langle v, \beta_0 \rangle $. Thus, from the  Bounded Convergence Theorem, the continuity of $\phi\left(y, t \right)$ with respect to $t$ and its boundedness together with the boundedness of $w$, we conclude that $B_n=L(\wttheta) - L(\theta_0) \to 0$. Summarizing we have that
$$0\le L( \wtheta ) -  L(\theta_0) \le 2\, A_n + B_{n} \convpp 0\,,$$
which concludes the proof.
\end{proof}

\begin{proof}[Proof of Theorem \ref{teo:RATES1}]
(a) From Lemma \ref{lemma:L_function}, we have that there exists a constant $ C_0>0$   independent from $n$ such that
$$ L(\wtheta) - L(\theta_0) \ge C_0\,\pi_\prob^2(\wtheta,\theta_0)\,.$$
Then, the result follows from  Lemma \ref{lema:M_betahat} which implies that $L( \wtheta ) -  L(\theta_0)\convpp 0$. 
 
 \noi (b) Recall that from \eqref{eq:cotaLwtheta} and  Lemma \ref{lemma:L_function} we obtain that, for some constant $C_0$ independent of $n$,
 $$C_0\,\pi_\prob^2(\wtheta,\theta_0)\le   L(\wtheta) - L(\theta_0)\le 2 \, \sup_{f \in \itF_n}  \left |(P_n-P) f\right|  +\left\{L(\wttheta) - L(\theta_0)\right\}\,.$$
 Lemma \ref{lema:entropia}(c), entails that $
\sup_{f \in \itF_n}  \left |(P_n-P) f\right| =O_{\prob}\left(\sqrt{{k_n}/{n}}\right)$,
so, using again that $B_n= L(\wttheta) - L(\theta_0)\ge 0$, the proof will be concluded if we show that
\begin{equation}
  L(\wttheta) - L(\theta_0) = O(\|\wtbeta_{k_n}-\beta_{0}\|_{\itH})\,.
\label{eq:cotaele}
\end{equation}
To prove \eqref{eq:cotaele}, recall  that $\Psi(y,t)={\partial} \phi(y,t)/{\partial t}= -[y-F(t)] \nu(t)$ where $\nu(t)$ is given by   \eqref{eq:funcionnu} and that $M_{\Psi}=\sup_{y\in\{0,1\}, t\in \real} | \Psi(y,t)|$ is finite. Thus, 
\begin{align*}
0 & \le L(\wttheta) - L(\theta_0)= \esp\left\{w(X)\,\left[\phi(y,  \alpha_0+\langle X, \wtbeta\rangle)-\phi(y,  \alpha_0+\langle X, \beta_0\rangle \right]\right\}
\nonumber\\
& \le M_{\Psi} \esp\left\{w(X)\,\left|\langle X, \wtbeta-\beta\rangle  \right|\right\}\le  M_{\Psi} \|\wtbeta-\beta\|\;\esp\left\{w(X)\,\|X\| \right\} \,.
\label{eq:cotasupLwtheta}
\end{align*}
Then, \eqref{eq:cotaele} follows now from the fact that  $\|f\|\le \|f\|_{\itH}$.

 \noi (c) To derive (c) it will be enough to show that  $ L(\wttheta) - L(\theta_0) = O\left(\|\wtbeta_{k_n}-\beta_{0}\|_{\itH}^2\right)$.
 
 Using that  $\psi$ is continuously differentiable with bounded derivative, we obtain that the derivative of the function $\nu(t)$ defined in \eqref{eq:funcionnu} equals
\begin{align*}
\nu^{\,\prime}(t)  & = - \psi^{\prime}\left(-\log F(t)\right)\left[1- F(t)\right]^2-  \psi\left(-\log F(t)\right) F(t)\left[1-F(t)\right] \\
& + \psi\left(-\log\left[1- F(t)\right]\right) F(t)\left[1-F(t)\right]+ \psi^{\prime}\left(-\log\left[1- F(t)\right]\right) F^2(t)
\end{align*}
and  is bounded. Hence, ${\partial^2} \phi(y,t)/{\partial t^2}=  F(t)\left[1-F(t)\right] \nu(t)- [y-F(t)] \nu^{\,\prime}(t)$ is also bounded. Denote $A=\sup_{y\in\{0,1\}, t\in \real} |{\partial^2}  \phi(y,t)/{\partial  t^2}|$.
 
 To avoid burden notation, denote $R_0(X)=  \alpha_0+ \langle X, \beta_0\rangle$.
 Define the function $g:\real\to \real$ as $g(t)= L(t \wttheta+ (1-t) \theta_0)$. Then, 
 $$g(t)=  \esp \left\{w(X)\phi\left(F(R_0(X)), R_0(X)  + t \langle X, \wtbeta- \beta_0\rangle\right) \right\}\,.$$
Note that $g(0)=L(\theta_0)$, $g(1)=L( \wttheta)$ and $g(t)\ge g(0) $ for all $t$.

Recall that $\Psi(y,t) = {\partial} \phi(y,t)/{\partial t}= -[y-F(t)] \nu(t)$, then $\Psi(F(r_0),r_0)=0$, for any $r_0$. Therefore, we have that
\begin{align*}
g(1)-g(0) & =  \esp \left\{w(X)\;\Psi\left(F(R_0(X)),R_0(X) \right) \langle X, \wtbeta- \beta_0\rangle \right\}+ \esp \left\{w(X)\;\frac{\partial^2 \phi(F(R_0(X)), t)}{\partial t^2}\Big|_{t=\xi_{\varrho}} \langle X, \wtbeta- \beta_0\rangle^2\right\}\\
& =  \esp \left\{w(X)\;\frac{\partial^2 \phi(F(R_0(X)), t)}{\partial t^2}\Big|_{t=\xi_{\varrho}} \langle X, \wtbeta- \beta_0\rangle^2\right\}
\end{align*}
where  $\xi_{\varrho}= \alpha_0+ \langle X, \beta_0\rangle  + \varrho \langle X, \wtbeta- \beta_0\rangle$ for some $\varrho=\varrho(X)\in (0,1)$.

Therefore, using that ${\partial^2} \phi(y,t)/{\partial t^2}$ is bounded, we obtain  
\begin{equation}
 L(\wttheta) - L(\theta_0) = g(1)-g(0) \le A\, \esp \left\{w(X) \langle X, \wtbeta- \beta_0\rangle^2\right\}\le A \, \esp \left\{w(X)\|X\|^2\right\} \|\wtbeta- \beta_0\|_{\itH}^2\,,
 \label{eq:Lwttheta}
 \end{equation}
 where the last inequality follows from  the Cauchy-Schwartz inequality and the fact that, for any $f\in \itH$, we have the bound $\|f\|\le \|f\|_{\itH}$, concluding the proof.
\end{proof}


Lemma  \ref{lema:betaultimate} is  an intermediate step   to derive Theorem \ref{teo:CONSIST}.

\begin{lemma}\label{lema:betaultimate} 
Let $\rho$ be a function satisfying  \ref{ass:rho_bounded_derivable} and  \ref{ass:rho_derivative_positive}, and $w$ a weight function satisfying \ref{ass:funcionw}. Assume that \ref{ass:beta0} to \ref{ass:approx} and \ref{ass:probaX*} hold. Then,  we have that,  there exists $M>0$ such that $  \prob( \cup_{m\in \natu}\cap_{n\ge m}|\walfa |+\| \wbeta \|_{\itW^{1,\itH}}\le M)=1$. 
\end{lemma}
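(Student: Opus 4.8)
The plan is to argue by contradiction, exploiting the Fisher--consistency of $L$ together with the fact, established in Lemma \ref{lema:M_betahat}, that $L(\wtheta)\convpp L(\theta_0)$. Suppose the claimed deterministic bound fails; then there is a set of positive probability on which, along a (random) subsequence, $\gamma_n:=|\walfa|+\|\wbeta\|_{\itW^{1,\itH}}\to\infty$. Working on this event, I would normalise by setting $a_n=\walfa/\gamma_n$ and $b_n=\wbeta/\gamma_n$, so that $|a_n|+\|b_n\|_{\itW^{1,\itH}}=1$ with $b_n\in\itM_{k_n}\subset\itW^{1,\itH}$ (the inclusion uses \ref{ass:beta0}). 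The crucial gain from measuring the estimator in the stronger norm $\|\cdot\|_{\itW^{1,\itH}}$ is that bounded sets of $\itW^{1,\itH}$ are relatively compact in $\itH$ (Arzel\`a--Ascoli when $\itH=C([0,1])$, Rellich--Kondrachov when $\itH=L^2([0,1])$). Hence I can extract a further subsequence along which $a_n\to a_*$ and $b_n\to b_*$ strongly in $\itH$, with $b_*\in\itW^{1,\itH}=\itH^{\star}$.

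The heart of the argument is then to show that $(a_*,b_*)\neq 0$ and to transfer the divergence of $\gamma_n$ into divergence of the linear predictor. Granting $(a_*,b_*)\neq 0$, assumption \ref{ass:probaX*} in the form \eqref{eq:identifiafuerte} (applicable precisely because $b_*\in\itH^{\star}$) yields $\prob(\{a_*+\langle X,b_*\rangle\neq 0\}\cap\{w(X)>0\})>0$. On this set $\walfa+\langle X,\wbeta\rangle=\gamma_n\,(a_n+\langle X,b_n\rangle)$ diverges in absolute value, so the fitted probability $F(\walfa+\langle X,\wbeta\rangle)$ tends to $0$ or $1$ while the true $p_0(X)=F(\alpha_0+\langle X,\beta_0\rangle)$ stays in $(0,1)$ almost surely. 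Writing $L(\wtheta)-L(\theta_0)=\esp\{w(X)[\Lambda(F(\walfa+\langle X,\wbeta\rangle),p_0(X))-\Lambda(p_0(X),p_0(X))]\}$ as in the proof of Lemma \ref{lemma:L_function}, and using the continuous extension of $\Lambda$ to $[0,1]\times[0,1]$ together with the uniqueness of its minimiser on the diagonal (Lemma \ref{lemma:L_function_Chebi}(a)), Fatou's lemma gives $\liminf_n\{L(\wtheta)-L(\theta_0)\}>0$ on the bad event. This contradicts $L(\wtheta)\convpp L(\theta_0)$, so the event has probability zero.

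The main obstacle is twofold. First, establishing $(a_*,b_*)\neq 0$: because the $\itW^{1,\itH}$ norm of $b_n$ may concentrate in the derivative, the strong $\itH$-limit could a priori vanish, and ruling this out cannot rely on $L$ alone, since $L$ sees $\beta$ only through the distribution of $\langle X,\beta\rangle$; it instead requires using the sieve constraint $b_n\in\itM_{k_n}$ with $k_n/n\to 0$ (assumption \ref{ass:dimbasis}) to tie the size of $b_n$ back to its behaviour against the covariance structure of $X$. Second, promoting the almost sure eventual boundedness to a single deterministic constant $M$: this is obtained by making the previous step quantitative, i.e.\ by exhibiting deterministic $M_0,\delta_0>0$ with $L(\theta)\ge L(\theta_0)+\delta_0$ for every $\theta=(\alpha,\beta)$ with $\beta\in\itM_{k_n}$ and $|\alpha|+\|\beta\|_{\itW^{1,\itH}}>M_0$, and then combining this uniform lower bound with $L(\wtheta)\to L(\theta_0)$ and the uniform law of large numbers of Lemma \ref{lema:entropia}(b). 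I expect this sieve--uniform coercivity estimate to be the most delicate part of the proof.
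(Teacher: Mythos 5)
Your overall strategy---a coercivity bound for $L$ outside a large ball of $\real\times\itW^{1,\itH}$, combined with $L(\wtheta)\convpp L(\theta_0)$ from Lemma \ref{lema:M_betahat}, and resting on the compact embedding of $\itW^{1,\itH}$ into $\itH$, assumption \ref{ass:probaX*}, the monotonicity of $t\mapsto\phi(F(r_0),t)$ away from its minimiser, and Fatou's lemma---is the right one and is the paper's. But the two steps you flag as obstacles are precisely where the proof lives, and the way you propose to fill them does not work. The sieve plays no role in the coercivity: in the paper the bound $\inf\{L(\alpha,\beta): |\alpha|+\|\beta\|_{\itW^{1,\itH}}>M\}>L(\theta_0)+\tau$ is established over the whole of $\real\times\itW^{1,\itH}$, and \ref{ass:dimbasis} enters only through Lemma \ref{lema:M_betahat}. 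Moreover, $k_n/n\to 0$ controls the sieve dimension relative to $n$, not the ratio $\|\beta\|_{\itH}/\|\beta\|_{\itW^{1,\itH}}$ for $\beta\in\itM_{k_n}$ (a spline combination with many knots can have tiny $L^2$ norm and unit Sobolev norm), so tying the nondegeneracy of the normalised limit to the sieve is a dead end.

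The missing idea is that one should never pass to a limit point $(a_*,b_*)$ of the normalised estimators at all. The paper works with the unit sphere $\itB=\{(\alpha,\beta):|\alpha|+\|\beta\|_{\itW^{1,\itH}}=1\}$, which is precompact in $\real\times\itH$, covers it by finitely many $\real\times\itH$-balls centred at points $\theta_j\in\itB$, and applies \ref{ass:probaX*} only at those finitely many (nonzero) centres to produce margins $\varphi_j>0$ with $\prob(|\alpha_j+\langle X,\beta_j\rangle|\ge\varphi_j,\;\|X\|\le K)>1-2\delta$; each bound is then transferred to the whole ball around $\theta_j$ via the estimate $|\alpha-\alpha_j+\langle X,\beta-\beta_j\rangle|\le(1+\|X\|)\,(|\alpha-\alpha_j|+\|\beta-\beta_j\|)$ on the truncated event $\{\|X\|\le K\}$. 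Your sharp observation that the $\itH$-limit of the normalised sequence could vanish (mass escaping into the derivative) is thereby rendered moot, and the uniformity you were missing for a single deterministic $M$ comes for free from the finiteness of the cover: one takes the minimum of the finitely many margins $\tau_j$ and the maximum of the finitely many thresholds $a_{\theta_j}$ produced by the Fatou step. As written, your argument also has a measurability defect---the subsequence along which $|\walfa|+\|\wbeta\|_{\itW^{1,\itH}}\to\infty$ is $\omega$-dependent and Fatou is applied to the random quantity $L(\wtheta)$---which the paper avoids by proving the deterministic coercivity bound first and only then invoking $L(\wtheta)\convpp L(\theta_0)$.
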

 
 \begin{proof}
 From now on,  $\itB_1$ denotes the unit ball in $\itW^{1,\itH}$  and  $\itB=\{(\alpha,\beta)\in \real\times \itW^{1,\itH}: |\alpha|+\|\beta\|_{\itW^{1,\itH}}=1\}\subset\  [-1,1]\times \itB_1$. The Rellich--Kondrachov theorem entails that  $\itW^{1,\itH}$ is compactly embedded in  $\itH$, hence  $\itB$ is compact in  $\real \times \itH$. 
	
To simplify the notation, let  $\theta=(\alpha, \beta)$ and  $\ell(X,\theta)=\alpha + \langle X,\beta \rangle  $. Furthermore, given $\theta_j\in\real\times \itH$, for $j=1,2$, denote $d(\theta_1,\theta_2) =  |\alpha_1 - \alpha_2|+ \|\beta_1-\beta_2\|_{\itH} $.

\noi \textsl{Step 1.} We begin proving that, given $0< \delta\le 1/4$, there exist $K>0$ and positive numbers $\varphi_1, \dots, \varphi_s$ such that for every $\theta\in \itB$, there exist $j\in \{1,\dots, s\}$ such that 
	\begin{equation} \label{eq:cota1}
	\prob\left(|\ell(X,\theta)|>\frac{\varphi_{j}}{2}\right)> \prob\left(|\ell(X,\theta)|> \frac{\varphi_{j}}{2} \cap \|X\|\le K\right)>1-2\delta\,.
	\end{equation} 
To derive 	\eqref{eq:cota1}, first given $\delta>0$, define $K_\delta$ such that, for any $K\geq K_\delta$ 
\begin{equation}
\prob(\|X\|\geq K)<\delta\,.
\label{eq:cotaKX}
\end{equation}
Fix now $\theta \in \itB$. Using that $\beta \in \itH$ and  \ref{ass:probaX*}, there exists   $\varphi_\theta >0$  a continuity point  of the distribution of $|\ell(X,\theta)|$ such that
\begin{equation}
\prob\left(|\ell(X,\theta)|<\varphi_\theta\right )< \delta\;.
\label{eq:cotavarphiellX}
\end{equation}
Note that from \eqref{eq:cotaKX} and  \eqref{eq:cotavarphiellX}, we conclude that $A(\theta) =  \prob\left(|\ell(X,\theta) |\geq \varphi_\theta\right) - \prob\left( \|X\|  \geq  K\right)>1-2\delta$ and $B(\theta) =  \prob\left(|\ell(X,\theta) |\geq \varphi_\theta\cap \|X\|  \leq  K\right)>1-2\delta$ .

Let $\tau_\theta$ stand for $\tau_\theta = {\varphi_\theta}/{(2(K + 1))}$. Then, given $\theta^*=(\alpha^*, \beta^*) \in  \real\times \itH$, such that $d(\theta^*,\theta)   <\tau_\theta$, and using that $\|f\|\le \|f\|_{\itH}$ and the Cauchy--Schwartz inequality, we obtain 
$$
| \ell(X,\theta)|   \leq  |\ell(X,\theta^*)|+|\ell(X,\theta-\theta^*)|  \leq  |\ell(X,\theta^*)| + \tau_\theta \left(\|X\| +1\right)\,.
 $$
Hence,     we get
\begin{align}
\prob\left( |\ell(X,\theta^*)|\geq   \frac{\varphi_\theta}{2}\cap \|X\|  \le K \right) & \ge B(\theta)> 1-2\delta \,.\label{eq:cotaprobavarphiellX}
\end{align}
Consider the covering of $\itB$ given by the open balls $\itB(\theta,\tau_\theta)=\{(a,f) \in \real\times \itH: d((a,f),\theta)   <\tau_\theta\}$, $\theta \in \itB$. 
Using that $\itB$ is compact in $\real\times \itH$, we get that there exist $\theta_1, \dots \theta_s$ such that $\theta_j \in \itB$ and $\itB\subset \cup_{j=1}^s \itB(\theta_j,\tau_{\theta_j})$. Therefore, from \eqref{eq:cotaprobavarphiellX}, we conclude that
\begin{equation} \label{eq:cota2}
\min _{1 \leq j \leq s} \inf_{d(\theta, \theta_j)<\tau_{\theta_j}} \prob\left(|\ell(X,\theta)|>\frac{\varphi_{\theta_j}}{2}\cap \|X\|  \le K\right)>1-2\delta\,,
\end{equation}
meaning that, for every $\theta\in \itB$, there exist $j\in \{j_1\dots j_s\} $ such that 
$$\prob\left(|\ell(X,\theta)|>\frac{\varphi_{\theta_j}}{2}\cap \|X\|  \le K \right)>1-2\delta\;,$$
  concluding the proof of \textit{Step 1.} 

\noi \textsl{Step 2.} We will show that for any $\theta=(\alpha,\beta) \in \itB$
\begin{equation}
\label{eq:Dpositivo}
\esp\left(\lim_{a\to \infty}  \phi(y, a \; \ell(X, \theta)) w(X)\right)> L(\theta_0)\,.
\end{equation}
Let us consider a sequence $a_m\to +\infty$ it is enough to show that $D>L(\theta_0)$ where \linebreak $D=\esp\left\{\lim_{m\to \infty}  \phi\left(y, a_m \;\ell(X, \theta) \right) w(X)\right\}$. Using that $\phi$ is a bounded function and the bounded convergence theorem, we get that
$D= \lim_{m\to \infty} D_m$, where $D_m=\esp\left\{   \phi\left(y, a_m \;\ell(X, \theta) \right) w(X)\right\} =L(a_m\,\theta)$. 

Note that as in the proof of Lemma \ref{lema:FC}, we have that
\begin{align}
D_m  & = \esp \left\{ w(X)\,\phi\left(F(\ell(X,\theta_0)), a_m \; \ell(X, \theta)\right)\right\}= \esp  \left\{ w(X) \phi(F(R_0(X)),R_m(X)) \right\}\,,
\label{eq:Dm}
\end{align}
where,  as in the proof of Theorem \ref{teo:RATES1}, we  denote  $ R_m(X)=a_m \; \ell(X, \theta)$ and $ R_0(X)=\ell( X, \theta_0)$.

In the proof of Lemma \ref{lema:FC}, we have shown   that, for any fixed $r_0$, the function $ \Phi(t)=\phi( F(r_0),t)$ reaches its unique minimum when $t=r_0$ and $\Phi^{\prime}(t)>0$, for $t>r_0$, and that $\Phi^{\prime}(t)<0$ for $t<r_0$. Then, $\Phi$ is strictly increasing on $(r_0, +\infty)$ and strictly decreasing on $(-\infty, r_0)$, so   $\lim_{t\to +\infty}\Phi(t)> \Phi(r_0)$ and similarly $\lim_{t\to -\infty}\Phi(t)> \Phi(r_0)$ .

 Using \ref{ass:probaX*} and that $\beta\in \itW^{1,\itH}$, we have that with probability one
\begin{equation}
\lim_{m\to \infty}  R_m(X) =\lim_{m\to  \infty}   a_m \;\ell(X, \theta) =\left\{\begin{array}{rl}
+\infty & \mbox{ when } \ell(X, \theta)>0\\
-\infty & \mbox{ when } \ell(X, \theta)<0
\end{array}\right.\,.
\label{eq:limiteRm}
\end{equation} 
Fix $\omega\in \Omega$, such that $X_0=X(\omega)$ satisfies \eqref{eq:limiteRm} and  $F(R_0(X_0))$  is not $0$ or $1$. Let $r_0= R_0(X_0)$ and take as above $\Phi(t)=\phi(F(r_0),t)$. Then,
\begin{align*}
\lim_{m\to\infty} \phi(F(R_0(X_0)),R_m(X_0))= \lim_{m\to \infty} \Phi(R_m(X_0))  & =   \left\{\begin{array}{rl}
 \lim_{r\to +\infty}\Phi(r) & \mbox{ when } \ell(X_0, \theta)>0\\
 \lim_{r\to -\infty}\Phi(r) & \mbox{ when } \ell(X_0, \theta)<0
\end{array}\right.\\
& >\Phi(r_0)=\Phi\left(R_0(X_0)=\phi(F(R_0(X)),R_0(X))\right) \,.
\end{align*}
Using \eqref{eq:Dm}, \ref{ass:probaX*} and \ref{ass:funcionw}, we obtain that 
Thus, using again that $\phi$ is a bounded function and the bounded convergence theorem, we obtain that
\begin{align*}
D= \lim_{m\to \infty} D_m  & = \esp \left\{ w(X)\,\lim_{m\to \infty}   \phi(F(R_0(X)),R_m(X)) \right\}>\esp \left\{\phi(F(R_0(X)),R_0(X))\right\}  \,,
\end{align*}
which concludes the proof of  \eqref{eq:Dpositivo}.

\noi \textsl{Step 3.} Let us show that  there exists $\eta>0$ such that for any $\theta=(\alpha,\beta) \in \itB$, we have
\begin{equation}\label{eq:lim_infL}
\lim_{a\to \infty} \inf_{|\alpha^*-\alpha|+ \|\beta^*-\beta\|_{\itW^{1,\itH}} < \eta} L(a\,\theta^*) >L(\theta_0)\,,
\end{equation}
where we have denoted $\theta^*=(\alpha^*, \beta^*)$. The proof is an extension to the functional setting of that of Lemma 6.3 in \citet{Bianco:yohai:1996}.
 
 Take $\delta < (D-L(\theta_0))/(2\,M_{\phi})$ where $D=\esp\left(\lim_{a\to \infty}  \phi(y, a \; \ell(X, \theta)) w(X)\right)$,  the quantity  $D-L(\theta_0)$ is positive from \eqref{eq:Dpositivo} and $M_{\phi}=\sup_{y\in \{0,1\}, t\in \real} \phi(y,t)$.

From \textsl{Step 1} we have that, for any $\theta\in \itB$, there exist $j\in \{1,\dots, s\}$ such that \eqref{eq:cota1} holds.  Let $j_0$ be the index corresponding to the chosen $\theta   \in  \itB$ and define the set $\itE=\{X: |\ell(X,\theta)| > \varphi_{j_0}/{2}\cap \|X\|  \le K\}$. Then, from \eqref{eq:cota1},  we get that 
\begin{equation}
\label{eq:cotaitE}
\prob(X\notin \itE)< 2\delta <  \frac{ D-L(\theta_0)}{M_{\phi}}\,.
\end{equation} 
Take $X\in \itE$ and define $\eta=\min_{1\le j\le s} \varphi_{j} /(8(K+1))$. Then, using again that $\|f\|\le \|f\|_{W^{1,\itH}}$ and that the set $\{  \theta^*=(\alpha^*, \beta^*):  |\alpha^*-\alpha|+ \|\beta^*-\beta\|_{\itW^{1,\itH}} < \eta\} \subset\{ \theta^*:  |\alpha^*-\alpha|+ \|\beta^*-\beta\|_{\itW^{1,\itH}}\le \varphi_{{j_0}}/(4(K+1))\}$ which is compact in $\real\times \itH$, it is easy to see  that for any 
 $\theta^*=(\alpha^*, \beta^*)$ such that $|\alpha^*-\alpha|+ \|\beta^*-\beta\|_{\itW^{1,\itH}} \le \eta \le \varphi_{{j_0}}/(4(K+1))$, we have that 
 \begin{equation}
 \label{eq:signos}
\left\{ 
\begin{array}{c }
\mbox{sign}\left(\ell(X, \theta^*) \right)  = \mbox{sign}\left( \ell(X, \theta)\right)\,,\\
 |\ell(X, \theta^*)|   \ge \dst\frac{\varphi_{j_0}}{4}\,.
 \end{array}
 \right.
 \end{equation}
Using that the set $\itV_{\eta}=\{\theta^* \in \real\times \itW^{1,\itH}: |\alpha^*-\alpha|+ \|\beta^*-\beta\|_{\itW^{1,\itH}} \le \eta\}$ is compact in $\real\times \itH$ we conclude that there exists $\{\theta_m^*\}_{m\in \natu}\subset \itV_{\eta}$ (depending on $a$ and $X$) and a value $\wttheta^*=\wttheta^*_{a,X}\in \itV_{\eta}$ which also depends on $a$ and $X$, such that 
$d(\theta_m^*, \wttheta^*)\to 0$ and $\lim_{m\to \infty}  \phi(y, a \; \ell(X,\theta_m^*))= \inf_{\theta^*\in \itV_{\eta}} \phi(y, a\; \ell(X,\theta^*))$.
The continuity of $\phi$ together with the fact that $d(\theta_m^*, \wttheta^*)\to 0$ and the Cauchy--Schwartz inequality leads to  $\phi(y, a\, \ell(X,\wttheta^*))= \inf_{\theta^*\in \itV_{\eta}} \phi(y, a \,  \ell(X,\theta^*))$. Then, using \eqref{eq:signos}, we conclude that
$$ \liminf_{a\to \infty} \inf_{\theta^*\in \itV_{\eta}} \phi\left(y, a \; \ell(X,\theta^*)\right)= \lim_{a\to \infty}   \phi\left(y, a \;\ell\left(X,\wttheta^*_{a,X}\right)\right)\,.$$
Therefore, using Fatou's Lemma we obtain that
\begin{align*}
\liminf_{a\to \infty}   \inf_{|\alpha^*-\alpha|+ \|\beta^*-\beta\|_{\itW^{1,\itH}} < \eta} L(a\,\theta^*) & \ge \esp\left\{\liminf_{a\to \infty} \inf_{\theta^*\in \itV_{\eta}}  \phi(y, a \; \ell(X,\theta^*))w(X)\right\}\\
& \ge \esp\left\{\indica_{\itE}(X)\, w(X)\, \lim_{a\to \infty} \inf_{\theta^*\in \itV_{\eta}}  \phi(y, a \;\ell(X,\theta^*))\right\}\\
& \ge \esp\left\{\indica_{\itE}(X)\,w(X)\, \lim_{a\to \infty}   \phi\left(y, a \; \ell\left(X,\wttheta^*_{a,X}\right)\right)\right\} \\
& = D-  \esp\left\{\indica_{\itE^c}(X)\,w(X)\,  \lim_{a\to \infty} \phi\left(y, a \;\ell\left(X,\wttheta^*_{a,X}\right)\right)\right\}\,.
\end{align*}
where we denote $\itE^c$  the complement of $\itE$.  Using \eqref{eq:cotaitE} and assumption \ref{ass:funcionw}, we obtain that
$$ \esp\left\{\indica_{\itE^c}(X)\, w(X)\, \lim_{a\to \infty} \phi\left(y, a \;\ell\left(X,\wttheta^*_{a,X}\right)\right)\right\} \le M_{\phi} \prob\left(X\in \itE^c \right)<  D-L(\theta_0) \,,$$
so
$$\lim_{a\to \infty} \inf \inf_{|\alpha^*-\alpha|+ \|\beta^*-\beta\|_{\itW^{1,\itH}} < \eta} L(a\,\theta^*)   \ge D-  \esp\left\{\indica_{\itE^c}(X)\, w(X)\,  \lim_{a\to \infty} \phi\left(y, a \; \ell(X,\wttheta^*_{a,X})\right)\right\} >  L(\theta_0) \,,$$
which concludes the proof of \eqref{eq:lim_infL}. 

\noi \textsl{Step 4.} We will show that there exists $M>0$ and $\tau>0$
\begin{equation}\label{eq:cota_inf}
 \inf _{\theta=(\alpha,\beta): \|\beta\|_{\itW^{1,\itH}}+|\alpha| > M} L\left(\theta \right) > L(\theta_0)+\tau\;.
\end{equation}
Note that by proving \eqref{eq:cota_inf}, we may indeed conclude the proof. In fact, from Lemma \ref{lema:M_betahat}, we have that $L(\walfa, \wbeta ) \convpp L(\alpha_0, \beta_0)$, thus if $\itN=\{\lim_{n\to \infty}L(\walfa, \wbeta ) \ne L(\alpha_0, \beta_0)\}$, $\prob(\itN)=0$. Take $\omega \notin \itN$ and $n_0$ such that for all $n\ge n_0$, $|L(\walfa, \wbeta ) - L(\alpha_0, \beta_0)| <\tau$, then $L(\walfa, \wbeta ) < L(\alpha_0,\beta_0)+\tau <\inf _{|\alpha|+\|\beta\|_{\itW^{1,\itH}} > M} L\left(\alpha, \beta \right) $, which entails that for all $n\ge n_0$, $|\walfa|+\|\wbeta\|_{\itW^{1,\itH}} \le M$, as desired.

Let us show that \eqref{eq:cota_inf} holds. 
From \textsl{Step 3},   there exists $\eta$ such that, for any  $\theta=(\alpha,\beta) \in \itB$, we have that $D(\theta)>L(\theta_0)$ where
$$D(\theta)=\lim_{a\to \infty} \inf_{|\alpha^*-\alpha|+ \|\beta^*-\beta\|_{\itW^{1,\itH}} < \eta } L(a\,\theta^*)\,,$$
and $\theta^*= (\alpha^*,\beta^*)$. Define  $0<\tau_{\theta}< (D(\theta)-L(\theta_0))/2$, then $D(\theta)> L(\theta_0)+2\;\tau_{\theta} $, which implies that there exists $a_{\theta}>0$ such that
\begin{equation}\label{eq:Dtheta_inf}
\inf_{ a>a_{\theta}} \inf_{|\alpha^*-\alpha|+ \|\beta^*-\beta\|_{\itW^{1,\itH}} < \eta} L(a\,\theta^*) >L(\theta_0)+ \tau_{\theta}\,.
\end{equation}
Taking into account that   the open balls $\itB(\theta,\eta_{\theta})=\{\theta^* \in \real\times \itW^{1,\itH}: |\alpha^*-\alpha|+ \|\beta^*-\beta\|_{\itW^{1,\itH}} <\eta\}$ provide a covering of  $\itB$ which is compact in $\real\times \itH$, we get that there exist $\theta_1, \dots \theta_s$ such that $\theta_j \in \itB$ and $\itB\subset \cup_{j=1}^s \itB(\theta_j,\eta)$. Thus, if $a_j=a_{\theta_j}$, $A=\max_{1\le j\le s}(a_j)$, $\tau_j=\tau_{\theta_j}$ and $\tau=\min_{1\le j\le s}(\tau_j)$, from \eqref{eq:Dtheta_inf}, we obtain that for $j=1, \dots, s$ we have
\begin{equation}\label{eq:Djtheta_inf}
\inf_{ a>A} \inf_{\theta^* \in  \itB(\theta_j,\eta)} L(a\,\theta^*) >L(\theta_0)+ \tau\,.
\end{equation}
Let $\theta \in \real\times \itW^{1,\itH}$ be such that $d_{\theta}=| \alpha|+ \| \beta\|_{\itW^{1,\itH}}>A$, define $\wttheta= \theta/d_{\theta}$, then $\wttheta\in \itB$, so there exists $j_0$ such that $\wttheta\in\itB(\theta_{j_0},\eta)$ and $L(\theta)=L(d_{\theta}\; \wttheta)$. Therefore, using \eqref{eq:Djtheta_inf}, we obtain that $L(\theta)>L(\theta_0)+ \tau$, so
$$\inf _{|\alpha|+\|\beta\|_{\itW^{1,\itH}} > A} L\left(\alpha, \beta \right) > L(\theta_0)+ \tau\,,$$
concluding the proof.
 \end{proof}

\begin{proof}[Proof of Theorem \ref{teo:CONSIST}]
We will show only b), that is, that the result holds when $\beta_0\in  C([0,1])$, $B_j\in  C([0,1])$, $1\le j\le k_n$,   $B_j\in \itW^{1,2}$ and provide approximations in $L^2([0,1])$, that is, below $\itW^{1,\itH}$ is the  Sobolev space $\itW^{1,2}$. The situation where $\itH=L^2([0,1])$ follows similarly, replacing  the supremum norm below by the $L^2-$norm and using that in this case, the Sobolev space $\itW^{1,\itH}$  is compactly embedded in  $L^2([0,1])$ and that \ref{ass:probaX*}  holds for any $\beta \in \itW^{1,\itH}$. 

Assume that we have shown that
\begin{equation}
\label{eq:cotainfL}
\inf_{ \theta \in \itA_{\epsilon}}L(\theta) >L(\theta_0)
\end{equation}
where $\itA_{\epsilon}=\{\theta=(\alpha,\beta)\in \real\times \itW^{1,\itH}:  |\alpha |+\| \beta \|_{\itW^{1,\itH}}\le M,  
|\alpha-\alpha_0|+\| \beta-\beta_0\|_{\infty}>\epsilon\}$. Then, taking into account  that $\wbeta \in \itW^{1,\itH}\cap C([0,1])$, that Lemma  \ref{lema:M_betahat}  implies that $L(\wtheta) \convpp L(\theta_0)$ and  Lemma \ref{lema:betaultimate}, we conclude that $|\walfa-\alpha_0|+\| \wbeta-\beta_0\|_{\infty}\convpp 0$.

Let us derive \eqref{eq:cotainfL}. Let $\{\theta_m\}_{m\ge 1}$ be a sequence such that $\theta_m=(\alpha_m,\beta_m) \in \itA_\epsilon$ and $\lim_{m\to \infty} L(\theta_m) = \inf_{\theta \in \itA_{\epsilon}}L(\theta)$.  The Rellich--Kondrachov theorem entails that  $\itW^{1,\itH}$ is compactly embedded in  $C([0,1])$, hence the ball $\{\theta=(\alpha,\beta)\in \real\times \itW^{1,\itH}:  |\alpha|+\| \beta\|_{\itW^{1,\itH}}\le M\}$ is compact in  $\real \times C([0,1])$. Thus, there exist a subsequence  $\{ \theta_{m_j}\}_{j\ge 1}$ of $\{\theta_m\}_{m\ge 1}$ and a point $\theta^{\star}=(\alpha^{\star}, \beta^{\star})\in \real\times \itW^{1,\itH}$  with $|\alpha^{\star} |+\| \beta^{\star} \|_{\itW^{1,\itH}}\le M $ such that 
	$|\alpha_{m_j}-\alpha^{\star}|+\|\beta_{m_j}-\beta^{\star}\|_{ \infty}\to 0$. Then,  using that $\theta_m \in \itA_\epsilon$, we have $|\alpha_{m_j}-\alpha_0|+\| \beta_{m_j}-\beta_0\|_{\infty}>\epsilon$ which implies that $| \alpha^{\star}- \alpha_0|+\|\beta^{\star}-\beta_0\|_{\infty}\ge \epsilon $. 
	Using that $\|f\|\le \|f\|_{\infty}$ and the Cauchy-Schwartz inequality, we get  that for any $v\in L^2([0,1])$,   $\alpha_{m_j}+\langle v, \beta_{m_j} \rangle   \to \alpha^{\star}+\langle v, \beta^{\star} \rangle $. Thus, using  the Bounded Convergence Theorem, the continuity of $\phi\left(y, t \right)$ with respect to $t$ and its boundedness,   we get that $L(\theta_{m_j})\to L(\theta^{\star})$, which leads to   $\inf_{\theta\in \itA_{\epsilon}}L(\theta)= L(\theta^{\star})$. Using that \ref{ass:probaX*}  holds   and taking into account that $\beta^{\star}\in \itW^{1,\itH}$, Lemma \ref{lema:FC}(b) implies that $L(\theta^{\star})>L(\theta_0)$   concluding the proof.  
\end{proof} 

 
\subsection{Proof of Theorem \ref{teo:RATES} and Proposition \ref{prop:cotainf}}

\begin{proof}[Proof of Theorem \ref{teo:RATES}] 
From assumption \ref{ass:approxorder}, we have that there  exists an element $\wtbeta_{k}\in \itM_k$,  $\wtbeta_{k}=\sum_{j=1}^{k} \wtb_j  B_j(x)$ such that  $\|\wtbeta_{k}-\beta_{0}\|_{\itH}=O(k^{-r})$. Without loss of generality, we assume that $\|\wtbeta_k-\beta_0\|_{\itH} < \epsilon_0/2$ with $\epsilon_0$ defined in \ref{ass:cotainf}.   Recall that  $\Theta_n^{\star} =  \real\times \itM_{k_n} $.
 
 In order to get the convergence rate  of our estimator $\wtheta = (\walfa,\wbeta)$ we will apply Theorem 3.4.1 of  \citet{vanderVaart:wellner:1996}. According to the notation in that Theorem, let $d (\theta_1, \theta_2)= \wtpi_{\prob} (\theta_1, \theta_2)$  and $\Theta_n = \{ \theta\in \Theta_n^{\star}: |\alpha -\alpha_0|+\|\beta -\beta_0\|_{\itH}\le  \epsilon_0/2\}$,  where $\epsilon_0$ is given in assumption \ref{ass:cotainf}.  Furthermore, the function $m_{\theta}$ in that Theorem equals
$$m_{\theta}(y,X)= \,-\, \phi\left(y, \alpha +\langle X, \beta \rangle \right) w(X)\,.$$ 
First of all, note that  Theorem \ref{teo:CONSIST} implies that $\prob( \wtheta \in \Theta_n)\to 1$ as required. Secondly, to emphasize the dependence on $n$ denote     $\wttheta_{0,n}=(\alpha_0,\wtbeta)$ with $\wtbeta=\wtbeta_{k}$ defined in assumption \ref{ass:approxorder}. Assumption \ref{ass:wx2}, the fact that $\|f\|\le \|f\|_{\itH}$ and the Bounded Convergence Theorem implies that $\wtpi_{\prob}(\wttheta_{0,n}, \theta_0)\to 0$, as $n\to \infty$. Furthermore, from Theorem \ref{teo:CONSIST}, we get that  $\wtpi_{\prob}(\wtheta, \theta_0)\convpp 0$. Hence, using the triangular inequality, we immediately obtain that $\wtpi_{\prob}(\wtheta, \wttheta_{0,n}) \convprob 0$ as required in Theorem 3.4.1 of  \citet{vanderVaart:wellner:1996}. Moreover, we also have that  $L_n(\wtheta)\le L_n(\wttheta_{0,n})$, since $\wttheta_{0,n}\in \real\times\itM_k$, which is also a requirement to apply that result.

 Let $A=\sup_{y\in\{0,1\}, t\in \real} |{\partial^2}  \phi(y,t)/{\partial  t^2}|$ and denote $C^2= 16(A+C_0^{\star}) \esp \left\{w(X)\|X\|^2\right\}/C_0^{\star}$ with $C_0^{\star}$ the constant  given in assumption \ref{ass:cotainf}. Define $\delta_n= C \|\wtbeta-\beta_0\|_{\itH}$. Note that from \ref{ass:approxorder}, $\delta_n = O(n^{-\,r\varsigma})$. 

To make use of Theorem 3.4.1 of  \citet{vanderVaart:wellner:1996}, we have to show that there exists a function $\varphi_n:(0,\infty)\to \real$ such that $\varphi_n(\delta)/\delta^{\gamma}$ is decreasing on $(\delta_n, \infty)$, for some $\gamma<2$ and such that,  for any $\delta> \delta_n$, we have
\begin{align}
&\sup_{\theta\in \Theta_{n, \delta}} \esp \left( m_{\theta}(y,X)- m_{\wttheta_{0,n}}(y,X)\right)  =   L( \wttheta_{0,n})- \inf_{\theta\in \Theta_{n, \delta}} L( \theta)  \lesssim    -\delta^2   \,,
\label{eq:aprobar1}\\
&\esp^{*} \sup_{ \theta\in \Theta_{n, \delta}}   \left |\bbG_n 
\left(m_{\theta}-m_{\wttheta_{0,n}}\right) \right |  =
 \esp^{*} \sup_{ \theta \in \Theta_{n,\delta}} \sqrt{n} \biggl|
 \left(L_n( \theta)- L( \theta\right) - \left(L_n(\wttheta_{0,n})- L( \wttheta_{0,n})\right) \biggr|   \lesssim \varphi_n(\delta) \,,
\label{eq:aprobar2}
\end{align}
where $\bbG_n f= \sqrt{n} (P_n-P)f$,   $\esp^{*}$ stands for the outer expectation and $\Theta_{n,\delta}=\{\theta \in \Theta_n: \delta/2 < d(\theta,\wttheta_{0,n})\le \delta\}$.

We begin by showing \eqref{eq:aprobar1}. Assumption \ref{ass:cotainf} entails that,  for any $ \theta\in \Theta_n$,   
\begin{equation}
L( \theta)-L (\theta_0)\ge C_0^{\star}\,\wtpi_{\prob}^2( \theta, \theta_0)\,,
\label{eq:cota1probar}
\end{equation}
while from \eqref{eq:Lwttheta} in the proof of Theorem \ref{teo:RATES1}(c), we get that 
\begin{equation}
 L(\wttheta_{0,n}) - L(\theta_0) \le A \, \esp \left\{w(X)\|X\|^2\right\} \|\wtbeta- \beta_0\|_{\itH}^2= \frac{ A \, \esp \left\{w(X)\|X\|^2\right\}}{C^2} \delta_n^2\,.
 \label{eq:cota2probar}
\end{equation}
Moreover,  using the Cauchy Schwartz inequality and the fact that $\|f\|\le \|f\|_{\itH}$, we have
$$\wtpi_{\prob} ( \theta, \wttheta_{0,n}) \le \wtpi_{\prob} ( \theta, \theta_0) +\wtpi_{\prob} ( \theta_0, \wttheta_{0,n})\le \wtpi_{\prob} ( \theta, \theta_0)+ \left\{\esp \left\{w(X)\|X\|^2\right\} \|\wtbeta- \beta_0\|_{\itH}^2\right\}^{1/2}\,,$$
which  together with the inequality $(a+b)^2 \le 2(a^2+b^2)$, implies that
\begin{equation} 
\wtpi_{\prob}^2( \theta, \theta_0) \ge \frac 12 \wtpi_{\prob}^2 ( \theta, \wttheta_{0,n}) -  \esp \left\{w(X)\|X\|^2\right\} \|\wtbeta- \beta_0\|_{\itH}^2 =\frac 12 \wtpi_{\prob}^2 ( \theta, \wttheta_{0,n}) -\frac{\esp \left\{w(X)\|X\|^2\right\}}{C^2} \delta_n^2\,.
\label{eq:cota3probar}
\end{equation}
Then combining \eqref{eq:cota1probar}, \eqref{eq:cota2probar} and \eqref{eq:cota3probar}, we get that for any $\theta \in \Theta_n$, such that $ \delta/2 < d(\theta,\wttheta_{0,n})= \wtpi_{\prob}  ( \theta, \wttheta_{0,n})$, we have  
\begin{align*}
L( \theta)- L(\wttheta_{0,n}) & = \left\{L( \theta)-L (\theta_0)\right\}- \left\{L(\wttheta_{0,n}) - L(\theta_0)\right\}  \ge C_0^{\star}\,\wtpi_{\prob}^2( \theta, \theta_0) - \frac{ A \, \esp \left\{w(X)\|X\|^2\right\}}{C^2} \delta_n^2\\
& \ge  \frac{C_0^{\star}}{2} \wtpi_{\prob}^2 ( \theta, \wttheta_{0,n}) -\frac{(A+C_0^{\star})\esp \left\{w(X)\|X\|^2\right\}}{C^2} \delta_n^2
 \ge  \frac{C_0^{\star}}{8} \delta^2  - \frac{C_0}{16}\delta_n^2\ge \frac{C_0^{\star}}{16} \delta^2 \,,
\end{align*}
concluding the proof of \eqref{eq:aprobar1}.

We have now to  find $\varphi_n(\delta)$ such that $\varphi_n(\delta)/\delta^\gamma$ is decreasing in $\delta$, for some $\gamma<2$ and \eqref{eq:aprobar2} holds.   Define the class of functions
$$\itF_{n,\delta} = \left\{V_{  \theta}-V_{\wttheta_{0,n}}:    \theta\in \Theta_{n,\delta} \right\}\,,$$ 
with  $V_{  \theta}=\,-\, m_{\theta}=\phi\left(y, \alpha+ \langle X, \beta \rangle\right) w(X) $.
Inequality \eqref{eq:aprobar2} involves an empirical process indexed by $\itF_{n,\delta}$, since
$$\esp^{*} \sup_{ \stackrel{ \theta \in \Theta_n }{ \delta/2< d (\theta, \wttheta_{0,n})\le \delta} }   \left |\bbG_n 
\left(m_{ \theta }-m_{\wttheta_{0,n}}\right) \right |  =\esp^{*} \sup_{f\in \itF_{n,\delta}} \sqrt{n} |(P_n-P) f|\,.$$
For any $f\in \itF_{n,\delta} $ we have that $\|f\|_{\infty} \le A_1 = 2 \sup_{y\in \{0,1\}; t\in \real} \phi(y,t)= 2 M_{\phi} $. Furthermore, if $A_2= 2\, M_{\psi}$ using  the inequality
\begin{align*}
|V_{  \theta}-V_{\wttheta_{0,n}}| & =  w(X) \left|\phi\left(y, \alpha+ \langle X, \beta \rangle\right) -\phi\left(y, \alpha_0+ \langle X, \wtbeta \rangle\right) \right| \\
& \le 2\, M_{\psi}   \left|\alpha-\alpha_0 +  \langle X, \beta-\wtbeta  \rangle\right| w(X)\,,
\end{align*}
and the fact that $\|w\|_{\infty}=1$ and $\wtpi_{\prob}^2(\theta, \wttheta_{0,n}) \leq \delta^2$,  we get that
$$P f^2\le  4\,M_{\psi}^2 \;\esp\left(  \left[\alpha-\alpha_0 +  \langle X, \beta-\wtbeta \rangle\right]^2 w(X)\right)  \le   A_2^2 \, \delta^2\,.$$
  Lemma 3.4.2 in \citet{vanderVaart:wellner:1996} leads to
$$\esp^{*} \sup_{f\in \itF_{n,\delta}} \sqrt{n} |(P_n-P) f|\le J_{[\;]}\left( A_2 \delta,\itF_{n,\delta}, L_2(P)\right) \left ( 1+ A_1 \frac{J_{[\;]}(A_2 \,\delta,\itF_{n,\delta}, L_2(P))}{A_2^2 \delta^2 \; \sqrt{n}}   \right ) \,,$$ 
where $J_{[\;]}(\delta, \itF, L_2(P)) =\int_0^\delta \sqrt{1+ \log N_{[\;]}(\epsilon, \itF, L_2(P)) } d\epsilon$ is the bracketing integral of the class $\itF$. 
 
Recall that $\|\wtbeta_k-\beta_0\|_{\itH}= O(n^{-\,r\varsigma})$, so for $n$ large enough, $\|\wtbeta_k-\beta_0\|_{\itH}< \epsilon_0/2$, so that, for any $\theta=(\alpha,\beta)\in \Theta_n$, we have    $|\alpha-\alpha_0|+\|\wtbeta_k-\beta\|_{\itH}  < \epsilon_0$. Therefore, $\itF_{n,\delta} \subset   \itG_{n,c, \wttheta_{0,n},\wttheta_{0,n}}$ where $ \itG_{n,c, \wttheta_0,\theta_0^*}$ is defined in Lemma  \ref{lema:bracketing} and we take $\wttheta_{0}=\wttheta_{0,n}=(\alpha_0,\wtbeta_k)$,  $\theta_0^*=\wttheta_{0,n}$ and $c= \epsilon_0$. Hence,  the bound given in Lemma \ref{lema:bracketing}  ensures that leads to 
$$
N_{[\;]}\left( \epsilon,\itF_{n,\delta}, L_2(P)\right)\le \left(\frac{ B_1 }{\epsilon}+1\right)^{ k+1}\,,
$$
for some positive constant  $B_1$   independent of $n$, $\wttheta_{0,n}$ and $\epsilon$. Therefore, for $\delta < B_1/A_2$, we have
\begin{align*}
J_{[\;]}\left( A_2 \delta,\itF_{n,\delta}, L_2(P)\right) &\le  \int_{0}^{A_2\,\delta}  \sqrt{1+\log\left(  \left(\frac{ B_1 }{\epsilon}+1\right)^{k+1}\right)} d\epsilon\\
& \le \int_{0}^{A_2\,\delta}  \sqrt{1+ (k+1)\log\left(2\,\frac{B_1 }{\epsilon}\right)} d\epsilon\\
& \le 2\,  (k+1)^{1/2}\int_{0}^{A_2\,\delta}  \sqrt{1+ \log\left(2\,\frac{B_1 }{\epsilon}\right)} d\epsilon \\
& = 4\, B_1\,  (k+1)^{1/2}  \int_{0}^{B_2\;\delta}  \sqrt{1+ \log\left(\frac{1}{\epsilon}\right)} d\epsilon  \,,
\end{align*}
where $B_2= {A_2}/({2\,B_1})$. Note that  $\int_{0}^{\delta} \sqrt{1+\log(1/\epsilon)}\, d\epsilon=O(\delta  \sqrt{\log(1/\delta)})$ as $\delta\to 0$, hence there exists $\delta_0>0$ and a constant $C>0$ such that for any $\delta<\delta_0$,  $\int_{0}^{\delta} \sqrt{1+\log(1/\epsilon)}\, d\epsilon\le C \, \delta \,\sqrt{\log(1/\delta)}$. This implies that for $\delta<\delta_0/B_2$
$$J_{[\;]}( A_2 \delta,\itF_{n,\delta}, L_2(P)) \lesssim \delta\, \sqrt{\log\left(\frac{1}{\delta}\right)}   \sqrt{k+1}\,.$$
If we denote $q_n = k+1$, we obtain that for some constant $A_3$ independent of $n$ and $\delta$,
$$\esp^{*} \sup_{\theta \in \Theta_{n,\delta} }  \left |\bbG_n 
\left(m_{ \theta }-m_{\wttheta_{0,n} }\right) \right | \leq A_3\,\left[\delta \, q_n^{1/2}  \sqrt{\log\left(\frac{1}{\delta}\right)}    + \frac{ q_n  }{ \sqrt{n}}\; \log\left(\frac{1}{\delta}\right)\right]\,.  $$
Choosing
$$\varphi_n(\delta)=A_3\,\left[\delta \, q_n^{1/2}  \sqrt{\log\left(\frac{1}{\delta}\right)}     + \frac{ q_n  }{ \sqrt{n}} \; \log\left(\frac{1}{\delta}\right)\right]\,,$$
we have that $\varphi_n(\delta)/\delta$ is decreasing in $\delta$, concluding the proof of \eqref{eq:aprobar2}.

 To apply Theorem 3.4.1 of  \citet{vanderVaart:wellner:1996}, it remains to show that $\gamma_n \lesssim \delta_n^{-1}$ and 
 \begin{equation}
 \gamma_n^2\, \varphi_n\left(\frac{1}{\gamma_n}\right) \lesssim  \sqrt{n}\,,
 \label{eq:gamman}
 \end{equation}
 since $\varphi_n(c\delta)\le c \,\phi_n(\delta)$, for $c>1$. 
First note that  $\gamma_n =O( n^{r\varsigma})$ and $\delta_n= C \|\wtbeta-\beta_0\|_{\itH}= O(n^{-\,r\varsigma})$, then  $\gamma_n \lesssim \delta_n^{-1}$.

To derive \eqref{eq:gamman}, observe that 
$$
\gamma_n^2\varphi_n \left(\frac{1}{\gamma_n}\right)=A_3\,\left[\gamma_n  q_n^{1/2} \, \sqrt{\log(\gamma_n)} + 
\gamma_n^2\, \log(\gamma_n)\; \frac{ q_n }{\sqrt{n}}\right] =A_3\,\left[\sqrt{n}\; a_n(1+a_n) \right]\, ,
$$ 
where $a_n=\gamma_n \, \sqrt{\log(\gamma_n)}\;   q_n^{1/2}/\sqrt{n}$.
Hence, to derive that $\gamma_n^2\varphi_n \left(1/{\gamma_n}\right)\lesssim \sqrt{n}$, it is 
enough to show that $a_n=O(1)$, which follows easily since    $q_n=O(n^{\varsigma})$ and
$\gamma_n \sqrt{\log(\gamma_n)} = O(n^{(1-\varsigma)/2})$, concluding the proof of  \eqref{eq:gamman}.  

Hence, from Theorem 3.4.1 of  \citet{vanderVaart:wellner:1996}, we get that $\gamma_n  \wtpi_{\prob} (\wtheta, \wttheta_{0,n}) =O_{\prob}(1)$. As noticed above, 
$$\wtpi_{\prob} ( \theta_0, \wttheta_{0,n})\le  \left\{\esp \left\{w(X)\|X\|^2\right\} \|\wtbeta- \beta_0\|_{\itH}^2\right\}^{1/2}=O(n^{-\, r\varsigma})\,.$$
Then, using that $\gamma_n =O( n^{r\varsigma})$, we get that $\gamma_n  \wtpi_{\prob} ( \theta_0, \wttheta_{0,n}) =O_{\prob}(1)$ and from the triangular inequality we obtain that   $\gamma_n  \wtpi_{\prob} (\wtheta, \theta_{0}) =O_{\prob}(1)$, as desired.
\end{proof}

\begin{proof}[Proof of Proposition \ref{prop:cotainf}]
From Lemma \ref{lemma:L_function},  we have that there exists a constant $ C_0>0$   independent from $n$ such that, for any $\theta$,
$$ L(\theta) - L(\theta_0) \ge C_0\,\pi_\prob^2(\theta,\theta_0)\,,$$
then to show that  \ref{ass:cotainf} holds, it will be enough to show that there exists a constant $C_1>0$   such that,  for any $\theta=(\alpha,\beta) \in \real\times  \itH$ with $|\alpha-\alpha_0|+\|\beta-\beta_0\|<1$, 
$$\pi_\prob^2(\theta,\theta_0)\ge C_1\,\wtpi_\prob^2(\theta,\theta_0)\,,$$
and then take $\epsilon_0=1$ and $C_0^{\star}=C_0 C_1$.

Since $\prob(\|X\|\le C)=1$,  we have that, with probability one, for any $|\alpha-\alpha_0|+\|\beta-\beta_0\|<1$,
$$\left|\alpha+\langle X,  \beta \rangle\right|\le  |\alpha_0|+1 + C\left(\|\beta_0\|+1\right)=C^{\star}\,.$$
Thus, using that $F$ is strictly increasing we get that 
\begin{equation}\label{eq:acotoFalfa}
 0<A_1=F\left(-C^{\star}\right)\le  F\left(\alpha+\langle X,  \beta \rangle\right) \le F\left(C^{\star}\right)=A_2<1\,.
 \end{equation}
The Mean Value Theorem implies that given $\theta=(\alpha,\beta) \in \real\times  \itH$ with $|\alpha-\alpha_0|+\|\beta-\beta_0\|<1$ there exists $(\alpha^{\star}_X,\beta^{\star}_X)=(1-\omega_X) \theta + \omega_X \theta_0$, $0\le \omega_X\le 1$, such that
\begin{align*}
\pi_\prob^2(\theta ,\theta_0 ) &= \esp\left\{w(X) \left[F(\alpha+\langle X,  \beta \rangle)- F(\alpha_0+\langle X,  \beta_0 \rangle )\right]^2\right\}\\
&=  \esp\left\{w(X) \left\{ F\left(\alpha_X^{\star}+\langle X,  \beta_X^{\star} \rangle\right)\left[1-F\left(\alpha_X^{\star}+\langle X,  \beta_X^{\star} \rangle\right)\right]\left(\alpha-\alpha_0+\langle X, \beta- \beta_0 \rangle \right)\right\}^2\right\}\\
& \ge A_1^2 (1-A_2)^2  \esp\left\{w(X)\left[\alpha-\alpha_0+\langle X, \beta- \beta_0 \rangle \right]^2\right\}= A_1^2 (1-A_2)^2\wtpi_\prob^2(\theta,\theta_0)\,,
 \end{align*}
 where the last inequality follows from \eqref{eq:acotoFalfa}, since $|\alpha^{\star}_X-\alpha_0|+\|\beta^{\star}_X-\beta_0\|<1$ and the proof is concluded taking $C_1=A_1^2 (1-A_2)^2$.
\end{proof}
\small


\newcommand{\noopsort}[1]{}

\end{document}